\newtheorem{thm}{Theorem}
\newtheorem{defi}[thm]{Definition}
\newtheorem{prop}[thm]{Proposition}
\newtheorem{coroll}[thm]{Corollary}
\newtheorem{hypo}[thm]{Hypothesis}
\newtheorem{rmk}[thm]{Remark}
\newtheorem{lemma}[thm]{Lemma}
\newtheorem{notat}[thm]{Notation}
\newcommand{\xtilde}[1]{\tilde{X}_{#1}}
\newcommand{\vtilde}[1]{\tilde{V}_{#1}}
\newcommand{\RR}{\mathbb{R}}
\newcommand{\pD}{\partial D}
\newcommand{\PP}{\mathbb{P}}
\newcommand{\EE}{\mathbb{E}}
\newcommand{\CM}{\mathcal{M}}
\newcommand{\supp}{\text{supp}}
\newcommand{\loc}{\text{loc}}
\newcommand{\CD}{\mathcal{D}}
\newcommand{\CB}{\mathcal{B}}
\newcommand{\CP}{\mathcal{P}}
\newcommand{\CG}{\mathcal{G}}
\newcommand{\CT}{\mathcal{T}}
\newcommand{\BpD}{B_{\pD}}
\title[Convergence to equilibrium for a collisionless gas]{A coupling approach for the convergence to equilibrium for a collisionless gas}
  \author{Armand Bernou and Nicolas Fournier}
\address{Sorbonne Universit\'e, CNRS, Laboratoire de Probabilit\'e, Statistique et Mod\'elisation, F-75005 Paris, France.}
\email{armand.bernou@sorbonne-universite.fr, nicolas.fournier@sorbonne-universite.fr}
\thanks{A.B. was supported by grants from R\'egion Ile-de-France.}
\subjclass[2010]{60J25, 82C40}
    \keywords{Stochastic billiards, Markov process, collisionless gas,  coupling, long-time behaviour, subexponential convergence to equilibrium.}
\begin{document}
\begin{abstract}
We use a probabilistic approach to study the rate of convergence to equilibrium for a collisionless (Knudsen) gas in dimension equal to or larger than 2. The use of a coupling between two stochastic processes allows us to extend and refine, in total variation distance, the polynomial rate of convergence given in \cite{Aoki201187} and \cite{Kuo2013}. This is, to our knowledge, the first quantitative result in collisionless kinetic theory in dimension equal to or larger than 2 that does not require any symmetry of the domain, nor a monokinetic regime. Our study is also more general in terms of reflection at the boundary: we allow for rather general diffusive reflections and for a  specular reflection component.
\end{abstract}

\maketitle

\section{Introduction}

We consider a Knudsen (collisionless) gas enclosed in a vessel and investigate the rate of convergence to equilibrium. 
We study a $C^2$ bounded domain (open, connected) $D$ in $\mathbb{R}^n$, with $n \geq 2$. The boundary of this domain, $\partial D$, is considered at rest, and when a gas particle collides with the boundary, a reflection which is either diffuse or specular occurs.
For a point $x$ in $\partial D$, $n_x$ denotes the unit inward normal at $x$. 

The distribution function of the gas, $f(t,x,v)$, represents the density of particles with position $x \in \bar{D}$ and velocity $v \in \RR^n$ at time $t \geq 0$. We assume that it satisfies the free-transport equation with both a boundary condition and an initial condition:
\begin{equation}
\label{Problem1}
\left\{ 
	\begin{aligned}
   & \partial_t f + v \cdot \nabla_x f  = 0, \hspace{8cm} (x,v) \in D \times \mathbb{R}^n, \\
&f(t,x,v)(v \cdot n_x)  =  - \alpha(x) c_0 M(v) (v \cdot n_x) \int_{\{v' \cdot n_x < 0\}} f(t,x,v') (v' \cdot n_x) dv' \\
& \quad + (1-\alpha(x)) f(t,x,v-2(v \cdot n_x)n_x) (v \cdot n_x), \hspace{3.5cm} x \in \pD, v \cdot n_x > 0, \\
&f(0,x,v)  = f_0(x,v), \hspace{7.75cm} (x,v) \in D \times \mathbb{R}^n,
	\end{aligned}
\right.
\end{equation}
where the constant $c_0 > 0$ is given by 
\begin{align}
\label{EqCM}
c_0 = \int_{\{u \cdot n_x > 0\}} M(u) (u \cdot n_x) du,
\end{align} 
for any choice of $x \in \pD$. The independence of $c_0$ with respect to $x$ is a consequence of the radial symmetry assumption made below on the density $M$.

This dynamic does not take into account collisions between particles that may occur inside $D$. This is legitimate for the study of Knudsen gases, which are dilute enough. This model represents particles moving in $D$ following the free transport dynamic until they collide with the boundary. When a particle reaches the boundary at some point $x \in \pD$, it is specularly reflected with probability $1-\alpha(x)$, and diffusively reflected with probability $\alpha(x)$. In the latter case, its new velocity is chosen using $M$. See Definition \ref{DefiProcess} for the precise probabilistic interpretation of the model.

Here are our main assumptions. 
\begin{hypo}
\begin{itemize}
\item $D$ is a $C^2$ open connected bounded set in $\RR^n$, with $n \geq 2$.
\item  $\alpha: \pD \to [0,1]$ is uniformly bounded from below, i.e. there exists $\alpha_0 > 0$ such that:
\begin{align}
\label{hypoalpha}
\alpha(x) \geq \alpha_0 , \quad  \forall x \in \pD.
\end{align}
\item 
\label{HypoM}
$M: \RR^n \to \RR_+$ is radially symmetric with $\int_{\RR^n} M(v) dv = 1$, $\int_{\RR^n} \|v\| M(v) dv < \infty$, and there exist $\delta_1 > 0$ and some continuous, radially symmetric, $\bar{M}: \RR^n \to \RR_+$ such that $0 <\bar{M}(v) \leq M(v) $ for all $v \in \RR^n$ such that $0 < \|v\| \leq \delta_1$. 
\end{itemize}
\end{hypo}

The paradigmatic example (and most physically relevant one) of such $M$ is the Maxwellian distribution with parameter (temperature) $\theta$, that fits into this framework:
\begin{align}
\label{DefMaxwellian}
M(v) = \frac{1}{(2\pi \theta)^{\frac{n}{2}}} e^{-\frac{\|v\|^2}{2 \theta}}, \qquad v \in \RR^n.
\end{align}

Observe that informally, (\ref{Problem1}) preserves mass. Indeed, for a strong solution to (\ref{Problem1}), Green's formula gives:
\begin{align*}
\frac{d}{dt} \int_{D \times \RR^n} f(t,x,v)  dv dx= - \int_{D \times \RR^n} \nabla_x (v f(t,x,v))  dvdx = \int_{\pD \times \RR^n} f(t,x,v) (v \cdot n_x) dvdx = 0,
\end{align*}
where the last equality is a consequence of the boundary condition in (\ref{Problem1}).

\subsection{Main result}

The stationary problem corresponding to (\ref{Problem1}) leads to an equilibrium in the phase space. Its distribution is given by (assuming the initial data to be of total mass 1)
\begin{align*}
\mu_{\infty}(x,v) = \frac{M(v)}{|D|}, \quad \forall (x,v) \in D \times \RR^n,
\end{align*}
where $|D|$ denotes the Lebesgue measure of $D$ in $\RR^n$.
Note that (unsurprisingly) the equilibrium distribution is space-homogeneous in $D$.

It is known that there is convergence towards this equilibrium distribution in $L^1$ distance, see for instance Arkeryd and Nouri \cite[Theorem 1.1]{Arkeryd1997} for a proof in the case where $\alpha \equiv 1$ and with slight restrictions on $D$. The goal of this paper is to characterize the rate of this convergence.

Recall that the total variation distance of a signed measure $\mu$ on a measurable space $(E, \mathcal{E})$ is given by $$\|\mu\|_{TV} = \frac{1}{2} \sup \Big\{\int_E g d\mu, g: E \to \RR, \|g\|_{\infty} \leq 1 \Big\}.$$ 

In the whole paper, we use the notation $f(t,x,v)$ when $f$ is a $L^1$-function on $\RR_+ \times D \times \RR^n$ and $f_t(dx,dv)$ when $f$ is measure-valued. Our main result is the following, see Definition \ref{defiweaksol} and Theorem \ref{ThmGreenberg} for the precise meaning of weak solutions.

\begin{thm}
\label{MainTheorem}
Assume that Hypothesis \ref{HypoM} is satisfied.
Let the initial distribution, $f_0$, be a probability measure on $D \times \RR^n$.  Let $r: \mathbb{R}_+ \rightarrow \mathbb{R}_+$ be a continuous increasing function such that there exists a constant $C > 0$ satisfying
$r(x+y) \leq C(r(x) + r(y)) \text{ for all } x,y \in \RR_+$ and such that
\begin{align}
\label{EqHypoMomentThm}
\int_{\RR^n} r\Big(\frac{1}{\|v\|}\Big) M(v) dv < \infty \quad \text{and} \quad \int_{D \times \RR^n} r\Big(\frac{1}{\|v\|}\Big) f_{0}(dx,dv) < \infty.
\end{align}
Then, there exist some constant $\kappa > 0$ and a weak solution  $\rho(dt,dx,dv) = dt f_t(dx, dv)$ to (\ref{Problem1}) such that for all $t \geq 0$,
\begin{align*}
\|f_t - \mu_{\infty}\|_{TV} \leq  \frac{\kappa}{r(t)}.
\end{align*}
Moreover, in the case where $f_0$ admits a density in $L^1(D \times \RR^n)$, the solution $f$ is unique among ``regular'' solutions (see Theorem \ref{ThmUniq}).
\end{thm}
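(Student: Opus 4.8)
\emph{Reduction to a coupling.} The plan is to use the probabilistic interpretation of Definition~\ref{DefiProcess}: $f_t$ is the time-$t$ law of the stochastic billiard $(X_t,V_t)_{t\ge0}$ started from $f_0$, which moves freely in $D$ and, at each hit of $\pD$, reflects specularly with probability $1-\alpha(\cdot)$ and diffusely — resampling its velocity from the flux-weighted law $M(v)(v\cdot n_x)c_0^{-1}\mathbf{1}_{\{v\cdot n_x>0\}}\,dv$, independently of the past — with probability $\alpha(\cdot)$; and $\mu_\infty$ is the law of this (time-homogeneous Markov) process started from its invariant probability measure $\mu_\infty$. Then $\|f_t-\mu_\infty\|_{TV}\le\PP(\tau>t)$ whenever one builds on one probability space two such processes $Z=(X,V)$ with $Z_0\sim f_0$ and $\tilde Z=(\tilde X,\tilde V)$ with $\tilde Z_0\sim\mu_\infty$, together with a random time $\tau$ such that $Z_t=\tilde Z_t$ for all $t\ge\tau$. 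Setting $f_t=\mathrm{Law}(Z_t)$ and invoking Definition~\ref{defiweaksol} and Theorem~\ref{ThmGreenberg} shows $dt\,f_t(dx,dv)$ is a weak solution, while Theorem~\ref{ThmUniq} handles uniqueness in the $L^1$ case. So the whole task is to construct a coupling with $\PP(\tau>t)\le\kappa/r(t)$, which by Markov's inequality (and $r$ increasing) follows from $\EE[r(\tau)]<\infty$.

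\emph{Regeneration and elementary time bounds.} Every diffuse reflection is a regeneration point. Since $\alpha\ge\alpha_0$, between two consecutive diffuse reflections the number of boundary hits is stochastically dominated by a geometric variable, so — each chord having length at most $\mathrm{diam}(D)$ and specular reflections preserving the speed — the time between two consecutive diffuse reflections is at most $N\,\mathrm{diam}(D)/w$, with $N$ geometric and $w$ the (constant) speed on that stretch; similarly the time up to the first diffuse reflection is at most $N_0\,\mathrm{diam}(D)/\|V_0\|$. (Here one uses the a.s.\ absence of Zeno behaviour for Lebesgue-almost every velocity.) Because the post-reflection law is flux-weighted, $v\cdot n_x\le\|v\|$ gives $\EE[1/\|W\|]\le c_0^{-1}<\infty$ and, splitting $\{\|v\|\le1\}$ from $\{\|v\|\ge1\}$ and using that $r$ is increasing, $\EE[r(1/\|W\|)]\le C\int_{\RR^n}r(1/\|v\|)M(v)\,dv<\infty$; likewise $\EE[r(1/\|V_0\|)]=\int r(1/\|v\|)f_0(dx,dv)<\infty$. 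Combined with the condition $r(x+y)\le C(r(x)+r(y))$, which lets one bound $r(N\,\mathrm{diam}(D)/\|W\|)$ by a constant times $N^{\gamma}(1+r(1/\|W\|))$, this shows that the gaps between successive diffuse reflections have finite mean and finite $r$-moment.

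\emph{The coupling.} The two processes are run independently until they couple. Since two billiard trajectories can coalesce only by occupying the same point with the same velocity at the same time, the coupling is organised around ``matching rounds''. Using $\bar M>0$ on $\{0<\|v\|\le\delta_1\}$ and the radial symmetry of $\bar M$, at each diffuse reflection the new velocity falls in the compact annulus $\{\delta_1/2\le\|v\|\le\delta_1\}$ with probability at least $p_0>0$, uniformly in the boundary point; call such a reflection a \emph{good} event, located at some $(y,s)\in\pD\times\RR_+$. The key geometric input is a \emph{controllability lemma}: from any good event, the velocities drawn at the next few diffuse reflections — all occurring within a bounded time — place the particle, with probability bounded below and with a probability density (on the resulting point of $\pD\times\RR_+$) bounded below on a fixed open set $R$, uniformly over good events. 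Granting this, one first uses a renewal argument to produce, within a time of finite $r$-moment, a good event of $Z$ and a good event of $\tilde Z$ whose time coordinates differ by at most a fixed constant; starting from such a pair, both particles aim at a common target in $R$, and since their landing laws overlap on $R$ we may couple the velocity choices so that the two landing space-time points coincide with probability bounded below by a positive constant. At that common point and time both particles reflect; coupling the reflection-type choices, with probability at least $\alpha_0$ both are diffuse, and then, drawing the \emph{same} velocity from the (common) post-reflection law at that point, $Z$ and $\tilde Z$ coincide for all later times. Each matching round thus succeeds with probability bounded away from $0$; a failed round leaves the processes in some state from which one again waits (a time of finite $r$-moment) for a suitable pair of good events. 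Hence $\tau\le C_0/\|V_0\|+\sum_{k=1}^{K}C_k/\|W_k\|$ for some $C_0,K$ and $C_k$ with exponential tails and independent flux-weighted velocities $W_k$; the bound on $r$ (binary splitting of the sum, $r$ increasing) together with the moment estimates above yields $\EE[r(\tau)]<\infty$, completing the proof.

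\emph{Main obstacle.} The crux is the controllability lemma together with the handling of specular reflections: one must show that, by choosing post-diffuse velocities, a particle can be driven in a bounded number of legs and in bounded time onto a neighbourhood of an arbitrary point of a fixed portion of $\pD$ at an arbitrary prescribed time in a fixed window, with a density bounded below uniformly over good starting events. This rests on the $C^2$ regularity of $\pD$ — so that, restricting to a cone of non-grazing directions, the first-hit map is a non-degenerate change of variables — and on $\alpha\ge\alpha_0$ keeping the interspersed specular bounces geometrically few; a secondary but genuine technical point is the renewal estimate aligning the two processes' diffuse-reflection schedules to within a bounded time-distance, and throughout one relies on the a.s.\ absence of Zeno behaviour.
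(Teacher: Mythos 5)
Your overall strategy --- coupling, treating diffuse reflections as regeneration points, bounding $\|f_t-\mu_\infty\|_{TV}\le\PP(\tau>t)$ and then applying Markov's inequality to $r(\tau)$, the flux-weighted velocity moment estimates, and the sub-additivity trick $r(\sum x_i)\le Cn^\beta\sum r(x_i)$ --- matches the paper's approach (see Section~\ref{SectionConvexCase}, Remark~\ref{RmkPolynomialR}, and the definition of $C_0$ in (\ref{EqDefC0})). But you leave the two load-bearing pieces as gaps. First, the ``controllability lemma'' you describe is not proven, and this is where almost all of the paper's work lies: in the uniformly convex case it is Proposition~\ref{CouplingProp}, which rests on an explicit density for the chord endpoint and travel time (Lemma~\ref{LemmaDensityCoupling}) combined with a uniform lower bound on a geometric integral over $\pD$ (Lemma~\ref{LemmaPositiveMeasure}), with the timing mismatch $\tilde t=\zeta(\tilde x_0,\tilde v_0)\le d(D)$ absorbed in a slow-speed window where $\bar M>0$; in the general $C^2$ case one cannot join two boundary points by a single chord, and the paper invokes Evans' communication structure (Proposition~\ref{PropEvans}) to get a uniform $n_0$-step Doeblin bound $P^{n_0}_{v_0}(x_0,dz)\ge\nu_0\,dz$ (Proposition~\ref{corollaryEvans}), from which Proposition~\ref{CouplingPropGeneral} follows. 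You correctly flag this as the crux, but ``granting this'' skips the bulk of the argument.

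Second, your ``renewal argument aligning schedules'' and the final bound $\tau\le C_0/\|V_0\|+\sum_{k=1}^{K}C_k/\|W_k\|$ with ``independent $W_k$'' and ``$K$ with exponential tails'' compress a genuine obstruction. The two processes never hit $\pD$ at the same times, so when the leading particle draws a coupled velocity the lagging one is still in flight and the draw must be remembered, in a way that preserves both the Markov property of the pair and the correct marginal laws. The paper resolves this by augmenting the state with an extra variable $Z_s\in\{\emptyset\}\cup([0,1]\times\RR_+\times\mathcal{A})$ (Definition~\ref{DefiCouplingProcess}) and then verifying the marginals (Lemma~\ref{LemmaMarginalCode}) and the absorbing-set property (Lemma~\ref{Lemmataudefinitive}); nothing in your sketch substitutes for this. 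Moreover, after a failed round the round count and the increments are not independent, so the naive ``geometric number of i.i.d.\ blocks'' picture is unavailable; one needs a conditional version --- $\PP(E_k\mid\CG_{k-1})\ge c$ plus uniform conditional $r$-moment bounds on the increments --- which is exactly Lemma~\ref{LemmaSumGeometric}. In short, your outline is a faithful sketch of the correct strategy, but as a proof of Theorem~\ref{MainTheorem} it is not complete without these constructions.
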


The typical example for the rate $r$ is $r(t) = (t+1)^{n}$, or rather $r(t) = (t+1)^{n-}$, as exemplified by the following situation.

\begin{coroll}
\label{CorollMaxwell}
We take the same hypotheses and notations as in Theorem \ref{MainTheorem}, and assume furthermore that $M$ is bounded (for instance, M is a Maxwellian distribution of the form (\ref{DefMaxwellian})). 
\begin{enumerate}[a)]
\item If $f_0$ has a bounded density, there exists a constant $\kappa > 0$ such that, for all $t \geq 0$,
\begin{align*}
\|f_t - \mu_{\infty}\|_{TV} \leq \frac{\kappa(1+ \log^2(t+1))}{(t+1)^{n}}.
\end{align*}

\item If there exists $d \in (0,n)$ such that 
\begin{align*}
\int_{D \times \RR^n} \frac{1}{\|v\|^{d}} f_{0}(dx,dv)  < \infty,
\end{align*}
 there exists a constant $\kappa > 0$ such that for all $t \geq 0$,
\begin{align*}
\|f_t - \mu_{\infty}\|_{TV} \leq \frac{\kappa}{(t+1)^d}.
\end{align*}
\end{enumerate}
\end{coroll}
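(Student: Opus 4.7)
The plan is to deduce both parts from Theorem \ref{MainTheorem} by selecting a rate function $r$ adapted to the hypotheses and verifying the moment conditions \eqref{EqHypoMomentThm}. In both cases the critical point is the integrability of $r(1/\|v\|)$ against $M$ (and against the density of $f_0$); the boundedness of $M$ (and, in part (a), of the density of $f_0$) reduces these checks to purely radial computations near the singularity at $v = 0$.

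For part (b), I would take $r(t) = (t+1)^d$ with $d \in (0,n)$. Monotonicity and continuity are immediate, and quasi-subadditivity holds with $C = 2^d$ since $(x+y+1)^d \leq (2\max(x+1,y+1))^d \leq 2^d((x+1)^d+(y+1)^d)$. The moment condition on $M$ reduces, using $\|M\|_\infty < \infty$, to $\int_{\|v\| \leq 1} \|v\|^{-d} dv < \infty$, which holds precisely because $d < n$. The moment condition on $f_0$ is the assumed hypothesis up to an additive constant. Theorem \ref{MainTheorem} then delivers the bound $\kappa/(t+1)^d$.

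For part (a), I would take $r(t) = (t+1)^n/(1+\log^2(t+1))$. Monotonicity follows by differentiation: up to a positive factor, $r'(t)$ has the sign of the quadratic $n\log^2(t+1) - 2\log(t+1) + n$ in the variable $\log(t+1)$, whose discriminant $4 - 4n^2$ is strictly negative for $n \geq 2$. Quasi-subadditivity reduces, via monotonicity and $x+y \leq 2\max(x,y)$, to the bound $r(2y) \leq 2^n r(y)$, which follows from $(2y+1)^n \leq 2^n(y+1)^n$ together with $\log(2y+1) \geq \log(y+1) \geq 0$. The main step — and the only place where the logarithmic correction is essential — is the integrability near $v = 0$: for $\|v\| \leq 1/e$, one checks $r(1/\|v\|) \leq 2^n/(\|v\|^n \log^2(\|v\|))$, and then in polar coordinates
\begin{align*}
\int_{\|v\| \leq 1/e} r(1/\|v\|) M(v) dv \leq 2^n \|M\|_\infty \omega_{n-1} \int_0^{1/e} \frac{dr}{r \log^2(r)} = 2^n \|M\|_\infty \omega_{n-1} \int_1^\infty \frac{du}{u^2} < \infty,
\end{align*}
using the substitution $u = -\log r$. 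The same estimate with $|D|\,\|f_0\|_\infty$ in place of $\|M\|_\infty$ handles $f_0$. I expect this integrability check to be the main (still elementary) obstacle: without the logarithmic factor in the denominator of $r$, the inner integral would become $\int dr/r$, which diverges, explaining why one cannot hope to reach the pure rate $(t+1)^{-n}$ by a direct application of Theorem \ref{MainTheorem}. The theorem then yields the bound $\kappa(1+\log^2(t+1))/(t+1)^n$.
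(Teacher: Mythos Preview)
Your proposal is correct and is exactly the intended argument: the paper states the corollary without proof, and your choices of $r(t)=(t+1)^d$ for part~(b) and $r(t)=(t+1)^n/(1+\log^2(t+1))$ for part~(a), together with the verifications of monotonicity, quasi-subadditivity, and the moment conditions~\eqref{EqHypoMomentThm}, are precisely what is needed to apply Theorem~\ref{MainTheorem}. Your remark that the $\log^2$ factor is forced by the divergence of $\int_0 dr/r$ is also the right explanation for why the method cannot reach the pure rate $(t+1)^{-n}$.
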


Physically, the most interesting case is the following: consider a collisionless gas enclosed in a vessel represented by the domain $D$. The boundary of the domain is kept at temperature $\theta > 0$. A particle colliding with this boundary at $x \in \pD$ is either specularly reflected, with probability $1- \alpha(x)$, or exchanges energy with the boundary and is diffusively reflected with probability $\alpha(x)$, the distribution $M$ being the Maxwellian with temperature $\theta$.

\subsection{Bibliography and discussion}

Relaxation to equilibrium is a key aspect in statistical mechanics. In general, this relaxation, which is known since the H-theorem in the case of the Boltzmann equation, is the result of two main physical equilibrating effects: the collisions between gas molecules and their interactions with the boundary. In \cite{Desvillettes2005}, Desvillettes and Villani find that the distance between the distribution function of the gas at time $t$ and the final equilibrium state decays at a rate $\mathcal{O}(\frac{1}{t^m})$ for all $m > 0$, in the case of space inhomogeneous solutions to the Boltzmann equation satisfying strong conditions of regularity, positivity and decay at large velocities. The rate of \cite{Desvillettes2005} is completed by an exponential rate in the case where the initial data is close to equilibrium in Guo \cite{Guo2010}. In these works, the authors assume that the spatial domain is either the flat torus or a smooth bounded domain with specular or bounce-back reflection at the boundary. Hence the focus is on the equilibrating effect of the collisions between gas molecules rather than the interaction with the boundary, and the equilibrium is entirely determined by the total mass and energy. Later, in \cite{VillaniHypoco}, Villani works on the case of a diffuse or accomodation reflection at the wall of a bounded smooth domain, with a constant temperature at the boundary. The equilibrium is thus slightly changed, as the total mass is now the only conserved quantity. In this case, both collisions between gas molecules and interactions with the boundary play an important role in the relaxation to equilibrium, and give an example of the so-called ``hypocoercivity'' method.

\vspace{.5cm}

Concerning the model studied in this paper, here are the main available results. In \cite{Tsuji2010}, Tsuji, Aoki and Golse find numerically a rate of convergence  in $t^{-n}$ for bounded initial data. An upper bound for the convergence rate in $t^{-1}$ is obtained by Aoki and Golse in \cite{Aoki201187}, assuming some spherical symmetry on the domain and on the initial condition and that $\alpha \equiv 1$. Using a stochastic approach, Kuo, Liu and Tsai in \cite{Kuo2013} obtain the (optimal) convergence rate of $t^{-n}$ in a spherically symmetric domain for $n = 1,2,3$ with $\alpha \equiv 1$ and with bounded initial data satisfying some technical conditions. Later, Kuo \cite{Kuo2015} extended this work, in dimensions 1 and 2, to the case of Maxwell boundary conditions (with additionally some specular reflections). All the above results assume that $M$ is a Maxwellian distribution. We also refer to the connected paper by Mokhtar-Kharroubi and Seifert \cite{Mokhtar2017} who studied a similar problem in slab geometry (in dimension 1) using Ingham's tauberian theorem.
\vspace{.5cm}

Our rate confirms, up to a logarithmic term, both the suggestions made by \cite{Tsuji2010} in view of their numerical results, see Corollary \ref{CorollMaxwell}, and the rate obtained by Kuo \cite{Kuo2015}. It also extends this result to higher dimensions, considers more complicated domains and allows more general initial conditions.

For the most interesting case where $M$ is given by (\ref{DefMaxwellian}), we can sum up our conclusions as follows: if $f_0$ is bounded on $\{v \in \RR^n, \|v\| \leq \epsilon\}$ for some $\epsilon > 0$, e.g. if $f_0(x,v) = g_0(x) \delta_{v_0}(v)$ for some density $g_0$ on $D$ and some $v_0 \ne 0$, the convergence rate towards equilibrium is (up to a logarithmic factor) in $\frac{1}{t^n}$. On the other hand, if $f_0$ is unbounded around $0$, e.g. $f_0(x,v) = \frac{c}{\|v\|^{\alpha}} \mathbf{1}_{\{ \|v\| \leq 1\}}$ with $\alpha \in (0,n)$, the convergence rate towards equilibrium is $\frac{1}{t^{(n-\alpha)-}}$ using Theorem \ref{MainTheorem} with $r(t) = t^{(n-\alpha)-}$. 

In \cite{Kuo2013}, the authors point out that $f_0 - \mu_{\infty}$ (with $f_0$ bounded) is the limiting factor that prevents from a better rate of convergence. We believe that, indeed, our method might allow one to prove the following extension: when one considers two solutions $f_t$ and $g_t$ with $f_0 = \delta_{(x_0,v_0)}$ and $g_0 = \delta_{(y_0, w_0)}$, $\|f_t - g_t\|_{TV} \lesssim t^{-n-1}$ as soon as $v_0 \ne 0$ and $w_0 \ne 0$.

\vspace{.5cm}

 Stochastic billards have also been studied in details, see the works of Evans \cite{evans2001}, Comets, Popov, Schütz and Vachkovskaia \cite{Comets2009} 	and the recent work of F\'etique \cite{FetiqueExplicitspeedconvergence2019} in the convex setting. This corresponds to the monokinetic case of our model: the velocity of particles has a constant norm $1$ ($f_0$ and the distribution $M$ are carried by the unit sphere). They prove exponential convergence to equilibrium by coupling methods. Let us mention that we use a result from Evans on the geometry of $C^1$ domains. 
\vspace{.5cm}

The stochastic process studied in this paper is similar to the family of Piecewise Deterministic Markov Processes (PDMP) introduced by Davis \cite{davis1993markov}. However it does not entirely fit this framework, since the jumps are predictable in our case. In the past few years, several long time behaviours for models corresponding to PDMP have been studied, exhibiting a geometric convergence towards equilibrium. We refer to the study of the telegraph process by Fontbona, Guérin and Malrieu \cite{fontbona_guerin_malrieu_2012, FONTBONA20163077}, and on the recent work of Durmus, Guillin and Monmarché \cite{durmus}.

\vspace{.5cm}

In conclusion, our result is, to the best of our knowledge, the first quantitative result for this problem for a non-symmetric domain in dimension $d \geq 2$, in a non-monokinetic regime. We also consider a more general law $M$ for the reflection at the boundary, with a larger class of initial data $f_0$.

\subsection{Strategy for the proof and plan of the paper}

The next Section \ref{Preliminary} is devoted to the rigorous introduction of our notion of weak solutions, and to the proof of uniqueness under a regularity assumption on $f_0$, in the spirit of Greenberg, van der Mee and Protopopescu \cite{greenberg} and Mischler and Mellet \cite{MELLET2004827}.

\vspace{.5cm} 

In Section \ref{SectionProba}, we construct the stochastic process which we use in the proof of Theorem \ref{MainTheorem}. We show that the law of this stochastic process is a weak solution in the sense of measures to (\ref{Problem1}), and that it is the unique weak solution under further regularity assumptions of $f_0$. The unusual boundary conditions leads to rather non-standard difficulties.
\vspace{.5cm}

In Section \ref{SectionConvexCase}, we derive the proof of our large time result in the context of a uniformly convex domain with $C^2$ boundary, following the strategy described below, and we extend in Section \ref{Extension} the previous result to general domains. For the sake of clarity we start by proving the result in a uniformly convex domain, because the coupling is easier since from any point at the boundary of the domain, we can join any other point at the boundary in one step.
\vspace{.5cm}

It is worth mentioning that the coupling method which we use is close, at least in spirit, to methods based on the study of the Feller nature of the corresponding semigroup. Those methods are known since the work of Meyn and Tweedie \cite{Meyn:2009:MCS:1550713} for exponential rates of convergence, and have recently been extended by Douc, Fort and Guillin \cite{DOUC2009897} for subgeometric convergence rates. They involve the derivation of the modulated moments of the delayed hitting time of some ``petite'' set, a computation that is straightforward once the coupling time is estimated.

\vspace{.5cm}

In a companion paper \cite{BernouSemigroup}, we investigate the same problem by a purely analytic approach. Of course, the main issue is the absence of a spectral gap for the operator corresponding to (\ref{Problem1}), which is the key reason for the polynomial rate of convergence. 

\vspace{.5cm}

To prove Theorem \ref{MainTheorem}, we introduce a coupling $(X_t, V_t)_{t \geq 0}, (\xtilde{t}, \vtilde{t})_{t \geq 0}$ with $(X_t, V_t)$ distributed according to $f_t$ and $(\xtilde{t}, \vtilde{t})$ distributed according to $\mu_{\infty}$, in such a way that the coupling time 
$$ \tau = \inf\{t \geq 0, (X_{t + s})_{s \geq 0} = (\xtilde{t+ s})_{s \geq 0}, (V_{t + s})_{s \geq 0} = (\vtilde{t+ s})_{s \geq 0}\}, $$
is as small as possible. 
We show that it is possible to build a coupling such that the following occurs. 
\begin{enumerate}[i)]
\item When one process collides with the boundary (Proposition \ref{CouplingProp}), if the other one has a large enough speed, so that its next collision occurs sufficiently soon after the one of the first process, there is a positive probability that the two processes coincide for all times following the next collision with the boundary. 
\item We come back to the previous situation after a random number of collisions with the boundary for both processes, and this number of collisions is controlled by a geometric random variable.
\end{enumerate}

The construction of such a coupling is quite subtle. Indeed, the random nature of $(X_t,V_t)_{t \geq 0}$ only appears when $X_t \in \pD$. When one tries to couple two such processes, complex situations can occur, for instance one of the process can hit the boundary several times before the other one does so. To construct a global process satisfying the Markov property, we introduce an extra variable, $(Z_s)_{s \geq 0}$, in the process, see Definition \ref{DefiCouplingProcess}, which allows us to memorize the randomness generated at some rebound of $(X_t)_{t \geq 0}$ until $(\tilde{X}_t)_{t \geq 0}$ hits the boundary.

\vspace{.3cm}

We then show that $r(\tau)$ has finite expectation, roughly, as soon as
$$ \int_{D \times \RR^n } r\Big(\frac{1}{\|v\|}\Big) f_0(x,v) dv dx + \int_{\RR^n} r\Big(\frac{1}{\|v\|}\Big) M(v) dv < \infty. $$
This assumption is crucial: the velocity of a particle has roughly for law either $f_0$ or $M$, the time needed to cross the domain is proportional to the inverse of this velocity, and the coupling can occur only at the boundary.

We then conclude using the fact that:
\begin{align}
\label{EqMaxCoupling}
\|f_t - \mu_{\infty}\|_{TV} \leq \mathbb{P}\Big((X_t,V_t) \neq (\xtilde{t},\vtilde{t})\Big) \leq \mathbb{P}(\tau > t) = \mathbb{P}\Big(r(\tau) > r(t) \Big) \leq \frac{\mathbb{E}[r(\tau)]}{r(t)}
\end{align}
from Markov's inequality, leading us to the rate of convergence in Theorem \ref{MainTheorem}.

\section{Weak Solutions}
\label{Preliminary}

In this section, we give a definition of weak solutions in the sense of measures for (\ref{Problem1}). Existence of this weak solution for any initial probability measure, without further assumption, will be obtained in Section \ref{SectionProba} by a probabilistic method. We show uniqueness of sufficiently regular weak solutions. Let us mention that uniqueness for boundary value problems such as (\ref{Problem1}) cannot be derived in general. We refer to Greenberg, van der Mee, Protopopescu \cite[Chapter 11]{greenberg} for a discussion on those well-posedness issues.

\vspace{.5cm}

We recall that $D$ is a $C^2$ domain (open, connected) in $\RR^n$ and set $G = D \times \RR^n$, $\Sigma = (0,\infty) \times G$. We write $\cdot$ for the scalar product in $\RR^n$, $\|.\|$ for the Euclidian norm.   We also define
\begin{align*}
F_t &= \{(t,x,v), (x,v) \in G\}, \quad t \in \RR_+, \\
 \partial_{\pm} G &= \{(x,v) , \pm v \cdot n_x < 0, x \in \pD, v \in \RR^n \}, \\
 \partial_0 G &= \{(x,v) \in \pD \times \RR^n, v \cdot n_x = 0\}, 
\end{align*}
where we recall that $n_x$ is the unit normal vector at $x \in \pD$ pointing towards $D$. In words, $\partial_+ G$ corresponds to points coming from $D$ towards the boundary, while $\partial_- G$ is the set of points coming from the boundary towards $D$. For a topological space $A$, we write $\CM(A)$ for the set of non-negative Radon measures on $A$, $\CP(A)$ for the set of probability measures on $A$. We denote $\langle .,. \rangle$ the scalar product for the duality $\CM(A)$, $\CM(A)^*$. We write $\mathcal{B}(A)$ for the Borel sigma-algebra on $A$.  For any set $B$, we denote $\bar{B}$ for the closure of $B$, and set $d(D)$ to be the diameter of $D$ :
\begin{align*}
d(D) = \sup_{x,y \in \pD} \|x-y\|.
\end{align*}
 For any space $E$, we write $\CD(E) = C^{\infty}_c(E)$ for the space of test functions (smooth with compact support) on $E$. We set 
\begin{align}
\label{EqDefL}
L = \partial_t + v \cdot \nabla_x.
\end{align} 
We deal with two reference measures:
\begin{itemize}
\item the $n$-dimensional Lebesgue measure (on $D$, $\bar{D}$ and $\RR^n$).
\item the $(n-1)$-dimensional Hausdorff measure in $\RR^n$.
\end{itemize}
To lighten the notations, the same symbols $dx, dv, dz, \dots$ denote all of them. Possible ambiguity can be resolved by checking the space of integration. Similarly the volume of a set $A$, denoted $|A|$ in all cases, refer to the corresponding ambiant space endowed with the appropriate measure.

We let $K: \CM((0, \infty) \times \partial_+ G) \to \CM((0, \infty) \times \partial_- G)$, given, for any measure $\nu \in \CM((0, \infty) \times \partial_+ G)$, any test function $\phi \in \CD((0,\infty) \times \partial_- G)$, by
\begin{align}
\label{EqDefiningKMeasure}
\langle K\nu, \phi \rangle_{(0, \infty) \times \partial_- G} & = \int_{(0, \infty) \times \partial_+ G} \Big( \int_{\{v' \cdot n_x > 0\}} \alpha(x) \phi(t,x,v') c_0 M(v') |v' \cdot n_x| dv' \Big) \nu(dt,dv,dx) \\
&\quad + \int_{(0,\infty) \times \partial_+ G}(1-\alpha(x))\phi(t,x,\eta_x(v))  \nu(dt,dv,dx), \nonumber
\end{align}
for $c_0$ defined by (\ref{EqCM}).
The operator $\eta_x(.)$ is the one of specular reflection at $x \in \partial D$, given by
\begin{align}
\label{EqSpecularRef}
\eta_x(v) = v - 2 (v \cdot n_x)n_x, \quad v \in \RR^n.
\end{align} 
Hence, if $(x,v) \in \partial_{\pm} G, (x, \eta_x(v)) \in \partial_{\mp} G$. 

Whenever necessary, we extend the definition of $K$ to an operator $\bar{K}: \CM(\partial_+ G) \rightarrow \CM(\partial_- G )$ defined similarly. For any measure $\nu \in \CM( \partial_+ G)$, any test function $\phi \in \CD(\partial_- G)$, we set
\begin{align}
\label{EqBoundaryOperator}
\langle \bar{K}\nu, \phi \rangle_{\partial_- G} & = \int_{\partial_+ G} \Big( \int_{\{v' \cdot n_x > 0\}} \alpha(x) \phi(x,v') c_0 M(v') |v' \cdot n_x| dv' \Big) \nu(dv,dx)\\
&\quad + \int_{\partial_+ G}(1-\alpha(x))\phi(x,\eta_x(v))  \nu(dv,dx) \nonumber.
\end{align}

With this at hand, we define our notion of weak solution in the sense of measures. 

\begin{defi}
\label{defiweaksol}
We say that a non-negative Radon measure $\rho \in \CM(\bar{\Sigma})$ is a weak solution to (\ref{Problem1}) with non-negative initial datum $\rho_0 \in \CM(G)$ if
\begin{enumerate}[i)]
\item for all $T > 0$, $\rho((0, T) \times G) < \infty$;
\item there exists a couple of non-negative Radon measures $\rho_{\pm}$ on $(0, \infty) \times \partial_{\pm} G$ such that :
\begin{align}
\label{EqBoundaryWeakSolution}
\rho_- = K\rho_+,
\end{align}
and for all $ \phi \in \mathcal{D}(\bar{\Sigma})$ with $\phi = 0$ on $(0,\infty) \times \partial_0 G$,
\begin{align}
\label{GreenResult}
\langle \rho,L\phi \rangle_{\Sigma} =  - \langle \rho_0,\phi(0,\cdot) \rangle_{G}  + \langle \rho_+,\phi \rangle_{(0, \infty) \times \partial_+ G}- \langle \rho_-,\phi \rangle_{(0,\infty) \times \partial_- G}.
\end{align}
\end{enumerate}
\end{defi}

As we will see in Section \ref{SubsectionLawProcess}, such a solution always exists.
If $f \in C^{\infty}([0, \infty) \times \bar{D} \times \RR^n)$ is a strong solution to (\ref{Problem1}), then $\rho(dt,dx,dv) = f(t,x,v)dt dx dv$ on $(0, \infty) \times D \times \RR^n$ is a weak solution with 
\begin{align*}
\rho_+(dt,dx,dv) &= f(t,x,v)|v \cdot n_x| dt dx dv, \quad \text{ in } (0,\infty) \times \partial_+ G, \\
\rho_-(dt,dx,dv) &= f(t,x,v)|v \cdot n_x| dt dx dv, \quad \text{ in } (0, \infty) \times \partial_- G. 
\end{align*} 
Indeed, this can be understood reading the proof of Theorem \ref{ThmUniq} and mainly relies on the following fact: using that $\partial_t f + v \cdot \nabla_x f = 0 $ in $D \times \RR^n$ and Green's formula, we find that 
\begin{align*}
\langle \rho,L\phi \rangle_{\Sigma} &= \int_0^{\infty} \int_{D \times \RR^n} f L\phi dv dx dt \\
 &= - \int_{D \times \RR^n} f_0 \phi(0, \cdot) dv dx - \int_{0}^{\infty} \int_{\pD \times \RR^n} \phi f (n_x \cdot v) dv dx dt \\
& = - \langle \rho_0,\phi(0,\cdot) \rangle_{G}  + \langle \rho_+,\phi \rangle_{(0, \infty) \times \partial_+ G}- \langle \rho_-,\phi \rangle_{(0,\infty) \times \partial_- G}.
\end{align*} 
The fact that $\rho_- = K \rho_+$ is explained by the boundary condition in (\ref{Problem1}), see Remark \ref{rmkBoundaryOp} below.

In \cite[Proposition 1]{MELLET2004827}, Mellet and Mischler show uniqueness of the solution in an $L^1$ setting for a slightly harder case (namely the Vlasov equation rather than the free transport), with the additional hypothesis that the initial datum belongs to $L^1(D \times \RR^n) \cap L^2(D \times \RR^n)$. We adapt this proof in Theorem \ref{ThmUniq} below. 

When a weak solution can be identified with a function having a few regularity, we can define its trace on $\pD$ in a precise manner. We recall here a result of Mischler \cite{Mischler99}.

\begin{thm}\cite[Theorem 1, $E \equiv 0$, $G \equiv 0$] {Mischler99}
\label{ThmMisch}
 If $f \in L^{\infty}_{\loc} ([0, \infty); L^1_{loc}(\bar{D} \times \RR^n))$ satisfies
$$ L f = 0 \quad \text{ in } \mathcal{D}'((0,\infty) \times D \times \RR^n), $$ 
then there holds that $f \in C([0,\infty), L^1_{\loc}(\bar{D} \times \RR^n))$ and the trace $\gamma f$ of $f$ on $(0,\infty) \times \pD \times \RR^n$ is well defined, it is the unique function $$
\gamma f \in L^1_{\loc}([0,\infty) \times \pD \times \RR^n, (n_x \cdot v)^2 dv dx dt)
$$ 
satisfying the Green's formula: for all $0 \leq t_0 < t_1$, for all $\phi \in \CD(\bar{\Sigma})$ such that $\phi = 0$ on $(0,\infty) \times \partial_0 G,$  
\begin{align}
\label{GreenFormulaTrace}
\int_{t_0}^{t_1} \int_G f L\phi dv dx dt = \Big[\int_G f \phi dv dx \Big]^{t_1}_{t_0} - \int_{t_0}^{t_1}  \int_{\pD \times \RR^n} (\gamma f)  (n_x \cdot v) \phi dv dx dt.
\end{align}

\end{thm}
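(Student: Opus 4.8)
The plan is to establish the two claims of Theorem \ref{ThmMisch} (the continuity-in-time statement and the existence-uniqueness of the trace satisfying Green's formula) by the standard \emph{mollification along characteristics} argument, adapted to the free transport operator $L = \partial_t + v\cdot\nabla_x$. Since the statement is quoted verbatim from Mischler \cite{Mischler99} with $E\equiv 0$ and $G\equiv 0$, I would present a self-contained proof rather than merely cite it. First I would fix $v\in\RR^n$ and regularize $f$ in the space variable only: for a standard mollifier $(\zeta_\epsilon)$ on $\RR^n$, set $f^\epsilon(t,x,v) = (f(t,\cdot,v)*\zeta_\epsilon)(x)$, defined for $x$ in the open set $D_\epsilon = \{x\in D : \mathrm{dist}(x,\pD) > \epsilon\}$. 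Because $L$ has constant coefficients in $x$ (for fixed $v$), the regularization commutes with $L$ and $L f^\epsilon = (Lf)*\zeta_\epsilon = 0$ in $\CD'((0,\infty)\times D_\epsilon)$. Hence for each fixed $x$, $t\mapsto f^\epsilon(t,x,v)$ is, after adjusting on a null set, absolutely continuous along the characteristic line $t\mapsto x - tv$ (more precisely $\frac{d}{dt} f^\epsilon(t, x+tv, v) = 0$), so $f^\epsilon$ is constant along characteristics inside the relevant time-space-velocity tube. This gives pointwise control of $f^\epsilon$ and, via the local integrability of $f$, an $L^1_{\loc}$ bound uniform in $\epsilon$.

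Next I would pass to the limit. The key point is that $f^\epsilon \to f$ in $L^1_{\loc}((0,\infty)\times D\times\RR^n)$ and, because each $f^\epsilon$ is constant along characteristics, the restriction of $f^\epsilon$ to any time slice $\{t\}\times D'\times\RR^n$ (with $D'\Subset D$) is a well-defined $L^1$ function depending continuously on $t$; the uniform-in-$\epsilon$ estimates let one deduce that $f$ itself, suitably represented, lies in $C([0,\infty), L^1_{\loc}(\bar D\times\RR^n))$. This is exactly the first assertion of the theorem, and it is what gives meaning to the bracket $\big[\int_G f\phi\,dv\,dx\big]_{t_0}^{t_1}$ in \eqref{GreenFormulaTrace}. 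For the trace: one follows the characteristic backwards from an interior point until it meets $\pD$. Concretely, for $(x,v)\in\partial_+ G$ and small $s>0$, the point $x - sv$ lies in $D$, and one defines $\gamma f(t,x,v)$ as the $L^1_{\loc}$-limit of $f(t+s, x-sv, v)$ as $s\downarrow 0$; the constancy of $f^\epsilon$ along characteristics plus the uniform estimates show this limit exists in $L^1_{\loc}([0,\infty)\times\pD\times\RR^n, (n_x\cdot v)^2 dv\,dx\,dt)$ — the weight $(n_x\cdot v)^2$ arising naturally from the change of variables flattening the characteristic flow near the boundary (one factor $|n_x\cdot v|$ from the co-area/Jacobian of the boundary map, another from the time one can flow while staying in $D$).

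Then Green's formula \eqref{GreenFormulaTrace} is obtained by writing it first for $f^\epsilon$ on $(t_0,t_1)\times D_\epsilon\times\RR^n$, where it is just the classical divergence theorem (or, since $Lf^\epsilon=0$, an integration by parts in $(t,x)$ against the smooth compactly supported $\phi$), and then letting $\epsilon\to 0$: the interior integral converges by $L^1_{\loc}$ convergence of $f^\epsilon$, the time-slice bracket converges by the continuity statement just proved, and the boundary integral converges by the definition of $\gamma f$ as the boundary limit along characteristics, using that $\phi$ vanishes on $\partial_0 G$ to control the degenerate region where $n_x\cdot v \approx 0$. Uniqueness of $\gamma f$ is immediate: if two functions in $L^1_{\loc}([0,\infty)\times\pD\times\RR^n,(n_x\cdot v)^2 dv\,dx\,dt)$ both satisfy \eqref{GreenFormulaTrace} for every admissible $\phi$, their difference integrates to zero against a dense family of test functions (those vanishing on $\partial_0 G$, which separate points of $\partial_\pm G$), hence is zero.

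I expect the main obstacle to be the boundary analysis: making rigorous the claim that $f(t+s,x-sv,v)$ converges as $s\downarrow 0$ in the weighted space, uniformly enough to pass to the limit in the boundary term of Green's formula. This requires a careful change of variables straightening the characteristic flow near a $C^2$ boundary point, keeping track of the Jacobian to see why the natural weight is $(n_x\cdot v)^2$ rather than $|n_x\cdot v|$, and handling the grazing set $\partial_0 G$ — which is precisely why the hypothesis $\phi = 0$ on $(0,\infty)\times\partial_0 G$ appears. The interior regularization step and the continuity-in-time conclusion are, by contrast, routine once the commutation $L(f*\zeta_\epsilon) = (Lf)*\zeta_\epsilon$ is exploited. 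Since this is a cited result, in the actual write-up I would most likely either reproduce Mischler's argument in an appendix or simply refer the reader to \cite{Mischler99}, noting that the case at hand ($E\equiv0$, $G\equiv0$, free transport in a $C^2$ domain) is a direct specialization.
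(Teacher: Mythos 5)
The paper does not prove this statement: Theorem~\ref{ThmMisch} is quoted directly from Mischler~\cite{Mischler99} (with $E\equiv 0$, $G\equiv 0$), which is precisely what you conclude you would do at the end of your write-up. Your mollification-along-characteristics sketch is a reasonable high-level account of the argument in the cited source, so the approaches agree by construction. One small slip in the sketch: for $(x,v)\in\partial_+ G$ (so $v\cdot n_x<0$, the particle exits), the characteristic through $(t,x)$ with velocity $v$ passes through $(t-s,\,x-sv)\in(0,\infty)\times D$ for small $s>0$, so the outgoing trace should be obtained as the limit of $f(t-s,\,x-sv,\,v)$ as $s\downarrow 0$, not of $f(t+s,\,x-sv,\,v)$, which is not on the relevant characteristic.
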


Observe that all the terms are well-defined in (\ref{GreenFormulaTrace}). In particular, our test functions satisfy $\phi(t,x,v) \leq C |v \cdot n_x|$ for all $(t,x,v) \in (0,\infty) \times \pD \times \RR^n$. 

\begin{rmk}

\label{rmkBoundaryOp}
For any $g \in L^1((0, \infty) \times \partial_+ G, |v \cdot n_x| dv dx dt)$, it holds that $K(g |v \cdot n_x|)$ belongs to $ L^1_{\loc}((0,\infty) \times \partial_- G, |v \cdot n_x| dv dx dt)$ and we have
\begin{align}
\label{EqKRegular}
{K}(|v \cdot n_x| g )(t,x,v) = \alpha(x) c_0 M(v) \int_{\{v' \cdot n_x < 0\}} g(t,x,v') |v' \cdot n_x| dv' + (1 - \alpha(x)) g(t,x, \eta_x(v)), 
\end{align}
for almost every $(t,x,v) \in (0,\infty) \times \partial_- G$.
\end{rmk}

\begin{proof}[Proof of (\ref{EqKRegular})] Set $\nu(dt,dx,dv) = g(t,x,v) |v \cdot n_x| dt dx dv$ on $(0,\infty) \times \partial_+ G$
and consider a test function $\phi \in \CD((0,\infty) \times \partial_-G)$. We have
\begin{align*}
&\langle K(\nu), \phi \rangle_{(0,\infty) \times \partial_- G} \\
 & \quad \quad = \int_0^{\infty} \int_{\partial_+ G} \alpha(x) \Big( \int_{\{v' \cdot n_x > 0\}} \phi(t,x,v')c_0M(v')|v' \cdot n_x| dv' \Big) g(t,x,v) |v \cdot n_x| dv dx  dt \\
 & \qquad \quad + \int_0^{\infty} \int_{\partial_+ G} (1-\alpha(x)) \phi(t,x,\eta_x(v)) g(t,x,v) |v \cdot n_x| dv dx dt \\
 & \qquad = \int_0^{\infty} \int_{\partial_- G} \phi(t,x,v) \Big(\alpha(x) c_0 M(v) \int_{\{v' \cdot n_x < 0\}} g(t,x,v') |v' \cdot n_x| dv' \Big) |v \cdot n_x| dv dx dt \\
 & \qquad \quad + \int_0^{\infty} \int_{\partial_- G} \phi(t,x,v) (1-\alpha(x)) g(t, x, \eta_x(v)) |v \cdot n_x|  dv dx dt.
\end{align*}
In the first integral, we only exchanged the roles of $v$ and $v'$. In the second one, we performed the involutive change of variables $v' = \eta_x(v)$ and used that $|\eta_x(v) \cdot n_x| = |v \cdot n_x|$ for all $(x,v) \in \partial_+ G$. Since this holds for any $\phi \in \CD((0, \infty) \times \partial_- G)$, (\ref{EqKRegular}) follows.
\end{proof}

For $f$ with the same regularity as in Theorem \ref{ThmMisch}, $\gamma_\pm f$ denote the restrictions of $\gamma f$ to $(0, \infty) \times \partial_\pm G$.
From (\ref{GreenResult}) and (\ref{GreenFormulaTrace}) and the uniqueness of this trace function it is clear that if the measures $\rho_{\pm}$ in Definition \ref{defiweaksol} admit two densities $f_{\pm}$ with respect to the measure $|v \cdot n_x| dv dx dt$ on $(0, \infty) \times \partial_{\pm} G$, those densities can be identified with $\gamma_{\pm} f$.

 We now adapt the uniqueness result in Proposition 1 in \cite{MELLET2004827}.

\begin{thm}
\label{ThmUniq}
 Consider $f \in C_w([0,\infty); L^1 (\bar{D} \times \RR^n))$ for all $T > 0$ (i.e. $f$ is weakly continuous in time in the sense of measures) admitting a trace function $\gamma f \in L^1([0,T] \times \pD \times \RR^n, |v \cdot n_x| dv dx dt)$ (for all $T > 0$) such that formula (\ref{GreenFormulaTrace}) holds. Assume that $\rho(dt,dx,dv) = f(t,x,v) dt dx dv$ is a weak solution to (\ref{Problem1}) with initial condition $f_0 \in L^1(D \times \RR^n)$. Then, we have
\begin{equation}
\label{distribsolution}
\left\{
\begin{aligned}
\begin{array}{lll}
L f = (\partial_t + v \cdot \nabla_x)f = 0 \quad &\text{ in } \mathcal{D}'((0,\infty) \times D \times \RR^n), \\
f(0,.) = f_0 \quad  &\text{a.e. in } D \times \RR^n, \\
(v \cdot n_x) \gamma_- f  = K \Big( |v \cdot n_x| \gamma_+ f \Big) \quad &\text{a.e. in } (0,\infty) \times \partial_- G. 
\end{array}
\end{aligned}
\right.
\end{equation} 
Moreover, $f$ is the unique solution to (\ref{distribsolution}) with this regularity.
\end{thm}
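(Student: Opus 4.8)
The plan is to first read off the pointwise system \eqref{distribsolution} from the weak formulation \eqref{GreenResult}, and then to prove uniqueness by an $L^{1}$--contraction argument applied to the difference of two solutions, in the spirit of \cite[Proposition~1]{MELLET2004827}. For Step~1: testing \eqref{GreenResult} against $\phi\in\CD((0,\infty)\times D\times\RR^{n})$ (supported away from the boundary and from $t=0$) kills the boundary and initial terms, so $\langle\rho,L\phi\rangle_{\Sigma}=0$, i.e. $Lf=0$ in $\CD'((0,\infty)\times D\times\RR^{n})$; this is the first line of \eqref{distribsolution}, and since $f$ admits a trace by hypothesis, Theorem~\ref{ThmMisch} guarantees $f\in C([0,\infty);L^{1}_{\loc}(\bar D\times\RR^{n}))$ and that \eqref{GreenFormulaTrace} holds with $\gamma f$. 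Writing \eqref{GreenFormulaTrace} with $t_{0}=0$, $t_{1}=+\infty$ for an arbitrary admissible $\phi$ (vanishing on $(0,\infty)\times\partial_{0}G$), using $\rho=f\,dt\,dx\,dv$, $n_{x}\cdot v=\mp|v\cdot n_{x}|$ on $\partial_{\pm}G$, and that the initial datum of the weak solution equals $f_{0}$, and subtracting \eqref{GreenResult}, one is left with an identity which, by localizing $\phi$ successively near $\{0\}\times G$, near $(0,\infty)\times\partial_{+}G$ and near $(0,\infty)\times\partial_{-}G$, forces $f(0,\cdot)=f_{0}$ a.e.\ and $\rho_{\pm}=(\gamma_{\pm}f)\,|v\cdot n_{x}|\,dv\,dx\,dt$. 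Inserting these into $\rho_{-}=K\rho_{+}$ and using the explicit form \eqref{EqKRegular} of $K$ on measures of this type (Remark~\ref{rmkBoundaryOp}) yields the third line of \eqref{distribsolution}; the converse (a function with this regularity solving \eqref{distribsolution} is a weak solution) follows by reading the same computation backwards.

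For the uniqueness in Step~2, let $f^{1},f^{2}$ be two functions with the stated regularity solving \eqref{distribsolution}, and set $g=f^{1}-f^{2}$; then $Lg=0$ in $\CD'((0,\infty)\times D\times\RR^{n})$, $g(0,\cdot)=0$ a.e., and, by linearity of the trace and of $K$, $(v\cdot n_{x})\gamma_{-}g=K(|v\cdot n_{x}|\gamma_{+}g)$ a.e.\ on $(0,\infty)\times\partial_{-}G$. A standard renormalization (mollify $g$ in the $(t,x)$ variables with $v$ frozen, then smooth $s\mapsto|s|$) shows $L|g|=0$ in $\CD'((0,\infty)\times D\times\RR^{n})$, so by Theorem~\ref{ThmMisch} one has $|g|\in C([0,\infty);L^{1}_{\loc}(\bar D\times\RR^{n}))$ with trace $\gamma|g|=|\gamma g|$ satisfying \eqref{GreenFormulaTrace}, and $|g|(0,\cdot)=|g(0,\cdot)|=0$ a.e. Plugging into \eqref{GreenFormulaTrace} for $|g|$ on $(0,t)$ a sequence of test functions increasing to $1$ --- truncated near the grazing set $\partial_{0}G$, where the weight $n_{x}\cdot v$ vanishes, and for large $\|v\|$ --- and passing to the limit thanks to $|g(t,\cdot)|\in L^{1}(\bar D\times\RR^{n})$ and $|\gamma g|\in L^{1}((0,t)\times\pD\times\RR^{n},|v\cdot n_{x}|\,dv\,dx\,ds)$, one obtains
\[
\int_{D\times\RR^{n}}|g(t,x,v)|\,dv\,dx=\int_{0}^{t}\Big(\int_{\partial_{-}G}|\gamma_{-}g|\,|v\cdot n_{x}|\,dv\,dx-\int_{\partial_{+}G}|\gamma_{+}g|\,|v\cdot n_{x}|\,dv\,dx\Big)ds .
\]

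It then remains to check that the inner bracket is nonpositive for a.e.\ $s$, which is the only place the structure of $K$ is used. By \eqref{EqKRegular}, the boundary relation in \eqref{distribsolution} reads, for a.e.\ $(s,x)$ with $x\in\pD$ and all $v$ with $v\cdot n_{x}>0$,
\[
\gamma_{-}g(s,x,v)=\alpha(x)c_{0}M(v)\int_{\{v'\cdot n_{x}<0\}}\gamma_{+}g(s,x,v')\,|v'\cdot n_{x}|\,dv'+(1-\alpha(x))\,\gamma_{+}g(s,x,\eta_{x}(v)).
\]
Multiplying by $|v\cdot n_{x}|$, using the triangle inequality, integrating over $\{x\in\pD,\ v\cdot n_{x}>0\}$, and using the normalization $\int_{\{v\cdot n_{x}>0\}}c_{0}M(v)(v\cdot n_{x})\,dv=1$ (the choice of $c_{0}$) for the diffuse term together with the $|v\cdot n_{x}|\,dv$--preserving involution $v\mapsto\eta_{x}(v)$ for the specular term, the factors $\alpha(x)$ and $1-\alpha(x)$ recombine to $1$ and one gets $\int_{\partial_{-}G}|\gamma_{-}g|\,|v\cdot n_{x}|\,dv\,dx\le\int_{\partial_{+}G}|\gamma_{+}g|\,|v\cdot n_{x}|\,dv\,dx$ (note that the lower bound \eqref{hypoalpha} on $\alpha$ is not even needed here). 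Hence $\int_{D\times\RR^{n}}|g(t,\cdot)|\,dv\,dx\le 0$ for every $t\ge 0$, so $g\equiv 0$, i.e.\ $f^{1}=f^{2}$.

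I expect the main obstacle to be analytic bookkeeping rather than any conceptual difficulty: the renormalization identity $L|g|=0$ together with $\gamma|g|=|\gamma g|$, and, above all, the passage to the limit in \eqref{GreenFormulaTrace}, which must simultaneously accommodate the compulsory vanishing of test functions on the grazing set $\partial_{0}G$ and the noncompactness of the velocity variable; both are handled by a careful choice of cut--offs controlled by the $L^{1}$ bounds on $|g(t,\cdot)|$ and on the trace, exactly as in \cite[Proposition~1]{MELLET2004827}, which we adapt. The contraction property of $K$ itself, by contrast, is elementary, being just the fact that the diffuse part averages against a probability flux while the specular part is a $|v\cdot n_{x}|\,dv$--preserving involution.
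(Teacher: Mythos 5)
Your proof is correct, and its Step~2 takes a cleaner and, I believe, the actually needed route compared with the argument as written in the paper. Both you and the paper reduce to $g=f^{1}-f^{2}$ with $g(0,\cdot)=0$, apply Mischler's renormalization to get $L|g|=0$ and $\gamma|g|=|\gamma g|$, and then use the Green identity \eqref{GreenFormulaTrace} for $|g|$. The paper then localizes $\phi\in\CD(D\times\RR^n)$ away from $\pD$ so that the boundary flux drops out, and writes $0=\int_{t_0}^{t_1}\int_G|g|\,L\psi\,dv\,dx\,ds=[\int_G|g|\psi]_{t_0}^{t_1}$; but as written the first equality is not justified (on $(t_0,t_1)$ one has $L\psi=v\cdot\nabla_x\phi\ne 0$ in general), and once the boundary term has been localized away the computation no longer sees the boundary condition at all, which by itself cannot yield uniqueness for a boundary value problem. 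Your version supplies the missing ingredient: you let $\phi\uparrow 1$ (truncated near $\partial_{0}G$ and at large $\|v\|$), retain the boundary flux $\int_{\partial_{-}G}|\gamma_{-}g|\,|v\cdot n_x|-\int_{\partial_{+}G}|\gamma_{+}g|\,|v\cdot n_x|$, and show it is nonpositive via the $L^{1}$-contraction of $K$ (the Darrozes--Guiraud-type estimate: the diffuse part averages $\gamma_{+}g$ against a normalized flux, the specular part is a $|v\cdot n_x|\,dv$-preserving involution, and $\alpha+(1-\alpha)=1$). This is precisely the argument of \cite[Proposition~1]{MELLET2004827} that the theorem points to, and it is the one that genuinely uses the third line of \eqref{distribsolution}; your remark that the lower bound $\alpha\ge\alpha_0$ is not needed here is also correct. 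Your Step~1 is a condensed but faithful version of the paper's (the paper carries out the initial-time identification with an explicit cutoff $\beta_\epsilon(t)$ and the weak continuity of $f$). The one place to be careful in a full write-up is the limit $\phi_n\uparrow 1$ in the term $\int_{0}^{t}\int_G|g|\,(v\cdot\nabla_x\phi_n)$, since the test functions must vanish on $\partial_0G$; you correctly flag this, and it is exactly where the hypotheses $|g(t,\cdot)|\in L^1$ and $\gamma g\in L^1(|v\cdot n_x|\,dv\,dx\,dt)$ are used.
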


As we will see in Theorem \ref{ThmGreenberg}, such a solution always exists, assuming of course that $f_0$ is a probability density function. 

\begin{proof}

\textbf{Step 1.} Here, we prove that $f$ solves (\ref{distribsolution}). 

We first claim that we have the two equalities $\rho_+(dt,dx,dv) = \gamma_+ f(t,x,v) |v \cdot n_x| dt dx dv$ and $\rho_-(dt,dx,dv) = \gamma_- f(t,x,v) |v \cdot n_x| dt dx dv$. Indeed, consider a test function $\phi \in \CD((0, \infty) \times \bar{D} \times \RR^n)$, with $\phi = 0$ on $(0, \infty) \times \partial_0 G$. Using (\ref{GreenResult}), the definition of $\rho$ and (\ref{GreenFormulaTrace}), we obtain
\begin{align*}
\langle \rho_+, \phi \rangle_{(0, \infty) \times \partial_+ G} - \langle \rho_-, \phi \rangle_{(0, \infty) \times \partial_- G} = \langle \rho, L\phi \rangle_{\Sigma} &= \int_0^{\infty} \int_{D \times \RR^n} f L\phi dv dx dt \\
&=  - \int_0^{\infty} \int_{\pD \times \RR^n} (v \cdot n_x) (\gamma f) \phi  dv dx dt.
\end{align*}
from which we deduce that $\rho_+(dt,dx,dv) - \rho_-(dt,dx,dv) = \gamma f(t,x,v) (v \cdot n_x) dt dx dv$ whence the claim. With this at hand, the third equation of (\ref{distribsolution}) follows immediatly from (\ref{EqBoundaryWeakSolution}) and Remark \ref{rmkBoundaryOp}.

The first equation of (\ref{distribsolution}) follows from (\ref{GreenResult}) and the definition of $\rho$, since for all $T > 0$, the right-hand side of (\ref{GreenResult}) is $0$ for $\phi \in \mathcal{D}((0,T) \times D \times \RR^n)$.

For the second equation of (\ref{distribsolution}), we want to prove that for any $\phi \in \CD(D \times \RR^n)$, 
\begin{align}
\label{EqInitialConditionUniq}
\int_{D \times \RR^n} \phi(x,v) f(0,x,v) dv dx = \langle f_0, \phi \rangle_{D \times \RR^n}.
\end{align}
Using the definition of $\rho$ and the equation (\ref{GreenResult}) we obtain immediatly
\begin{align*}
\int_0^{\infty} \int_{D \times \RR^n} L\psi f  dv dx dt  = - \langle f_0, \psi(0,.) \rangle_{D \times \RR^n}
\end{align*}
for any $\psi \in \CD([0,\infty) \times D \times \RR^n)$. 
Let $\phi \in \CD(D \times \RR^n)$, $\epsilon \in (0,1)$ and define the function $\beta_{\epsilon}$ by $\beta_{\epsilon}(t) = e^{-\frac{t}{\epsilon - t}} \mathbf{1}_{\{t \in [0, \epsilon)\}}$.
Therefore $\beta_{\epsilon}$ is smooth with compact support in $[0, \infty)$ and we can apply the previous equation with $\psi(t,x,v) = \beta_{\epsilon}(t) \phi(x,v)$ to find
\begin{align}
\label{EqAlphaEps1}
\int_0^{\infty} \int_{D \times \RR^n} \Big(\beta_{\epsilon}'(t) \phi(x,v) + \beta_{\epsilon}(t) v \cdot \nabla_x \phi(x,v) \Big) f(t,x,v)  dv dx dt = - \langle f_0, \phi \rangle_{D \times \RR^n}.
\end{align}
We set
\begin{align*}
J_{\epsilon} = \int_0^{\infty} \int_{D \times \RR^n} \beta_{\epsilon}(t) v \cdot \nabla_x \phi(x,v) f(t,x,v)  dv dx dt,
\end{align*}
and
\begin{align*}
I_{\epsilon} = \int_0^{\infty} \int_{D \times \RR^n} \beta_{\epsilon}'(t) \phi(x,v) f(t,x,v)  dv dx dt,
\end{align*}
so that  (\ref{EqAlphaEps1}) writes
$$ I_{\epsilon} + J_{\epsilon} = - \langle f_0, \phi \rangle_{D \times \RR^n}. $$
Since $\phi \in \CD(D \times \RR^n)$, $\beta_{\epsilon} \leq 1$ and $\beta_{\epsilon}(t) \to 0$ a.e. as $\epsilon$ converges to $0$ the dominated convergence theorem gives immediatly $\lim \limits_{\epsilon \to 0} J_{\epsilon} = 0$.
On the other hand, since $\int_0^{\epsilon} |\beta'_{\epsilon}(t)|dt = - \int_0^{\epsilon} \beta'_{\epsilon}(t)dt = 1$,
\begin{align*}
I_{\epsilon} = \Delta_{\epsilon}  - \int_{D \times \RR^n} \phi(x,v) f(0,x,v) dv dx,
\end{align*} 
with
\begin{align*}
\Delta_{\epsilon} = \int_0^{\infty} \int_{D \times \RR^n} \beta'_{\epsilon}(t) \phi(x,v) \Big( f(t,x,v) - f(0,x,v) \Big) dv dx dt.
\end{align*}
We have,
\begin{align*}
|\Delta_{\epsilon}| &\leq \int_0^{\epsilon} |\beta'_{\epsilon}(t)| dt \Big| \int_{D \times \RR^n} \phi(x,v) \Big(f(t,x,v) - f(0,x,v)\Big) dv dx \Big| \\
&\leq \sup_{t \in [0, \epsilon]} \Big|\int_{D \times \RR^n} \phi(x,v) \Big(f(t,x,v) - f(0,x,v)\Big) dv dx \Big|.
\end{align*}
The resulting supremum converges to $0$ as $\epsilon$ goes to $0$ using the weak continuity of $f$.
Taking the limit as $\epsilon$ goes to $0$ in (\ref{EqAlphaEps1}) completes the proof of (\ref{EqInitialConditionUniq}).

\vspace{.5cm}

\textbf{Step 2.} We now show uniqueness of the solution through a contraction result in $L^1(D \times \RR^n)$. Consider two solutions $g_1, g_2$ of (\ref{distribsolution})  with the same initial datum $g_0$. By linearity, $f = g_1 - g_2$ is again a solution to (\ref{distribsolution}) the problem with an initial datum $f_0 \equiv 0$ (the trace being $\gamma f = \gamma g_1 - \gamma g_2$ by linearity of the Green's formula (\ref{GreenFormulaTrace})). Let $\beta \in W^{1, \infty}_{\loc}(\RR)$ such that $|\beta(y)| \leq C_{\beta}(1+|y|)$,  for some constant $C_{\beta} >0$ and for all $y \in \RR$. From \cite[Proposition 2]{Mischler99} (note that our hypothesis on $\gamma f$ implies $\gamma f \in L^1_{\loc}((0, \infty) \times \pD \times \RR^n, |v \cdot n_x|^2 dv dx dt)$), we know that
\begin{gather*}
L \beta(f) = (\partial_t + v \cdot \nabla_x)\beta(f) = 0, \quad \text{ in } \mathcal{D}'((0,\infty) \times D \times \RR^n), \\
\gamma \beta (f) = \beta (\gamma f), \hspace{2.2cm} \text{ in } (0,T) \times \pD \times \RR^n.
\end{gather*}
We now choose $\beta(y) = |y|$, which satisfies the previous requirements. We set $0 < t_0 < t_1$ and  for all $\epsilon \in (0,t_0)$, $\delta_{\epsilon}(t) = \mathbf{1}_{(t_0,t_1)}(t) + e^{-\frac{t-t_1}{\epsilon + t_1 - t}} \mathbf{1}_{[t_1, t_1 + \epsilon)} + e^{-\frac{t_0 - t}{\epsilon + t - t_0}} \mathbf{1}_{(t_0-\epsilon,t_0)}$ and apply the Green's formula (\ref{GreenFormulaTrace}) to $|f|$ with the test function $\psi(t,x,v) = \delta_{\epsilon}(t) \phi(x,v)$ for all $(t,x,v) \in [0,\infty) \times \bar{D} \times \RR^n$, where $\phi \in \CD(D \times \RR^n)$, so that $\psi \in \CD((0,\infty) \times D \times \RR^n)$ using that $\delta_{\epsilon}$ is smooth with support in $(t_0 - \epsilon, t_1 + \epsilon)$. We obtain
\begin{align*}
0 = \int_{t_0}^{t_1} \int_{D \times \RR^n} |f|L\psi dv dx ds = \Big[ \int_{D \times \RR^n} |f| \psi dv dx \Big]_{t_0}^{t_1}.
\end{align*}
Since $\delta_{\epsilon}(t_1) = \delta_{\epsilon}(t_0) = 1$, we deduce
$$ \int_{D \times \RR^n} \big(|f(t_1)| - |f(t_0)|\big) \phi(x,v) dx dv = 0. $$
Since $f$ is weakly continuous, we let $t_0 \to 0$, and, using $|f(0)| = |f_0| = 0$ almost everywhere in $D \times \RR^n$, we conclude that for all $t_1 > 0$
$$ \int_{D \times \RR^n} |f(t_1,x,v)| \phi(x,v) dx dv = 0, $$
for all $\phi \in \CD(D \times \RR^n)$. This completes the proof.
\end{proof}

In the next subsection, we construct a stochastic process from which we obtain a weak solution to the problem. Ultimately, we show the following well-posedness result, which follows from Theorem \ref{ThmUniq}, Propositions \ref{PropWeakSolution} and \ref{PropRegularitySol}. 

\begin{thm} \
\label{ThmGreenberg} 
\begin{enumerate}[(i)]
\item Let $\rho_0 \in \mathcal{P}(D \times \RR^n)$. There exists a weak solution $\rho$ in the sense of Definition \ref{defiweaksol} to (\ref{Problem1}) with inital data $\rho_0$. This solution writes $\rho(dt,dx,dv) = dt f_t(dx,dv)$ on $\Sigma$, with $t \to f_t$ right-continuous from $[0, \infty)$ to $\mathcal{P}(D \times \RR^n)$. 
\item If moreover $\rho_0$ admits a density $f_0 \in L^1(D \times \RR^n)$, then for all $t \geq 0$, $f_t$ admits a density $f(t,.)$ with respect to the Lebesgue measure on $D \times \RR^n$. We have, for all $T > 0$, $f \in C([0,T); L^1(\bar{D} \times \RR^n))$ and the trace measure of $f$, $\gamma f$ satisfies $\gamma f \in L^1([0,T] \times \pD \times \RR^n, |v \cdot n_x| dtdxdv)$. Hence $f$ is the unique weak solution to (\ref{Problem1}) with such regularity. 
\end{enumerate}
\end{thm}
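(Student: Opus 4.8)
\textbf{Proof proposal for Theorem \ref{ThmGreenberg}.}

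The plan is to split the argument according to the two items, relying on the probabilistic construction carried out in the next section. For item (i), I would first construct, for a given $\rho_0 \in \CP(D\times\RR^n)$, the stochastic process $(X_t,V_t)_{t\ge 0}$ of Definition \ref{DefiProcess}: start from $(X_0,V_0)\sim\rho_0$, flow along the free transport $\dot X = V$, $\dot V = 0$ until the first hitting time of $\pD$, and at each boundary hit at a point $x$ perform, independently, a specular reflection $v\mapsto\eta_x(v)$ with probability $1-\alpha(x)$ and a diffuse reflection with a fresh velocity drawn from the normalized flux measure $c_0^{-1}M(v)(v\cdot n_x)\mathbf 1_{\{v\cdot n_x>0\}}dv$ with probability $\alpha(x)$. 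One must check that the hitting times do not accumulate on finite intervals almost surely — this follows since $D$ is $C^2$ and bounded, so the inward normal is Lipschitz and a geometric lower bound on the inter-collision distance holds whenever the speed is bounded, while the event of a vanishing speed has probability zero under $M$. Setting $f_t = \mathrm{Law}(X_t,V_t)$ and $\rho(dt,dx,dv)=dt\,f_t(dx,dv)$, I would then verify that $\rho$ satisfies Definition \ref{defiweaksol}: property i) is immediate since $f_t$ is a probability measure; for the weak formulation (\ref{GreenResult}) I would apply Itô's/Dynkin's formula to $\phi(t,X_t,V_t)$ for $\phi\in\CD(\bar\Sigma)$ vanishing on $(0,\infty)\times\partial_0 G$, take expectations, and identify the boundary terms — the jump part of the generator contributes exactly $\langle\bar K\cdot,\cdot\rangle$ at each collision, which upon integrating against the occupation measure of collision times produces the pair $\rho_\pm$ with $\rho_- = K\rho_+$ by the very definition (\ref{EqDefiningKMeasure}) of $K$. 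Right-continuity of $t\mapsto f_t$ in $\CP(D\times\RR^n)$ follows from right-continuity of the paths. This is the content of what the excerpt calls Proposition \ref{PropWeakSolution}.

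For item (ii), assume $\rho_0$ has a density $f_0\in L^1(D\times\RR^n)$. I would show $f_t$ has a density $f(t,\cdot)$ for every $t$: between collisions the law is transported by the (measure-preserving on $D\times\RR^n$, since the flow is a shear) free-transport flow, which preserves absolute continuity; at a diffuse collision the post-collision velocity has a density (proportional to $M$), and at a specular collision $\eta_x$ is a measure-preserving involution, so absolute continuity is again preserved; a careful bookkeeping of the (at most countably many, a.s.\ finitely many on $[0,T]$) collisions, together with the fact that the set of phase points that are at the boundary at a fixed deterministic time $t$ is Lebesgue-null, gives that $f_t\ll dx\,dv$. This is Proposition \ref{PropRegularitySol}. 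Then I would invoke Mischler's trace theorem \ref{ThmMisch}: since $Lf=0$ in $\CD'((0,\infty)\times D\times\RR^n)$ (which is just the free-transport statement, already part of the weak formulation restricted to interior test functions), we get $f\in C([0,\infty);L^1_{\loc}(\bar D\times\RR^n))$ and a well-defined trace $\gamma f$; the uniform bound on the mass, plus the explicit control of the outgoing flux through $M$ and the specular term, upgrades this to $f\in C([0,T);L^1(\bar D\times\RR^n))$ and $\gamma f\in L^1([0,T]\times\pD\times\RR^n,|v\cdot n_x|\,dt\,dx\,dv)$. Finally, with this regularity in hand, Theorem \ref{ThmUniq} applies verbatim and yields uniqueness, completing the proof.

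The main obstacle is the construction and the identification of the boundary measures $\rho_\pm$ in item (i): because the jumps of the process are \emph{predictable} (they occur exactly when the deterministic flow meets $\pD$) rather than driven by an exogenous Poisson clock, the process is not a textbook PDMP, and the naive Dynkin formula must be justified by a pathwise argument summing the contributions collision by collision and then showing the sum of boundary contributions is integrable against $dt$ over $[0,T]$ — this is where the geometric non-accumulation of collisions, and hence the $C^2$-regularity of $D$ together with the integrability properties of $M$, are genuinely used. A secondary technical point is the null-set argument showing $f_t$ has no singular part supported on trajectories sitting at the boundary at time $t$; this requires knowing that, for each fixed $t$, the first-hitting-time map is non-degenerate, which again rests on the smoothness of $\pD$.
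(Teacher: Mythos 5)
Your proposal is correct and follows essentially the same route as the paper: existence via the probabilistic construction and a pathwise Dynkin-type computation summed over inter-collision intervals (Proposition \ref{PropWeakSolution}), regularity by propagating absolute continuity through the free-transport flow and the boundary operators and then invoking Mischler's trace theorem (Proposition \ref{PropRegularitySol}), and uniqueness from Theorem \ref{ThmUniq}. One place where your sketch is a bit too vague to close the argument is the upgrade from $\gamma f\in L^1_{\loc}((v\cdot n_x)^2\,dv\,dx\,dt)$, which is all Mischler's theorem gives, to $\gamma f\in L^1([0,T]\times\pD\times\RR^n,|v\cdot n_x|\,dv\,dx\,dt)$: in the paper this comes from first showing, by explicit changes of variables, that the boundary measures $\rho_\pm=\sum_i\rho_\pm^i$ have densities with respect to $|v\cdot n_x|\,dv\,dx\,dt$, bounding their total mass on $[0,T]$ by $\EE[\#\{i:T_i\le T\}]<\infty$ (the non-explosion estimate), and then verifying separately that this density coincides with the trace in the sense of Theorem \ref{ThmMisch}.
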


\section{Probabilistic setting}
\label{SectionProba}

In this section, we build a stochastic process which corresponds to the evolution of a gas particle. Then we show that its law (roughly speaking) is a weak solution in the sense of Definition \ref{defiweaksol} of (\ref{Problem1}), and enjoys the regularity requirements of Theorem \ref{ThmUniq} when the initial condition admits a density.

\subsection{Construction of the process}
\label{SubsectionConstructionProbabilisticSolution}
We start by setting some notations that will show useful in the construction of the stochastic process. We set $\mathcal{A} = (-\frac{\pi}{2}, \frac{\pi}{2}) \times [0,\pi)^{n-2}$. We recall that the Jacobian of the hyperspherical change of variables $v  \to (r,\theta_1, \dots, \theta_{n-1})$ from $\RR^n$ to the space  $(0, \infty) \times [- \pi, \pi) \times [0,\pi)^{n-2}$ is given by $r^{n-1} \prod_{j = 1}^{n-2} \sin(\theta_j)^{n-1-j}$. For $r \in \RR_+$, we abusively write $M(r) = M(v)$ with $v \in \RR^n$, $\|v\| = r$.

\begin{lemma} \
\label{NotatHM}
We define $h_{R}: \RR_+ \to \RR_+$ to be the density given by $h_R(r) = c_R r^n M(r) \mathbf{1}_{\{r \geq 0\}}$, where $c_R$ is a normalizing constant. Let also $h_{\Theta}$ the density on $\mathcal{A}$ defined by
$$h_{\Theta}(\theta_1, \dots, \theta_{n-1}) = c_{\Theta} \cos(\theta_1) \prod_{j = 1}^{n-2} \sin(\theta_j)^{n-1-j}.$$ 
We write $\Upsilon$ for the law of $(R,\Theta)$, $R$ having density $h_R$, $\Theta$ having density $h_{\Theta}$ independent of $R$.

There exists a measurable function $\vartheta: \pD \times \mathcal{A} \to \RR^n$ such that for any $x \in \pD,$ any $\Upsilon$-distributed random variable $(R, \Theta)$,
\begin{align}
\label{EqRVarTheta}
R \vartheta(x,\Theta) \sim c_0 M(v) |v \cdot n_x| \mathbf{1}_{\{v \cdot n_x > 0\}},
\end{align}
and such that for all $\theta = (\theta_1, \dots, \theta_{n-1}) \in \mathcal{A}$, $x \in \pD$
\begin{align}
\label{EqVarTheta}
\vartheta(x,\theta) \cdot n_x = \cos(\theta_1).
\end{align} 
\end{lemma}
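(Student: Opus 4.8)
\textbf{Proof proposal for Lemma \ref{NotatHM}.}

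The plan is to read off $\vartheta$ directly from the hyperspherical change of variables, arranged so that the polar angle $\theta_1$ is measured from the inward normal $n_x$. First I would fix, for each $x \in \pD$, an orthonormal basis $(e_1(x), \dots, e_n(x))$ of $\RR^n$ with $e_n(x) = n_x$, chosen measurably in $x$ (for instance by a measurable selection / Gram–Schmidt argument, since $\pD$ is $C^2$ one can even do this locally smoothly and patch measurably). In this basis I define $\vartheta(x,\theta)$ to be the unit vector whose coordinates are given by the standard hyperspherical formulas with the role of the ``last'' angle $\theta_1$ played so that the component along $n_x$ is $\cos\theta_1$: explicitly, the component along $n_x$ is $\cos\theta_1$ and the remaining components lie in the hyperplane $n_x^\perp$ with magnitude $\sin\theta_1$ and angular coordinates $\theta_2,\dots,\theta_{n-1}$. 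This is manifestly measurable in $(x,\theta)$ and gives \eqref{EqVarTheta} by construction. Since $\theta_1 \in (-\pi/2,\pi/2)$ we have $\cos\theta_1 > 0$, so $\vartheta(x,\theta)\cdot n_x > 0$.

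Next I would verify \eqref{EqRVarTheta}. Work in the fixed basis at $x$, so that $v = R\,\vartheta(x,\Theta)$ becomes the point of $\RR^n$ with spherical radius $R$ and angles $\Theta$, and the half-space $\{v\cdot n_x>0\}$ is exactly $\{\theta_1 \in (-\pi/2,\pi/2)\}$, i.e. the angles range over $\mathcal{A}$. Pushing forward the law $\Upsilon$ of $(R,\Theta)$ through the map $(r,\theta)\mapsto v$: by the Jacobian formula recalled before the lemma, the Lebesgue measure $dv$ on the half-space corresponds to $r^{n-1}\prod_{j=1}^{n-2}\sin(\theta_j)^{n-1-j}\,dr\,d\theta$ (with $\theta_1$ now the polar angle from $n_x$). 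The density of $(R,\Theta)$ is $h_R(r)h_\Theta(\theta) = c_R c_\Theta\, r^n M(r)\,\cos(\theta_1)\prod_{j=1}^{n-2}\sin(\theta_j)^{n-1-j}$. Dividing by the Jacobian, the density of $v$ with respect to $dv$ on $\{v\cdot n_x>0\}$ is proportional to
\begin{align*}
\frac{r^n M(r)\cos(\theta_1)\prod \sin(\theta_j)^{n-1-j}}{r^{n-1}\prod\sin(\theta_j)^{n-1-j}} = r\,M(r)\,\cos(\theta_1) = \|v\|\,M(v)\,\frac{v\cdot n_x}{\|v\|} = M(v)\,(v\cdot n_x),
\end{align*}
using $\|v\| = r$, $M(r) = M(v)$ and $v\cdot n_x = r\cos(\theta_1)$. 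Since $v\cdot n_x>0$ on this half-space, this is $M(v)|v\cdot n_x|\mathbf 1_{\{v\cdot n_x>0\}}$ up to a multiplicative constant; checking that the normalizing constants $c_R$, $c_\Theta$ combine to give exactly $c_0$ is the content of \eqref{EqCM}: indeed $c_0 = \int_{\{u\cdot n_x>0\}}M(u)(u\cdot n_x)\,du$ is precisely the total mass of the right-hand side of \eqref{EqRVarTheta}, and the total mass of a density is $1$, so the constant is forced to be $c_0$.

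The only genuinely delicate point is the measurable choice of the orthonormal frame $(e_1(x),\dots,e_{n-1}(x),n_x)$ on all of $\pD$: there is no global continuous such choice when $\pD$ is, say, a sphere (hairy ball obstruction for $n$ even), so I would not claim continuity, only measurability. This is handled by a standard measurable-selection argument: $\pD$ is a separable metric space, the set of orthonormal frames completing $n_x$ is a nonempty compact subset of $(\RR^n)^{n-1}$ depending measurably (indeed continuously, as a set-valued map) on $x$, and the Kuratowski–Ryll-Nardzewski selection theorem yields a measurable selector. Everything else is the routine change-of-variables computation above; no further subtlety arises, and the identity \eqref{EqVarTheta} holds by the very definition of $\vartheta$.
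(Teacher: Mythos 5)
Your proposal is essentially the same as the paper's: both construct $\vartheta(x,\theta)$ by composing the standard hyperspherical parametrization (with polar angle $\theta_1$) with an isometry sending the polar axis to $n_x$, and both obtain (\ref{EqRVarTheta}) from the recalled Jacobian of that change of variables, while (\ref{EqVarTheta}) holds by construction. You actually go a bit further than the paper, spelling out the density pushforward computation in detail and addressing the measurable choice of frame $x\mapsto(e_1(x),\dots,e_{n-1}(x),n_x)$ via a measurable-selection argument, a point the paper's proof leaves implicit when it simply "fixes" an orthonormal basis for each $x$.
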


\begin{proof}

For $(e_1, \dots, e_n)$ the canonical basis of $\RR^n$, we define, $P: \RR^n \to [0, \infty) \times [- \pi, \pi) \times [0,\pi]^{n-2}$, which, to a vector expressed in the $(e_1, \dots, e_n)$ coordinates, gives the associated hyperspherical coordinates (with polar axis $e_1$). 
For $x \in \pD$, we fix an orthonormal basis $(n_x, f_2 \dots, f_n)$ of $\RR^n$ and consider the isometry $\xi_x$ that sends $(e_1, \dots, e_n)$ to $(n_x, f_2, \dots, f_n)$.
We then set, for $\theta \in \mathcal{A}$,
\begin{align*}
\vartheta(x,\theta) = \Big( \xi_x \circ P^{-1} \Big)(1, \theta).  
\end{align*}
With this construction, $\vartheta$ is such that (\ref{EqRVarTheta}) holds. Finally, by definition of $P$ and $\xi_x$, we have
\begin{align*}
\cos(\theta_1) = P^{-1}(1,\theta) \cdot e_1 = \xi_x\Big(P^{-1}(1,\theta) \Big) \cdot n_x,
\end{align*}
as desired.
\end{proof} 

\begin{rmk}
Note that the fact that $\int_{\RR_+} s^n M(s) ds < \infty$ follows from $\int_{\RR^n} \|v\| M(v) dv < \infty$, see Hypothesis \ref{HypoM}, using hyperspherical coordinates.
\end{rmk}

\begin{notat}
We introduce two important deterministic maps. Define $\zeta: \bar{D} \times \RR^n \rightarrow \RR_+$ by
\begin{align}
\label{defsigma}
\zeta(x,v) &= \left\{ 
\begin{array}{ll}
\inf \{s > 0, x + sv \in \pD\},& \text{ if } (x,v) \in G \cup \partial_- G, \\
0, & \text{ if } (x,v) \in \partial_+ G \cup \partial_0 G.
\end{array} 
\right.  
\end{align}
We also define $q: \bar{D} \times \RR^n \rightarrow \pD$ by
\begin{align}
\label{defq}
q(x,v) &= \left\{ 
\begin{array}{ll}
x + \zeta(x,v)v,& \text{ if } (x,v) \in G \cup \partial_- G, \\
x, & \text{ if } (x,v) \in \partial_+ G \cup \partial_0 G.
\end{array} 
\right.  
\end{align}
\end{notat}

For a gas particle governed by the dynamics of (\ref{Problem1}), in position $(x,v) \in \bar{D} \times \RR^n$ at time $t = 0$, $\zeta(x,v)$ is the time of its first collision with the boundary, while $q(x,v)$ is the point of $\pD$ where this collison occurs.
The value attributed to those functions on $\partial_0 G$ has no consequences on our study, since our dynamic forbids the occurence of this situation.

 Recall that $\eta_x(v) = v - 2(v \cdot n_x) n_x$ for all $(x,v) \in \pD \times \RR^n$.

\begin{notat}
\noindent We define the map $w: \pD \times \RR^n \times [0,1] \times \RR_+ \times \mathcal{A}  \to \RR^n$ by
\begin{align}
\label{EqDefW}
 w(x,v,u,r,\theta) = \eta_x(v) \mathbf{1}_{\{u > \alpha(x)\}} + r \vartheta(x,\theta) \mathbf{1}_{\{u \leq \alpha(x)\}}. 
 \end{align}
We write $\mathcal{U}$ for the uniform distribution over $[0,1]$, and denote $\mathcal{Q}$ the measure $\mathcal{U} \otimes \Upsilon$. 
\end{notat}

Let us define, given an appropriate sequence of inputs, our process.

\begin{defi}
\label{DefiProcess}
Consider an initial distribution $\rho_0$ on $(D \times \RR^n) \cup \partial_- G$, a sequence  of i.i.d. random vectors $(U_i, R_i, \Theta_i)_{i \geq 1} $ of law $\mathcal{Q}$. We define the stochastic process $(X_t, V_t)_{t \geq 0}$ as follows:
\begin{labeling}{Step k+1:}
\item[Step 0:] Let $(X_0,V_0)$ be distributed according to $\rho_0$. 
\item[Step 1:] Set $T_1 = \zeta(X_0, V_0)$.  

\noindent For $t \in [0, T_1)$, set $ V_t = V_0$ and $ X_t = X_0 + t V_0$. 

\noindent Set $X_{T_1} = X_{T_1-}$ and $V_{T_1} = w(X_{T_1}, V_{T_1-}, U_1, R_1, \Theta_1)$.

\item[Step k+1:] Set $T_{k+1} = T_k + \zeta(X_{T_k}, V_{T_k})$. 

\noindent For all $t \in (T_k, T_{k+1})$, set $X_t = X_{T_k} + (t - T_k)V_{T_k},$ $V_t = V_{T_k}$. 

\noindent Set $X_{T_{k+1}} = X_{T_{k+1}-} \in \pD$ and 

\noindent $V_{T_{k+1}} = w(X_{T_{k+1}}, V_{T_{k+1}-}, U_{k+1}, R_{k+1}, \Theta_{k+1})$. 
\item[etc.]
\end{labeling}
We say that $(X_s,V_s)_{s \geq 0}$ is a free-transport process with initial distribution $\rho_0$.
\end{defi} 

\begin{rmk}
\label{RmkExtensionDefiProcess}
We extend the previous definition to the case where $(x,v) \in \partial_+ G$ and $\rho_0 = \delta_x \otimes \delta_v$, with, informally, $X_0 = x, V_{0-} = v$. In this case, we pick an extra triplet $(U_0, R_0, \Theta_0) \sim \mathcal{U} \otimes \Upsilon$  independent of everything else and we set
$$ X_0 = x, \quad V_0 = w(x,v, U_0, R_0, \Theta_0). $$
Step 1 and further remain the same.
\end{rmk}

\subsection{Non-explosion}

In this section, we show that the process constructed in Definition \ref{DefiProcess} is almost surely well defined for all times $t > 0$. For $m \geq 1$, we write $\mathbb{S}^m = \{x \in \RR^{m+1}, \|x\| = 1\}$ for the unit sphere in $\RR^{m+1}$. Recall that any $C^2$ bounded domain satisfies the uniform interior ball condition and therefore the following interior cone condition, see for instance Fornaro \cite[Proposition B.0.16 and its proof]{fornaro}. 

\begin{defi}
\label{DefiUniformCone}
We say that a bounded set $D \subset \mathbb{R}^n$ satisfies the uniform cone condition if there exist $\beta \in (0,1)$, $h > 0$, such that for all $x \in \pD$,
\begin{align*}
C_x &= \{x + tu, t \in (0,h), u \cdot n_x > \beta, u \in \mathbb{S}^{n-1} \} \subset D.
\end{align*}
\end{defi}

\begin{prop}
\label{PropExplosion}
Under Hypothesis \ref{HypoM}, the sequence $(T_i)_{i \geq 1}$ of Definition \ref{DefiProcess} almost surely satisfies $T_i \rightarrow + \infty$  as $i \to + \infty$. More precisely, for any $T > 0$, $\EE[\#\{i \geq 1: T_i \leq T\}] < \infty$.
\end{prop}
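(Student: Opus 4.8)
The plan is to show that between two consecutive rebounds, the process spends at least a deterministic positive time *provided* the speed is not too large, and to control the case of large speeds probabilistically. The key quantitative input is the uniform cone condition (Definition \ref{DefiUniformCone}): from any point $x\in\pD$, the cone $C_x$ of aperture governed by $\beta$ and height $h$ lies inside $D$. Geometrically this gives a lower bound on the travel time $\zeta(X_{T_k},V_{T_k})$ whenever the post-rebound velocity $V_{T_k}$ points sufficiently "inward", i.e. satisfies $V_{T_k}\cdot n_{X_{T_k}} \geq \beta \|V_{T_k}\|$; indeed in that case the ray $X_{T_k}+sV_{T_k}$ stays in the cone $C_{X_{T_k}}$, hence in $D$, until $s\|V_{T_k}\|$ reaches $h$, so $\zeta(X_{T_k},V_{T_k}) \geq h/\|V_{T_k}\|$. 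By \eqref{EqVarTheta}, after a diffuse rebound $V_{T_k}=R_k\vartheta(X_{T_k},\Theta_k)$ has $V_{T_k}\cdot n_{X_{T_k}} = R_k\cos(\Theta_{k,1})$ and $\|V_{T_k}\| = R_k$, so the inward-cone condition is exactly $\cos(\Theta_{k,1})\geq\beta$, an event of fixed positive probability $p_0$ under $h_\Theta$, independent of everything else; moreover on this event the travel time is $\geq h/R_k$.

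First I would fix $T>0$ and a threshold $r_\ast>0$, and introduce for each $k\geq 1$ the "good" event $G_k = \{U_k\leq\alpha(X_{T_k})\}\cap\{\cos(\Theta_{k,1})\geq\beta\}\cap\{R_k\leq r_\ast\}$, on which the rebound is diffuse, the new velocity points into the cone, and has norm at most $r_\ast$, so that $T_{k+1}-T_k\geq h/r_\ast =: \varepsilon_\ast>0$. Using \eqref{hypoalpha} and the independence of $(U_k,R_k,\Theta_k)$ from the past (which is the content of the construction in Definition \ref{DefiProcess}), $\PP(G_k\mid \mathcal F_{T_k})\geq \alpha_0\, p_0\, \PP(R_1\leq r_\ast) =: q_\ast>0$, where $r_\ast$ is chosen so that $\PP(R_1\leq r_\ast)>0$ (possible since $R_1$ has a density which is positive near $0$ because $M\geq\bar M>0$ on a neighbourhood of $0$). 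Then, on the event $\{T_N\leq T\}$ for some integer $N$, at most $\lfloor T/\varepsilon_\ast\rfloor$ of the events $G_1,\dots,G_N$ can occur (each one that occurs consumes at least $\varepsilon_\ast$ units of time). Hence $\{\#\{i: T_i\leq T\}\geq N\}$ is contained in the event that a sequence of Bernoulli-type trials, each succeeding with conditional probability at least $q_\ast$, produces at most $\lfloor T/\varepsilon_\ast\rfloor$ successes in $N$ trials; a standard stochastic-domination / Chernoff argument bounds $\PP(\#\{i:T_i\leq T\}\geq N)$ by the tail of a binomial $\mathrm{Bin}(N,q_\ast)$ below $\lfloor T/\varepsilon_\ast\rfloor$, which decays geometrically in $N$. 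Summing over $N$ gives $\EE[\#\{i\geq 1: T_i\leq T\}]<\infty$, and in particular $T_i\to\infty$ a.s. since otherwise infinitely many $T_i$ would lie in some $[0,T]$.

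The main obstacle is making the conditional-independence step fully rigorous: one must check that $G_k$ depends only on $(U_k,R_k,\Theta_k)$ and on $X_{T_k}$ (through $\alpha(X_{T_k})$ and $n_{X_{T_k}}$), that $(U_k,R_k,\Theta_k)$ is independent of the $\sigma$-field generated by the first $k$ rebounds, and that the "at most $\lfloor T/\varepsilon_\ast\rfloor$ successes" counting is valid pathwise — the latter because the travel times are nonnegative and the good steps are disjoint in time. Once this bookkeeping is set up, the estimate is a routine geometric tail bound; the only genuinely geometric ingredient is the cone-based lower bound $\zeta\geq h/\|v\|$ on inward directions, which is precisely why the uniform cone condition was recalled just before the statement. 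A minor point to handle is the extension in Remark \ref{RmkExtensionDefiProcess} for initial data on $\partial_+G$, but this only adds one deterministic preliminary rebound and does not affect the asymptotics.
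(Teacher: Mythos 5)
Your proof follows the same core approach as the paper: use the uniform cone condition to identify ``good'' rebounds (diffuse, inward-pointing, bounded speed) that occur with a fixed positive conditional probability and force a deterministic lower bound on the inter-rebound time, then conclude by stochastic domination. The only difference is in the final routine step --- you count successes among $G_1,\dots,G_{N-1}$ on $\{T_N\le T\}$ and invoke a binomial/Chernoff tail bound, while the paper instead minorizes $\tau_i$ by the i.i.d.\ variables $\sigma_i = \frac{h}{N}\mathbf{1}_{\Omega_i}$ and cites the finiteness of the mean renewal count from classical renewal theory; both are standard and valid, and your version is perhaps slightly more self-contained.
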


\begin{proof}
Let $h$ and  $\beta$ be the positive constants of the uniform cone condition corresponding to $D$.
Recall that there exists a constant $\alpha_0  > 0$ such that for any $x \in \pD$, $\alpha(x) \geq \alpha_0 $. For $N$ large enough, writing $\Theta_1 = (\Theta_1^1, \dots, \Theta_1^{n-1}) \in \RR^{n-1}$, we have
$$p = \PP\Big(U_1 \leq \alpha_0 , \cos(\Theta^1_1) > \beta, R_1 \in [0, N] \Big) > 0. $$
Using Borel-Cantelli's lemma, one concludes that almost surely, an infinite number of elements of the sequence $\Omega_i = \{U_i \leq \alpha_0, \cos(\Theta^1_i) > \beta, R_i \in [0, N]\}$  is realized. 
For all $i \geq 1$, on $\Omega_i$, $\vartheta(X_{T_i}, \Theta_i) \cdot n_{X_{T_i}} > \beta$, 
whence  $X_{T_i + t} = X_{T_i} + tV_{T_i} \in C_{X_{T_i}} \subset D$ for all $t \in [0, \frac{h}{N}]$, because $V_{T_i}= R_i \vartheta(X_{T_i},\Theta_i)$ has a norm smaller than $N$. 

Set $T_0 = 0$ and $\tau_i = T_{i+1} - T_{i}$ for all $i \geq 1$. By the previous observation, we have, on $\Omega_i$, 
$$ \tau_i = \zeta(X_{T_i}, V_{T_i}) =  \frac{|q(X_{T_i}, V_{T_i}) - X_{T_i}|}{R_i} \geq \frac{h}{N} > 0,$$
To conclude, note first that
\begin{align*}
\lim \limits_{i \to + \infty} T_i \geq \sum_{j \geq 1} \tau_j \mathbf{1}_{\Omega_j} \geq \frac{h}{N} \sum_{j \geq 1} \mathbf{1}_{\Omega_j} = + \infty \quad \text{ a.s. }
\end{align*}

For the second part of the propositon, we let $T > 0$ and we set $N_T := \sup \{i \geq 1, \tau_1 + \dots + \tau_i \leq T\}$. 
For all $i \geq 1$, we let $(\sigma_i)_{i \geq 1}$ be the i.i.d. sequence defined by $\sigma_i = \frac{h}{N} \mathbf{1}_{\Omega_i}$, and define the random variable $M_T$ by $M_T := \sup \{i \geq 1, \sigma_1 + \dots + \sigma_i < T\}$. We have
\begin{align*}
\EE[\#\{i \geq 1: T_i \leq T\}] \leq  \EE[N_T] + 1 \leq \EE[M_T]  + 1,
\end{align*}
since for all $i \geq 1, \tau_i \geq \sigma_i$ almost surely. Since the sequence $(\sigma_i)_{i \geq 1}$ is i.i.d., it follows from a classical result of renewal theory that $\EE[M_T] < \infty$, which terminates the proof.
\end{proof}

\subsection{Law of the process}
\label{SubsectionLawProcess}

\begin{prop}
\label{PropWeakSolution}
Let $\rho_0 \in \mathcal{P}(D \times \RR^n)$ and consider the process $(X_t, V_t)_{t \geq 0}$ from Definition \ref{DefiProcess}. Set, for all $t \geq 0$, $f_t$ to be the law of $(X_t,V_t)$, and define the measure $\rho$ on $\bar{\Sigma}$ by
$$ \rho(dt,dx,dv) = f_t(dx,dv) dt. $$
Then $\rho$ is a weak solution to (\ref{Problem1}) in the sense of Definition \ref{defiweaksol}. Moreover $t \to f_t(dx,dv)$ is right-continuous from $(0, \infty)$ to $\mathcal{P}(\bar{D} \times \RR^n)$ endowed with the weak convergence of measures. 
\end{prop}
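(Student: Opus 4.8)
The plan is to read the boundary measures $\rho_\pm$ directly off the collision times and positions of the process, to check $\rho_-=K\rho_+$ by conditioning on the randomness available just before each rebound, to obtain the Green identity $(\ref{GreenResult})$ by integrating $L\phi$ along the piecewise-affine trajectory and telescoping, and finally to deduce right-continuity of $t\mapsto f_t$ from the pathwise right-continuity of $t\mapsto(X_t,V_t)$. Concretely, with the notation of Definition $\ref{DefiProcess}$, $T_0=0$ and $(X_{T_0},V_{T_0})=(X_0,V_0)$, one sets, for $\phi\in\CD((0,\infty)\times\partial_+ G)$ and $\psi\in\CD((0,\infty)\times\partial_- G)$,
\[
\langle\rho_+,\phi\rangle=\EE\Big[\textstyle\sum_{k\geq1}\phi(T_k,X_{T_k},V_{T_k-})\Big],\qquad \langle\rho_-,\psi\rangle=\EE\Big[\textstyle\sum_{k\geq1}\psi(T_k,X_{T_k},V_{T_k})\Big].
\]
Condition (i) of Definition $\ref{defiweaksol}$ is immediate since $\rho((0,T)\times G)=\int_0^T f_t(G)\,dt=T$. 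That $(T_k,X_{T_k},V_{T_k-})\in(0,\infty)\times\partial_+ G$ and $(T_k,X_{T_k},V_{T_k})\in(0,\infty)\times\partial_- G$ for all $k$ a.s. requires ruling out grazing rebounds: a diffusely reflected velocity satisfies $v\cdot n_x=R\cos(\Theta^1)>0$ a.s. by $(\ref{EqVarTheta})$, and together with the $C^2$ regularity of $D$ one checks that a chord between two consecutive collision points meets $\pD$ transversally at its endpoint. Granting this, Proposition $\ref{PropExplosion}$ gives $\EE[\#\{k\colon T_k\leq T\}]<\infty$, so $\rho_\pm$ are finite on each $[\epsilon,T]\times\partial_\pm G$, hence Radon, and for $\phi$ compactly supported in time the sums above have a.s. finitely many nonzero terms.

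For $\rho_-=K\rho_+$, fix $k\geq1$ and condition on $\CG_{k-1}:=\sigma\big(X_0,V_0,(U_i,R_i,\Theta_i)_{i\leq k-1}\big)$, which makes $T_k,X_{T_k},V_{T_k-}$ measurable and leaves $(U_k,R_k,\Theta_k)$ independent. Since $V_{T_k}=w(X_{T_k},V_{T_k-},U_k,R_k,\Theta_k)$, formula $(\ref{EqDefW})$ together with $(\ref{EqRVarTheta})$ and $\PP(U_k\leq\alpha(x))=\alpha(x)$ yields, for bounded measurable $\psi$,
\[
\EE\big[\psi(T_k,X_{T_k},V_{T_k})\mid\CG_{k-1}\big]=\alpha(X_{T_k})\!\!\int_{\{v'\cdot n_{X_{T_k}}>0\}}\!\!\!\!\psi(T_k,X_{T_k},v')\,c_0M(v')\,|v'\cdot n_{X_{T_k}}|\,dv'+(1-\alpha(X_{T_k}))\,\psi\big(T_k,X_{T_k},\eta_{X_{T_k}}(V_{T_k-})\big).
\]
Summing over $k$ and taking expectations, the right-hand side is exactly $\langle K\rho_+,\psi\rangle$ by $(\ref{EqDefiningKMeasure})$, so $\rho_-=K\rho_+$.

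For $(\ref{GreenResult})$, fix $\phi\in\CD(\bar\Sigma)$ vanishing on $(0,\infty)\times\partial_0 G$ and supported in $[0,S]\times\bar D\times\RR^n$. On each interval $(T_k,T_{k+1})$ one has $X_t=X_{T_k}+(t-T_k)V_{T_k}$, $V_t=V_{T_k}$, so $t\mapsto\phi(t,X_t,V_t)$ is $C^1$ there with derivative $(L\phi)(t,X_t,V_t)$; since $X$ is continuous while $V$ jumps only at the $T_k$,
\[
\int_{T_k}^{T_{k+1}}(L\phi)(t,X_t,V_t)\,dt=\phi(T_{k+1},X_{T_{k+1}},V_{T_{k+1}-})-\phi(T_k,X_{T_k},V_{T_k}).
\]
Summing over $k\geq0$ (finitely many nonzero terms a.s.) and telescoping gives
\[
\int_0^\infty(L\phi)(t,X_t,V_t)\,dt=\sum_{j\geq1}\phi(T_j,X_{T_j},V_{T_j-})-\phi(0,X_0,V_0)-\sum_{j\geq1}\phi(T_j,X_{T_j},V_{T_j}).
\]
As $|L\phi|\leq\|L\phi\|_\infty\mathbf{1}_{[0,S]}$, Tonelli's theorem justifies $\langle\rho,L\phi\rangle_\Sigma=\EE\big[\int_0^\infty(L\phi)(t,X_t,V_t)\,dt\big]$, and taking expectations above produces precisely $\langle\rho_+,\phi\rangle_{(0,\infty)\times\partial_+ G}-\langle\rho_0,\phi(0,\cdot)\rangle_G-\langle\rho_-,\phi\rangle_{(0,\infty)\times\partial_- G}$, which is $(\ref{GreenResult})$. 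For right-continuity, on the a.s. event of non-explosion $t\mapsto(X_t,V_t)$ is right-continuous at every $t_0\geq0$ ($X$ continuous, $V$ piecewise constant with the post-collision value assigned at each $T_k$), so for $\phi\in C_b(\bar D\times\RR^n)$ bounded convergence gives $\int\phi\,df_t\to\int\phi\,df_{t_0}$ as $t\downarrow t_0$.

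I expect the main obstacle to be twofold. First, the $\rho_-=K\rho_+$ step must be set up with a filtration for which the $k$-th triplet $(U_k,R_k,\Theta_k)$ is genuinely independent of the $\sigma$-field carrying the $k$-th pre-collision state $(T_k,X_{T_k},V_{T_k-})$; this forces one to check the measurability of the random times $T_k$ and of the hitting maps $\zeta,q$ and to argue that the conditional law of the post-collision velocity is the one used in $K$. Second, one must verify that grazing collisions are a.s. avoided, so that $\rho_\pm$ are carried by $\partial_\pm G$ and not by $\partial_0 G$; this is the only point where the geometry of the $C^2$ domain (in the spirit of the cited work of Evans) really enters.
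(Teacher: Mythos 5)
Your proposal follows the paper's own argument essentially step by step: the boundary measures are read off the atoms $(T_k,X_{T_k},V_{T_k\mp})$, the identity $\rho_-=K\rho_+$ comes from the independence of the $k$-th triplet $(U_k,R_k,\Theta_k)$ from the pre-collision state (the paper does this for each $\rho^i_\pm$ separately and then sums, you do it in one go after conditioning, which is the same computation), the Green identity comes from integrating $\frac{d}{dt}\phi(t,X_t,V_t)$ on each excursion $(T_k,T_{k+1})$ and telescoping, and right-continuity is pathwise. This is correct. The one point you raise that the paper leaves implicit is that $(T_k,X_{T_k},V_{T_k-})$ should a.s. lie in $\partial_+G$ rather than $\partial_0 G$; the paper simply asserts that $\rho^i_\pm$ are measures on $\RR_+\times\partial_\pm G$ without comment, and your sketch (diffuse reflections avoid grazing directions a.s., and a chord from the interior of a $C^2$ domain that first meets $\pD$ tangentially forces the normal curvature in the travel direction to vanish, which is exceptional) is the right shape of argument, even though you stop short of making it fully rigorous for non-convex domains; since the paper does not treat it either, this is not a gap relative to the paper's own proof.
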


\begin{rmk}
The boundary measures corresponding to $\rho$ in Definition \ref{defiweaksol} are given by
\begin{align*}
\rho_{\pm}(A) = \EE\big[\sum_{i \geq 1} \mathbf{1}_{(T_i,X_{T_i},V_{T_i}) \in A}\big], \quad A \in \CB((0, \infty) \times \partial_{\pm} G).
\end{align*}
\end{rmk}

\begin{proof} [Proof of Proposition \ref{PropWeakSolution}]


From its definition, it is clear that $\rho$ is a  non-negative Borel measure on $\bar{\Sigma}$. For all $T > 0$,
\begin{align*}
\rho((0,T) \times G) = \int_0^T \EE[\mathbf{1}_{(t,X_t,V_t) \in \Sigma}] dt \leq T,
\end{align*}
so that $\rho$ is also Radon.  

\vspace{.5 cm  }

For $i \geq 1$, we introduce two probability measures $\rho_{\pm}^i$ on $\RR_+ \times \partial_{\pm}G$: $\rho_+^i$ is the law of the triple $(T_i, X_{T_i}, V_{T_i-})$ and $\rho_-^i$ is the law of the triple $(T_i, X_{T_i}, V_{T_i})$. 

We now prove that for all $i \geq 1$, $\rho_-^i = K \rho_+^i$. 
For $B \in \CB( \RR_+ \times \partial_- G)$, using the definition of $(V_t)_{t \geq 0}$, we have
\begin{align*}
\rho_-^i(B) &= \mathbb{E}[\mathbf{1}_{(T_i,X_{T_i},V_{T_i}) \in B}] \\
&= \EE\Big[\alpha(X_{T_i}) \mathbf{1}_{\big(T_i, X_{T_i}, R_i \vartheta(X_{T_i},\Theta_i)\big) \in B}\Big] + \EE\Big[(1-\alpha(X_{T_i})) \mathbf{1}_{\big(T_i, X_{T_i}, \eta_{X_{T_i}}(V_{T_i-})\big) \in B}\Big] 
\end{align*}
Using (\ref{EqRVarTheta}), we deduce,
\begin{align*}
\rho_-^i(B) &= \int_{(0,\infty) \times \partial_+ G} \int_{  \{w \in \RR^n, w \cdot n_x > 0\}} \alpha(x) \mathbf{1}_{\{(t,x,w) \in B\}} c_0 M(w) |w \cdot n_x| dw \rho^i_+(dt,dx,dv) \\
&\quad + \int_{(0,\infty) \times \partial_+ G} \mathbf{1}_{\{(t,x,\eta_x(v)) \in B\}} (1-\alpha(x)) \rho^i_+(dt,dx,dv) \\
&= K\rho^i_+(B),
\end{align*}
recall (\ref{EqDefiningKMeasure}).
Setting $\rho_+(A) = \sum_{i \geq 1} \rho_+^i (A) $ for all $A \in \CB(\RR_+ \times \partial_+ G)$, $\rho_-(B) = \sum_{i \geq 1} \rho_-^i(B)$ for all $B \in \CB(\RR_+ \times \partial_- G)$, we deduce that $\rho_- = K \rho_+$ on $\RR_+ \times \partial_- G$.

We now prove (\ref{GreenResult}). Let $\phi \in \CD(\bar{\Sigma})$. We have, by definition of $\rho$ and using Definition \ref{DefiProcess},
\begin{align*}
\langle \rho,L\phi \rangle_{\Sigma} &= \int_0^{\infty} \EE[L\phi(t,X_t,V_t)] dt \\
&= \int_0^{\infty} \EE\Big[\sum_{i = 0}^{\infty} \mathbf{1}_{\{T_i \leq t < T_{i+1}\}} L\phi(t, X_{T_i} + (t - T_i) V_{T_i}, V_{T_i})\Big] dt \\
&= \sum_{i = 0}^{\infty} \EE\Big[\int_{T_i}^{T_{i + 1}} (\partial_t + V_{T_i} \cdot \nabla_x ) \phi(t, X_{T_i} + (t - T_i)V_{T_i}, V_{T_i}) dt \Big] \\
&= \sum_{i = 0}^{\infty} \EE \Big[\int_{T_i}^{T_{i + 1}} \frac{d}{dt} \Big( \phi(t, X_{T_i} + (t - T_i)V_{T_i}, V_{T_i})\Big) dt \Big].
\end{align*}
As a conclusion,
 \begin{align*}
\langle \rho, L\phi \rangle_{\Sigma}
&= \EE \Big[\sum_{i = 0}^{\infty} \phi(T_{i+1}, X_{T_i} + (T_{i+1} - T_i) V_{T_i}, V_{T_i}) \Big] - \EE \Big[\sum_{i = 1}^{\infty}  \phi(T_i, X_{T_i}, V_{T_i}) \Big] \\
&\quad - \EE[\phi(0,X_0,V_0)]\\
& = \EE \Big[\sum_{i = 0}^{\infty} \phi(T_{i+1}, X_{T_{i+1}} , V_{{T_{i+1}-}}) \Big]
 - \langle  \rho_-, \phi \rangle_{(0,\infty) \times \partial_- G} - \langle  \rho_0,\phi(0,.) \rangle_{D \times \RR^n} \\
&= \langle  \rho_+,\phi \rangle_{(0,\infty) \times \partial_+ G} - \langle  \rho_-,\phi \rangle_{(0,\infty) \times \partial_- G} - \langle  \rho_0,\phi(0,.) \rangle_{D \times \RR^n},
\end{align*}
which concludes the proof that $\rho$ is a weak solution. Observe that all the computations above can easily be justified because there exists some $T > 0$ such that $\supp(\phi) \subset [0,T] \times \bar{D} \times \RR^n$. 

\vspace{.5cm}

The right-continuity of $t \to f_t$ on $(0, \infty)$ is a straightforward result given that $(X_t)_{t \geq 0}$ is continuous and $(V_t)_{t \geq 0}$ is càdlàg on $(0, \infty)$ according to Definition \ref{DefiProcess}. 
 \end{proof}

In the next proposition, we study the regularity of the solution given by Proposition \ref{PropWeakSolution} in the case where the initial data $\rho_0$ has a density in $D \times \RR^n$. 

\begin{prop}
\label{PropRegularitySol}
For $\rho_0$ having a density $f_0 \in L^1(D \times \RR^n)$, the Radon measure
 $\rho$ defined in Proposition \ref{PropWeakSolution} admits a density $f$ with respect to the Lebesgue measure in $\RR_+ \times \bar{D} \times \RR^n$. Moreover,  $f \in C([0,\infty); L^1(\bar{D} \times \RR^n))$. The non-negative measures $\rho_{\pm}$ satisfy 
$$ \rho_{\pm}(dt,dx,dv) = \gamma_{\pm} f(t,x,v) |v \cdot n_x| dt dx dv \text{ on } (0, \infty) \times \partial_{\pm} G, $$
where $\gamma f \in L^1([0,T] \times \pD \times \RR^n, |v \cdot n_x|dv dx  dt)$ for all $T > 0$ is the trace measure of $f$ given by Theorem \ref{ThmMisch} and where we write $\gamma_{\pm} f$ for its restrictions to $(0, \infty) \times \partial_{\pm} G$.
\end{prop}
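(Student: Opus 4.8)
The plan is to propagate the absolute continuity of $f_0$ through the probabilistic construction of Definition \ref{DefiProcess}, and then identify the boundary measures using the structure of the diffusive reflection kernel $K$. First I would show that for each $t \geq 0$, the law $f_t$ of $(X_t, V_t)$ is absolutely continuous with respect to Lebesgue measure on $\bar D \times \RR^n$. Since $f_0$ has a density, the key point is that neither the free-transport flow nor the reflection maps destroy absolute continuity: between collisions, $(X_t, V_t)$ is the image of $(X_0, V_0)$ under the measure-preserving (in the $(x,v)$ sense, up to the deterministic shear $x \mapsto x + tv$, which has Jacobian $1$) free flow, so the law stays absolutely continuous there; and at a collision at a point $X_{T_i} \in \pD$, the specular part is again a volume-preserving change of variables ($\eta_x$ is an isometry and $|\eta_x(v)\cdot n_x| = |v\cdot n_x|$), while the diffusive part replaces the velocity by a fresh draw from $c_0 M(v)|v\cdot n_x|\mathbf 1_{\{v\cdot n_x>0\}}$, which itself has a density in $v$. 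The care needed is that the \emph{position} $X_{T_i}$ lives on the $(n-1)$-dimensional boundary, so one has to argue that after the next free-transport leg of positive duration the pair $(X_t,V_t)$ regains a full $n$-dimensional density; this is where Proposition \ref{PropExplosion} (finitely many collisions before time $t$, a.s.) is used, decomposing $f_t = \sum_i \PP(T_i \le t < T_{i+1})\,\mathcal L(X_t,V_t \mid T_i \le t < T_{i+1})$ and checking each term is absolutely continuous. This gives the density $f(t,\cdot)$ and, writing $f(t,x,v)\,dt\,dx\,dv = \rho(dt,dx,dv)$, the density $f$ on $\RR_+ \times \bar D \times \RR^n$.

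Next I would upgrade ``$f_t$ has a density for each $t$'' to $f \in C([0,\infty); L^1(\bar D \times \RR^n))$. Here the natural route is to invoke Theorem \ref{ThmMisch}: $\rho$ is a weak solution by Proposition \ref{PropWeakSolution}, so $Lf = 0$ in $\mathcal D'((0,\infty)\times D \times \RR^n)$, and if one can show $f \in L^\infty_{\loc}([0,\infty); L^1_{\loc}(\bar D \times \RR^n))$ — which follows from $\|f(t,\cdot)\|_{L^1} = 1$ for all $t$, a consequence of mass conservation in the construction — then Theorem \ref{ThmMisch} directly yields $f \in C([0,\infty); L^1_{\loc}(\bar D \times \RR^n))$, and boundedness of $D$ promotes $L^1_{\loc}$ to $L^1$. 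The same theorem furnishes the trace measure $\gamma f \in L^1_{\loc}([0,\infty)\times \pD \times \RR^n, (n_x\cdot v)^2\,dv\,dx\,dt)$ satisfying the Green formula (\ref{GreenFormulaTrace}).

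Then I would identify $\rho_\pm$ with $\gamma_\pm f\,|v\cdot n_x|\,dt\,dx\,dv$. The argument is exactly the one already run in Step 1 of the proof of Theorem \ref{ThmUniq}: combining (\ref{GreenResult}) for the weak solution $\rho$ with (\ref{GreenFormulaTrace}) for $f$ and using the definition $\rho(dt,dx,dv) = f\,dt\,dx\,dv$, one gets
$$\langle \rho_+,\phi\rangle_{(0,\infty)\times\partial_+G} - \langle\rho_-,\phi\rangle_{(0,\infty)\times\partial_-G} = -\int_0^\infty\int_{\pD\times\RR^n}(v\cdot n_x)(\gamma f)\phi\,dv\,dx\,dt$$
for all admissible $\phi$, hence $\rho_+ - \rho_- = \gamma f\,(v\cdot n_x)\,dt\,dx\,dv$ as measures; since $\rho_+$ is carried by $\partial_+G$ (where $v\cdot n_x < 0$) and $\rho_-$ by $\partial_-G$ (where $v\cdot n_x > 0$), the two sides split according to the sign of $v\cdot n_x$, giving $\rho_\pm(dt,dx,dv) = \gamma_\pm f(t,x,v)\,|v\cdot n_x|\,dt\,dx\,dv$. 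Finally, the relation $\rho_- = K\rho_+$ from Definition \ref{defiweaksol} together with Remark \ref{rmkBoundaryOp} shows $\gamma f \in L^1([0,T]\times\pD\times\RR^n, |v\cdot n_x|\,dv\,dx\,dt)$ (the diffusive part of $K$ is controlled by $\int M(v)|v\cdot n_x|\,dv = c_0 < \infty$ times the total mass of $\rho_+$, and the specular part is an isometry), improving the a priori $(n_x\cdot v)^2$-weight to the $|n_x\cdot v|$-weight.

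I expect the main obstacle to be the first step: rigorously checking that the finite-dimensional distributions $f_t$ are absolutely continuous, because of the ``dimension drop'' at each boundary collision — the position variable degenerates onto $\pD$ at the instant $T_i$ and one must carefully track how the subsequent free-flight, integrated over the (absolutely continuous in the relevant coordinates) reflected velocity and over the collision time, recreates a genuine $n$-dimensional density. The cleanest way to handle this is probably an explicit induction on $i$: express the law of $(T_{i+1}, X_{T_{i+1}}, V_{T_{i+1}})$ in terms of that of $(T_i, X_{T_i}, V_{T_i})$ via the maps $\zeta, q, w$ and the change of variables to hyperspherical coordinates of Lemma \ref{NotatHM}, and then express $(X_t, V_t)$ on $\{T_i \le t < T_{i+1}\}$ as a further (non-degenerate, once $t > T_i$) pushforward, keeping track of Jacobians; the finiteness coming from Proposition \ref{PropExplosion} ensures the sum over $i$ converges in $L^1$.
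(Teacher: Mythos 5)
Your plan is essentially the paper's proof: Steps 1--6 of the paper do exactly the Jacobian-tracking induction you sketch, alternating between the laws of $(T_i,X_{T_i},V_{T_i^\pm})$ on $\RR_+\times\partial_\pm G$ (with density against $|v\cdot n_x|\,dt\,dx\,dv$) and the laws of $(X_t,V_t)$ on $\{T_i< t< T_{i+1}\}$ (with a full $n$-dimensional density), using the change of variables $x\mapsto(q(x,v),\zeta(x,v))$ whose Jacobian is $|v\cdot n_y|$; Step 7 then invokes Theorem \ref{ThmMisch} as you propose. For the trace identification, you run the argument in the opposite direction from the paper's Step 8 — you take $\gamma f$ as granted from Theorem \ref{ThmMisch} and compare (\ref{GreenResult}) with (\ref{GreenFormulaTrace}) to conclude $\rho_\pm=\gamma_\pm f\,|v\cdot n_x|\,dt\,dx\,dv$, whereas the paper first constructs the boundary density $g$ from $\rho_\pm$ and then proves $g$ satisfies (\ref{GreenFormulaTrace}), hence $g=\gamma f$ by uniqueness. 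These are equivalent; your direction slightly shortens the verification, at the cost of having to check that the test class $\{\phi\in\CD(\bar\Sigma):\phi=0\text{ on }\partial_0G\}$ is measure-determining on $\partial_\pm G$, which it is.

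Two small corrections. The upgrade of $\gamma f$ from the a priori $(n_x\cdot v)^2$-weighted $L^1_{\loc}$ to the $|v\cdot n_x|$-weighted $L^1([0,T]\times\pD\times\RR^n)$ does \emph{not} come from $\rho_-=K\rho_+$ and the boundedness of $K$: that relation only bounds $\rho_-$ by $\rho_+$, it does not bound $\rho_+$ itself. What you actually need is $\rho_\pm([0,T]\times\partial_\pm G)=\EE[\#\{i:T_i\le T\}]<\infty$, which is precisely the second assertion of Proposition \ref{PropExplosion}; the identification $\rho_\pm=\gamma_\pm f\,|v\cdot n_x|\,dt\,dx\,dv$ then transfers this finiteness to $\gamma f$. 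Likewise, the passage from $C([0,\infty);L^1_{\loc}(\bar D\times\RR^n))$ to $C([0,\infty);L^1(\bar D\times\RR^n))$ is not a consequence of the boundedness of $D$ (the velocity space $\RR^n$ is unbounded) but of the conservation of mass, namely $\|f(t,\cdot)\|_{L^1}=1$ for all $t$, combined with the local convergence (a Scheff\'e-type argument).
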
 

Observe that we can indeed apply Theorem \ref{ThmMisch} because (i) $L^1([0,T] \times \pD \times \RR^n, |v \cdot n_x|dv dx  dt) \subset L^1_{\loc}([0,\infty) \times \pD \times \RR^n, (v \cdot n_x)^2dv dx  dt)$, and (ii) $Lf = 0$ in $\mathcal{D}'((0,\infty) \times D \times \RR^n)$ since $\rho$ is a weak solution to (\ref{Problem1}), see Step 7 below.

\begin{proof}

Recall that for $i \geq 1$, $\rho^i_+$ denotes the law of $(T_i, X_{T_i}, V_{T_i-})$, with the sequence $(T_i)_{i \geq 1}$ and the process $(X_t,V_t)_{t \geq 0}$, of Definition \ref{DefiProcess}. For all $i \geq 1$, we also write $\rho^i_-$ for the law of $(T_i, X_{T_i}, V_{T_i})$. 
\vspace{.5cm}

\textbf{Step 1.} We show that $\rho^1_+$ has a density with respect to $|v \cdot n_x| dv dx dt$.
For $A \in \CB(\RR_+ \times \partial_+ G)$,
\begin{align*}
\rho^1_+(A) &= \EE[\mathbf{1}_{\{(T_1, X_{T_1}, V_{T_1-}) \in A \}}] = \EE[\mathbf{1}_{\{(\zeta(X_0,V_0), q(X_0,V_0), V_{0}) \in A \}}] \\
&= \int_{D \times \RR^n} \mathbf{1}_{\{(\zeta(x,v), q(x,v),v) \in A\}} f_0(x,v) dvdx.
\end{align*}
For any fixed $v \in \RR^n$, the map $x \to \big(y = q(x,v), s = \zeta(x,v)\big)$ is a $C^1$ diffeomorphism from $D$ to $\{(y,s): y \in \pD, v \cdot n_y < 0, s \in [0, \zeta(y,-v))\}$, and the Jacobian is given by $|v \cdot n_y|$, see Lemma 2.3 of \cite{Esposito2013} where $\tau_b(x,v) = \zeta(x,-v)$ with our notations. 
Applying this change of variables, we obtain
\begin{align*}
\rho^1_+(A) &= \int_{\partial_+ G} \int_0^{\zeta(y,-v)} \mathbf{1}_{\{(s, y,v) \in A\}} f_0(y - sv,v) |v \cdot n_y| ds dv dy.
\end{align*}
Hence $\rho^1_+$ has a density with respect to the measure $|v \cdot n_x| dv dx dt$ on $\RR_+ \times \partial_+ G$. 

\vspace{.5cm} 

\textbf{Step 2.} We show that for all $i \geq 1$, assuming that $\rho^i_+$ has a density $g^i_+$, $\rho^i_-$ has a density $g^i_-$ with respect to the measure $|v \cdot n_x| dv dx dt$ on $\RR_+ \times \partial_- G$. For $A \in \CB(\RR_+ \times \partial_- G)$,
\begin{align*}
\rho^i_-(A) &= \EE[\mathbf{1}_{\{(T_i, X_{T_i}, V_{T_i}) \in A \}}] \\
&= \EE\Big[\alpha(X_{T_i})\mathbf{1}_{\{(T_i, X_{T_i}, R_i \vartheta(X_{T_i}, \Theta_i)) \in A \}}\Big] + \EE\Big[(1-\alpha(X_{T_i}))\mathbf{1}_{\{(T_i, X_{T_i}, \eta_{X_{T_i}}(V_{T_i-})) \in A \}}\Big],
\end{align*}
where we recall that $\eta_x(v) = v - 2(v \cdot n_x) n_x$.
We obtain, recalling Lemma \ref{NotatHM},
\begin{align*}
\rho^i_-(A)  &=  \int_{\partial_+ G} \int_{\RR_+} \alpha(x) \Big( \int_{\{ v' \cdot n_x > 0\}} \mathbf{1}_{\{(\tau, x, v') \in A\}} c_0 M(v') |v' \cdot n_x| dv' \Big) g^i_+(\tau, x, v) |v \cdot n_x| d\tau dv dx \\
&\quad + \int_{\partial_+ G} \int_{\RR_+} (1 - \alpha(x)) \mathbf{1}_{\{(\tau, x, \eta_x(v)) \in A\}} g^i_+(\tau, x, v) |v \cdot n_x| d\tau dv dx \\
&= \int_{\partial_- G} \int_{\RR_+}   \mathbf{1}_{\{(\tau, x, v') \in A\}} \Big( \alpha(x) c_0 M(v') \int_{\{v \cdot n_x < 0\}} g^i_+(\tau,x,v) |v \cdot n_x| dv  \Big) |v' \cdot n_x| d\tau dv' dx \\
&\quad + \int_{\partial_- G} \int_{\RR_+} \mathbf{1}_{\{(\tau, x, v) \in A\}} \Big( (1-\alpha(x)) g^i_+(\tau,x,v - 2(v \cdot n_x) n_x)\Big) |v \cdot n_x|  d\tau dv dx,
\end{align*}
where we have used that the change of variable $v \to (w = \eta_x(v))$ is involutive for any $x \in \pD$. 
We conclude that for $(t,x,v) \in (0,\infty) \times \partial_- G$,
$$ g^i_-(t,x,v) = \alpha(x)  c_0 M(v) \int_{\{v' \cdot n_x < 0\}} g^i_+(t,x,v') |v' \cdot n_x| dv'  + (1-\alpha(x)) g^i_+(t,x,v - 2(v \cdot n_x) n_x), $$

\noindent and therefore for all $i \geq 1$, $\rho^i_-$ has a density with respect to $|v \cdot n_x| dv dx dt$ on $\RR_+ \times \partial_- G$. 

\vspace{.5cm}

\textbf{Step 3.} We show that for all $i \geq 1$, for all $t \geq 0$, assuming that $\rho^i_-$ has a density $g^i_-$, the law $f^i_t$ of $(X_t, V_t)$ restricted to $(T_i, T_{i+1})$ has a density on $D \times \RR^n$ with respect to the Lebesgue measure. For $A \in \CB(D \times \RR^n)$,
\begin{align*}
f^i_t(A) &= \EE[\mathbf{1}_{\{(X_t, V_t) \in A\}} \mathbf{1}_{\{T_i < t < T_{i+1} \}}] \\
&= \EE[\mathbf{1}_{\{(X_{T_i} + (t-T_i)V_{T_i}, V_{T_i}) \in A\}} \mathbf{1}_{\{T_i < t < T_i + \zeta(X_{T_i},V_{T_i}) \}}] \\
&= \int_{\partial_- G} \int_0^t \mathbf{1}_{\{(x + (t-\tau)v, v) \in A\}} \mathbf{1}_{\{ \tau < t < \tau + \zeta(x,v) \}} |v \cdot n_x| g^i_-(\tau,x,v) d\tau dv dx.
\end{align*}
For any fixed $v \in \RR^n$, $t \in (0, \infty)$, $(x,\tau) \to (y = x + (t - \tau)v)$, is a $C^1$-diffeomorphism from $\{(x,\tau) \in \pD \times (0, \infty): v \cdot n_x > 0$, $ \tau < t < \tau + \zeta(x,v)\}$ to $D$ such that $x = q(y, -v)$, $t-\tau = \zeta(y,-v)$ and is the inverse of the $C^1$-diffeomorphism of Step 1. Hence, its Jacobian is given by $\frac{1}{|v \cdot n_x|} \ne 0$, and we obtain,
\begin{align*}
f^i_t(A) = \int_{D \times \RR^n} \mathbf{1}_{\{(y, v) \in A\}} g^i_-(t - \zeta(y,-v),q(y,-v),v) dy dv,
\end{align*}
and therefore $f^i_t$ has a density $g^i_t$ over $D \times \RR^n$. 

\vspace{.5cm}
\textbf{Step 4.} One easily shows that for all $t \geq 0$, $f^0_t$, the law of $(X_t,V_t)$ restricted to $[0,T_1)$ also has a density with respect to the Lebesgue measure. Indeed, it is enough to write, for any $A \in \mathcal{B}(D \times \RR^n)$, 
$$ f_t^0(A) = \EE[\mathbf{1}_{\{t < T_1, (X_0 + t V_0, V_0) \in A\}}], $$
and to use that $(X_0,V_0)$ has a density.

\vspace{.5cm}
\textbf{Step 5.} We now prove that, for all $i \geq 0$, if $f^i_t$ has a density $g^i_t$ for all $t \geq 0$, then $\rho^{i+1}_+$ has a density with respect to the measure $|v \cdot n_x| dt dv dx$ on $\RR_+ \times \partial_+ G$. 
For $A \in \CB(\RR_+ \times \partial_+ G)$, 
\begin{align*}
\rho^{i+1}_+(A) &= \EE[\mathbf{1}_{\{(T_{i+1}, X_{T_{i+1}}, V_{T_{i+1}-}) \in A\}}] \\
&= \EE \Big[\int_{T_i}^{T_{i+1}} \mathbf{1}_{\{(T_{i+1}, X_{T_{i+1}}, V_{T_{i+1}-}) \in A\}} \frac{1}{T_{i+1} - T_i} dt \Big]\\
&= \int_0^{\infty} \EE \Big[\mathbf{1}_{\{T_i < t < T_{i+1}\}} \mathbf{1}_{\{(t + \zeta(X_t, V_t), q(X_t,V_t), V_t) \in A\}} \frac{1}{t+\zeta(X_t,V_t) - (t-\zeta(X_t,-V_t))} \Big] dt \\
&= \int_0^{\infty} \int_{D \times \RR^n} \mathbf{1}_{\{(t + \zeta(x,v), q(x,v),v) \in A\}} \frac{1}{t+\zeta(x,v) - (t-\zeta(x,-v))} g^i_t(x,v) dvdx dt.
\end{align*}
We used that for $t \in (T_i, T_{i+1}), T_i = t - \zeta(X_t, -V_t)$, $T_{i+1} = t + \zeta(X_t,V_t)$, $X_{T_{i+1}} = q(X_t,V_t)$ and $V_{T_{i+1}-} = V_t$.
We use a slightly modified change of variables compared to Step 1: for a fixed $t \in \RR^+$ and a fixed $v \in \RR^n$, we consider $x \to \big(y = q(x,v), \tau = t + \zeta(x,v)\big)$. This diffeomorphism from $D$ to $\{(y,\tau) \in \pD \times (0, \infty): v \cdot n_y < 0, t < \tau < t + \zeta(y,-v)\}$ has a Jacobian equal to $|v \cdot n_y|$, as in Step 1. Therefore, since $\zeta(x,v) + \zeta(x,-v) = \zeta(y, -v)$ and $x = y - (\tau-t)v$,
\begin{align*}
\rho^{i+1}_+(A) &= \int_0^{\infty} \int_{\partial_+ G} \int_{0}^{\infty} \mathbf{1}_{\{(\tau, y,v) \in A\}} \frac{1}{\zeta(y,-v)} \mathbf{1}_{\{\tau - \zeta(y,-v) < t < \tau\}} g^i_t(y-(\tau-t)v,v) |v \cdot n_y| d\tau dv dy  dt \\
&= \int_{\partial_+ G} \int_0^{\infty} \mathbf{1}_{\{(\tau, y,v) \in A\}} |v \cdot n_y| \frac{1}{\zeta(y,-v)} \Big(\int_{\tau - \zeta(y,-v)}^{\tau} g^i_t(y-(\tau-t)v,v) dt \Big) d\tau dv dy,
\end{align*}
and this shows that $\rho^{i+1}_+$ has a density with respect to the measure $|v \cdot n_x| dv dx dt$ on $(0,\infty) \times \partial_+ G$.

\vspace{.5cm}

\textbf{Step 6.} From Steps 1 to 5, we conclude that for all $i \geq 1$, $\rho^i_{\pm}$ have a density $g^{i}_{\pm}$ with respect to the measure $|v \cdot n_x| dv dx dt$ on $(0,\infty) \times \partial_{\pm} G$.  Thus, $\rho_{\pm} = \sum_{i \geq 1} \rho^i_{\pm}$ also have a density with respect to $|v \cdot n_x| dv dx dt$ on $(0,\infty) \times \partial_{\pm} G$ that we write $g_{\pm}$. The function defined by
\begin{align}
\label{EqDefTraceMeasure}
g(t,x,v) = g_+(t,x,v) \mathbf{1}_{\{v \cdot n_x < 0\}} + g_-(t,x,v) \mathbf{1}_{\{v \cdot n_x > 0\}}, \quad (t,x,v) \in \RR_+ \times \pD \times \RR^n,
\end{align}
belongs to $L^1([0,T) \times \pD \times \RR^n, |v \cdot n_x| dt dx dv)$ for all $T > 0$, because 
\begin{align*}
\rho_{\pm}([0,T] \times \partial_{\pm}G) = \EE[\#\{i: T_i \leq T\}] < \infty,
\end{align*}
by Proposition \ref{PropExplosion}. Consequently, 
$g$ belongs to $L^1_{\loc}([0,T) \times \pD \times \RR^n, |v \cdot n_x|^2 dt dx dv)$. A second conclusion from those steps is that the measure $f_t$ has a density on $D \times \RR^n$ for all $t \geq 0$. Hence $\rho$ has a density $f$ on $\RR_+ \times \bar{D} \times \RR^n$. 

\vspace{.5cm}

\textbf{Step 7.} Note that, because $\rho(dt,dx,dv) = f(t,x,v)dtdxdv$ satisfies (\ref{GreenResult}), we obviously have that $f$ satisfies 
$$ Lf = 0 \in \CD'((0,\infty) \times D \times \RR^n). $$
Using Theorem \ref{ThmMisch}, we conclude that $f \in C([0,\infty); L^1_{\loc}(\bar{D} \times \RR^n))$, and then to $C([0,\infty); L^1(\bar{D} \times \RR^n))$ since for all $t \geq 0$, $f(t,.)$ is a probability density.

\vspace{.5cm}

\textbf{Step 8.} There only remains to prove that the function $g$ defined by (\ref{EqDefTraceMeasure}) is the trace of $f$ in the sense of Theorem \ref{ThmMisch}.
We want to show that for any $0 \leq t_0 < t_1$, any $\phi \in \CD((0, \infty) \times \bar{D} \times \RR^n)$ such that $\phi = 0$ on $(0, \infty) \times \partial_0 G$, we have
\begin{align*}
\int_{t_0}^{t_1} \int_G f L\phi dv dx dt = \Big[\int_G f \phi dv dx \Big]^{t_1}_{t_0} - \int_{t_0}^{t_1}  \int_{\pD \times \RR^n} g(t,x,v)  (n_x \cdot v) \phi dv dx dt.
\end{align*}
By substraction, this can be reduced to proving that 
\begin{align}
\label{EqSubstraction}
\int_{0}^{t_1} \int_G f L\phi dv dx dt = \int_{G} f(t_1,x,v) \phi(t_1,x,v) dx dv -  \int_0^{t_1} \int_{\pD \times \RR^n} g(t,x,v)  (n_x \cdot v) \phi dv dx dt,
\end{align}
for any $t_1 > 0$, any $\phi \in \CD((0, \infty) \times \bar{D} \times \RR^n)$, $\phi = 0$ on $(0, \infty) \times \partial_0 G$.  

For any $\epsilon \in (0,1)$, let $\beta_{\epsilon}(t) = \mathbf{1}_{(0,t_1)}(t) + e^{-\frac{t-t_1}{\epsilon + t_1 - t}} \mathbf{1}_{[t_1,t_1 + \epsilon)}(t)$. Applying (\ref{GreenResult}) with the test function $\beta_{\epsilon} \phi$, recalling that $\rho(dt,dx,dv) = f(t,x,v) dt dx dv$, $\rho_{\pm}(dt,dx,dv) = g_{\pm}(t,x,v) |v \cdot n_x|$ so that $(\rho_+ - \rho_-) (dt,dx,dv) = - g(t,x,v) (v \cdot n_x) dt dx dv$, we find 
\begin{align*}
\int_0^{\infty} \int_{G} \beta_{\epsilon}' f \phi dv dx dt + \int_0^{\infty} \int_G \beta_{\epsilon} f L\phi dv dx dt = - \int_0^{\infty} \int_{\pD \times \RR^n} g (v \cdot n_x) \beta_{\epsilon} \phi dv dx dt.
\end{align*}
We rewrite this equation as
\begin{align*}
A_{\epsilon} + B + C_{\epsilon} = - D_{\epsilon},
\end{align*}
by setting
\begin{align*}
A_{\epsilon} &= \int_{t_1}^{t_1 + \epsilon} \int_G \beta_{\epsilon}'(t) \Big(f(t,x,v) \phi(t,x,v) - f(t_1,x,v) \phi(t_1,x,v)\Big) dv dx dt, \\
B &= - \int_{D \times \RR^n} f(t_1,x,v) \phi(t_1,x,v) dv dx , \\
C_{\epsilon} &= \int_0^{\infty} \int_G \beta_{\epsilon} f L\phi dv dx dt, \\
D_{\epsilon} &= \int_0^{\infty} \int_{\pD \times \RR^n} g (v \cdot n_x) \beta_{\epsilon} \phi dv dx dt,
\end{align*}
where we used that $\int_{t_1}^{t_1 + \epsilon} \beta'_{\epsilon}(t) dt = -1$.
We have
\begin{align*}
|A_{\epsilon}| \leq \sup_{t \in [t_1, t_1 + \epsilon]} \Big|\int_{D \times \RR^n}  \big(f(t,x,v) \phi(t,x,v) - f(t_1,x,v) \phi(t_1,x,v) \big) dv dx \Big| \times  \int_{t_1}^{t_1 + \epsilon} |\beta'_{\epsilon}(t)| dt.
\end{align*}
Hence $A_{\epsilon} \to 0$ as $\epsilon \to 0$, because $f \in C([0,\infty), L^1(\bar{D} \times \RR^n))$ and by regularity of $\phi$, see Step 7. 

Since $\beta_{\epsilon}(t) \leq 1$ for all $t \geq 0$, since $f \in L^1_{\loc}(\RR_+ \times \bar{D} \times \RR^n)$, by regularity of $\phi$, and since $\beta_{\epsilon}(t) \to \mathbf{1}_{[0,t_1]}(t)$, a straightforward application of the dominated convergence theorem gives that $C_{\epsilon} \to \int_0^{t_1} \int_G f L\phi dv dx dt$ as $\epsilon \to 0$. 

The same argument, along with the fact that $g \in L^1((0,T) \times \pD \times \RR^n, |v \cdot n_x| dt dv dx)$ allows us to conclude that
\begin{align*}
\lim \limits_{\epsilon \to 0} D_{\epsilon} =  \int_0^{t_1} \int_{\pD \times \RR^n} g \phi (v \cdot n_x) dv dx dt.
\end{align*}
Overall, we obtain that $g$ satisfies (\ref{EqSubstraction}) for any $t_1 \geq 0$, any $\phi \in \CD((0,\infty) \times \bar{D} \times \RR^n)$ with $\phi = 0$ on $(0,\infty) \times \partial_0 G$, so that $g$ is the trace of $f$ in the sense of Theorem \ref{ThmMisch}.

\end{proof}

\section{The convex case}
\label{SectionConvexCase}

In this section, we prove Theorem \ref{MainTheorem} in the easier case where $D$ is a $C^2$ uniformly convex bounded domain (open, connected) in $\RR^n$.

The strategy is to build a coupling of two stochastic processes with the dynamic of Definition \ref{DefiProcess},  $(X_t,V_t)_{t \geq 0}$ with initial distribution $f_0$, $(\xtilde{t}, \vtilde{t})_{t \geq 0}$ with initial distribution $\mu_{\infty}$, where $\mu_{\infty}$ is the equilibrium distribution. For this couple of processes, two different regimes can be identified: a low-speed regime and a high-speed regime.

In a first step, we collect several results on the high-speed regime. In this situation, we find a coupling which is successful, in a sense to be defined, with a probability admitting a positive lower bound. In a second step, we detail the construction of the processes. 
Finally, we prove that
\begin{align}
\label{EqDefTau}
\tau = \inf\{t \geq 0: (X_{t + s})_{s \geq 0} = (\xtilde{t+ s})_{s \geq 0}, (V_{t + s})_{s \geq 0} = (\vtilde{t+ s})_{s \geq 0}\},
\end{align}  
satisfies $\EE[r(\tau)] < \infty$.

\subsection{A coupling result.}
\label{SubsectionCoupling}

Recall the notations $h_R$, $\Upsilon$ introduced in Lemma \ref{NotatHM}. Since $M$ admits a density, there exists $a > 0$ such that, 
\begin{align}
\label{EqHypoM1}
\int_0^a h_R(x) dx > 0, \qquad \int_a^{\infty} h_R(x) dx > 0, 
\end{align}
and we assume for simplicity that $a = 1$ in the sequel. Recall also that $\mathcal{A} = (-\frac{\pi}{2}, \frac{\pi}{2}) \times [0,\pi]^{n-2}$, and $d(D) := \sup \limits_{(x,y) \in D^2} \|x-y\|$, which corresponds to the diameter of $D$. We introduce some more notations. 

\begin{notat}
\label{Notationyxi}
\noindent We define four maps:
\begin{enumerate}[i.]
\item the map $\xi: \pD \times \RR_+ \times \mathcal{A} \to \RR_+$, such that $$\xi(x,r,\theta) = \zeta(x, r\vartheta(x, \theta)),$$
\item the map $y: \pD \times \mathcal{A} \to \pD$, such that $$y(x,\theta) = q(x, \vartheta(x, \theta)),$$
\item the map $\tilde{\xi}: \bar{D} \times \RR^n \times \RR_+ \times \mathcal{A} \to \RR_+$, such that $$\tilde{\xi}(x,v,r,\theta) = \zeta(x,v) + \zeta\Big(q(x,v), r\vartheta(q(x,v), \theta)\Big),$$
\item the map $\tilde{y}: \bar{D} \times \RR^n \times \mathcal{A} \to \pD$, such that $$\tilde{y}(x,v,\theta) = q\Big(q(x,v), \vartheta(q(x,v), \theta)\Big).$$
\end{enumerate}
\end{notat}

\noindent The main result in this section is the following proposition:

\begin{prop}
\label{CouplingProp}
There exists a constant $c > 0$ such that for all $x_0 \in \pD$, $ \tilde{x}_0 \in D$, $\tilde{v}_0 \in \RR^n$ with $\|\tilde{v}_0\| \geq 1$, there exists $\Lambda_{x_0, \tilde{x}_0, \tilde{v}_0} \in \mathcal{P}(((0,\infty) \times \mathcal{A})^2)$ such that, if $(R, \Theta, \tilde{R}, \tilde{\Theta})$ has law $\Lambda_{x_0, \tilde{x}_0, \tilde{v}_0}$, both $(R, \Theta)$ and $(\tilde{R}, \tilde{\Theta})$ have law $\Upsilon$, and for
\begin{align*}
E_{x_0, \tilde{x}_0, \tilde{v}_0} := \Big\{(r, \theta, \tilde{r}, \tilde{\theta}) \in (\RR_+ \times \mathcal{A})^2:  y(x_0,\theta) = \tilde{y}(\tilde{x}_0, \tilde{v}_0, \tilde{\theta}), \xi(x_0,r,\theta) = \tilde{\xi}(\tilde{x}_0, \tilde{v}_0, \tilde{r}, \tilde{\theta})\Big\},
\end{align*} 
we have
\begin{align}
\label{IneqCouplingLemma} 
\PP \big((R,\Theta, \tilde{R}, \tilde{\Theta}) \in E_{x_0, \tilde{x}_0, \tilde{v}_0} \big) \geq c.
\end{align}
\end{prop}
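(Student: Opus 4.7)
The plan is to introduce maps $\Phi, \tilde\Phi : (0,\infty) \times \mathcal{A} \to \pD \times (0,\infty)$ by
\[
\Phi(r,\theta) := (y(x_0,\theta),\, \xi(x_0,r,\theta)), \qquad \tilde\Phi(\tilde r, \tilde \theta) := (\tilde y(\tilde x_0, \tilde v_0, \tilde\theta),\, \tilde\xi(\tilde x_0, \tilde v_0, \tilde r, \tilde\theta)),
\]
so that $E_{x_0, \tilde x_0, \tilde v_0} = \{\Phi(R,\Theta) = \tilde\Phi(\tilde R, \tilde\Theta)\}$. Writing $q' := q(\tilde x_0, \tilde v_0) \in \pD$ and $\tilde t_1 := \zeta(\tilde x_0, \tilde v_0)$, Notation~\ref{Notationyxi} gives $\tilde y(\tilde x_0, \tilde v_0, \cdot) = y(q', \cdot)$ and $\tilde\xi(\tilde x_0, \tilde v_0, \cdot, \cdot) = \tilde t_1 + \xi(q', \cdot, \cdot)$. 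I would construct $\Lambda_{x_0, \tilde x_0, \tilde v_0}$ by maximally coupling the two pushforward laws $\mu_1 := \Phi_*\Upsilon$ and $\mu_2 := \tilde\Phi_*\Upsilon$ on $\pD \times (0,\infty)$ and then pulling back through $\Phi^{-1}$ and $\tilde\Phi^{-1}$.

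The first step is geometric. Using $C^2$-regularity and uniform convexity of $D$, for each $z \in \pD$ and small $\epsilon > 0$, the map $\theta \mapsto y(z,\theta)$ restricted to $\mathcal{A}_\epsilon := \{\theta \in \mathcal{A} : \cos\theta_1 \geq \sin\epsilon\}$ is a $C^1$-diffeomorphism onto an open set $\mathcal{V}_\epsilon(z) \subset \pD$. Uniform convexity also bounds from below the cosine of the incidence angle of the ray at its exit point in terms of $\epsilon$ alone, simultaneously yielding $\|y(z,\theta) - z\| \geq \delta(\epsilon) > 0$ and uniform-in-$z$ two-sided bounds on the Jacobian of $y(z,\cdot)^{-1}$. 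Since $\pD \setminus \mathcal{V}_\epsilon(z)$ lies in a surface-neighborhood of $z$ shrinking with $\epsilon$, fixing $\epsilon$ small enough ensures that the common visible set $\Gamma := \mathcal{V}_\epsilon(x_0) \cap \mathcal{V}_\epsilon(q')$ has surface measure at least $|\pD|/2$, uniformly in $(x_0, \tilde x_0, \tilde v_0)$.

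Next, since $\|\tilde v_0\| \geq 1$, one has $\tilde t_1 \leq d(D)$. Fix $[T_{\min}, T_{\max}] \subset (d(D), \infty)$ large enough so that for every $(Y,T) \in \Gamma \times [T_{\min}, T_{\max}]$, both $R := \|Y - x_0\|/T$ and $\tilde R := \|Y - q'\|/(T - \tilde t_1)$ lie in a fixed compact subinterval of $(0, \delta_1]$. On such an interval, Hypothesis~\ref{HypoM} yields $h_R(r) \geq c_R r^n \bar M(r) \geq \eta > 0$, while $h_\Theta$ is bounded below on $\mathcal{A}_\epsilon$ by construction. A change-of-variables calculation in the spirit of Step~1 of Proposition~\ref{PropRegularitySol} shows that the Jacobian of $\Phi$ factorizes as $\|Y - x_0\|/T^2$ times the Jacobian of $y(x_0,\cdot)^{-1}$ at $Y$, with an analogous factorization for $\tilde\Phi$. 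Combined with the bounds of the previous step, this gives a uniform lower bound $c^* > 0$ on the densities of $\mu_1$ and $\mu_2$ with respect to the product of surface and Lebesgue measure on $\pD \times (0,\infty)$, throughout $\mathcal K := \Gamma \times [T_{\min}, T_{\max}]$ and independently of $(x_0, \tilde x_0, \tilde v_0)$.

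Consequently $\mu_1 \wedge \mu_2 \geq c^* \mathbf{1}_{\mathcal K}$, hence $\int \mu_1 \wedge \mu_2 \geq c^* |\mathcal K| =: c > 0$. The standard maximal coupling then provides $(Y_1, T_1, Y_2, T_2)$ with marginals $\mu_1, \mu_2$ and $\PP(Y_1 = Y_2,\, T_1 = T_2) \geq c$; setting $(R, \Theta) := \Phi^{-1}(Y_1, T_1)$ and $(\tilde R, \tilde \Theta) := \tilde\Phi^{-1}(Y_2, T_2)$ (extended arbitrarily measurably off the images of $\Phi$ and $\tilde\Phi$, which affects only a null set) produces $\Lambda_{x_0, \tilde x_0, \tilde v_0}$ with the correct marginals and with $\Lambda(E_{x_0, \tilde x_0, \tilde v_0}) \geq c$. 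The principal obstacle is the uniform-in-$(x_0, \tilde x_0, \tilde v_0)$ lower bound on the two pushforward densities: it reduces to a uniform transversality property of billiard rays in a $C^2$ uniformly convex domain, precisely the geometric feature that breaks down in the general case treated in Section~\ref{Extension}.
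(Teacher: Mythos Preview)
Your proposal is correct and follows essentially the same strategy as the paper: push the law $\Upsilon$ forward to the pair (landing point, arrival time), show that the two pushforwards have a uniformly positive common part, and invoke maximal coupling. The paper's execution is a bit cleaner because it bypasses the $(r,\theta)$-coordinates entirely: Lemma~\ref{LemmaDensityCoupling} changes variables directly from $v$ to $(\tau,z)=(\zeta(x,v),q(x,v))$ via $v=(z-x)/\tau$, yielding the closed-form density
\[
\mu_x(\tau,z)=c_0\,M\!\Big(\frac{z-x}{\tau}\Big)\,\tau^{-(n+2)}\,|(z-x)\cdot n_x|\,|(z-x)\cdot n_z|.
\]
With this formula in hand the overlap bound (your $\int \mu_1\wedge\mu_2\ge c$) is read off immediately on a fixed time window $[b_0,b_1]$ with $b_0>d(D)$, using Hypothesis~\ref{HypoM} to bound $M$ from below and Lemma~\ref{LemmaPositiveMeasure} to bound the product of the two inner products away from zero on a set of positive surface measure. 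Your ``visible sets'' argument and Jacobian factorization are a correct alternative route to the same geometric content (the uniform-convexity input is exactly the last line of the proof of Lemma~\ref{LemmaPositiveMeasure}), but they require more bookkeeping than the single change of variables the paper uses.
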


The rest of this subsection is devoted to the proof of this proposition. 

\vspace{.5cm}

\begin{lemma}
\label{LemmaPositiveMeasure}
There exist two constants $r_1 > 0$ and $ c_1 > 0$ such that for all $(x,y) \in (\pD)^2$,
\begin{align}
\label{EqLemmaPositiveMeasure}
 \int_{\{ \scriptscriptstyle{z \in \pD, \|z-x\| \wedge \|z-y\| \geq r_1}\} } \big(|(z-x) \cdot n_x|| (z-x) \cdot n_z|\big) \wedge \big(| (z-y) \cdot n_y|| (z-y) \cdot n_z| \big) dz  \geq c_1 .
\end{align}
\end{lemma}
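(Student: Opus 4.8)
The plan is to bound the integrand below by a positive constant on a subset of $\pD$ whose measure is bounded below, both bounds being uniform in $(x,y)$. Write
\[
g(a,z) := \big((z-a)\cdot n_a\big)\big((a-z)\cdot n_z\big), \qquad a,z\in\pD .
\]
Since $\bar D$ is convex, every point of $\bar D$ lies on the inner side of the tangent hyperplane at any boundary point, so $(z-a)\cdot n_a\ge 0$ and $(a-z)\cdot n_z\ge 0$; hence $g(a,z)=|(z-a)\cdot n_a|\,|(z-a)\cdot n_z|\ge 0$, and the left-hand side of (\ref{EqLemmaPositiveMeasure}) equals $\int_{\{z\in\pD:\,\|z-x\|\wedge\|z-y\|\ge r_1\}} g(x,z)\wedge g(y,z)\,dz$.

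First I would establish the pointwise lower bound $g(a,z)>0$ for $a\ne z$. If $(z-a)\cdot n_a=0$ with $z\ne a$, then, $D$ lying strictly on the inner side of the tangent hyperplane $H_a$ at $a$, the segment $[a,z]\subseteq H_a\cap\bar D\subseteq\pD$ would be a line segment contained in $\pD$, which contradicts uniform convexity. Thus $(z-a)\cdot n_a>0$ for $z\ne a$, and symmetrically $(a-z)\cdot n_z>0$, so $g(a,z)>0$. As $g$ is continuous on the compact set $(\pD)^2$, for each small enough $r_1>0$ (so that there exist two points of $\pD$ at distance $\ge r_1$) the constant $m(r_1):=\min\{g(a,z):a,z\in\pD,\ \|z-a\|\ge r_1\}$ is well defined and strictly positive. (If explicit constants are preferred, uniform convexity gives $R>0$ with $\bar D\subseteq\overline{B(a+Rn_a,R)}$ for every $a\in\pD$, whence $(z-a)\cdot n_a\ge\|z-a\|^2/(2R)$ and similarly $(a-z)\cdot n_z\ge\|z-a\|^2/(2R)$, so $g(a,z)\ge\|z-a\|^4/(4R^2)\ge r_1^4/(4R^2)$ whenever $\|z-a\|\ge r_1$.)

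Next I would choose $r_1$ so that the domain of integration has measure bounded below uniformly in $(x,y)$. Being a compact $C^2$ hypersurface, $\pD$ is Ahlfors $(n-1)$-regular: there are $r_0,C_0>0$ with $|\{z\in\pD:\|z-a\|<r\}|\le C_0 r^{n-1}$ for all $a\in\pD$ and $r\in(0,r_0]$. Fix $r_1\in(0,r_0]$ small enough that $2C_0 r_1^{n-1}\le\tfrac12|\pD|$. Then for any $x,y\in\pD$ the set $S:=\{z\in\pD:\|z-x\|\wedge\|z-y\|\ge r_1\}$ satisfies $|S|\ge|\pD|-2C_0r_1^{n-1}\ge\tfrac12|\pD|$, while on $S$ one has $g(x,z)\wedge g(y,z)\ge m(r_1)$. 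Integrating over $S$ shows that the left-hand side of (\ref{EqLemmaPositiveMeasure}) is at least $\tfrac12|\pD|\,m(r_1)=:c_1>0$, which proves the lemma.

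No step is deep; the only genuine care is to keep every estimate uniform over $\pD$ and over the pair $(x,y)$. Strict positivity of $g$ off the diagonal rests on the absence of line segments in $\pD$, i.e. on uniform convexity, and compactness then upgrades this to the uniform lower bound $m(r_1)>0$; the uniform small-ball volume estimate, needed to bound $|S|$ from below, is where the $C^2$-regularity and compactness of $\pD$ enter. I expect this uniformity bookkeeping to be the main (and rather mild) obstacle.
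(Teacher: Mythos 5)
Your proof is correct, and it shares the overall structure of the paper's argument: bound the integrand below uniformly for $\|z-x\|\wedge\|z-y\|\ge r_1$ (which both you and the authors get from compactness plus uniform convexity; your explicit bound $g(a,z)\ge \|z-a\|^4/(4R^2)$ via the enclosing osculating ball is a nice touch), and then bound the $\mathcal{H}$-measure of the integration region below, uniformly in $(x,y)$. The real difference is in the measure estimate. The paper assumes $0\in D$, projects $\pD$ onto a small sphere $S$ inside $D$ via a $1$-Lipschitz map, and uses the fact that Lipschitz maps do not increase Hausdorff measure to transfer a spherical-cap estimate back to $\pD$. You instead invoke Ahlfors $(n-1)$-regularity of the compact $C^2$ hypersurface $\pD$ to get the uniform small-ball upper bound $\mathcal{H}(\BpD(a,r))\le C_0 r^{n-1}$, and simply subtract the two offending balls from $\mathcal{H}(\pD)$. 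Your version is more elementary, doesn't use any star-shapedness or convexity for the measure part (only $C^1$ regularity and compactness of $\pD$), and avoids the auxiliary projection; the paper's projection argument is a short clever shortcut that leans on the convexity already present. Both are fully rigorous, and both deliver constants uniform over $(x,y)\in(\pD)^2$, which is the only point requiring care. One very minor remark: to rule out $(z-a)\cdot n_a=0$ for $z\ne a$ you only need strict convexity (no segments in $\pD$), not the full strength of uniform convexity, though of course the latter is what's assumed.
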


\begin{proof}

Without loss of generality we assume that $0 \in D$. Recall that we write $\mathcal{H}$ for the $n-1$ dimensional Hausdorff measure. 

We show first that there exists $c > 0$ such that for all $(x,y) \in (\pD)^2 $, $\mathcal{H}(A_{x,y}) \geq c$, where $A_{x,y} := \{z \in \pD, \|z-x\| \wedge \|z-y\| \geq r_1\}$ for some $r_1 > 0$. Set $r_0 := \inf_{z \in \pD} \|z\|$.

Note that for all $(x,y) \in (\pD)^2$, for $\delta \in (0,1)$, with $r_1 = r_0 \sqrt{2 - 2 \delta}$, we have the inclusion $A_{x,y} \subset A'_{x,y} := \{z \in \pD, \frac{z}{\|z\|} \cdot \frac{x}{\|x\|} < \delta, \frac{z}{\|z\|} \cdot \frac{y}{\|y\|} < \delta\}$ since for all $z \in A'_{x,y}$, 
\begin{align}
\label{IneqrDLemmaMeasure}
 \|x - z\|^2 \geq \|x\|^2 - 2\delta \|x\|\|z\| + \|z\|^2 \geq  (\|x\| - \|z\|)^2 + (2 - 2\delta)\|z\|\|x\| \geq r_1^2,
 \end{align}
 and $\|y - z\| \geq r_1$ as well. 
 
 Let $\phi: \RR^n \to \RR^n$ defined by $\phi(x) = \frac{x}{(2\|x\|) \vee r_0} r_0$ for any $x \in \RR^n$. Note that $\phi$ is the projection on the closed ball $\bar{B}(0,\frac{r_0}{2}) := \{z \in \RR^n, \|z\| \leq \frac{r_0}{2}\}$ and is thus $1$-Lispschitz. 
By definition of $r_0$, setting $S:= \{y \in \RR^n, \|y\| = \frac{r_0}{2}\}$, we have $\phi(\pD) = S$.

We apply the following statement: for $m \in \mathbb{N}^*$, for any Lipschitz map $f: \RR^m \to \RR^m$ with Lipschitz constant $L > 0$, for any $A \subset \RR^m$, \begin{align}
\label{EqMattila}
\mathcal{H}(f(A)) \leq L^m \mathcal{H}(A),
\end{align} see \cite[Theorem 7.5] {mattila_1995}. We obtain that
$$ \mathcal{H} \Big(\phi(A'_{x,y}) \Big) \leq \mathcal{H}(A'_{x,y}). $$

Observe that $\phi(A'_{x,y}) = \{u \in S, \frac{u}{\|u\|} \cdot \frac{x}{\|x\|} < \delta, \frac{u}{\|u\|} \cdot \frac{y}{\|y\|} < \delta \} $ so that
$$ \mathcal{H}\Big(\phi(A'_{x,y}) \Big) \geq \mathcal{H}(S) - 2 \mathcal{H}\Big(\Big\{u \in S, \frac{u \cdot e_1}{\|u\|} < \delta \Big\} \Big) \geq \frac{1}{2} \mathcal{H}(S), $$
if $\delta < \delta_0$ for some $\delta_0 > 0$ not depending on $x$ and $y$, since $\mathcal{H}(\{u \in S, \frac{u \cdot e_1}{\|u\|} < \delta\})$ converges to $0$ when $\delta$ goes to $0$. 

To conclude, it suffices to use that $$\inf_{(a,b) \in (\pD)^2, \|a-b\| \geq r_1} |(a-b) \cdot n_a| > 0,$$ which follows by compactness from the fact that $D$ is $C^1$, bounded and uniformly convex. 
\end{proof}

Recall that the constant $c_0$ is defined by (\ref{EqCM}).

\begin{lemma}
\label{LemmaDensityCoupling}
For $x \in \pD$ and $V$ having density $c_0 M(v) |v \cdot n_x| \mathbf{1}_{\{v \cdot n_x > 0\}}$, the law of $(\zeta(x,V), q(x,V))$ admits a density $\mu_x$ on $\RR_+^* \times (\pD \setminus \{x\})$ given by
$$ \mu_x(\tau,z) = c_0 M\Big(\frac{z-x}{\tau}\Big) \frac{1}{\tau^{n+2}} |(z-x) \cdot n_x| | (z-x) \cdot n_z|. $$
\end{lemma}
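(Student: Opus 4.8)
The plan is to compute the law of $(\zeta(x,V), q(x,V))$ directly via a change of variables, starting from the known density of $V$. First I would fix $x \in \pD$ and set $(\tau, z) = (\zeta(x,v), q(x,v))$, which by the definition of $\zeta$ and $q$ (see (\ref{defsigma}) and (\ref{defq})) satisfies $z = x + \tau v$ for $v$ with $v \cdot n_x > 0$, i.e. the map $v \mapsto (\tau, z)$ is, at least locally, the inverse of the map $(\tau, z) \mapsto v = (z-x)/\tau$. The key point is that this is a $C^1$ diffeomorphism from $\{v \in \RR^n : v \cdot n_x > 0\}$ onto $\{(\tau, z) : \tau > 0,\ z \in \pD \setminus \{x\},\ (z-x)\cdot n_x > 0\}$, a statement of the same nature as the change of variables used in Step 1 of the proof of Proposition \ref{PropRegularitySol} (and as in Lemma 2.3 of \cite{Esposito2013}); indeed it is essentially that diffeomorphism read with $\tau$ playing the role of the ``time to the boundary''. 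For a test function $\Phi$ on $\RR_+^* \times (\pD \setminus\{x\})$ one writes
\begin{align*}
\EE\big[\Phi(\zeta(x,V), q(x,V))\big] = \int_{\{v \cdot n_x > 0\}} \Phi\big(\zeta(x,v), q(x,v)\big) c_0 M(v) |v \cdot n_x|\, dv,
\end{align*}
and then substitutes $v = (z-x)/\tau$.

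The main computational step is the Jacobian of the substitution $v = (z-x)/\tau$, where $z$ ranges over the $(n-1)$-dimensional manifold $\pD$ and $\tau$ over $\RR_+^*$, so that $(\tau,z)$ is an $n$-dimensional parameter matching the dimension of $v$. I would argue that $dv = J(\tau,z)\, d\tau\, d\mathcal{H}^{n-1}(z)$ with
\begin{align*}
J(\tau,z) = \frac{|(z-x)\cdot n_z|}{\tau^{n+1}},
\end{align*}
which can be obtained either by directly differentiating $v = (z-x)/\tau$ (the $\tau$-derivative contributes a factor $\|z-x\|/\tau^2$ along the radial direction $v/\|v\|$, the tangential derivatives contribute $1/\tau$ each on the $(n-1)$ tangent directions of $\pD$ at $z$, and the angle between the radial direction and the normal $n_z$ to $\pD$ produces the extra factor $|(z-x)\cdot n_z| / \|z-x\|$), or by composing the map $v \mapsto q(x,v) = z$ (Jacobian $|v\cdot n_z|/\zeta(x,v) = \tau |v \cdot n_z| \cdot \tau^{-(n)}\cdots$) with the scaling $v \mapsto (\|v\|, v/\|v\|)$ and tracking powers of $\tau = \|v\|^{-1}\|z-x\|$. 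Combining this with $|v \cdot n_x| = |(z-x)\cdot n_x|/\tau$ and $M(v) = M((z-x)/\tau)$ gives
\begin{align*}
\EE\big[\Phi(\zeta(x,V),q(x,V))\big] = \int_{\RR_+^*} \int_{\pD} \Phi(\tau,z)\, c_0 M\Big(\frac{z-x}{\tau}\Big)\frac{1}{\tau^{n+2}}\,|(z-x)\cdot n_x|\,|(z-x)\cdot n_z|\, d\mathcal{H}^{n-1}(z)\, d\tau,
\end{align*}
which is exactly the claimed formula for $\mu_x$. One small point to address is that the constraint $(z-x)\cdot n_x > 0$ (coming from $v \cdot n_x > 0$) combined with convexity of $D$ means every $z \in \pD\setminus\{x\}$ is hit, so the support is indeed all of $\RR_+^* \times (\pD\setminus\{x\})$ up to the $\mathcal{H}^{n-1}$-null set $\{z : (z-x)\cdot n_x = 0\}$, on which the density $|(z-x)\cdot n_x|$ vanishes anyway, so no indicator is needed.

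The step I expect to be the main obstacle is the rigorous justification of the Jacobian factor $\tau^{-(n+1)}$ for the map from $\{v \cdot n_x>0\}$ to $\RR_+^* \times \pD$, since this mixes a full-dimensional domain in $\RR^n$ with a product of $\RR_+$ and a curved hypersurface; the cleanest route is probably to factor the change of variables through the already-established boundary diffeomorphism $v \mapsto (q(x,v), \zeta(x,-v))$ type maps used earlier in the paper (or Lemma 2.3 of \cite{Esposito2013}), counting the powers of $\|v\|$ carefully: writing $v = r\omega$ with $r = \|v\|$, $\omega \in \mathbb{S}^{n-1}$, one has $q(x,v) = q(x,\omega)$ independent of $r$ and $\zeta(x,v) = \zeta(x,\omega)/r$, so $\tau = \zeta(x,\omega)/r$; then $dv = r^{n-1}\,dr\,d\omega$, the map $\omega \mapsto z = q(x,\omega)$ on the half-sphere has surface Jacobian $|(z-x)\cdot n_z|/\|z-x\|^{n-1} \cdot$ (a geometric factor I would pin down by the cone/projection argument), and finally changing $r \mapsto \tau$ at fixed $\omega$ contributes $d r = (\zeta(x,\omega)/\tau^2)\,d\tau$. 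Bookkeeping all powers of $\tau$ (equivalently of $r$ and $\|z-x\|$) yields the net $\tau^{-(n+2)}$ once the $|v\cdot n_x| = |(z-x)\cdot n_x|/\tau$ factor from the original density is included. Everything else is routine.
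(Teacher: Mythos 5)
Your proposal is correct and takes essentially the same route as the paper: write the law of $(\zeta(x,V),q(x,V))$ as an integral against the density of $V$ over $\{v\cdot n_x>0\}$ and perform the change of variables $v=(z-x)/\tau$, a $C^1$ diffeomorphism from $\RR_+^*\times(\pD\setminus\{x\})$ onto the half-space (using uniform convexity), with Jacobian $|(z-x)\cdot n_z|/\tau^{n+1}$, which together with $M(v)=M((z-x)/\tau)$ and $|v\cdot n_x|=|(z-x)\cdot n_x|/\tau$ yields $\mu_x$. Your heuristic count of the Jacobian (one radial $\tau$-derivative of size $\|z-x\|/\tau^2$, $n-1$ tangential derivatives of size $1/\tau$, and the cosine factor $|(z-x)\cdot n_z|/\|z-x\|$) is exactly what the paper makes rigorous by writing $D\phi_{(\tau,z)}(s,y)=\tfrac{y}{\tau}-\tfrac{(z-x)s}{\tau^2}$ in orthonormal bases adapted to $n_z^\perp$ and reading off an upper-triangular determinant.
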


\begin{proof}
Let $A \in \mathcal{B}(\RR_+ \times (\pD \setminus\{x\}))$. We have
\begin{align}
\label{EqProbaLemmaDensity}
\PP\Big((\zeta(x,V),q(x,V)) \in A\Big) = \int_{\scriptstyle{\{v \cdot n_x > 0\}}} \mathbf{1}_{\{(\zeta(x,v), q(x,v)) \in A\}} c_0 M(v) |v \cdot n_x| dv.
\end{align}
We show that this quantity is equal to
$$I := \int_0^{\infty} \int_{\pD} \mathbf{1}_{\{(\tau,z) \in A\}} c_0 M\Big(\frac{z-x}{\tau}\Big) \frac{1}{\tau^{n+2}} |(z-x) \cdot n_x| | (z-x) \cdot n_z| dz d\tau. $$
Consider the change of variable $(\tau,z) \to v$ given by $v = \frac{z-x}{\tau} =: \phi(\tau,z)$.
 Note that by uniform convexity, we have $v \cdot n_x > 0$ and $(\tau,z) = (\zeta(x,v),q(x,v))$.
  The map $\phi$ is a $C^1$ diffeomorphism between $\RR_+ \times (\pD \setminus \{x\})$ and $\{v \in \RR^n, v \cdot n_x > 0\}$. Note that
\begin{enumerate}
\item the tangent space to $\RR_+$ at $\tau \in \RR_+$ is $\RR$,
\item the tangent space to $\pD \setminus \{x\}$ at $z \in \pD \setminus \{x\}$ is $n_z^{\perp} \subset \RR^n$,
\item the tangent space to $\{v \in \RR^n, v \cdot n_x > 0\}$ at $v$ is $\RR^n$.
\end{enumerate}  
  
  For $(\tau,z) \in \RR_+ \times (\pD \setminus \{z\})$, the differential of $\phi$ in the direction $(s,y)$ with $s \in \RR$, $y \in n_z^{\perp}$ is given by $$D\phi_{(\tau,z)}(s,y) = \frac{y}{\tau} - \frac{(z-x) s}{\tau^2}.$$
Let $(f_1, \dots, f_{n-1})$ be an orthonormal basis of $n_z^{\perp}$, $f_n$ such that $(f_1, \dots, f_{n-1},f_n)$ is an orthonormal basis of $n_z^{\perp} \times \RR$. The Jacobian matrix of $\phi$ in the bases $(f_1, \dots, f_n)$ for $n_z^{\perp} \times \RR$ and $(f_1, \dots, f_{n-1}, n_z)$ for $\RR^n$ is thus
\begin{align*}
J_{\phi}(\tau,z)= \begin{pmatrix}
\frac{1}{\tau} & 0 &\dots & 0 & -\frac{(z-x) \cdot f_1}{\tau^2}  \\
0 & \frac{1}{\tau} & \dots & 0 & -\frac{(z-x) \cdot f_2}{\tau^2} \\
 & \dots & \dots \\
 0 & \dots & \dots & \frac{1}{\tau} & -\frac{(z-x)\cdot f_{n-1}}{\tau^2}  \\
 0 & \dots & 0 & 0 & -\frac{(z-x) \cdot n_z}{\tau^2}
\end{pmatrix}.
\end{align*} 
The Jacobian at the point $(\tau,z)$ is therefore given by $\frac{|(z-x) \cdot n_z|}{\tau^{n+1}}$.

Recalling (\ref{EqProbaLemmaDensity}), using that $(\tau,z) = (\zeta(x,v),q(x,v))$, we find
\begin{align*}
I &=  \int_{\scriptstyle{\{v \cdot n_x > 0\}}} \mathbf{1}_{\{(\zeta(x,v), q(x,v)) \in A\}} c_0 M(v) \frac{1}{\zeta(x,v)} \big|\zeta(x,v) (v \cdot n_x) \big|  dv = \PP \Big((\zeta(x,V), q(x,V) \in A \Big),
\end{align*}
as desired.
\end{proof}

With the help of Lemmas \ref{LemmaPositiveMeasure} and \ref{LemmaDensityCoupling}, we prove Proposition \ref{CouplingProp}.

\begin{proof}[Proof of Proposition \ref{CouplingProp}]

In a first step, we derive an inequality from which we will conclude in the second step, using the classical framework of maximal coupling.

\vspace{.5cm}

\textbf{Step 1.} We show that, for $A = (\pD)^2 \times [0,d(D))$, there exists $c > 0$ such that 
\begin{align}
\label{IneqMaxCouplingNew}
\inf_{(x,\tilde{x}, \tilde{t}) \in A} \int_{ \pD} \int_{\tilde{t}}^{\infty} [ \mu_x(\tau,z) \wedge \mu_{\tilde{x}}(\tau-\tilde{t}, z) ]  d\tau  dz \geq c. 
\end{align}
We have, using Lemma \ref{LemmaDensityCoupling}, for any $(x,\tilde{x}, \tilde{t}) \in A$,
\begin{align*}
J &:= \int_{\pD} \int_{\tilde{t}}^{\infty} [\mu_x(\tau,z) \wedge \mu_{\tilde{x}}(\tau-\tilde{t}, z)] d\tau dz  \\
&\geq c_0 \int_{\scriptstyle{\{z \in \pD, \|z-x\| \wedge \|z-\tilde{x}\| \geq r_1\}}} \int_{b_0}^{b_1} \Big( \Big[ M\Big(\frac{z-x}{\tau}\Big) \frac{1}{\tau^{n+2}} |(z-x) \cdot n_x| | (z-x) \cdot n_z| \Big] \\
&\qquad  \wedge  \Big[ M\Big(\frac{z-\tilde{x}}{\tau - \tilde{t}}\Big) \frac{1}{(\tau - \tilde{t})^{n+2}} |(z-\tilde{x}) \cdot n_{\tilde{x}}| |(z-\tilde{x}) \cdot n_z| \Big] \Big) d\tau  dz,
\end{align*}
 where $b_0 = d(D) (\frac{\delta_1 + 1}{\delta_1})$ and $b_1 = d(D) (\frac{2\delta_1 + 1}{\delta_1})$, recalling the definition of $\delta_1$ from Hypothesis \ref{HypoM}. Indeed, $b_0 \geq d(D) \geq \tilde{t}$. For $\tau \in (b_0, b_1)$, $z, y \in \pD$ with $\|z-y\| \geq r_1$, we have
 $$0 < \frac{r_1}{b_1} \leq \frac{\|z-y\|}{\tau} \leq \frac{\|z-y\|}{\tau - \tilde{t}} \leq \delta_1 \frac{\|z-y\|}{d(D)} \leq \delta_1, $$ whence, recalling the definition of $\bar{M}$ from Hypothesis \ref{HypoM},
$$ M \Big(\frac{z-x}{\tau}\Big) \wedge M \Big(\frac{z- \tilde{x}}{\tau - \tilde{t}}\Big) \geq \bar{M} \Big(\frac{z-x}{\tau}\Big) \wedge \bar{M}\Big(\frac{z- \tilde{x}}{\tau - \tilde{t}}\Big) \geq \kappa_1,$$
where $\kappa_1 = \underset{\frac{r_1}{b_1} \leq \|v\| \leq \delta_1}{\min} \bar{M}(v) > 0$ not depending on $(x,\tilde{x},\tilde{t})$. We obtain, using Tonelli's theorem, that 
\begin{align*}
J &\geq  c_0 \kappa_1 \int_{b_0}^{b_1} \frac1{\tau^{n+2}} d\tau \\
& \quad \times \int_{\{z \in \pD, \|z-x\| \wedge \|z-\tilde{x}\| \geq  r_1\}}  \Big[ |(z-x) \cdot n_x| |(z-x) \cdot n_z| \Big] \wedge  \Big[|(z-\tilde{x}) \cdot n_{\tilde{x}}| |(z-\tilde{x}) \cdot n_z| \Big] dz. 
\end{align*}
We conclude by applying Lemma \ref{LemmaPositiveMeasure}.

\vspace{.5cm}

\textbf{Step 2.} Recall that $x_0 \in \pD$, $\tilde{x}_0 \in D$, $\tilde{v}_0 \in \RR^n$ such that $\|\tilde{v}_0\| \geq 1$ are fixed. Set $x = x_0$, $\tilde{x} = q(\tilde{x}_0,\tilde{v}_0)$ and $\tilde{t} = \zeta(\tilde{x}_0,\tilde{v}_0) \leq \frac{d(D)}{\|\tilde{v}_0\|} \leq d(D)$, .  Classicaly, using (\ref{IneqMaxCouplingNew}), one can couple $(S, Y) \sim \mu_{x}$ and $(\tilde{S}, \tilde{Y}) \sim \mu_{\tilde{x}}$ so that $\mathbb{P}(Y = \tilde{Y}, S = \tilde{S} + \tilde{t})  \geq c$. Recalling that, if $(R,\Theta) \sim \Upsilon$ and $(\tilde{R}, \tilde{\Theta}) \sim \Upsilon$, $(\xi(x,R,\Theta), q(x,\Theta)) \sim \mu_x$ and $(\tilde{\xi}(\tilde{x}_0, \tilde{v}_0, \tilde{R}, \tilde{\Theta}) - \tilde{t}, \tilde{y}(\tilde{x}_0,\tilde{v}_0, \tilde{\Theta})) \sim \mu_{\tilde{x}}$, the conclusion follows.
\end{proof}

\subsection{Some more preliminary results.}

Recall that the function $r: \RR_+ \to \RR_+$ is non-decreasing, continuous, and that there exists $C > 0$ satisfying, for all $(x,y) \in (\RR_+)^2$, $r(x+y) \leq C( r(x) + r(y) ).$

\begin{rmk}
\label{RmkPolynomialR}
There exist $C > 0$, $\beta > 0$ such that for all $n \geq 1$, for all $x_1, \dots, x_n \geq 0$, 
\begin{align}
\label{EqConclusionRmkBeta}
r\Big(\sum_{i = 1}^n x_i \Big) \leq Cn^{\beta} \sum_{i = 1}^n r(x_i).
\end{align} 
\end{rmk}

\begin{proof}
If $n = 2^p$, $p \in \mathbb{N}$, we have
\begin{align*}
r\Big(\sum_{i = 1}^{2^p} x_i \Big)\leq C^p \sum_{i=1}^{2^p} r(x_i). 
\end{align*}
In the general case, setting $x_j = 0$ for all $j \in \{1, \dots, 2^{[\log_2(n)] + 1}\} \setminus \{1, \dots n\}$, we obtain
\begin{align*}
r\Big(\sum_{i = 1}^n x_i \Big) = r\Big(\sum_{i = 1}^{2^{[\log_2(n)] + 1}} x_i\Big) &\leq C^{[\log_2(n)] + 1} \Big( \sum_{i = 1}^n r(x_i) + (2^{[\log_2(n)] + 1}-n)r(0) \Big) \\
&\leq 2C n^{\log_2(C)} \sum_{i = 1}^n r(x_i),
\end{align*}
where we used that $r(0) \leq r(x_i)$, that $2^{[\log_2(n)] + 1} - n \leq n$, and that $C^{[\log_2(n)] + 1} \leq C n^{\log_2(C)}$.
\end{proof}

\begin{lemma}
\label{LemmaSumGeometric}
Let $(\CG_k)_{k \geq 0}$ be a non-decreasing family of $\sigma$-algebras, $(\tau_k)_{k \geq 1}$ a family of random times such that $\tau_k$ is $\CG_{k}$-measurable for all $k \geq 1$. Let $(E_k)_{k \geq 1}$ a family of events such that for all $k \geq 1,$ $E_k \in \CG_k$ and assume there exists $c > 0$ such that a.s.
\begin{align}
\label{HypoProbaEk}
\forall k \geq 1, \quad \PP(E_k|\CG_{k-1}) \geq c.
\end{align}
Set $G = \inf\{k \geq 1, E_k \text{ is realized}\}$, which is almost surely finite. Assume there exists a positive $\mathcal{G}_0$-measurable random variable $L$ such that for all $k \geq 1$, (note that $\{G \geq k\} \in \CG_{k-1}$),
\begin{align}
\label{IneqHypoLemmaSumGeom}
 \mathbf{1}_{\{G \geq k\}} \EE[r(\tau_{k+1}-\tau_k)|\CG_{k-1}] \leq L \quad \text{ and } \quad \EE[r(\tau_1)|\CG_0] \leq L.
 \end{align}
Then
$$ \EE[r(\tau_G) | \CG_0] \leq \kappa L, $$
for some constant $\kappa > 0$ depending only on $c$ and the function $r$.
\end{lemma}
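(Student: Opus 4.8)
The plan is to decompose $\tau_G$ into its increments and control the interplay between the \emph{random} number $G$ of increments and their \emph{sizes}. Write $D_1 = \tau_1$ and $D_k = \tau_k - \tau_{k-1}$ for $k \geq 2$ (these are nonnegative, the $\tau_k$ being nondecreasing), so that $\tau_G = \sum_{k=1}^G D_k$ and, applying Remark \ref{RmkPolynomialR} pathwise with $n = G$, $r(\tau_G) \leq C\, G^{\beta} \sum_{k=1}^G r(D_k) = C \sum_{k \geq 1} \mathbf{1}_{\{G \geq k\}}\, G^{\beta}\, r(D_k)$. Taking $\EE[\,\cdot \mid \CG_0]$ and using Tonelli, it suffices to bound each $\EE[\mathbf{1}_{\{G\geq k\}}\, G^{\beta}\, r(D_k)\mid \CG_0]$ by a summable (in $k$) sequence times $L$. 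As a preliminary I would record the tail bounds for $G$ implied by (\ref{HypoProbaEk}): since $\{G \geq k\} \in \CG_{k-1}$, one has $\PP(G \geq k+1 \mid \CG_{k-1}) = \mathbf{1}_{\{G\geq k\}}\PP(E_k^c \mid \CG_{k-1}) \leq (1-c)\mathbf{1}_{\{G\geq k\}}$, hence, by iteration, $\PP(G \geq k \mid \CG_0) \leq (1-c)^{k-1}$ (so $G < \infty$ a.s., as asserted) and, more generally, $\PP(G \geq k+m \mid \CG_k) \leq (1-c)^{m-1}\mathbf{1}_{\{G\geq k+1\}}$ for every $m \geq 1$.

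The main obstacle is precisely the entanglement of $G$ with the $D_k$'s, which forbids simply pulling $G^{\beta}$ out of the conditional expectation. I would resolve it with the elementary bound $G^{\beta} \leq 2^{\beta}\bigl(k^{\beta} + (G-k)^{\beta}\bigr)$ on $\{G\geq k\}$: the term $k^{\beta}$ is deterministic and will be summed against the geometric tail, while for the term $(G-k)^{\beta}$ the key point is that $r(D_k)$ and $\mathbf{1}_{\{G\geq k\}}$ are $\CG_k$-measurable (because $\tau_{k-1}$ is $\CG_{k-1}$-measurable, $\tau_k$ is $\CG_k$-measurable and $\{G\geq k\}\in\CG_{k-1}$). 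Conditioning on $\CG_k$ therefore isolates the only future-dependent factor, $\EE[\mathbf{1}_{\{G\geq k\}}(G-k)^{\beta} r(D_k)\mid \CG_0] = \EE\bigl[\, r(D_k)\,\EE[\mathbf{1}_{\{G\geq k\}}(G-k)^{\beta}\mid \CG_k]\mid \CG_0\,\bigr]$, and Abel summation combined with the tail bound above gives $\EE[\mathbf{1}_{\{G\geq k\}}(G-k)^{\beta}\mid \CG_k] \leq \Bigl(\sum_{m\geq 1}\bigl(m^{\beta}-(m-1)^{\beta}\bigr)(1-c)^{m-1}\Bigr)\mathbf{1}_{\{G\geq k\}} =: C_2\,\mathbf{1}_{\{G\geq k\}}$, the series converging because $m^{\beta}-(m-1)^{\beta}$ grows at most polynomially in $m$. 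Consequently $\EE[\mathbf{1}_{\{G\geq k\}}\, G^{\beta}\, r(D_k)\mid \CG_0] \leq 2^{\beta}(k^{\beta}+C_2)\,\EE[\mathbf{1}_{\{G\geq k\}}\, r(D_k)\mid \CG_0]$.

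It then remains to estimate $\EE[\mathbf{1}_{\{G\geq k\}}\, r(D_k)\mid \CG_0]$. For $k=1$ this is $\EE[r(\tau_1)\mid \CG_0] \leq L$ by the second part of (\ref{IneqHypoLemmaSumGeom}). For $k \geq 2$, I would bound $\mathbf{1}_{\{G\geq k\}} \leq \mathbf{1}_{\{G\geq k-1\}}$ (a $\CG_{k-2}$-measurable quantity), condition on $\CG_{k-2}$, and invoke the first part of (\ref{IneqHypoLemmaSumGeom}) with $k$ replaced by $k-1$, in the form $\mathbf{1}_{\{G\geq k-1\}}\EE[r(D_k)\mid \CG_{k-2}] \leq L\,\mathbf{1}_{\{G\geq k-1\}}$ (the indicator on the right being free, as it already multiplies the left-hand side), to obtain $\EE[\mathbf{1}_{\{G\geq k\}}\, r(D_k)\mid \CG_0] \leq L\,\PP(G \geq k-1 \mid \CG_0) \leq L\,(1-c)^{k-2}$. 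Plugging this back, $\EE[r(\tau_G)\mid \CG_0] \leq C\, L\sum_{k\geq 1} 2^{\beta}(k^{\beta}+C_2)\,(1-c)^{(k-2)\vee 0} =: \kappa\, L$, the series being convergent and $\kappa$ depending only on $c$ (through $C_2$ and $1-c$) and on the function $r$ (through the constants $C$ and $\beta$ of Remark \ref{RmkPolynomialR}). Everything here is elementary once the decoupling is done; the only genuine care required is keeping track of which $\sigma$-algebra each quantity is measurable for, the indices being shifted by one or two at several points.
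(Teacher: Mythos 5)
Your proof is correct. It shares the paper's skeleton — decompose $\tau_G$ into increments, invoke Remark \ref{RmkPolynomialR}, then beat the resulting polynomial factor with the geometric tail coming from \eqref{HypoProbaEk} — but the way you neutralize the polynomial factor is genuinely different, and it is the crux of the proof. The paper works on the exact event $\{G=j\}$, which turns the factor into a deterministic $j^{\beta}$, and then controls a double array
$u_{i,j} = \EE\bigl[r(\tau_{i+1}-\tau_i)\,\bigl(\prod_{k<j}\mathbf{1}_{E_k^c}\bigr)\mathbf{1}_{E_j}\,\big|\,\CG_0\bigr]$
by conditioning first on $\CG_{i+1}$ to collect the decay in $j$ and then on $\CG_{i-1}$ to invoke \eqref{IneqHypoLemmaSumGeom}; this requires a small case analysis ($i=0$, $j=i+1$, general case). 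You instead keep $G$ random, reorganize the sum as $\sum_{k\geq 1}\mathbf{1}_{\{G\geq k\}}G^{\beta}r(D_k)$, and decouple via $G^{\beta}\leq 2^{\beta}\bigl(k^{\beta}+(G-k)^{\beta}\bigr)$: the term $k^{\beta}$ is deterministic and summable against the tail $\PP(G\geq k-1\mid\CG_0)\leq(1-c)^{k-2}$, while the future-dependent factor $(G-k)^{\beta}$ is handled by conditioning on $\CG_k$ (against which both $r(D_k)$ and $\mathbf{1}_{\{G\geq k\}}$ are measurable) together with the Abel-summation estimate $\EE\bigl[\mathbf{1}_{\{G\geq k\}}(G-k)^{\beta}\mid\CG_k\bigr]\leq C_2\,\mathbf{1}_{\{G\geq k\}}$. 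The net effect is a single sum over $k$ with a uniform bound on each term, at the price of one extra (elementary) lemma; the paper's route avoids that lemma at the price of the $u_{i,j}$ bookkeeping and its boundary cases. Both arguments have comparable length and both rely on exactly the same three ingredients (Remark \ref{RmkPolynomialR}, the geometric tail, the shifted-index use of \eqref{IneqHypoLemmaSumGeom}), so this is a valid alternative organization rather than a shortcut.
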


\begin{proof}
For all $j \geq 1$, on $\{G=j\}$, setting $\tau_0 = 0$, we have $\tau_G = \sum_{i = 0}^{j-1} (\tau_{i+1}-\tau_i)$. Hence, using (\ref{EqConclusionRmkBeta}),
\begin{align}
\label{EqLemmaGeom}
\EE \Big[r \big(\tau_G \big) \Big| \CG_0 \Big] &= \sum_{j = 1}^{\infty} \EE \Big[ r \Big( \sum_{i = 0}^{j-1} (\tau_{i+1} - \tau_i) \Big) \mathbf{1}_{\{G = j\}} \Big| \CG_0 \Big] \nonumber \\
 &\leq C \sum_{j=1}^{\infty} j^{\beta} \sum_{i=0}^{j-1} \EE \Big[ r(\tau_{i+1}-\tau_i) \Big( \prod_{k = 1}^{j-1} \mathbf{1}_{E_k^c} \Big) \mathbf{1}_{E_j} \Big| \CG_0 \Big] =  C\sum_{j=1}^{\infty} j^{\beta} \sum_{i=0}^{j-1} u_{i,j},
\end{align}
the last equality standing for the definition of $u_{i,j}$. By convention, we give the value 1 to any product indexed by the empty set. Note that for any $ l \geq m \geq 1$, using (\ref{HypoProbaEk}),
\begin{align*}
\EE \Big[ \Big(\prod_{k = m}^l \mathbf{1}_{E_k^c} \Big) \Big| \CG_{m-1} \Big] = \EE \Big[ \Big(\prod_{k = m}^{l-1} \mathbf{1}_{E_k^c} \Big) \EE[\mathbf{1}_{E_l^c}|\CG_{l-1}] \Big| \CG_{m-1} \Big] &\leq (1-c) \EE \Big[ \Big(\prod_{k = m}^{l-1} \mathbf{1}_{E_k^c} \Big) \Big| \CG_{m-1} \Big].
\end{align*}
Iterating the argument, 
\begin{align}
\label{IneqCondLemmaSumGeom}
 \EE \Big[ \Big(\prod_{k = m}^l \mathbf{1}_{E_k^c} \Big) \Big| \CG_{m-1} \Big] \leq (1-c)^{l-m+1}. 
 \end{align}
We first bound $u_{i,j}$ in the case where $i \geq 1$ and $j \geq i+2$. We have, using that $\mathbf{1}_{E_j} \leq 1$ and that $\{G \geq i\}$ on $E_1^c \cap \dots \cap E_{j-1}^c$,
\begin{align*}
u_{i,j} &\leq \EE \Big[ r(\tau_{i+1}-\tau_i) \Big( \prod_{k = i+2}^{j-1} \mathbf{1}_{E_k^c} \Big) \Big( \prod_{k = 1}^{i+1} \mathbf{1}_{E_k^c} \Big) \mathbf{1}_{\{G \geq i\}} \Big| \CG_0 \Big] \\
&\leq \EE \Big[ r(\tau_{i+1}-\tau_i)  \Big( \prod_{k = 1}^{i+1} \mathbf{1}_{E_k^c} \Big) \mathbf{1}_{\{G \geq i\}} \EE \Big[  \prod_{k = i+2}^{j-1} \mathbf{1}_{E_k^c} \Big| \CG_{i+1} \Big] \Big| \CG_0 \Big] \\
&\leq (1-c)^{j-i-2} \EE \Big[ r(\tau_{i+1}-\tau_i)  \Big( \prod_{k = 1}^{i+1} \mathbf{1}_{E_k^c} \Big) \mathbf{1}_{\{G \geq i\}} \Big| \CG_0  \Big]
\end{align*}
by (\ref{IneqCondLemmaSumGeom}). Using  (\ref{IneqHypoLemmaSumGeom}), that $\mathbf{1}_{E_{i+1}^c} \mathbf{1}_{E_i^c} \leq 1$ and the fact that $\{G \geq i\} \in \CG_{i-1}$, we deduce that
\begin{align*}
u_{i,j} &\leq  (1-c)^{j-i-2} \EE \Big[ \mathbf{1}_{\{G \geq i\}} \EE \Big[ r(\tau_{i+1}-\tau_i)   \Big| \CG_{i-1} \Big] \prod_{k = 1}^{i-1} \mathbf{1}_{E_k^c} \Big| \CG_0 \Big] \\
&\leq L (1-c)^{j-i-2}  \EE \Big[\prod_{k = 1}^{i-1} \mathbf{1}_{E_k^c} \Big| \CG_0 \Big]  \leq L (1-c)^{j-3},
\end{align*}
where we used (\ref{IneqCondLemmaSumGeom}). Using similar (easier) computations, one can show that
$$u_{0,1} \leq L, \quad \text{and for } j \geq 2, \quad u_{0,j} \leq L(1-c)^{j-2} \quad \text{ and } \quad u_{j-1,j} \leq L (1-c)^{j-2}. $$

\noindent We plug-in those results into (\ref{EqLemmaGeom}) to conclude, splitting the sum over the cases $j = 1$, $j = 2$ and $j \geq 3$, that there exists a constant $\kappa  > 0$ depending only on $r$ and $c$ such that
$$
\EE \Big[ r(\tau_G)  \Big| \CG_0 \Big] \leq C \Big( L + 2^{\beta +1} L + \sum_{j = 3}^{\infty} j^{\beta +1} L (1-c)^{j-3} \Big) \leq \kappa L, $$
as desired.
\end{proof}

Recall, for $(x,\theta) \in \pD \times \mathcal{A}$, the notation $\vartheta(x,\theta)$ introduced in Lemma \ref{NotatHM}. For any filtration $(\mathcal{F}_t)_{t \geq 0}$, any stopping time $\nu$  we introduce the $\sigma$-algebra $\mathcal{F}_{\nu-} := \sigma( A \cap \{t < \nu \}, t \in \RR_+, A \in \mathcal{F}_t)$, see \cite[Definition 1.11]{jacod1987limit}. We set $\mathcal{F}_{0-}$ to be the completion of the trivial $\sigma$-algebra. 
\begin{lemma}
\label{LemmaDensityTi}
Let $x \in \pD$ and $V = R \vartheta(x,\Theta)$, with $(R,\Theta) \sim \Upsilon$. Let $(X_t,V_t)_{t \geq 0}$ be a free-transport process (see Remark \ref{RmkExtensionDefiProcess}) with $(X_0, V_0) = (x,V) \in \partial_- G$. Set $T_0 = 0$, $T_{i+1} = \inf\{t > T_i, X_t \in \pD\}$ for all $i \geq 0$. Then, for all $i \geq 1$, $T_i$ admits a density with respect to the Lebesgue measure on $\RR_+$.
\end{lemma}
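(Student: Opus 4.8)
The plan is to isolate one source of randomness — the initial speed $R$ — and reduce the whole statement to the fact that $R$ has a density on $\RR_+$. Set $\CG := \sigma\big(\Theta, (U_k, R_k, \Theta_k)_{k \geq 1}\big)$. Since the initial velocity data $(R,\Theta)$ and the i.i.d. sequence $(U_k, R_k, \Theta_k)_{k\geq 1}$ are independent and $R\perp\Theta$, the variable $R$ is independent of $\CG$ and, conditionally on $\CG$, still has density $h_R$. I would prove by induction on $i \geq 1$ the following: almost surely $T_i = A_i/R + B_i$ with $A_i > 0$ and $B_i \geq 0$ both $\CG$-measurable, $X_{T_i}$ is $\CG$-measurable, and either $V_{T_i}$ is $\CG$-measurable (call this \emph{regime $D$}) or $B_i = 0$ and $V_{T_i} = R\,u_i$ for some $\CG$-measurable unit vector $u_i$ (\emph{regime $S$}). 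The point of dragging $X_{T_i}$ and the structure of $V_{T_i}$ through the induction is precisely that the sequence of hit points, of billiard directions, and hence of reflection types $\mathbf{1}_{\{U_k \le \alpha(X_{T_k})\}}$, does not depend on $R$ — only the \emph{times} do, and only through a factor $1/R$ in regime $S$.

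For $i=1$, using $\zeta(x, rv) = r^{-1}\zeta(x,v)$ one gets $T_1 = \zeta(x,V_0) = \zeta(x,\vartheta(x,\Theta))/R$, so $A_1 = \zeta(x,\vartheta(x,\Theta))$ and $B_1 = 0$; here $A_1 > 0$ because $\vartheta(x,\Theta)\cdot n_x = \cos\Theta_1 > 0$ (by (\ref{EqVarTheta}) and $\Theta_1 \in (-\pi/2,\pi/2)$), so $\vartheta(x,\Theta)$ points into the bounded set $D$. Moreover $X_{T_1} = q(x,\vartheta(x,\Theta))$ is a function of $\Theta$ and $\|V_{T_1-}\| = R$, so reading off $w$ in (\ref{EqDefW}): on $\{U_1 > \alpha(X_{T_1})\}$ we are in regime $S$ with $u_1 = \eta_{X_{T_1}}(\vartheta(x,\Theta))$, and on $\{U_1 \le \alpha(X_{T_1})\}$ we are in regime $D$ with $V_{T_1} = R_1\vartheta(X_{T_1},\Theta_1)$.

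For the induction step: in regime $D$, $X_{T_i}$ and $V_{T_i}$ are $\CG$-measurable, hence so are $T_{i+1} - T_i = \zeta(X_{T_i},V_{T_i})$, $X_{T_{i+1}} = q(X_{T_i},V_{T_i})$ and $V_{T_{i+1}-} = V_{T_i}$, so $T_{i+1} = A_i/R + (B_i + \zeta(X_{T_i},V_{T_i}))$ is again of the right form with $A_{i+1} = A_i > 0$, and we stay in regime $D$. In regime $S$, $V_{T_i} = R u_i$, so $T_{i+1} - T_i = \zeta(X_{T_i},R u_i) = \zeta(X_{T_i},u_i)/R$, $X_{T_{i+1}} = q(X_{T_i},u_i)$ is $\CG$-measurable and $\|V_{T_{i+1}-}\| = R$; thus $T_{i+1} = (A_i + \zeta(X_{T_i},u_i))/R$ with $A_{i+1} = A_i + \zeta(X_{T_i},u_i) > 0$ and $B_{i+1}=0$, and reading off $w$ again we are in regime $S$ with $u_{i+1} = \eta_{X_{T_{i+1}}}(u_i)$ if $U_{i+1} > \alpha(X_{T_{i+1}})$, or in regime $D$ with $V_{T_{i+1}} = R_{i+1}\vartheta(X_{T_{i+1}},\Theta_{i+1})$ otherwise. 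Here I use that $D$ is convex, so trajectories meet $\pD$ transversally, specular reflections are always well-defined and $\partial_0 G$ is never hit; also $T_i < \infty$ a.s. by Proposition \ref{PropExplosion}, so all these quantities are finite.

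To conclude: conditionally on $\CG$, $T_i = \psi(R)$ where $\psi(r) = A_i/r + B_i$ is a decreasing $C^1$-diffeomorphism of $(0,\infty)$ onto $(B_i,\infty)$ (recall $A_i > 0$ and $R>0$ a.s.) and $R$ has density $h_R$; hence the conditional law of $T_i$ given $\CG$ is absolutely continuous on $\RR_+$ with a density depending measurably on $\CG$, and integrating over $\CG$ (Tonelli) shows the law of $T_i$ is a mixture of absolutely continuous measures, hence absolutely continuous. The only delicate point is the bookkeeping in the induction — keeping track of the current regime and checking that every geometric quantity ($X_{T_i}$, the billiard directions, the reflection types) is $\CG$-measurable, i.e. genuinely independent of $R$; there is no analytic difficulty.
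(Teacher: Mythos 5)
Your proof is correct and organizes the argument differently from the paper. The paper handles $T_1$ by the same observation $T_1=A_1/R$, but for the induction it conditions on whether the first reflection is specular or diffuse: on the specular event $T_2=A/R$ and the density of $R$ is used directly, while on the diffuse event the strong Markov property at $T_1$ is invoked to borrow a density from the fresh radial variable $R_1$, with $T_1$ added as an $\mathcal{F}_{T_1-}$-measurable constant, and the general step is left ``by induction''. You instead freeze all sources of randomness except the initial speed $R$ --- setting $\CG=\sigma\big(\Theta,(U_k,R_k,\Theta_k)_{k\ge1}\big)$ --- and show by an explicit two-regime induction (``all reflections specular so far'' vs.\ ``a diffuse reflection has occurred'') that $T_i=A_i/R+B_i$ with $A_i>0$, $B_i\ge0$ both $\CG$-measurable, while $X_{T_i}$ and either $V_{T_i}$ or the direction $u_i$ remain $\CG$-measurable. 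This yields a single representation valid for every $i$, makes the inductive step fully explicit, never invokes the strong Markov property, and uses only the density of the one variable $R$ (whereas the paper uses $R$ or the last diffusive $R_j$ depending on the case). The trade-off is that you have to carry the regime and the $\CG$-measurability of the hit points and directions through the induction, but this bookkeeping is precisely what justifies the ``by induction'' the paper elides. Your appeal to convexity for transversality of specular reflections does not appear in the paper's proof, but since the lemma is stated and used within the uniformly convex section this creates no gap; both proofs ultimately rest on the same two facts, namely that the hit sequence, the reflection types, and the billiard directions depend only on $\Theta$ and $(U_k,R_k,\Theta_k)_{k\ge1}$, and that the crossing time is homogeneous of degree $-1$ in the speed.
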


\begin{proof}

We set for all $t \geq 0$, $\mathcal{F}_t = \sigma((X_s,V_s)_{0 \leq s \leq t})$. Let $A \in \mathcal{B}(\RR_+)$ with $\lambda(A) = 0$, where $\lambda$ is the Lebesgue measure on $\RR_+$.  We have $T_1 = \frac{\|x - q(x, \vartheta(x,\Theta))\|}{R}$, so that
\begin{align*}
\PP(T_1 \in A) = \int_{\mathcal{A}} \PP \Big(\frac{\|x - q(x, \vartheta(x,\theta))\|}{R} \in A \Big) h_{\Theta}(\theta) d\theta.
\end{align*}
For $\theta \in \mathcal{A} = (-\frac{\pi}{2}, \frac{\pi}{2}) \times [0,\pi]^{n-2}$, we set $A_{x,\theta} = \{s \in \RR_+, \frac{\|x - q(x, \vartheta(x,\theta))\|}{s} \in A\}$, so that
$$ \PP(T_1 \in A) = \int_{\mathcal{A}} \PP (R \in A_{x,\theta})h_{\Theta}(\theta) d\theta. $$
Note that $\lambda(A_{x,\theta}) = 0$ for all $\theta \in \mathcal{A}$. Since $R$ has a density $h_R$ with respect to the Lebesgue measure on $\RR_+$, we conclude that $\PP(T_1 \in A) = 0$, so that $T_1$ admits a density with respect to the Lebesgue measure on $\RR_+$.

\vspace{.5cm}
Concerning $T_2$, we introduce the event 
$B = \{\text{Specular}$ $ \text{reflection at } X_{T_1}\}$. Note that $B$ is independent of $R$, see Definition \ref{DefiProcess}. We fix $A \in \mathcal{B}(\RR_+)$ with $\lambda(A) = 0$. 
\begin{enumerate}[i)]
\item On the event $B$, since $T_2 = T_2 - T_1 + T_1$, setting $Y = q(x,\vartheta(x,\Theta))$ and recalling (\ref{EqSpecularRef}),
$$T_2 = \frac{ \Big\|Y - q(Y, \eta_Y(\vartheta(Y,\Theta)) \Big\|}{R} + \frac{\|x - Y\|}{R}.$$
Proceeding as for $T_1$, we find, with the notation $y = q(x,\vartheta(x,\theta))$,
$$ \PP(\{T_2 \in A\} \cap B) = \int_{\mathcal{A}} \big(1- \alpha(y) \big) \PP \Big( \frac{\|x - y\| + \|y - q(y, \eta_y(\vartheta(y,\theta)))\|}{R} \in A \Big) h_{\Theta} d\theta = 0. $$ 

\item On the event $B^c$, we introduce the process $(\xtilde{t},\vtilde{t})_{t \geq 0}$ with, $\xtilde{t} = X_{T_1 + t}$, $\vtilde{t} = V_{T_1 + t}$. By the strong Markov property for the process $(X_s,V_s)_{s \geq 0}$, we have that, setting $$\tilde{T}_1 = \inf\{t > 0, \xtilde{t} \in \pD\}  = T_2 - T_1,$$ $\tilde{T}_1$ admits a density with respect to $\lambda$, conditionally on $\mathcal{F}_{T_1-}$ on $B^c$. Indeed, $X_{T_1} \in \pD$ and is $\mathcal{F}_{T_1-}$-measurable, $V_{T_1} = R_1 \vartheta(X_{T_1}, \Theta_1)$ on $B^c$, with $(R_1, \Theta_1) \sim \Upsilon$ independent of $\mathcal{F}_{T_1-}$, so that we can apply the previous study for $T_1$. We obtain, since $T_1$ is $\mathcal{F}_{T_1-}$ measurable.
$$ \PP(T_2 \in A \cap B^c) =  \PP(\{\tilde{T}_1 + T_1 \in A\} \cap B^c) = 0. $$
\end{enumerate}
Hence, $\PP(\{T_2 \in A\}) = 0$. The conclusion follows by induction. 
\end{proof}

\subsection{Construction of the coupling.} 
\label{SubsectionConstruction}
In this section, we define the coupling of the two processes that we will use to prove Theorem \ref{MainTheorem}, and show two of its properties.

We recall that $\mathcal{U}$ is the uniform distribution over $[0,1]$ and $\mathcal{Q}$ is the law on $[0,1] \times \RR_+ \times \mathcal{A}$ such that $\mathcal{Q} = \mathcal{U} \otimes \Upsilon$, where $\Upsilon$ is defined in Lemma \ref{NotatHM}. For $x \in \pD$, $\tilde{x} \in D$, $\tilde{v} \in \RR^n$ with $\|\tilde{v}\| \geq 1$, recall the law $\Lambda_{x,\tilde{x}, \tilde{v}}$ on $(\RR_+ \times \mathcal{A})^2$ defined in Proposition \ref{CouplingProp} .

Let $(x, v, \tilde{x}, \tilde{v})$ in $(\bar{D} \times \RR^n)^2 $ with $x \in \pD$ or $\tilde{x} \in \pD$.  We define the law $\Gamma_{x, v, \tilde{x}, \tilde{v}}$ on the space $([0,1] \times \RR_+ \times \mathcal{A})^2$ by:
\begin{align}
\label{EqLawGlobalCoupling}
\Gamma_{x,v,\tilde{x}, \tilde{v}}&(du, dr, d\theta, d\tilde{u}, d\tilde{r}, d\tilde{\theta}) = \mathbf{1}_{\{x = \tilde{x}\}} \mathcal{Q} (du, dr,d\theta) \delta_u (d\tilde{u}) \delta_{r}(d\tilde{r}) \delta_{\theta}(d\tilde{\theta})  \\ 
& \quad + \mathbf{1}_{\{x \in \pD\}} \mathbf{1}_{\{ \tilde{x} \in D\} \cap \{ \|v\|  \geq 1, \|\tilde{v}\| \geq 1\}} \mathcal{U}(du) \Lambda_{x,\tilde{x}, \tilde{v}} (  dr, d\theta, d\tilde{r},  d\tilde{\theta}) \delta_u(d\tilde{u}) \nonumber \\
& \quad + \mathbf{1}_{\{x \ne \tilde{x}\}} \mathbf{1}_{\{ \tilde{x} \in \pD\} \cup \{ \|\tilde{v}\| < 1\} \cup \{\|v\| < 1\}} (\mathcal{Q} \otimes \mathcal{Q}) (du, dr, d\theta, d\tilde{u}, d\tilde{r}, d\tilde{\theta}). \nonumber
\end{align}

\noindent We can now describe the global coupling procedure with the help of this law. In order to obtain a Markov process, we introduce an additional random process $(Z_s)_{s \geq 0}$ with values in the set $\{\emptyset\} \cup ([0,1] \times \RR_+ \times \mathcal{A})$.

\begin{defi}
\label{DefiCouplingProcess}
We define a coupling process $(X_s,V_s,\xtilde{s},\vtilde{s}, Z_s)_{s \geq 0}$ by the following steps:
\begin{labeling}{Step k+1:}
\item [Step 0:] Simulate $(X_0,V_0) \sim f_0$, $(\xtilde{0},\vtilde{0}) \sim \mu_{\infty}$, set $Z_0 = \emptyset$ and $S_0 = 0$.

$\dots$

\item [Step k+1:]  Set $S_{k+1} = S_k + \zeta(X_{S_k},V_{S_k}) \wedge \zeta(\xtilde{S_k},\vtilde{S_k})$.

\noindent Set, for all $t \in (S_k,S_{k+1})$, $X_t = X_{S_k} + (t-S_k) V_{S_k}$, $V_t = V_{S_k}$, 

\noindent $\hspace{4.13cm} \xtilde{t} = \xtilde{S_k} + (t-S_k) \vtilde{S_k}$, $\vtilde{t} = \vtilde{S_k}$,

\noindent $\hspace{4.13cm} Z_t = Z_{S_k}$.

\noindent Set $X_{S_{k+1}} = X_{S_{k+1}-}$, $\xtilde{S_{k+1}} = \xtilde{S_{k+1}-}$.

\noindent Simulate $(Q_{k+1}, \tilde{Q}_{k+1}) \sim \Gamma_{X_{S_{k+1}}, V_{S_{k+1}-}, \xtilde{S_{k+1}}, \vtilde{S_{k+1}-}}$. 

\noindent Set $V_{S_{k+1}} = V_{S_{k+1}-} \mathbf{1}_{\{X_{S_{k+1}} \not \in \pD\}} + w(X_{S_{k+1}}, V_{S_{k+1}-}, Q_{k+1}) \mathbf{1}_{\{X_{S_{k+1}} \in \pD\}}$.

\noindent Set $\tilde{Q}'_{k+1} = \tilde{Q}_{k+1} \mathbf{1}_{\{Z_{S_{k+1}-} = \emptyset\}} + Z_{S_{k+1}-} \mathbf{1}_{\{Z_{S_{k+1}-} \ne \emptyset\}}$.

\noindent Set $\vtilde{S_{k+1}} = \vtilde{S_{k+1}-} \mathbf{1}_{\{\xtilde{S_{k+1}} \not \in \pD\}} + w(\xtilde{S_{k+1}}, \vtilde{S_{k+1}-}, \tilde{Q}'_{k+1}) \mathbf{1}_{\{\xtilde{S_{k+1}} \in \pD\}}$. 

\noindent Set $Z_{S_{k+1}} = \emptyset \mathbf{1}_{\{\xtilde{S_{k+1}} \in \pD\}} + \tilde{Q}'_{k+1} \mathbf{1}_{\{\xtilde{S_{k+1}} \not \in \pD\}}. $

\end{labeling}
\end{defi}

Observe that the last line of Definition \ref{DefiCouplingProcess} rewrites as 
$$ Z_{S_{k+1}} = \emptyset \mathbf{1}_{\{\xtilde{S_{k+1}} \in \pD\}} + Z_{S_{k+1}-} \mathbf{1}_{\{\xtilde{S_{k+1}} \not \in \pD, Z_{S_{k+1}-} \ne \emptyset \}} + \tilde{Q}_{k+1} \mathbf{1}_{\{\xtilde{S_{k+1}} \not \in \pD, Z_{S_{k+1}-} = \emptyset \}}. $$

\begin{rmk}
One can readily see from Definition \ref{DefiCouplingProcess} that the process $(X_s,V_s,\xtilde{s},\vtilde{s},Z_s)_{s \geq 0}$ is a strong Markov process. 
\end{rmk}

Let us explain informally this definition. The sequence $(S_k)_{k \geq 1}$ is the sequence of collisions with the boundary of $(X_s,V_s)_{s \geq 0}$ and $(\xtilde{s},\vtilde{s})_{s \geq 0}$. The behavior of the coupling process is clear between $S_k$ and $S_{k+1}$ for all $k \geq 0$. 
For all $k \geq 1$, at time $S_k$, we set $(X,V_-) = (X_{S_k},V_{S_k-})$, $(\xtilde{}, \vtilde{-}) = (\xtilde{S_k},\vtilde{S_k-})$, $Z_- = Z_{S_k-}$ and we have $X \in \pD$ or $\xtilde{} \in \pD$. We explain in the following table how we choose the new velocities $(V,\tilde{V})$ and update the value of $Z$.

\vspace{4.5cm}

\hspace{-.5cm}

\begin{table}[h]
  \centering
  \caption{Update when $X \in \pD$ or $\xtilde{} \in \pD$. }
\begin{tabular}{c c c c c}

\toprule
$X $ & $\xtilde{} $ & $Z_-$ & $\|V_-\| \wedge \|\vtilde{-}\|$ & Update \\
\midrule
& & & &  Simulate $(R,\Theta, \tilde{R},\tilde{\Theta}) \sim \Lambda_{X,\tilde{X},\vtilde{-}}$, $U \sim \mathcal{U}$. \\
$\in \pD$ & $\not \in \pD$ & $ \emptyset$ & $\geq 1$ & Set $(Q,\tilde{Q}) = ((U,R,\Theta),(U,\tilde{R},\tilde{\Theta})). $ \\
& & & & Update $V$ using $Q$, set $\vtilde{} = \vtilde{-}$ and store $\tilde{Q}$ in $Z$: $Z = \tilde{Q}$. \\
\midrule
& & & & Simulate $(Q,\tilde{Q}) \sim \mathcal{Q} \otimes \mathcal{Q}$. \\
$\in \pD$ & $\not \in \pD$ & $ \emptyset$ & $< 1$ & Update $V$ using $Q$, set $\vtilde{} = \vtilde{-}$, store $\tilde{Q}$ in $Z$: $Z = \tilde{Q}$ \\
& & & & (this is quite artificial since $\tilde{Q}$ is independent of $Q$). \\
\midrule
& & & &  Simulate $Q \sim \mathcal{Q}$. \\
$\in \pD$ & $ \in \pD$ & $ \emptyset$ & all values & Update $V$ and $\tilde{V}$ using $Q$ (if $V_- = \vtilde{-}$ then $V = \tilde{V}$).\\
& $\xtilde{} = X$ & & & Set $Z = \emptyset$.  \\
\midrule
& & & &  Simulate $(Q,\tilde{Q}) \sim \mathcal{Q} \otimes \mathcal{Q}$. \\
$\in \pD$ & $ \in \pD,$ & $ \emptyset$ & all values & Update $V$ using $Q$ and $\tilde{V}$ using $\tilde{Q}$.\\
& $ \xtilde{} \ne X$& & & Set $Z = \emptyset$.  \\
\midrule 
& & & &   Simulate $(R,\Theta, \tilde{R},\tilde{\Theta}) \sim \Lambda_{X,\tilde{X},\vtilde{-}}$, $U \sim \mathcal{U}$. \\
$\in \pD$ & $ \not \in \pD$ & $ \ne \emptyset$ & $\geq 1$ & Set $(Q,\tilde{Q}) = ((U,R,\Theta),(U,\tilde{R},\tilde{\Theta})$, update $V$ using $Q$. \\
& & & & Set $\tilde{V} = \vtilde{-}$. Leave $Z$ unchanged: $Z = Z_-$ ($\tilde{Q}$ is useless). \\
\midrule 
& & & &   Simulate $(Q, \tilde{Q}) \sim \mathcal{Q} \otimes \mathcal{Q}.$ \\
$\in \pD$ & $ \not \in \pD$ & $ \ne \emptyset$ & $< 1$ & Update $V$ using $Q$, set $\tilde{V} = \vtilde{-}$. \\
& & & & Leave $Z$ unchanged: $Z = Z_-$ ($\tilde{Q}$ is useless). \\
\midrule
& & & &   Simulate $Q \sim \mathcal{Q}$. \\
$\in \pD$ & $\in \pD$ & $ \ne \emptyset$ & all values & Update $V$ using $Q$, update $\tilde{V}$ using $Z_-$. \\
& $\xtilde{} = X$ & & & Clear $Z$ by setting $Z = \emptyset$.  \\
\midrule 
& & & &   Simulate $(Q, \tilde{Q}) \sim \mathcal{Q} \otimes \mathcal{Q}.$ \\
$\in \pD$ & $\in \pD$ & $ \ne \emptyset$ & all values & Update $V$ using $Q$, update $\tilde{V}$ using $Z_-$. \\
& $\xtilde{} \ne X$ & & & Clear $Z$ by setting $Z = \emptyset$ ($\tilde{Q}$ is useless). \\
\midrule
& & & &   Simulate $(Q, \tilde{Q}) \sim \mathcal{Q} \otimes \mathcal{Q}.$ \\
$\not \in \pD$ & $\in \pD$ & $ \emptyset$ & all values & Update $\tilde{V}$ using $\tilde{Q}$, set $V = V_-$. \\
& & & & Set $Z = \emptyset$ ($Q$ is useless). \\
\midrule 
& & & &   Simulate $(Q, \tilde{Q}) \sim \mathcal{Q} \otimes \mathcal{Q}.$ \\
$\not \in \pD$ & $\in \pD$ & $ \ne \emptyset$ & all values & Update $\tilde{V}$ using $Z_-$, set $V = V_-$. \\
& & & & Clear $Z$ by setting $Z = \emptyset$ ($Q, \tilde{Q}$ are useless). \\
\midrule
\end{tabular}
\label{Tab:TableUpdate}
\end{table}

Observe that all those cases are treated in a rather concise way in Definition \ref{DefiCouplingProcess}. This leads to simpler notations and hopefully allows for a clearer proof.

\begin{lemma}
\label{LemmaMarginalCode}
Let $(X_s,V_s, \xtilde{s},\vtilde{s}, Z_s)_{s \geq 0}$ be a coupling process. Then $(X_s,V_s)_{s \geq 0}$ is a free-transport process with initial distribution $f_0$ (see Definition \ref{DefiProcess}). Moreover, $(\xtilde{s}, \vtilde{s})_{ s \geq 0}$ is a free-transport process with initial distribution $\mu_{\infty}$.  
\end{lemma}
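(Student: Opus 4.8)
The plan is to verify that the process $(X_s,V_s)_{s\geq 0}$ extracted from the coupling matches, step by step, the recursive construction in Definition \ref{DefiProcess}, and symmetrically for $(\xtilde s,\vtilde s)_{s\geq 0}$. The starting point is that the sequence $(S_k)_{k\geq 0}$ of collision times of the \emph{pair} is a refinement of the collision times of each marginal: at each $S_{k+1}$ the process is deterministically transported between $S_k$ and $S_{k+1}$, and a velocity update only occurs at $S_{k+1}$ for the $X$-marginal when $X_{S_{k+1}}\in\pD$ (otherwise $V_{S_{k+1}}=V_{S_{k+1}-}$, i.e. nothing happens), and similarly for the $\xtilde{}$-marginal. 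So I would first check that $S_{k+1}=S_k+\zeta(X_{S_k},V_{S_k})\wedge\zeta(\xtilde{S_k},\vtilde{S_k})$ together with these rules implies that, between two successive times at which $X\in\pD$, the process $(X_s,V_s)$ is exactly $X_s=X+(s-\cdot)V$, $V_s=V$ constant, and that the successive times at which $X_s\in\pD$ are precisely the $T_i$ of Definition \ref{DefiProcess} for $(X_s,V_s)$ started from $(X_0,V_0)\sim f_0$.

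The heart of the argument is then to identify the \emph{law} of the updated velocity at a collision of the $X$-marginal. When $X_{S_{k+1}}\in\pD$, we set $V_{S_{k+1}}=w(X_{S_{k+1}},V_{S_{k+1}-},Q_{k+1})$ with $(Q_{k+1},\tilde Q_{k+1})\sim\Gamma_{X_{S_{k+1}},V_{S_{k+1}-},\xtilde{S_{k+1}},\vtilde{S_{k+1}-}}$. Inspecting the three lines of \eqref{EqLawGlobalCoupling}, in \emph{every} case the first marginal of $\Gamma_{x,v,\tilde x,\tilde v}$ in the variables $(u,r,\theta)$ is $\mathcal{Q}=\mathcal{U}\otimes\Upsilon$: in the first line because $\mathcal{Q}(du,dr,d\theta)$ appears directly; in the second because $\mathcal{U}(du)$ times the first marginal of $\Lambda_{x,\tilde x,\tilde v}$ in $(r,\theta)$, which is $\Upsilon$ by Proposition \ref{CouplingProp}; in the third because $\mathcal{Q}\otimes\mathcal{Q}$ has first marginal $\mathcal{Q}$. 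Hence $Q_{k+1}\sim\mathcal{Q}$, so $V_{S_{k+1}}$ is obtained from $V_{S_{k+1}-}$ exactly via the map $w$ applied to an independent $\mathcal{Q}$-distributed triple, which is the update rule of Definition \ref{DefiProcess}. The independence across different collisions follows from the strong Markov property of the coupling process (noted in the remark after Definition \ref{DefiCouplingProcess}) and the fact that, conditionally on the current state, $Q_{k+1}$ is drawn freshly with law $\mathcal{Q}$ regardless of the value of $Z_{S_{k+1}-}$; so the sequence of triples feeding the $X$-marginal is i.i.d.\ $\mathcal{Q}$ and independent of $(X_0,V_0)$.

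For the $\xtilde{}$-marginal the bookkeeping is slightly more delicate because of the memory variable $Z$. Here I would argue as follows: a velocity update of $\xtilde{}$ happens at $S_{k+1}$ only when $\xtilde{S_{k+1}}\in\pD$, and then $\vtilde{S_{k+1}}=w(\xtilde{S_{k+1}},\vtilde{S_{k+1}-},\tilde Q'_{k+1})$ with $\tilde Q'_{k+1}=\tilde Q_{k+1}\mathbf1_{\{Z_{S_{k+1}-}=\emptyset\}}+Z_{S_{k+1}-}\mathbf1_{\{Z_{S_{k+1}-}\neq\emptyset\}}$, and afterwards $Z$ is cleared to $\emptyset$. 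The point is that $Z$ is set to a nonempty value $\tilde Q_{k+1}$ (resp. left at its previous nonempty value) only at times when $\xtilde{}\notin\pD$, i.e. at collision times of the $X$-marginal that are not collision times of $\xtilde{}$; and the value stored is always a component of a pair drawn from $\Gamma$, whose relevant $(\tilde u,\tilde r,\tilde\theta)$-marginal is again $\mathcal{Q}$ (first line: $\delta_u(d\tilde u)\delta_r(d\tilde r)\delta_\theta(d\tilde\theta)$ pushed forward through $\mathcal{Q}(du,dr,d\theta)$ gives $\mathcal{Q}$; second line: $\delta_u(d\tilde u)$ times the $(\tilde r,\tilde\theta)$-marginal $\Upsilon$ of $\Lambda$, with $u\sim\mathcal{U}$, gives $\mathcal{Q}$; third line: the second factor of $\mathcal{Q}\otimes\mathcal{Q}$). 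Thus between two successive collisions of $\xtilde{}$ the variable $Z$ either stays $\emptyset$ (and then the fresh $\tilde Q_{k+1}\sim\mathcal{Q}$ is used) or has been loaded once with a $\mathcal{Q}$-distributed value that is then consumed; either way the triple actually used to update $\vtilde{}$ at the collision has law $\mathcal{Q}$. One must also check these triples are i.i.d.\ across collisions of $\xtilde{}$ and independent of $(\xtilde0,\vtilde0)$: this again comes from the strong Markov property, together with the observation that whenever $Z$ is nonempty it was loaded at a strictly earlier time with a value that, conditionally on the past up to that time, had law $\mathcal{Q}$, and the $\mathcal{U}$-component shared between $Q$ and $\tilde Q$ in the coupled cases is still, marginally, uniform. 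Once this is established, $(\xtilde s,\vtilde s)_{s\geq 0}$ fits Definition \ref{DefiProcess} with initial law $\mu_\infty$.

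\textbf{Main obstacle.} The delicate part is the $\xtilde{}$-marginal: one has to show that the memory mechanism $Z$ does not corrupt the law of the update triples — neither their individual distribution (which must be $\mathcal{Q}$, handled by reading off each line of \eqref{EqLawGlobalCoupling}), nor their mutual independence and independence from the initial condition. The subtlety is that a stored value in $Z$ is created \emph{jointly} with the $X$-marginal update $Q_{k+1}$ (they may share the $\mathcal{U}$-component and be linked through $\Lambda_{x,\tilde x,\tilde v}$), so one cannot naively say "fresh independent randomness"; the clean way around it is to invoke the strong Markov property at the collision time of $\xtilde{}$ preceding the one under consideration and to note that, conditionally on the state at that time, the $\xtilde{}$-update triple is a measurable function of randomness drawn afterwards (possibly stored in $Z$) whose conditional law is $\mathcal{Q}$ and which is conditionally independent of everything needed for the $\xtilde{}$-marginal dynamics up to that time. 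I expect the write-up to be mostly careful case-checking against Table \ref{Tab:TableUpdate} and Definition \ref{DefiCouplingProcess} rather than any new idea.
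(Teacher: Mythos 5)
Your proposal follows essentially the same route as the paper: read off from the three lines of \eqref{EqLawGlobalCoupling} that each one-sided marginal of $\Gamma$ is $\mathcal{Q}$ (this is \eqref{EqMarginalGamma} and \eqref{EqConditionalLawQTilde} in the paper), handle the $X$-marginal directly, and for the $\tilde X$-marginal track which triple actually feeds each update of $\tilde V$ through the memory variable $Z$. You correctly identify the $Z$-mechanism as the only real difficulty.

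One technical point that you acknowledge but leave implicit, and which is the load-bearing step in the paper's write-up: to show that the mixture $\tilde Q'_{\nu_k}=\tilde Q_{\nu_k}\mathbf 1_{\{Z_{S_{\nu_k}-}=\emptyset\}}+\tilde Q_{\nu_{k-1}+1}\mathbf 1_{\{Z_{S_{\nu_k}-}\neq\emptyset\}}$ has conditional law $\mathcal{Q}$ given $\tilde{\mathcal F}_{S_{\nu_{k-1}}}$, one needs the event $\{Z_{S_{\nu_k}-}=\emptyset\}$ to be $\mathcal G_{S_{\nu_{k-1}}}$-measurable, so that the indicator can be pulled out of the conditional expectation before the fresh draws $\tilde Q_{\nu_{k-1}+1}$ and $\tilde Q_{\nu_k}$ are made. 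This is exactly the paper's observation \eqref{EqSetZEmpty}: since $Z_{S_{\nu_{k-1}}}=\emptyset$ a.s., $\{Z_{S_{\nu_k}-}=\emptyset\}$ coincides with $\{\zeta(X_{S_{\nu_{k-1}}},V_{S_{\nu_{k-1}}})\geq\zeta(\tilde X_{S_{\nu_{k-1}}},\tilde V_{S_{\nu_{k-1}}})\}\in\mathcal G_{S_{\nu_{k-1}}}$. Without this, saying that "each source has conditional law $\mathcal Q$" is not enough, since the \emph{choice} of source could a priori be correlated with the drawn values. Making this measurability explicit, together with the chain of inclusions $\tilde{\mathcal F}_{S_{\nu_{k-1}}}\subset\mathcal G_{S_{\nu_{k-1}}}\subset\mathcal G_{S_{\nu_{k-1}+1}-}\subset\mathcal G_{S_{\nu_k}-}$ used to apply the tower property, would turn your sketch into a complete argument and would match the paper's proof.
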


\begin{proof}
We write, for all $s \geq 0$, $\mathcal{G}_s = \sigma((X_t,V_t,\xtilde{t},\vtilde{t}, Z_t)_{0 \leq t \leq s})$, $\mathcal{F}_s = \sigma((X_t,V_t)_{0 \leq t \leq s})$ and $\tilde{\mathcal{F}}_s = \sigma((\xtilde{t},\vtilde{t})_{0 \leq t \leq s})$.
Note first that for all $i \geq 1$, $X_{S_i} \in \pD$ or $\xtilde{S_i} \in \pD$. We have, a.s., recalling (\ref{EqLawGlobalCoupling}) and Proposition \ref{CouplingProp},
\begin{align}
\label{EqMarginalGamma}
 \int_{(\tilde{u}, \tilde{r}, \tilde{\theta}) \in [0,1] \times \RR_+ \times \mathcal{A}} \Gamma_{X_{S_i}, V_{S_i-}, \xtilde{S_i}, \vtilde{S_i-}}(du, dr, d\theta, d\tilde{u}, d\tilde{r}, d\tilde{\theta}) = \mathcal{Q} (du, dr, d\theta).
 \end{align}
Hence, with a similar argument for $\tilde{Q}_i$, 
\begin{align}
\label{EqConditionalLawQTilde}
\mathcal{L}(Q_i | \mathcal{G}_{S_i-}) = \mathcal{Q}, \qquad \mathcal{L}(\tilde{Q}_i | \CG_{S_i-}) = \mathcal{Q}. 
\end{align}

We focus first on the process $(\xtilde{t},\vtilde{t})_{t \geq 0}$. We introduce the subsequence $(\nu_k)_{k \geq 0}$ defined by $\nu_0 = 0$ and $\nu_{k+1} = \inf \{j > \nu_k, \xtilde{S_j} \in \pD\}$. Comparing Definitions \ref{DefiProcess} and \ref{DefiCouplingProcess}, one realizes that the only difficulty is to verify that for all $k \geq 1$, $\tilde{Q}_{\nu_k}'$ is $\mathcal{Q}$-distributed and independent of $\tilde{\mathcal{F}}_{S_{\nu_k}-} = \tilde{\mathcal{F}}_{S_{\nu_{k-1}}}$.

Note first that, for all $k \geq 1$, $\{Z_{S_{\nu_k}-} = \emptyset\} \in \mathcal{G}_{S_{\nu_{k-1}}}$. Indeed, we have $Z_{S_{\nu_{k-1}}} = \emptyset$ a.s. and thus
\begin{align}
\label{EqSetZEmpty}
\{Z_{S_{\nu_k}-} = \emptyset\} = \Big \{ \zeta(X_{S_{\nu_{k-1}}}, V_{S_{\nu_{k-1}}}) \geq \zeta(\xtilde{S_{\nu_{k-1}}}, \vtilde{S_{\nu_{k-1}}}) \Big \} \in \mathcal{G}_{S_{\nu_{k-1}}}.
\end{align} 
We claim that for all $k \geq 1$,
$$ \tilde{Q}'_{\nu_k} = \mathbf{1}_{\{Z_{S_{\nu_k}-} = \emptyset \}} \tilde{Q}_{\nu_k} + \mathbf{1}_{\{Z_{S_{\nu_k}-} \ne \emptyset \}} \tilde{Q}_{\nu_{k-1} + 1}. $$
Indeed, we clearly have $\tilde{Q}_{\nu_k}' = \tilde{Q}_{\nu_k}$ on $\{Z_{S_{\nu_k}-} = \emptyset\}$, and, by (\ref{EqSetZEmpty}) and since $Z_{S_{\nu_{k-1}}} = \emptyset$ a.s., 
 \begin{align*}
 \{Z_{S_{\nu_k}-} \ne \emptyset \} &= \big\{\zeta(X_{S_{\nu_{k-1}}}, V_{S_{\nu_{k-1}}}) < \zeta(\xtilde{S_{\nu_{k-1}}}, \vtilde{S_{\nu_{k-1}}}) \big\} \\
 & \subset \{X_{S_{\nu_{k-1} + 1}} \in \pD, \xtilde{S_{\nu_{k-1} + 1}} \not \in \pD, Z_{S_{\nu_{k-1}+1}-} = \emptyset \} \\
 &\subset \{ Z_{S_{\nu_{k-1}+1}} = \tilde{Q}_{\nu_{k-1}+1}, \quad \nu_{k} > \nu_{k-1} + 1 \} \\
 &\subset \{ Z_{S_{\nu_k}-} = \tilde{Q}_{\nu_{k-1}+1} \}.
 \end{align*} 
 This concludes the proof of the claim.
 
Using (\ref{EqConditionalLawQTilde}), for all $k \geq 1$, $\mathcal{L}(\tilde{Q}_{\nu_k} | \mathcal{G}_{S_{\nu_k}-}) = \mathcal{Q}$ and $\mathcal{L}(\tilde{Q}_{\nu_{k-1}+1} | \mathcal{G}_{S_{\nu_{k-1}+1}-}) = \mathcal{Q}$. 
Consider a function $\phi \in C^{\infty}_c([0,1] \times \RR_+ \times \mathcal{A})$. For $k \geq 1$, we compute
\begin{align*}
\EE\big[\phi(\tilde{Q}'_{\nu_k}) \big| \tilde{\mathcal{F}}_{S_{\nu_{k-1}}} \big] &= \EE \big[\phi(\tilde{Q}_{\nu_k}) \mathbf{1}_{\{Z_{S_{\nu_k}-} = \emptyset \}} \big| \tilde{\mathcal{F}}_{S_{\nu_{k-1}}} \big] + \EE\big[\phi(\tilde{Q}_{\nu_{k-1} + 1}) \mathbf{1}_{\{Z_{S_{\nu_k}-} \ne \emptyset \}} \big| \tilde{\mathcal{F}}_{S_{\nu_{k-1}}} \big] \\
& = \EE \Big[\mathbf{1}_{\{Z_{S_{\nu_k}-} = \emptyset \}} \EE \big[\phi(\tilde{Q}_{\nu_k}) \big|\mathcal{G}_{S_{\nu_{k}}-} \big] \Big| \tilde{\mathcal{F}}_{S_{\nu_{k-1}}} \Big] \\ 
& \quad +\EE \Big[\mathbf{1}_{\{Z_{S_{\nu_k}-} \ne \emptyset \}} \EE \big[\phi(\tilde{Q}_{\nu_{k-1}+1}) \big| \mathcal{G}_{S_{\nu_{k-1}+1}-} \big]  \Big| \tilde{\mathcal{F}}_{S_{\nu_{k-1}}} \Big],
\end{align*}
using (\ref{EqSetZEmpty}) and the fact that $\tilde{\mathcal{F}}_{S_{\nu_{k-1}}} \subset \mathcal{G}_{S_{\nu_{k-1}}} \subset \mathcal{G}_{S_{\nu_{k-1}+1}-} \subset \mathcal{G}_{S_{\nu_k}-}$. From the previous remarks on the conditional law of $\tilde{Q}_{\nu_k}$, $\tilde{Q}_{\nu_{k-1}+1}$, we obtain
$$ \EE \big[\phi(\tilde{Q}'_{\nu_k}) \big| \tilde{\mathcal{F}}_{S_{\nu_{k-1}}} \big] = \int_{[0,1] \times \RR_+ \times \mathcal{A}} \phi(x) \mathcal{Q}(dx) \Big( \EE \big[\mathbf{1}_{\{Z_{S_{\nu_k}-} = \emptyset \}} \big| \tilde{\mathcal{F}}_{S_{\nu_{k-1}}} \big] + \EE \big[\mathbf{1}_{\{Z_{S_{\nu_k}-} \ne \emptyset \}} \big| \tilde{\mathcal{F}}_{S_{\nu_{k-1}}} \big] \Big), $$
from which we conclude that $\mathcal{L}(\tilde{Q}_{\nu_k}'|\tilde{\mathcal{F}}_{S_{\nu_k}-}) = \mathcal{Q}$, as desired.

\vspace{.5cm}

The argument for $(X_s,V_s)_{s \geq 0}$ is similar and much easier since for all $j \geq 1$ such that $X_{S_j} \in \pD$, $V_{S_j} = w(X_{S_j},V_{S_j-},Q_j)$ with $\mathcal{L}(Q_j|\mathcal{F}_{S_j-}) = \mathcal{Q}$ using (\ref{EqConditionalLawQTilde}) and that $\mathcal{F}_{S_j-} \subset \CG_{S_j-}$. 
\end{proof}

\begin{lemma}
\label{Lemmataudefinitive}
Let $(X_s,V_s, \xtilde{s}, \vtilde{s}, Z_s)_{s \geq 0}$ be a coupling process. Then for all $t \geq 0$,
$$ \{(X_t,V_t) = (\xtilde{t},\vtilde{t}), Z_t = \emptyset\} \subset \{(X_{t+s}, V_{t+s})_{s \geq 0} = (\xtilde{t+s},\vtilde{t+s})_{s \geq 0} \}. $$
\end{lemma}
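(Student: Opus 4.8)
The plan is to exploit the structure of the coupling dynamics (Definition~\ref{DefiCouplingProcess}, equivalently Table~\ref{Tab:TableUpdate}): once the two marginal trajectories agree at a configuration of the form $(x,v,x,v)$ with the memory variable empty, this agreement should be restored after the next boundary collision, and hence, by induction over collisions, be preserved forever. Since the time $t$ in the statement need not be a collision time, I would first reduce to one. Working on the event $\Omega_0=\{(X_t,V_t)=(\xtilde{t},\vtilde{t}),\ Z_t=\emptyset\}$, pick the random index $k\ge0$ with $S_k\le t<S_{k+1}$ (which exists because $S_j\to+\infty$ a.s., by Proposition~\ref{PropExplosion} applied to each marginal free-transport process, cf.\ Lemma~\ref{LemmaMarginalCode}). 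On $[S_k,S_{k+1})$ the velocities and $Z$ are constant and the positions move by free transport, so on $\Omega_0$ one reads off $V_{S_k}=V_t=\vtilde{t}=\vtilde{S_k}$, then $X_{S_k}=X_t-(t-S_k)V_{S_k}=\xtilde{S_k}$, and $Z_{S_k}=Z_t=\emptyset$. (Alternatively, one could invoke the strong Markov property of $(X_s,V_s,\xtilde{s},\vtilde{s},Z_s)_{s\ge0}$ at time $t$ to reduce to $t=0$, but the direct argument is self-contained.)

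The heart of the proof is an induction on $j\ge k$ showing that, on $\Omega_0$, $(X_{S_j},V_{S_j})=(\xtilde{S_j},\vtilde{S_j})$ and $Z_{S_j}=\emptyset$. The case $j=k$ is the paragraph above. For the inductive step, if $V_{S_j}=0$ then both processes are constant after $S_j$ and the conclusion of the lemma is immediate, so assume $V_{S_j}\neq0$; then $\zeta(X_{S_j},V_{S_j})=\zeta(\xtilde{S_j},\vtilde{S_j})\in(0,\infty)$, both processes reach $\pD$ simultaneously at $S_{j+1}=S_j+\zeta(X_{S_j},V_{S_j})$, they coincide on $[S_j,S_{j+1})$ by free transport, $X_{S_{j+1}}=q(X_{S_j},V_{S_j})=q(\xtilde{S_j},\vtilde{S_j})=\xtilde{S_{j+1}}\in\pD$, and $Z_{S_{j+1}-}=Z_{S_j}=\emptyset$. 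Because $X_{S_{j+1}}=\xtilde{S_{j+1}}$, the coupling law $\Gamma_{X_{S_{j+1}},V_{S_{j+1}-},\xtilde{S_{j+1}},\vtilde{S_{j+1}-}}$ reduces to its first line in (\ref{EqLawGlobalCoupling}), which is carried by the diagonal, so $Q_{j+1}=\tilde Q_{j+1}$; and since $Z_{S_{j+1}-}=\emptyset$, $\tilde Q'_{j+1}=\tilde Q_{j+1}=Q_{j+1}$. Plugging this into the update rules, using $X_{S_{j+1}}\in\pD$, $\xtilde{S_{j+1}}\in\pD$ and $V_{S_{j+1}-}=\vtilde{S_{j+1}-}$, gives $V_{S_{j+1}}=w(X_{S_{j+1}},V_{S_{j+1}-},Q_{j+1})=w(\xtilde{S_{j+1}},\vtilde{S_{j+1}-},\tilde Q'_{j+1})=\vtilde{S_{j+1}}$ and $Z_{S_{j+1}}=\emptyset$, which closes the induction.

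To conclude, fix $s\ge0$; since $t+s\ge t\ge S_k$ and $S_j\to+\infty$, there is $j\ge k$ with $S_j\le t+s<S_{j+1}$, and by the induction together with the free-transport identities on $[S_j,S_{j+1})$ one gets $(X_{t+s},V_{t+s})=(\xtilde{t+s},\vtilde{t+s})$, as required. The main obstacle is not analytic but is the careful case bookkeeping through Definition~\ref{DefiCouplingProcess}: one must genuinely verify that the conjunction ``equal configurations and $Z=\emptyset$'' is restored across a collision, and in particular see why the hypothesis $Z_t=\emptyset$ is essential — a non-empty $Z$ would feed a possibly different stored triple $Z_{S_{j+1}-}$ into the update of $\vtilde{}$ at the next collision and destroy the identity $V_{S_{j+1}}=\vtilde{S_{j+1}}$.
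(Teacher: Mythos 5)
Your proof is correct and follows essentially the same route as the paper's: after reducing to a collision time, you iterate over the sequence $(S_j)$, using that $X_{S_{j+1}}=\xtilde{S_{j+1}}$ forces the coupling law $\Gamma$ to concentrate on the diagonal (so $Q_{j+1}=\tilde Q_{j+1}$) and that $Z_{S_{j+1}-}=\emptyset$ forces $\tilde Q'_{j+1}=\tilde Q_{j+1}$, which together restore agreement of velocities and keep $Z=\emptyset$. The only cosmetic differences are that you rewind to the last collision time $S_k\le t$ whereas the paper jumps directly to the next common collision time $S_k=t+\zeta(X_t,V_t)$, and that you explicitly treat the degenerate case $V_{S_j}=0$ (where $\zeta=\infty$ and the processes freeze), a point the paper's proof passes over silently.
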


\begin{proof}
According to Definition \ref{DefiCouplingProcess}, on the event $\{(X_t,V_t) = (\xtilde{t},\vtilde{t}), Z_t = \emptyset \}$,  there exists $k \geq 1$ such that $S_k = t + \zeta(X_t,V_t) = t + \zeta(\xtilde{t},\vtilde{t})$ and we have
$$ \Big\{(X_t,V_t) = (\xtilde{t},\vtilde{t}), Z_t = \emptyset \Big\} \subset \Big\{(X_{S_k-},V_{S_k-}) = (\xtilde{S_k-},\vtilde{S_k-}), Z_{S_k-} = \emptyset \Big\}, $$
and $(X_s,V_s)_{t \leq s < S_k} = (\xtilde{s},\vtilde{s})_{t \leq s < S_k}$. 
We then have, according to the definition,  the equality

\noindent $X_{S_k} = X_{S_k-} = \xtilde{S_k-} = \xtilde{S_k}$ and $Z_{S_k-} = \emptyset$. Also, by definition of $\Gamma_{X_{S_k-},V_{S_k-},\xtilde{S_k-},\vtilde{S_k-}}$, since $X_{S_k-} = \xtilde{S_k-}$, we have $Q_k = \tilde{Q}_k$ with the notations of the definition. From there we obtain
$$ V_{S_k} = w(X_{S_k},V_{S_k-},Q_k) = w(\xtilde{S_k},\vtilde{S_k-}, \tilde{Q}_k)= \vtilde{S_k}, \quad \text{ and } \quad Z_{S_k} = \emptyset. $$
Hence $(X_s,V_s) = (\xtilde{s},\vtilde{s})$ and $Z_s = \emptyset$ for all $s \in (S_k, S_{k+1}]$. We conclude by iterating this procedure.
\end{proof}

\subsection{Proof of Theorem \ref{MainTheorem} in the convex case. }

We recall that the set $D$ is a bounded $C^2$ domain, uniformly convex in this section. The function $r$ defined on $\RR_+$ is such that there exists $C > 0$ satisfying, for all $(x,y) \in (\RR_+)^2,$ $r(x+y) \leq C (r(x) + r(y))$. The function $M : \RR^n \to (0,\infty)$ is radially symmetric and of mass 1 with $\int_{\RR^n} \|v\| M(v) dv <\infty$.
The function $\alpha$ defined on $\pD$ is uniformly bounded from below by $\alpha_0 > 0$. Finally, $\mu_{\infty}(dx,dv) = \frac{M(v)}{|D|} dx dv$ is the equilibrium distribution. Recall that $h_R$ is defined by $h_R(s) = c_R s^n M(s)$ for all $s \in \RR_+$ with $c_R$ a normalization constant, see Lemma \ref{NotatHM}. 
We define the constant $C_0 > 0$ by 
\begin{align}
\label{EqDefC0}
 C_0 = \max \Big( \int_{D \times \RR^n} r\Big(\frac{d(D)}{\|v\|}\Big) f_0(dx,dv), \int_{D \times \RR^n} r\Big(\frac{d(D)}{\|v\|}\Big) \mu_{\infty}(dx,dv), \int_{\RR_+} r\Big( \frac{d(D)}{s} \Big) h_R(s) ds \Big),
 \end{align} 
which is finite using (\ref{EqHypoMomentThm}) and since
\begin{align*}
 \int_{\RR_+} r\Big(\frac{d(D)}{s} \Big) h_R(s) ds &= \kappa \int_{\RR^n} r \Big( \frac{d(D)}{\|v\|} \Big) \|v\| M(v) dv \\
 &\leq \kappa \int_{\{\|v\| \leq 1\}} r\Big(\frac{d(D)}{\|v\|} \Big) M(v) dv + \kappa r(d(D)) \int_{\{\|v\| > 1\}} \|v\| M(v) dv. 
 \end{align*}
In this whole subsection $\kappa$ and $L$ denote some positive constants depending on $r$, $D$ and $\alpha_0$,  whose value is allowed to vary from line to line. Recall Remark \ref{RmkExtensionDefiProcess} for the definition of a free-transport process with initial distribution $\delta_x \otimes \delta_v$ with $(x,v) \in \partial_+ G$.

\begin{lemma}
\label{LemmaControlCalT}
There exists $\kappa > 0$ such that if $(x,v), (\tilde{x},\tilde{v}) \in (D \times \RR^n) \cup \partial_+ G$ and $(X_t,V_t)_{t \geq 0}$, $(\xtilde{t},\vtilde{t})_{t \geq 0}$ are two possibly correlated free-transport processes with initial distributions $\delta_x \otimes \delta_v$ and $\delta_{\tilde{x}} \otimes \delta_{ \tilde{v}}$ respectively, setting
$$ \mathcal{T} = \inf \{t > 0, \|V_{t}\| \ne \|v\|, \|\vtilde{t}\| \ne \|\tilde{v}\|\}, $$
we have $$ \EE[ r(\mathcal{T}) ] \leq \kappa \Big(1 + r\Big(\frac{d(D)}{\|v\|}\Big) + r\Big(\frac{d(D)}{\|\tilde{v}\|} \Big) \Big). $$
\end{lemma}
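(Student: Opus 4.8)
Here is the plan. Define $\mathcal{T}' = \inf\{t > 0 : \|V_t\| \neq \|v\|\}$ and $\mathcal{T}'' = \inf\{t > 0 : \|\vtilde{t}\| \neq \|\tilde{v}\|\}$. The first observation is that, for a free-transport process, the norm of the velocity is modified only at diffuse reflections (free flight and specular reflections preserve it), and at a diffuse reflection the new norm has the density $h_R$ of Lemma \ref{NotatHM}; since $h_R$ is a genuine density it equals $\|v\|$ with probability $0$, and by a union bound over the countably many reflections (finite on every compact time interval by Proposition \ref{PropExplosion}) a.s. none of them does. Hence a.s. $\|V_t\| = \|v\|$ for every $t \in [0,\mathcal{T}')$ and $\|V_t\| \neq \|v\|$ for every $t \geq \mathcal{T}'$, and likewise for the tilde process; it follows that $\mathcal{T} = \mathcal{T}' \vee \mathcal{T}''$ a.s. Because $r$ is non-decreasing and $r(a+b) \le C(r(a)+r(b))$, we get $r(\mathcal{T}) \leq r(\mathcal{T}'+\mathcal{T}'') \leq C\big(r(\mathcal{T}') + r(\mathcal{T}'')\big)$, so by symmetry it suffices to prove $\EE[r(\mathcal{T}')] \leq \kappa\big(1 + r(d(D)/\|v\|)\big)$.

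Next I would identify $\mathcal{T}'$ with the time of the first diffuse reflection of $(X_t,V_t)_{t\geq 0}$. If $(x,v) \in \partial_+G$ and the reflection carried out at time $0$ (Remark \ref{RmkExtensionDefiProcess}) is diffuse, then $\|V_0\| \neq \|v\|$ a.s., so $\mathcal{T}' = 0$ and $r(\mathcal{T}') = r(0) \leq r(d(D)/\|v\|)$, and we are done. Otherwise $\|V_0\| = \|v\|$. Set $T_0 = 0$ and let $N = \inf\{i \geq 1 : \text{the reflection at } T_i \text{ is diffuse}\}$, which is a.s. finite. Since the reflections at $T_1,\dots,T_{N-1}$ are specular, $\|V_{T_{i-1}}\| = \|V_0\| = \|v\|$ for $i = 1,\dots,N$, so the free-flight durations $\tau_i := T_i - T_{i-1} = \|X_{T_i} - X_{T_{i-1}}\| / \|V_{T_{i-1}}\|$ satisfy $\tau_i \leq d(D)/\|v\|$ for these $i$, and $\mathcal{T}' = T_N = \sum_{i=1}^N \tau_i$. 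Applying Remark \ref{RmkPolynomialR} and the monotonicity of $r$,
\begin{align*}
r(\mathcal{T}') \;=\; r\Big(\sum_{i=1}^N \tau_i\Big) \;\leq\; C N^{\beta} \sum_{i=1}^N r(\tau_i) \;\leq\; C\, N^{\beta+1}\, r\Big(\frac{d(D)}{\|v\|}\Big).
\end{align*}

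It then remains to control $N$. Conditionally on $\mathcal{F}_{T_i-} = \sigma((X_s,V_s)_{s<T_i})$, the reflection at $T_i$ is diffuse if and only if $U_i \leq \alpha(X_{T_i})$, with $U_i$ uniform on $[0,1]$ and independent of $\mathcal{F}_{T_i-}$ (Definition \ref{DefiProcess}), hence with conditional probability $\alpha(X_{T_i}) \geq \alpha_0$; iterating the tower property gives $\PP(N > k) \leq (1-\alpha_0)^k$, so $\EE[N^{\beta+1}] \leq \kappa_0 < \infty$ for a constant depending only on $\alpha_0$ and $\beta$ (thus on $r$ and $D$). Taking expectations in the displayed inequality yields $\EE[r(\mathcal{T}')] \leq C\kappa_0\, r(d(D)/\|v\|) \leq \kappa(1 + r(d(D)/\|v\|))$, and combining with the analogous bound for $\mathcal{T}''$ and the reduction of the first paragraph gives the claim. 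The only delicate point—more a matter of care than of difficulty—is the almost-sure identities $\mathcal{T} = \mathcal{T}' \vee \mathcal{T}''$ and $\mathcal{T}' = T_N$: one must check that the velocity norm truly changes at every diffuse reflection and never reverts to $\|v\|$, which relies on the absolute continuity of $h_R$ together with the non-explosion statement of Proposition \ref{PropExplosion}, and one must isolate the degenerate case in which the process starts on $\partial_+G$ and is immediately diffusively reflected.
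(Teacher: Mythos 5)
Your proof is correct, and it takes a genuinely more direct route than the paper's on the central step. The paper reduces $\mathcal{T}$ to $S_1 \vee \tilde{S}_1$ exactly as you do, and introduces the same stopping index $N$ of the first diffuse reflection, but then bounds $\EE[r(T_N - T_1)\,|\,\mathcal{F}_{T_1-}]$ by feeding conditional increment bounds (built case-by-case from the conditional law $h_R$ of the refreshed speed) into the general machinery of Lemma~\ref{LemmaSumGeometric}. You instead exploit the cleaner observation that all specular reflections before $T_N$ preserve $\|v\|$, so that each free-flight time $\tau_i$ for $i \le N$ is deterministically at most $d(D)/\|v\|$; then Remark~\ref{RmkPolynomialR} gives the pathwise bound $r(T_N) \le C N^{\beta+1} r(d(D)/\|v\|)$, and you finish by $\EE[N^{\beta+1}] < \infty$ from the geometric tail $\PP(N>k) \le (1-\alpha_0)^k$, with $r(d(D)/\|v\|)$ a deterministic constant that factors out of the expectation. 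This avoids Lemma~\ref{LemmaSumGeometric} entirely for this lemma. What the paper's approach buys is uniformity: Lemma~\ref{LemmaSumGeometric} is the workhorse for Lemma~\ref{LemmaIndepProcess} and the main coupling argument, where the velocity norm is refreshed at every step and no deterministic flight-time bound is available, so reusing it here keeps the proofs parallel. Your argument, by contrast, is tailored to the special feature of this lemma (speed is frozen until the first diffuse reflection) and is the shorter proof of the two. The degenerate $\partial_+G$ case, the a.s.\ identity $\mathcal{T}=\mathcal{T}'\vee\mathcal{T}''$ via absolute continuity of $h_R$ and the countability of collision times from Proposition~\ref{PropExplosion}, and the tower-property derivation of the geometric tail are all handled as they should be.
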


\begin{proof}
We introduce the sequence $(T_k)_{k \geq 0}$ by setting first $T_0 = \zeta(x, v)$ so that $T_0 = 0$ in the case where $(x,v) \in \partial_+ G$, and for $k \geq 0$, $T_{k+1} = \inf \{t > T_k, X_t \in \pD\}$. We introduce the filtration $\mathcal{F}_t = \sigma((X_s,V_s)_{0 \leq s \leq t} )$.  We also set $S_1 = \inf \{t > 0, \|V_{t}\| \ne \|v\| \}$  and $\tilde{S}_1 = \inf \{t > 0, \|\vtilde{t}\| \ne \|\tilde{v}\|\}$. Note that $\mathcal{T} = S_1 \vee \tilde{S}_1$. 

\vspace{.3cm}

\textbf{Step 1.}  We prove that
$$\EE[r(S_1)] \leq \kappa \Big(r \Big(\frac{d(D)}{\|v\|} \Big) + 1\Big).$$ We write $(U_i, R_i, \Theta_i)_{i \geq 0}$ for the sequence of $\mathcal{Q}$-distributed vectors such that for all $i \geq 0$, 
$$ V_{T_i} = w(X_{T_i}, V_{T_i-}, U_i, R_i, \Theta_i), $$
with $V_{0-} = v$. Set $A_n = \{\|V_{T_n}\| \ne \|V_{T_{n}-}\|\}$ for all $n \geq 0$, and $N = \inf\{n \geq 1, A_n \text{ is realized}\}$ so that $S_1 \leq T_N$ ($S_1$ may differ from $T_N$ if $x \in \pD$). We first use Lemma \ref{LemmaSumGeometric} to prove that
\begin{align}
\label{EqProvisoryS_1}
 \EE[r(T_N - T_1)|\mathcal{F}_{T_1-}] \leq \kappa \Big(1 + r \Big(\frac{d(D)}{\|V_{T_0}\|} \Big) \Big).
 \end{align}
\begin{enumerate}
\item We set for all $k \geq 0$, $\mathcal{G}_k = \mathcal{F}_{T_{k+1}-}$,  and for $k \geq 1$, $\tau_k = T_{k+1} - T_1 $ which is $\mathcal{G}_k$-measurable, $E_k = A_{k} \in\mathcal{G}_k$, so that $G = N$, with $G = \inf\{ k \geq 1, E_k \text{ is realized} \}$ corresponding to the notation of Lemma \ref{LemmaSumGeometric}.
\item For all $k \geq 1$, we have $\PP(E_k|\mathcal{G}_{k-1}) = \PP(A_k | \mathcal{F}_{T_k-}) = \PP (U_k \leq \alpha(X_{T_k})) \geq \alpha_0$, whence (\ref{HypoProbaEk}).
\item We have, by definition of $C_0$,
\begin{align*}
 \EE[r(\tau_1)|\CG_0] = \EE[r(T_2 - T_1)|\mathcal{F}_{T_1-}]
\leq \EE \Big[ r \Big(\frac{d(D)}{\|V_{T_1}\|}\Big) \Big | \mathcal{F}_{T_1-} \Big]  \leq  C_0 + r\Big(\frac{d(D)}{\|V_{T_0}\|} \Big),
\end{align*}  
since $\|V_{T_1}\| = \|V_{T_1}\| \mathbf{1}_{A_1} + \|V_{T_0}\|\mathbf{1}_{A_1^c}$ with $\mathcal{L}(\|V_{T_1}\| | A_1) = h_R$. 

For all $k \geq 1$, since $\|V_{T_{k}-}\| = \|V_{T_0}\|$ on $\{N \geq k\}$, we obtain,
\begin{align*}
 &\mathbf{1}_{\{G \geq k\}} \EE[r(\tau_{k+1}-\tau_k)|\CG_{k-1}] = \mathbf{1}_{\{N \geq k\}} \EE \Big[r(T_{k+2} - T_{k+1}) \Big|\mathcal{F}_{T_{k}-} \Big] \\ 
 &\quad \leq \EE \Big[r \Big( \frac{d(D)}{\|V_{T_{k+1}}\|} \Big) \Big( \mathbf{1}_{A_{k+1}^c \cap A_{k}^c } + \mathbf{1}_{A_{k+1} \cap A_{k}^c }  + \mathbf{1}_{A_{k+1}^c \cap A_{k} } + \mathbf{1}_{A_{k+1} \cap A_{k} } \Big) \mathbf{1}_{\{\|V_{T_{k}-}\| = \|V_{T_0}\|\}}    \Big|\mathcal{F}_{T_{k}-} \Big] \\
&\quad \leq r \Big(\frac{d(D)}{\|V_{T_0}\|} \Big) + 3 C_0,
\end{align*}
because $\|V_{T_{k+1}}\| = \|V_{T_0}\|$ on $A_k^c \cap A_{k+1}^c$ and the last three terms are bounded by $C_0$
since we clearly have $\mathcal{L}(\|V_{T_n}\| | A_k ) = h_R$ for all $n \geq k \geq 0$. We have proved (\ref{IneqHypoLemmaSumGeom}). 
\end{enumerate}
 
Applying Lemma \ref{LemmaSumGeometric} we conclude that there exists $\kappa > 0$ such that (\ref{EqProvisoryS_1}) holds. To conclude this step, note that
\begin{align*}
\EE[r(S_1)] &\leq C \big(\EE[ \EE[r(T_N - T_1)| \mathcal{F}_{T_1-}] ] + \EE[r(T_1 - T_0)] + \EE[r(T_0)] \big) \\
&\leq C \Big( \EE \Big[ \kappa \Big( 1 + r \Big(\frac{d(D)}{\|V_{T_0}\|}\Big) \Big) \Big] + \EE \Big[r \Big(\frac{d(D)}{\|V_{T_0}\|} \Big) \Big] + r\Big(\frac{d(D)}{\|v\|} \Big) \Big) \\
&\leq \kappa \Big( 1 + 2C_0 + 3r \Big(\frac{d(D)}{\|v\|} \Big) \Big),
\end{align*}
since $\|V_{T_0}\| = \|V_{T_0}\| \mathbf{1}_{A_0} + \|v\|\mathbf{1}_{A_0^c}$ with $\mathcal{L}(\|V_{T_0}\| | A_0) = h_R$.

\vspace{.3cm}

\textbf{Step 2. } We apply the previous step with the process $(\xtilde{s},\vtilde{s})_{s \geq 0}$ and conclude that
$$ \EE[r(\tilde{S}_1)] \leq \kappa \Big(1 + r\Big(\frac{d(D)}{\|\tilde{v}\|} \Big) \Big). $$
 
\vspace{.3cm}

\textbf{Step 3. } Since $\mathcal{T} = S_1 \vee \tilde{S}_1$, we conclude that 
$$\EE[r(\mathcal{T})] \leq \EE[r(S_1)] + \EE[r(\tilde{S}_1)] \leq \kappa \Big(1 + r \Big(\frac{d(D)}{\|v\|} \Big) + r\Big(\frac{d(D)}{\|\tilde{v}\|} \Big) \Big). $$

\vspace{-.5cm}
\end{proof}

\begin{lemma}
\label{LemmaIndepProcess}
There exists $\kappa > 0$ such that if $(x,v), (\tilde{x},\tilde{v}) \in (D \times \RR^n) \cup \partial_+ G$ and $(X_t,V_t)_{t \geq 0}$, $(\xtilde{t},\vtilde{t})_{t \geq 0}$ are two independent free-transport processes with initial distributions $\delta_x \otimes \delta_v$ and $\delta_{\tilde{x}} \otimes \delta_{ \tilde{v}}$ respectively, setting
$$ S = \inf \{t > 0, \xtilde{t} \in \pD, X_t \in D, \|V_{t-}\| \wedge \|\vtilde{t-}\| \geq 1, \|V_{t-}\| \ne \|v\|, \|\vtilde{t-}\| \ne \|\tilde{v}\|\}, $$
we have $$ \EE[ r(S) ] \leq \kappa \Big(1 + r\Big(\frac{d(D)}{\|v\|}\Big) + r\Big(\frac{d(D)}{\|\tilde{v}\|} \Big) \Big). $$
\end{lemma}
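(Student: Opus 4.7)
My plan is to combine Lemma \ref{LemmaControlCalT} with a geometric-waiting argument in the spirit of Lemma \ref{LemmaSumGeometric}. First, I invoke Lemma \ref{LemmaControlCalT} to produce
\[
\mathcal{T} = \inf\bigl\{t > 0 : \|V_t\| \neq \|v\|,\ \|\tilde{V}_t\| \neq \|\tilde{v}\|\bigr\},
\]
with $\EE[r(\mathcal{T})] \leq \kappa\bigl(1 + r(d(D)/\|v\|) + r(d(D)/\|\tilde{v}\|)\bigr)$. Since the conditions $\|V_{t-}\| \neq \|v\|$ and $\|\tilde{V}_{t-}\| \neq \|\tilde{v}\|$ in the definition of $S$ force $t > S_1$ and $t > \tilde{S}_1$, we have $S > \mathcal{T}$, so by the subadditivity of $r$ it suffices to bound $\EE[r(S - \mathcal{T})]$ by a constant depending only on $r,D,\alpha_0$.

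Next I would record two simplifications of the post-$\mathcal{T}$ analysis. First, since diffuse reflections draw their norm from the continuous density $h_R$ and specular reflections preserve it, a.s.\ $\|V_{t-}\| \neq \|v\|$ and $\|\tilde{V}_{t-}\| \neq \|\tilde{v}\|$ for all $t > \mathcal{T}$. Second, by the independence of $(X,V)$ and $(\tilde{X},\tilde{V})$ together with Lemma \ref{LemmaDensityTi}, the two processes' collision times are a.s.\ distinct, so $X_t \in D$ automatically at every $\tilde{X}$-collision time $t > 0$. What remains is to find the first $\tilde{X}$-collision time $t > \mathcal{T}$ with $\|V_{t-}\| \geq 1$ and $\|\tilde{V}_{t-}\| \geq 1$.

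To this end, enumerate the post-$\mathcal{T}$ $\tilde{X}$-collisions as $\sigma_1 < \sigma_2 < \cdots$, set $\CG_k = \mathcal{F}_{\sigma_k}$ and $\tau_k = \sigma_k - \mathcal{T}$, and for $k \geq 2$ define
\[
E_k = \bigl\{\text{the reflection at }\sigma_{k-1}\text{ is diffuse with drawn norm} \geq 1\bigr\} \cap \bigl\{\|V_{\sigma_k-}\| \geq 1\bigr\}.
\]
On $E_k$, both norm conditions are satisfied at time $\sigma_k$, whence $S \leq \sigma_k$. Using $\alpha \geq \alpha_0$ and (\ref{EqHypoM1}), the first sub-event has conditional probability at least $\alpha_0 c_1$ with $c_1 = \int_1^\infty h_R(s)\,ds > 0$, and is moreover independent of the $X$-process. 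Applying Lemma \ref{LemmaSumGeometric} to $(\tau_k, \CG_k, E_k)$ with $G = \inf\{k \geq 2 : E_k\}$ then yields $\EE[r(\sigma_G - \mathcal{T}) \mid \CG_0] \leq \kappa L$, and the conclusion follows after taking expectations.

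The main obstacle lies in verifying the two hypotheses of Lemma \ref{LemmaSumGeometric}. For the uniform lower bound $\PP(E_k \mid \CG_{k-1}) \geq c > 0$, the second sub-event $\{\|V_{\sigma_k-}\| \geq 1\}$ is subtle because $V_{\sigma_k-}$ is driven by the independent $X$-process in $(\sigma_{k-1}, \sigma_k)$. A natural workaround is to enlarge $E_k$ to also demand that the $X$-process executes a diffuse reflection with drawn norm $\geq 1$ inside this interval, and that this reflection is the last $X$-collision before $\sigma_k$; combining the bound $\sigma_k - \sigma_{k-1} \leq d(D)$ (valid once the $\tilde{X}$-sub-event holds) with a geometric argument on the $X$-side analogous to the proof of Lemma \ref{LemmaControlCalT}, together with the independence of the two processes, produces the required positive lower bound. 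For the $\CG_0$-measurable increment moment bound, $\sigma_{k+1} - \sigma_k \leq d(D)/\|\tilde{V}_{\sigma_k}\|$ together with a recursive decomposition over the specular/diffuse reflections of $\tilde{X}$ (using $\alpha \geq \alpha_0$ and the finiteness of $C_0$ from (\ref{EqDefC0})) ultimately reduces everything to controlling moments of $r(d(D)/\|\tilde{V}_\mathcal{T}\|)$, which are bounded by $C_0$ since $\|\tilde{V}_\mathcal{T}\|$ has the $h_R$ distribution as the outcome of a diffuse reflection.
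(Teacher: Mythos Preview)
Your overall architecture---pass through $\mathcal{T}$ via Lemma \ref{LemmaControlCalT}, then run a geometric trial argument with Lemma \ref{LemmaSumGeometric}---matches the paper's. The gap is in your choice of trial times and the resulting lower bound on $\PP(E_k\mid\CG_{k-1})$.

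You take $\sigma_k$ to be the $k$-th $\tilde X$-collision after $\mathcal{T}$ and define $E_k$ so that, in particular, $\|V_{\sigma_k-}\|\ge 1$. But $V_{\sigma_k-}$ is governed by the $X$-process, which need not have touched the boundary at all between $\sigma_{k-1}$ and $\sigma_k$. Concretely, if at time $\sigma_{k-1}$ the $X$-speed happens to be some small $\epsilon$, then (on your $\tilde X$-sub-event) $\sigma_k-\sigma_{k-1}\le d(D)$, and if $\epsilon$ is small enough $X$ has no collision in that window, so $\|V_{\sigma_k-}\|=\epsilon<1$ and $E_k$ fails deterministically. The same obstruction persists at $E_{k+1},E_{k+2},\dots$ until $X$ finally reaches $\pD$. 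Hence $\PP(E_k\mid\CG_{k-1})$ is not bounded below by a positive constant independent of $\|V_{\sigma_{k-1}}\|$, and hypothesis (\ref{HypoProbaEk}) of Lemma \ref{LemmaSumGeometric} is not met. Your proposed ``workaround'' of further restricting $E_k$ to require an $X$-diffuse reflection with norm $\ge1$ inside $(\sigma_{k-1},\sigma_k)$ only makes the conditional probability smaller, and the analogy with Lemma \ref{LemmaControlCalT} is misplaced: that lemma bounds $\EE[r(\cdot)]$, it does not produce a uniform positive lower bound on a one-step success probability.

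The paper repairs this by choosing the trial times differently: it sets $\tilde S_0=0$, and $\tilde S_{k+1}=\inf\{t>\tilde S_k:\tilde X_t\in\pD,\ \|\tilde V_{t-}\|\ne\|\tilde V_{\tilde S_k-}\|,\ \|V_{t-}\|\ne\|V_{\tilde S_k-}\|\}$, i.e.\ each trial waits until \emph{both} processes have had a diffuse reflection since the previous trial. Then, conditionally on $\mathcal F_{\tilde S_{k-1}-}$, both $\|V_{\tilde S_k-}\|$ and $\|\tilde V_{\tilde S_k-}\|$ are fresh $h_R$-samples, so $\PP(B_k\mid\mathcal F_{\tilde S_{k-1}-})=(\int_1^\infty h_R)^2=:c>0$, and the increment moment $\EE[r(\tilde S_{k+1}-\tilde S_k)\mid\mathcal F_{\tilde S_{k-1}-}]$ reduces (via Step 1 there, essentially $\tilde S_1\le\mathcal{T}+\zeta(\tilde X_{\mathcal{T}},\tilde V_{\mathcal{T}})$) to $C_0$. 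Your argument becomes correct once you replace your $\sigma_k$ by these $\tilde S_k$; the point is that the trial clock must be driven by both processes, not just $\tilde X$.
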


\begin{proof}
We introduce the filtration $\mathcal{F}_t = \sigma((X_s,V_s,\xtilde{s},\vtilde{s})_{0 \leq s \leq t} )$.  We also introduce the stopping times $\mathcal{T} = \inf \{t > 0, \|V_t\| \ne \|v\|, \|\vtilde{t}\| \ne \|\tilde{v}\| \}$ and $\tilde{S}_1 = \inf \{t > 0, \xtilde{t} \in \pD, \|\vtilde{t-}\| \ne \|\tilde{v}\|, \|V_{t-}\| \ne \|v\| \}$.

\vspace{.5cm}
\textbf{Step 1.} We prove that
$$\EE[r(\tilde{S}_1)] \leq \kappa \Big(1 + r \Big(\frac{d(D)}{\|v\|} \Big) + r \Big(\frac{d(D)}{\|\tilde{v}\|} \Big) \Big).$$ 

Note first that $\tilde{S}_1 \leq \mathcal{T} + \zeta(\xtilde{\mathcal{T}}, \vtilde{\mathcal{T}})$ since for all $t \geq \mathcal{T}$, almost surely, $\|V_t\| \ne \|v\|$, $\|\vtilde{t}\| \ne \|\tilde{v}\|$ and because $\xtilde{\mathcal{T} + \sigma(\xtilde{\mathcal{T}}, \vtilde{\mathcal{T}})} \in \pD$. 

Applying Lemma \ref{LemmaControlCalT}, we find that 
$$\EE[r(\mathcal{T})] \leq \kappa \Big(1 + r\Big(\frac{d(D)}{\|v\|} \Big) + r\Big(\frac{d(D)}{\|\tilde{v}\|} \Big) \Big). $$

Hence, noting that $\mathcal{L}(\|\vtilde{\CT}\|) = h_R$, we obtain
\begin{align*}
\EE[r(\tilde{S}_1)] \leq C \Big(\EE[r(\mathcal{T})] + \EE[r(\zeta(\xtilde{\mathcal{T}}, \vtilde{\mathcal{T}}))] \Big) &\leq \kappa \Big(1 + r\Big(\frac{d(D)}{\|v\|} \Big) + r\Big(\frac{d(D)}{\|\tilde{v}\|} \Big) + \EE \Big[ r\Big(\frac{d(D)}{\|\vtilde{\mathcal{T}}\|} \Big) \Big]  \Big) \\
&\leq \kappa \Big(1 + r\Big(\frac{d(D)}{\|v\|} \Big) + r\Big(\frac{d(D)}{\|\tilde{v}\|}  \Big) \Big),
\end{align*}
where we used that $\EE[r(\frac{d(D)}{\|\vtilde{\CT}\|})] \leq C_0$, see (\ref{EqDefC0}).

\vspace{.5cm}

\textbf{Step 2.} 
We set $\tilde{S}_0 = 0$, define $ \tilde{S}_1$ as in Step 1, and set, for $n \geq 1$, $$\tilde{S}_{n+1} = \inf \{ t > \tilde{S}_{n}, \xtilde{t} \in \pD, \|\vtilde{t-}\| \ne \|\vtilde{\tilde{S}_{n}-}\|, \|V_{t-}\| \ne \|V_{\tilde{S}_n-}\| \}.$$
We set, for all $n \geq 1$, $B_n = \{ \|V_{\tilde{S}_{n}-}\| \wedge \| \vtilde{\tilde{S}_{n}-}\| \geq 1 \}$ and $G := \inf \{n \geq 1: B_n \text{ is realized} \}.$ The aim of this step is to check that 
$$\EE[r(\tilde{S}_{G})] \leq \kappa \Big(1 + r\Big(\frac{d(D)}{\|v\|}\Big) + r\Big(\frac{d(D)}{\|\tilde{v}\|} \Big) \Big). $$
We plan to apply Lemma \ref{LemmaSumGeometric}.

\begin{enumerate}
\item We set, for all $k \geq 0$, $\mathcal{G}_k = \mathcal{F}_{\tilde{S}_{k}-},$ and for all $k \geq 1$, $\tau_k = \tilde{S}_{k}$ which is $\mathcal{G}_k$-measurable, $E_k = B_{k} \in \mathcal{G}_k$ so that $G$ corresponds to the notation in Lemma \ref{LemmaSumGeometric}. 
\item For all $k \geq 1$, using that $\mathcal{L}(\|V_{\tilde{S}_{k}-}\| | \mathcal{F}_{\tilde{S}_{k-1}-}) = \mathcal{L}(\|\vtilde{\tilde{S}_{k}-}\| | \mathcal{F}_{\tilde{S}_{k-1}-}) = h_R$ since both processes have a diffuse reflection between $\tilde{S}_{k-1}-$ and $\tilde{S}_{k}-$,
$$ \PP(E_k | \mathcal{G}_{k-1}) = \PP(B_{k} | \mathcal{F}_{\tilde{S}_{k-1}-}) = \Big(\int_1^{\infty} h_R(r) dr\Big)^2 =: c, $$
and $c > 0$ by hypothesis, see (\ref{EqHypoM1}), whence (\ref{HypoProbaEk}).
\item Using the strong Markov property and Step 1, we have, for all $k \geq 0$,
\begin{align}
\label{EqIterSn}
\EE[r(\tilde{S}_{k+1}-\tilde{S}_{k})|\mathcal{F}_{\tilde{S}_{k}-}] \leq \kappa \Big(1 + r \Big(\frac{d(D)}{\|V_{\tilde{S}_k-}\|} \Big) + r \Big(\frac{d(D)}{\|\vtilde{\tilde{S}_{k}-}\|} \Big) \Big) =: K_k.
\end{align} 
Moreover, $K_0 = \kappa (1 + r(\frac{d(D)}{\|v\|}) + r(\frac{d(D)}{\|\tilde{v}\|}))$ and for $k\geq 1$,
\begin{align*}
\EE[r(\tau_{k+1}-\tau_k)| \mathcal{G}_{k-1}] = \EE \Big[ K_{k} \Big| \mathcal{F}_{\tilde{S}_{k-1}-}\Big]
&\leq \kappa \EE \Big[1 + r\Big( \frac{d(D)}{\|V_{\tilde{S}_{k}-}\|} \Big) + r\Big( \frac{d(D)}{\|\vtilde{\tilde{S}_{k}-}\|} \Big) \Big| \mathcal{F}_{\tilde{S}_{k-1}-} \Big] \\
&\leq \kappa (1 + 2C_0).
\end{align*} 
We used again that $\mathcal{L} (\|V_{\tilde{S}_{k}-}\| | \mathcal{F}_{\tilde{S}_{k-1}-} ) = \mathcal{L} (\|\vtilde{\tilde{S}_{k}-}\| | \mathcal{F}_{\tilde{S}_{k-1}-} )  = h_R$. Finally, we have
$$ \EE[r(\tau_1)|\mathcal{G}_0] = \EE[r(\tilde{S}_1)] \leq \kappa \Big(1 + r \Big(\frac{d(D)}{\|v\|} \Big) + r \Big(\frac{d(D)}{\|\tilde{v}\|} \Big) \Big). $$
\end{enumerate}
We conclude by applying Lemma \ref{LemmaSumGeometric}. 

\vspace{.5cm}
\noindent\textbf{Step 3.} 
We prove that, for all $i \geq 1$, $X_{\tilde{S}_i} \not \in \pD$ almost surely. Since $\xtilde{\tilde{S}_G} \in \pD$ and $\|V_{\tilde{S}_G-}\| \wedge \|\vtilde{\tilde{S}_G-}\| \geq 1$ by definition, by Step 2, this will conclude the proof. Set $S_1 = \inf\{t > 0, X_t \in \pD, \|V_t\| \ne \|v\|\}$ and note that $S_1 \leq \tilde{S}_1$ by definition.

\noindent We set $(X'_t,V'_t) = (X_{S_1 + t}, V_{S_1 + t})$, $(\tilde{X}'_t, \tilde{V}'_t) = (\xtilde{S_1 + t}, \vtilde{S_1 + t})$ for all $t \geq 0$.  Set $T'_0 = 0$, and for all $i \geq 1$,  $T'_{i+1} = \inf \{t > T'_i, X'_t \in \pD\}$. Set also $\tilde{T}'_0 = 0$ and for all $i \geq 0$,  $\tilde{T}'_{i+1} = \inf \{t > \tilde{T}'_i, \tilde{X}'_t \in \pD \}$. Since $(X'_t, V'_t)_{t \geq 0}$ and $(\xtilde{t}', \vtilde{t}')_{t \geq 0}$ are, conditionally on $\mathcal{F}_{S_1-}$, two independent processes, $T'_i$ is independent of $\tilde{T}'_j$ for all $(i,j) \in (\mathbb{N}^*)^2$ conditionally on this $\sigma$-algebra. Moreover, by Lemma \ref{LemmaDensityTi}, $T'_i$ has a density conditionally on $\mathcal{F}_{S_1-}$, since $X_{S_1} \in \pD$ and $V_{S_1} = R \vartheta(X_{S_1},\Theta)$ with $(R,\Theta) \sim \Upsilon$ independent of $\mathcal{F}_{S_1-}$. We thus have, for $(i,j) \in (\mathbb{N}^*)^2,$ 
$$ \PP( T'_i = \tilde{T}'_j | \mathcal{F}_{S_1-}) = 0. $$
Since we have $\{ X_{\tilde{S}_G} \in \pD\} \subset \cup_{i, j \geq 1} \{T'_i = \tilde{T}'_j\}$, we obtain $X_{\tilde{S}_G} \not \in \pD$ a.s. as desired.
\end{proof}

Let us introduce some notations for the remaining part of this section.

\begin{notat}
\label{NotatConvexProof} Let $(X_s,V_s, \xtilde{s},\vtilde{s}, Z_s)_{s \geq 0}$ a coupling process, see Definition \ref{DefiCouplingProcess}. We use the same sequences $(S_i,Q_i,\tilde{Q}_i)_{i \geq 1} $ as in the definition, as well as $(\tilde{Q}'_i)_{i \geq 1}$, and we recall that, for all $i \geq 1$,
$$ V_{S_i} = w(X_{S_i},V_{S_i-},Q_i) \mathbf{1}_{\{X_{S_i} \in \pD\}} +V_{S_i-}  \mathbf{1}_{\{X_{S_i} \not \in \pD\}}, $$
$$ \vtilde{S_i} = w(\xtilde{S_i},\vtilde{S_i-},\tilde{Q}'_i) \mathbf{1}_{\{\xtilde{S_i} \in \pD\}} + \vtilde{S_i-} \mathbf{1}_{\{\xtilde{S_i} \not \in \pD\}} .$$

\noindent a) We set $T_0 = 0$, $\tilde{T}_0 = 0$ and for $k \geq 0$,
$$T_{k+1} = \inf\{t > \tilde{T}_{k}, X_t \in \pD\}, \qquad \tilde{T}_{k+1} = \inf\{t > T_{k+1}, \xtilde{t} \in \pD\}. $$
For all $k \geq 1$, we have $Z_{T_k-} = \emptyset$ and $X_{T_k} \in \pD$ so $Z_{T_k} \ne \emptyset$ if $\xtilde{T_k} \not \in \pD$. We always have $Z_{\tilde{T}_k} = \emptyset$. 
For all $k \geq 1$, we write $(\underline{Q}_k, \underline{\tilde{Q}}_k) = (\underline{U}_k,\underline{R}_k,\underline{\Theta}_k, \underline{\tilde{U}}_k, \underline{\tilde{R}}_k, \underline{\tilde{\Theta}}_k)$ for the random vector such that
$$V_{T_k} = w(X_{T_k}, V_{T_k-} ,\underline{Q}_k), \quad \text{ and } \quad \vtilde{\tilde{T}_k} = w(\xtilde{\tilde{T}_k}, \vtilde{\tilde{T}_k-}, \underline{\tilde{Q}}_k).$$
Note that $(\underline{Q}_k, \underline{\tilde{Q}}_k)_{k \geq 1}$ is a subsequence of $(Q_i, \tilde{Q}_i')_{i \geq 1}$.

\noindent b) For all $t \geq 0$, we set $$\mathcal{F}_t = \sigma \Big( (X_s,V_s,\xtilde{s},\vtilde{s},Z_s)_{0 \leq s \leq t}, (Q_i \mathbf{1}_{\{S_i \leq t\}})_{i \geq 1}, (\tilde{Q}_i \mathbf{1}_{\{S_i \leq t\}})_{i \geq 1} \Big).$$

\noindent c) We set 
$ \sigma_1 = \inf\{ t > 0, X_t = \xtilde{t} \in \pD, Z_{t-} = \emptyset, \|V_{t-} \| \ne \|V_0\|, \|\vtilde{t-}\| \ne \|\vtilde{0}\|\}. $

\noindent d) We set $\nu_0 = 0$ and for all $k \geq 0$,
$$ \nu_{k+1} = \inf \{n \geq \nu_k + 1, \xtilde{T_n} \not \in \pD, \|V_{T_n-}\| \wedge \|\vtilde{T_n-}\| \geq 1\}. $$ 
\end{notat} 
Note that, according to Definition \ref{DefiCouplingProcess}, we have for all $n \geq 1$, conditionally on $\mathcal{F}_{T_{\nu_n}-}$, 
$$(\underline{R}_{\nu_n}, \underline{\Theta}_{\nu_n}, \underline{\tilde{R}}_{\nu_n}, \underline{\tilde{\Theta}}_{\nu_n}) \sim \Lambda_{X_{T_{\nu_n}-}, \xtilde{T_{\nu_n}-}, \vtilde{T_{\nu_n}-}},$$
where we recall that $\Lambda$ is defined in Proposition \ref{CouplingProp}. We also have $Z_{T_{\nu_n}} \ne \emptyset$, see (\textit{a}).

\begin{lemma}
\label{LemmaTnu}
There exist three constants $\kappa, L, c > 0$ such that the following holds.
\begin{enumerate}[i)]
\item For all $m \geq 1$, $$\mathbf{1}_{\{T_{\nu_m} < \sigma_1\}} \EE[r(T_{\nu_{m + 1}} \wedge \sigma_1 - T_{\nu_m})| \mathcal{F}_{T_{\nu_m}-}] \leq L. $$
\item  $\EE[r(T_{\nu_1} \wedge \sigma_1)] \leq  \kappa \Big(1 + \EE \Big[r\Big(\frac{d(D)}{\|V_{0}\|}\Big) + r\Big(\frac{d(D)}{\|\vtilde{0}\|}\Big) \Big] \Big). $ 

\item For all $m \geq 1$, setting 
$$A_m = \{\underline{U}_{\nu_m} \leq \alpha_0,  X_{T_{\nu_{m} +1}}= \xtilde{T_{\nu_{m} + 1}}, T_{\nu_{m}+1} = T_{\nu_m} + \zeta(X_{T_{\nu_m}}, V_{T_{\nu_m}})\}, $$
we have $$\PP \Big(A_m \Big| \mathcal{F}_{T_{\nu_m}-} \Big) \geq c, $$
and  $A_m \subset \{\sigma_1 \leq T_{\nu_m + 1}\}$ outside a $\PP$-null set.
\end{enumerate}
\end{lemma}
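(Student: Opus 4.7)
The three parts are proved separately: (iii) is the core success-of-coupling claim, established by tracing through Definition \ref{DefiCouplingProcess} on the event $A_m$, while (i)--(ii) are moment bounds on the return time to the set of admissible configurations, obtained by adapting Lemmas \ref{LemmaControlCalT} and \ref{LemmaIndepProcess}.

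For (iii), by Definition \ref{DefiCouplingProcess}, the conditional law of $(\underline{R}_{\nu_m},\underline{\Theta}_{\nu_m},\underline{\tilde R}_{\nu_m},\underline{\tilde\Theta}_{\nu_m})$ given $\mathcal{F}_{T_{\nu_m}-}$ is $\Lambda_{X_{T_{\nu_m}},\tilde X_{T_{\nu_m}},\tilde V_{T_{\nu_m}-}}$, independent of $\underline{U}_{\nu_m}\sim\mathcal{U}$. The intersection of $\{\underline{U}_{\nu_m}\le\alpha_0\}$ with the Proposition \ref{CouplingProp} coupling event is seen to be contained in $A_m$: on this intersection, the reflection of $X$ at $T_{\nu_m}$ is diffuse with speed $\underline R_{\nu_m}$, the next $X$-hit occurs at time $T_{\nu_m}+\xi(X_{T_{\nu_m}},\underline R_{\nu_m},\underline\Theta_{\nu_m})=T_{\nu_m}+\zeta(X_{T_{\nu_m}},V_{T_{\nu_m}})$ at position $y(X_{T_{\nu_m}},\underline\Theta_{\nu_m})$, while the Proposition \ref{CouplingProp} identities $y=\tilde y$ and $\xi=\tilde\xi$ force $\tilde X$'s second boundary hit to occur at the same time and position. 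Consequently $\PP(A_m\mid\mathcal{F}_{T_{\nu_m}-})\ge\alpha_0 c'$, with $c'$ the constant of Proposition \ref{CouplingProp}. For the containment $A_m\subset\{\sigma_1\le T_{\nu_m+1}\}$ modulo a null set, I would trace the algorithm on $A_m$: $\tilde Q_{\nu_m}$ is written to $Z$ at $T_{\nu_m}$, then the first $\tilde X$-hit at some $\tilde T_{\nu_m}\in(T_{\nu_m},T_{\nu_m+1})$ consumes $Z$ (with the same $U\le\alpha_0$ producing a diffuse reflection of $\tilde X$ with reflected speed $\underline{\tilde R}_{\nu_m}\sim h_R$) and resets $Z$ to $\emptyset$; any further $\tilde X$-hit strictly before $T_{\nu_m+1}$ occurs with $X$ interior and $Z$ already $\emptyset$, and by Table \ref{Tab:TableUpdate} leaves $Z$ at $\emptyset$. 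Hence $Z_{T_{\nu_m+1}-}=\emptyset$. Combined with $X_{T_{\nu_m+1}}=\tilde X_{T_{\nu_m+1}}\in\pD$ from $A_m$, and the fact that the conditional densities of $\|V_{T_{\nu_m+1}-}\|$ and $\|\tilde V_{T_{\nu_m+1}-}\|$ differ almost surely from the fixed constants $\|V_0\|$ and $\|\tilde V_0\|$, this yields $\sigma_1\le T_{\nu_m+1}$ outside a $\PP$-null set.

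For (i) and (ii), I would invoke the strong Markov property of the joint process $(X,V,\tilde X,\tilde V,Z)$ and imitate the proof of Lemma \ref{LemmaIndepProcess}. Starting from $T_{\nu_m}$, one decomposes the return time along the successive $\tilde X$-boundary hits; at each such hit, both post-reflection speeds exceed $1$ with probability $\ge(\int_1^\infty h_R(r)\,dr)^2>0$ (using that the marginals are free-transport processes, by Lemma \ref{LemmaMarginalCode}), and the $r$-moment of each inter-hit increment is bounded by $\kappa(1+C_0)$ via Lemma \ref{LemmaControlCalT}. Applying Lemma \ref{LemmaSumGeometric} then produces the uniform bound $L$ for (i). The independence argument from Step 3 of the proof of Lemma \ref{LemmaIndepProcess} is replaced here by Lemma \ref{LemmaDensityTi}: the next $X$-hit time after $T_{\nu_m}$ admits a conditional density, hence differs almost surely from any given $\tilde X$-hit time, so $X\notin\pD$ at each $\tilde X$-hit. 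Part (ii) follows the same template initialised at $t=0$; the dependence on $\|V_0\|,\|\tilde V_0\|$ appears in the first inter-hit increment via Lemma \ref{LemmaControlCalT}.

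The main obstacle is the bookkeeping in (iii): one must track the state of $Z$ across the possibly several $\tilde X$-hits occurring inside $(T_{\nu_m},T_{\nu_m+1})$, verifying that $Z$ is always $\emptyset$ by time $T_{\nu_m+1}-$, and that $A_m$ is nonetheless covered by the simpler coupling event for the probabilistic lower bound. A secondary difficulty for (i) is obtaining uniformity of $L$ despite the post-reflection speed $\|V_{T_{\nu_m}}\|$ being possibly arbitrarily small; this is absorbed by the finiteness of $C_0=\int_0^\infty r(d(D)/s)h_R(s)\,ds$ and the geometric averaging provided by Lemma \ref{LemmaSumGeometric}.
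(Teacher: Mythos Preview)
Your treatment of part (iii) matches the paper's: you correctly identify that the intersection of $\{\underline U_{\nu_m}\le\alpha_0\}$ with the coupling event $E_{X_{T_{\nu_m}},\tilde X_{T_{\nu_m}},\tilde V_{T_{\nu_m}-}}$ of Proposition~\ref{CouplingProp} is contained in $A_m$, giving $\PP(A_m\mid\mathcal F_{T_{\nu_m}-})\ge\alpha_0 c'$. For the inclusion $A_m\subset\{\sigma_1\le T_{\nu_m+1}\}$ you trace $Z$ by hand; the paper shortcuts this via the standing fact (Notation~\ref{NotatConvexProof}(a)) that $Z_{T_i-}=\emptyset$ for \emph{every} $i\ge 1$, so $Z_{T_{\nu_m+1}-}=\emptyset$ automatically.

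For parts (i)--(ii), you name the right ingredients but there is a real gap in your justification. You write that at each $\tilde X$-hit, both post-reflection speeds exceed~$1$ with probability at least $(\int_1^\infty h_R)^2$, ``using that the marginals are free-transport processes, by Lemma~\ref{LemmaMarginalCode}''. Knowing the marginal laws is not enough for a \emph{product} bound on a \emph{conditional} probability: when line~1 of Table~\ref{Tab:TableUpdate} is in force the innovations $(R,\Theta,\tilde R,\tilde\Theta)\sim\Lambda_{\cdot}$ are correlated, and there is no reason for the joint event $\{\|V\|\ge 1,\|\tilde V\|\ge 1\}$ to factor. The same issue undermines your density argument via Lemma~\ref{LemmaDensityTi}: that lemma gives the $X$-hit time a density conditionally on the $X$-past, but what you need is a density conditionally on the joint past including the $\tilde X$-hit times, which again requires independence.

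The paper closes this gap with one key observation that you are missing. After reducing by strong Markov and bounding the time $\mathcal T$ until both speeds differ from their initial values (Lemma~\ref{LemmaControlCalT}), the paper notes that on the interval $[\mathcal T,\,T_{\nu_1}\wedge\sigma_1)$ neither line~1 nor line~3 of Table~\ref{Tab:TableUpdate} can ever be invoked: line~1 requires the $\nu$-configuration (first met at $T_{\nu_1}$) and line~3 requires $X=\tilde X$ at the boundary with $Z_-=\emptyset$ (first met at $\sigma_1$). Hence on this interval every pair of innovations $(Q,\tilde Q)$ is either independent or has one component useless, so conditionally on $\mathcal F_{\mathcal T-}$ the two processes are genuinely independent killed free-transport processes, and Lemma~\ref{LemmaIndepProcess} applies as a black box to bound $T_{\nu_1}\wedge\sigma_1-\mathcal T$. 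Once you insert this independence argument, your plan for (i)--(ii) becomes the paper's proof.
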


\begin{proof}

We prove i). Recall Remark \ref{RmkExtensionDefiProcess} which defines a free-transport process with initial distribution $\delta_x \otimes \delta_v$, with $(x,v) \in \partial_+ G$. For all $k \geq 1$, we have  $\|V_{T_{\nu_k}-}\| \wedge \|\vtilde{T_{\nu_k}-}\| \geq 1$, $Z_{T_{\nu_k-}} = \emptyset$ and $X_{T_{\nu_k}} \in \pD$, $\xtilde{T_{\nu_k}} \not \in \pD$. Thus, using the strong Markov property, we only need to prove that there exists some $L > 0$ such that for all $(x,v) \in \partial_+ G$,  $\tilde{x} \not \in \pD$, $\tilde{v} \in \RR^n$ with $\|v\| \wedge \|\tilde{v}\| \geq 1$,  if $(X_0,\xtilde{0},V_{0-}, \tilde{V}_{0-}, Z_0-) = (x,\tilde{x},v,\tilde{v}, \emptyset)$,
\begin{align}
\label{PreliminaryEquationTnu1}
 \EE\Big[r\big(T_{\nu_1} \wedge \sigma_1 \big) \Big] \leq L.
 \end{align}
 
We set $\mathcal{T} = \inf\{t > 0, \|V_t\| \ne \|v\|, \|\vtilde{t}\| \ne \|\tilde{v}\|\}$. By Lemma \ref{LemmaControlCalT} and since $\|v\| \wedge \|\tilde{v}\| \geq 1$, $\EE[r(\mathcal{T})] \leq L$.

It thus suffices to prove that $$ \EE[r(T_{\nu_1} \wedge \sigma_1 - \mathcal{T}) \mathbf{1}_{\{T_{\nu_1} \wedge \sigma_1 > \mathcal{T}\}} ] \leq L. $$
To this end, we will use Lemma \ref{LemmaIndepProcess}.

\vspace{.3cm}

Set, for all $t \geq 0$, $(X'_t,V'_t) = (X_{\CT + t}, V_{\CT + t})$ and $(\xtilde{t}',\vtilde{t}') = (\xtilde{\CT + t},\vtilde{\CT + t})$. Conditionally on $\mathcal{F}_{\CT-}$, on the event $\{T_{\nu_1} \wedge \sigma_1 > \CT \}$, the processes $(X'_t,V'_t)_{0 \leq t < T_{\nu_1} \wedge \sigma_1 - \CT}$ and $(\xtilde{t}',\vtilde{t}')_{0 \leq t < T_{\nu_1} \wedge \sigma_1 - \CT}$ are two independent (killed) free-transport processes with initial distributions $\delta_{X_{\CT}} \otimes \delta_{V_{\CT-}}$ and $\delta_{\xtilde{\CT}} \otimes \delta_{\vtilde{\CT-}}$. Indeed, by definition of $\sigma_1$ and $\nu_1$, the first and third lines of Table \ref{Tab:TableUpdate} are never used during $[\CT, T_{\nu_1} \wedge \sigma_1)$, so that the innovations $(\mathbf{Q},\tilde{\mathbf{Q}})$ are always independent or one of them is useless.

\vspace{.3cm}

Using Lemma \ref{LemmaIndepProcess}, since $T_{\nu_1} \wedge \sigma_1 - \CT \leq  T_{\nu_1}-\CT \leq S$ with the notation of the Lemma, we conclude that $$ \mathbf{1}_{\{T_{\nu_1} \wedge \sigma_1 > \CT\}} \EE[r(T_{\nu_1} \wedge \sigma_1 - \CT)|\mathcal{F}_{\CT-}] \leq \kappa \Big( 1 + r \Big(\frac{d(D)}{\|V_{\CT-}\|} \Big) + r \Big(\frac{d(D)}{\|\vtilde{\CT-}\|} \Big) \Big). $$ 
We obtain
\begin{align}
\label{EqStep12Tnu1}
\EE[r(T_{\nu_1} \wedge \sigma_1 - \CT)&\mathbf{1}_{\{T_{\nu_1} \wedge \sigma_1 > \CT\}}] \leq  \kappa \Big(  \EE \Big[ r \Big(\frac{d(D)}{\|V_{\CT-}\|} \Big) + r \Big(\frac{d(D)}{\|\vtilde{\CT-}\|} \Big) \Big]  + 1\Big) \\
&\leq \kappa \Big( \EE \Big[ r \Big(\frac{d(D)}{\|v\|} \Big) \mathbf{1}_{\{\|V_{\CT-}\| = \|v\|\}} + r \Big(\frac{d(D)}{\|V_{\CT-}\|} \Big) \mathbf{1}_{\{\|V_{\CT-}\| \ne \|v\|\}}\nonumber \\ 
& \qquad  +r \Big(\frac{d(D)}{\|\tilde{v}\|} \Big) \mathbf{1}_{\{\|\vtilde{\CT-}\| = \|\tilde{v}\|\}} 
 + r \Big(\frac{d(D)}{\|\vtilde{\CT-}\|} \Big) \mathbf{1}_{\{\|\vtilde{\CT-}\| \ne \|\tilde{v}\|\}} \Big) \Big]  + 1\Big) \nonumber \\
&\leq L, \nonumber
\end{align}
using (\ref{EqDefC0}), that $\mathcal{L} (\|V_{\CT-}\| | \|V_{\CT-}\| \ne \|v\| ) = \mathcal{L}(\|\vtilde{\CT-}\| | \|\vtilde{\CT-}\| \ne \|\tilde{v}\| ) = h_R $ and that $\|v\| \wedge \|\tilde{v}\| \geq 1$. This concludes the proof of (\ref{PreliminaryEquationTnu1}) and thus of i).

\vspace{.5cm}

For ii), we apply the same proof as for i), replacing everywhere $(v,\tilde{v})$ by $(V_{0},\tilde{V}_{0})$. We conclude that 
\begin{align*} \EE[r(T_{\nu_1} \wedge \sigma_1)] \leq \kappa \Big(1 + \EE \Big[r\Big(\frac{d(D)}{\|V_{0}\|}\Big) + r\Big(\frac{d(D)}{\|\vtilde{0}\|}\Big) \Big] \Big). 
\end{align*}

\vspace{.5cm}

We prove iii). Set, for all $k \geq 1$, $W_k = (\underline{U}_k, \underline{R}_k, \underline{\Theta}_k, \underline{\tilde{U}}_k, \underline{\tilde{R}}_k, \underline{\tilde{\Theta}}_k)$. Recall that $T_k < \tilde{T}_k$. We deduce that $W_k$ is independent of $\mathcal{F}_{T_k-}$ and is $\mathcal{F}_{\tilde{T}_k}$-measurable. Also, we have $\xtilde{T_{\nu_k}} \not \in \pD$ and $\|V_{T_{\nu_k}-}\| \wedge \|\vtilde{T_{\nu_k}-}\| \geq 1$ by definition of $\nu_k$. Hence $W_{\nu_k} \sim \Gamma_{X_{T_{\nu_k-}}, V_{T_{\nu_k}-}, \xtilde{T_{\nu_k}-}, \vtilde{T_{\nu_k}-}}$ and its law is given by the second line of (\ref{EqLawGlobalCoupling}). Thus, conditionally on $\mathcal{F}_{T_{\nu_k}-}$, $$(\underline{R}_{\nu_k}, \underline{\Theta}_{\nu_k}, \underline{\tilde{R}}_{\nu_k}, \underline{\tilde{\Theta}}_{\nu_k}) \sim \Lambda_{X_{T_{\nu_k}}, \xtilde{T_{\nu_k}}, \vtilde{T_{\nu_k}-}},$$ the random variable $\underline{U}_{\nu_k}$ satisfies $\underline{U}_{\nu_k} \sim \mathcal{U}$, is independent of $(\underline{R}_{\nu_k}, \underline{\Theta}_{\nu_k}, \underline{\tilde{R}}_{\nu_k}, \underline{\tilde{\Theta}}_{\nu_k})$ and we have $\underline{U}_{\nu_k} = \underline{\tilde{U}}_{\nu_k}$ . Recall, for $(x,\tilde{x}, \tilde{v}) \in \pD \times D \times \RR^n$, the notation $E_{x,\tilde{x},\tilde{v}}$ from Proposition \ref{CouplingProp}. We set 
$$C_{x,\tilde{x}, \tilde{v}} = \Big\{ (u,\tilde{u}, r, \tilde{r}, \theta, \tilde{\theta}) \in [0,1]^2 \times \RR_+^2 \times \mathcal{A}^2: u \leq \alpha_0, \tilde{u} \leq \alpha_0, (r,\theta,\tilde{r},\tilde{\theta}) \in E_{x,\tilde{x},\tilde{v}} \Big\}. $$
 We have
$$  \Big\{W_{\nu_k} \in C_{X_{T_{\nu_k}},  \xtilde{T_{\nu_k}}, \vtilde{T_{\nu_k}-}}\Big\} \subset A_k.  $$
Indeed, if $W_{\nu_k} \in C_{X_{T_{\nu_k}},  \xtilde{T_{\nu_k}}, \vtilde{T_{\nu_k}-}}$, we have first $\underline{U}_{\nu_k} \leq \alpha_0$, so that $V_{T_{\nu_k}} = \underline{R}_{\nu_k} \vartheta(X_{T_{\nu_k}}, \underline{\Theta}_{\nu_k})$.
In this configuration, after $T_{\nu_k}$, $(X,V)$ has its first collision at time $ T_{\nu_k} + \zeta(X_{T_{\nu_k}}, V_{T_{\nu_k}})$ while $(\xtilde{},\vtilde{})$ collides for the first time after $T_{\nu_k}$ at time $\tilde{T}_{\nu_k} = T_{\nu_k} + \zeta(\xtilde{T_{\nu_k}}, \vtilde{T_{\nu_k}})$. Moreover, recalling Definition \ref{DefiCouplingProcess}, $$\vtilde{\tilde{T}_{\nu_k}} = \underline{\tilde{R}}_{\nu_k} \vartheta(\xtilde{\tilde{T}_{\nu_k}}, \underline{\tilde{\Theta}}_{\nu_k}).$$
We obtain, recalling Notation \ref{Notationyxi} and Proposition \ref{CouplingProp}, that
\begin{align*}
 T_{\nu_{k+1}} = T_{\nu_k} + \zeta(X_{T_{\nu_k}}, V_{T_{\nu_k}}) = T_{\nu_k} + \xi(X_{T_{\nu_k}}, \underline{R}_{\nu_k}, \underline{\Theta}_{\nu_k}) &= T_{\nu_k} + \tilde{\xi}(\xtilde{T_{\nu_k}}, \vtilde{T_{\nu_k}-}, \underline{\tilde{R}}_{\nu_k}, \underline{\tilde{\Theta}}_{\nu_k}) \\
 & = \tilde{T}_{\nu_k} + \zeta(\xtilde{\tilde{T}_{\nu_k}}, \vtilde{\tilde{T}_{\nu_k}}). 
\end{align*}
and  
$$X_{T_{\nu_k +1 }} = q(X_{T_{\nu_k}}, V_{T_{\nu_k}}) = y(X_{T_{\nu_k}}, \underline{\Theta}_{\nu_k}) = \tilde{y}(\xtilde{T_{\nu_k}}, \vtilde{T_{\nu_k}}, \underline{\tilde{\Theta}}_{\nu_k}) = q(\xtilde{\tilde{T}_{\nu_k}}, \vtilde{\tilde{T}_{\nu_k}}) = \xtilde{T_{\nu_{k+1}}}. $$

We have, for all $k \geq 1$,
\begin{align*}
\PP(A_k | \mathcal{F}_{T_{\nu_k}-})&\geq \PP\Big(W_{\nu_k} \in C_{X_{T_{\nu_k}-}, \xtilde{T_{\nu_k}-}, \vtilde{T_{\nu_k}-}} \Big) \\
&=  \PP(\underline{U}_{\nu_k} \leq \alpha_0 | \mathcal{F}_{T_{\nu_k}-}) \PP \Big((\underline{R}_{\nu_k}, \underline{\Theta}_{\nu_k}, \underline{\tilde{R}}_{\nu_k}, \underline{\tilde{\Theta}}_{\nu_k}) \in E_{X_{T_{\nu_k-}}, \xtilde{T_{\nu_k}-}, \vtilde{T_{\nu_k}-}} \Big| \mathcal{F}_{T_{\nu_k}-} \Big) \\
&\geq \alpha_0 c,
\end{align*}
with $c > 0$ given by Proposition \ref{CouplingProp}.

On $A_k$, we have $X_{T_{\nu_k+1}} = \xtilde{T_{\nu_k + 1}}$, $Z_{T_{\nu_k +1}-} = \emptyset$ because, for all $i \geq 1$, $Z_{T_i-} = \emptyset$, 
and, since $\underline{U}_{\nu_k} = \underline{\tilde{U}}_{\nu_k} \leq \alpha_0$, and $\xtilde{T_{\nu_k}} \not \in \pD$,
$$ \|V_{T_{\nu_k + 1}-}\| = \|V_{T_{\nu_k}}\| = \underline{R}_{\nu_k} \ne \|V_0\|, $$
$$ \|\vtilde{T_{\nu_k+1}-}\|  = \|\tilde{V}_{\tilde{T}_{\nu_k}}\| \mathbf{1}_{\{ \|\tilde{V}_{T_{\nu_k +1}-}\| = \|\tilde{V}_{\tilde{T}_{\nu_k}}\|\}} + \|\vtilde{T_{\nu_k+1}-}\| \mathbf{1}_{\{ \|\tilde{V}_{T_{\nu_k +1}-}\| \ne \|\tilde{V}_{\tilde{T}_{\nu_k}}\|\}}, $$
with $\mathcal{L}(\|\vtilde{T_{\nu_k +1}-}\| | \|\tilde{V}_{T_{\nu_k +1}-}\| \ne \|\tilde{V}_{\tilde{T}_{\nu_k}}\| ) = \mathcal{L}(\|\vtilde{\tilde{T}_{\nu_k}}\|| \underline{\tilde{U}}_{\nu_k} \leq \alpha_0) = h_R$ from which we obtain $\PP(\|\vtilde{T_{\nu_k+1}-}\| = \|\vtilde{0}\|) = 0$.
We conclude that $A_k \subset \{\sigma_1 \leq T_{\nu_{k+1}}\}$ outside a $\PP$-null set.

\vspace{-.3cm}
\end{proof}

\begin{proof}[Proof of Theorem \ref{MainTheorem} in the convex case] \
We fix $f_0 \in \mathcal{P}(D \times \RR^n)$. We consider the coupling $(X_s,V_s,\xtilde{s},\vtilde{s}, Z_s)_{s \geq 0}$ given by Definition \ref{DefiCouplingProcess}. By Lemma \ref{LemmaMarginalCode}, for any $t > 0$, $(X_t,V_t) \sim f_t$ and $(\xtilde{t},\vtilde{t}) \sim \mu_{\infty}$. 

We prove, with the help of Lemma \ref{LemmaTnu}, that, setting $$\tau = \inf\{t > 0, (X_{t+s},V_{t+s})_{s \geq 0} = (\xtilde{t+s}, \vtilde{t+s})_{s \geq 0} \}, $$ we have $\EE[r(\tau)] < \infty$. We then conclude the proof of Theorem \ref{MainTheorem} in Step 4.

\vspace{.5cm}
\textbf{Step 1. }
Recall Notation \ref{NotatConvexProof} for $\sigma_1$ and for the sequence $(\nu_k)_{k \geq 0}$. We plan to apply Lemma \ref{LemmaSumGeometric} to show that $\EE[r(\sigma_1)] \leq \kappa$.
\begin{enumerate}
\item Set, for $k \geq 0$, $\mathcal{G}_k = \mathcal{F}_{T_{\nu_{k}} \wedge \sigma_1-}$, and for $k \geq 1$, $\tau_k = T_{\nu_{k}} \wedge \sigma_1$, which is $\mathcal{G}_k$-measurable. Also, set 
$$E_k = \{\sigma_1 \leq T_{\nu_{k}}\} \in \CG_k.$$
Set $G = \inf\{k \geq 1, E_k \text{ is realized}\}$.
\item Recall, for all $k \geq 1$, the notation $A_k$ from Lemma \ref{LemmaTnu}, iii). Observe that, according to the Lemma and since $\nu_{k+1} \geq \nu_k + 1$, there holds $A_{k-1} \subset \{\sigma_1 \leq T_{\nu_{k-1}+1}\} \subset \{\sigma_1 \leq T_{\nu_k}\} \subset E_k$. We have, for all $k \geq 1$, by Lemma \ref{LemmaTnu} iii),
$$ \PP(E_k | \mathcal{G}_{k-1}) = \PP(E_k | \mathcal{F}_{T_{\nu_{k-1}} \wedge \sigma_1-}) \geq \EE \Big[ \PP(A_{k-1} | \mathcal{F}_{T_{\nu_{k-1}}-}) \Big| \mathcal{F}_{T_{\nu_{k-1}} \wedge \sigma_1 -} \Big]  \geq c,$$
whence (\ref{HypoProbaEk}).
\item From Lemma \ref{LemmaTnu} ii) and (\ref{EqDefC0}), we have
\begin{align*}
\EE[r(\tau_1)|\CG_0] = \EE[r(T_{\nu_1} \wedge \sigma_1) ] \leq L.
\end{align*}
Moreover, by Lemma \ref{LemmaTnu} i), for all $k \geq 1$, we have, using $\mathcal{F}_{T_{\nu_{k-1}} \wedge \sigma_1 -} \subset \mathcal{F}_{T_{\nu_{k-1}}-} \subset \mathcal{F}_{T_{\nu_k}-}$,
\begin{align*}
\mathbf{1}_{\{G \geq k\}} \EE[r(\tau_{k+1} - \tau_k)|\CG_{k-1}] &\leq \EE[r(T_{\nu_{k+1}} \wedge \sigma_1 - T_{\nu_{k}} \wedge \sigma_1) | \mathcal{F}_{T_{\nu_{k-1}} \wedge \sigma_1-}] \\
&\leq r(0) + \EE \Big[ \mathbf{1}_{\{\sigma_1 > T_{\nu_{k}}\}} \EE[r(T_{\nu_{k+1}} \wedge \sigma_1 - T_{\nu_{k}}) | \mathcal{F}_{T_{\nu_{k}}-}] \Big| \mathcal{F}_{T_{\nu_{k-1}} \wedge \sigma_1 -} \Big] \\
&\leq r(0) + L,
\end{align*}
whence (\ref{IneqHypoLemmaSumGeom}). 
\end{enumerate}

We apply Lemma \ref{LemmaSumGeometric} and conclude that
$$ \EE[r(\tau_G)] \leq \kappa, $$
from which we deduce, by definition of $G$, that
$$ \EE[r(\sigma_1)] = \EE[r(\sigma_1 \wedge T_{\nu_G})] = \EE[r(\tau_G)] \leq \kappa.$$

\vspace{.5cm}

\textbf{Step 2.} We introduce the sequence $(\sigma_i)_{i \geq 0}$ defined by $\sigma_0 = 0$, $\sigma_1$ defined by Notation \ref{NotatConvexProof}, iii), and for all $k \geq 1$,
$$ \sigma_{k+1} = \inf\{t > \sigma_k, X_t = \xtilde{t} \in \pD, Z_{t-} = \emptyset, \|V_{t-}\| \ne \|V_{\sigma_k}\|, \|\vtilde{t-}\| \ne \|\vtilde{\sigma_k}\|\}. $$
We plan to apply Lemma \ref{LemmaSumGeometric}. 
\begin{enumerate}
\item We set $\mathcal{G}_0$ to be the completion of the trivial $\sigma$-algebra and, for $k \geq 1$, $\CG_k = \mathcal{F}_{\sigma_{k+1}-}$. We also set, for all $k \geq 1$, $\tau_k = \sigma_{k+1}$ which is $\CG_k$-measurable, and $E_k = \{V_{\sigma_k} = \vtilde{\sigma_k} \} \in \CG_k$. 
We set $N = \inf \{k \geq 1, E_k \text{ is realized } \}$.
\item Let, for all $k \geq 1$, $(\mathbf{Q}_k, \tilde{\mathbf{Q}}_k) = ((\mathbf{U}_k, \mathbf{R}_k, \mathbf{\Theta}_k), (\tilde{\mathbf{U}}_k, \tilde{\mathbf{R}}_k, \tilde{\mathbf{\Theta}}_k))$ be the couple random variables used to define $V_{\sigma_k}$ and $\vtilde{\sigma_k}$. Since $X_{\sigma_k} = \xtilde{\sigma_k}$ and $Z_{\sigma_k-} = \emptyset$, we are in the situation of line 3 of Table \ref{Tab:TableUpdate}, hence $\mathbf{Q}_k = \tilde{\mathbf{Q}}_k$, so that if $\mathbf{U}_k \leq \alpha_0$, $$V_{\sigma_k} =  w (X_{\sigma_k},V_{\sigma_k-}, \mathbf{Q}_k) = w(X_{\sigma_k}, \vtilde{\sigma_k-}, \mathbf{Q}_k) = \vtilde{\sigma_k}.$$ Since $\mathbf{Q}_k$ is independent of $\mathcal{F}_{\sigma_k-}$,
$$ \PP(E_k | \mathcal{G}_{k-1}) \geq \PP(\mathbf{U}_k \leq \alpha_0 |  \mathcal{F}_{\sigma_k-}) = \alpha_0, $$
whence (\ref{HypoProbaEk}).
\item Note that for $k \geq 1$, 
\begin{align}
\label{EqControlVsigma}
\EE \Big[r \Big( \frac{d(D)}{\|V_{\sigma_k-}\|} \Big) + r \Big( \frac{d(D)}{\|\vtilde{\sigma_k-}\|} \Big) \Big| \mathcal{F}_{\sigma_{k-1}-} \Big] \leq 2 C_0,
\end{align}
using that $\mathcal{L}(\|V_{\sigma_k-}\| | \mathcal{F}_{\sigma_{k-1}-}) = \mathcal{L}(\|\vtilde{\sigma_k-}\| | \mathcal{F}_{\sigma_{k-1}-}) = h_R$, since $\|V_{\sigma_k-}\| \ne \|V_{\sigma_{k-1}}\|$ and $\|\vtilde{\sigma_k-}\| \ne \|\vtilde{\sigma_{k-1}}\|$ by definition of $\sigma_k$. 
By Step 1, Lemma \ref{LemmaTnu}, ii), the strong Markov property and the definition of $(\sigma_i)_{i \geq 0}$, we have, for all $k \geq 1$,
$$ \EE[r(\sigma_{k+1} - \sigma_k)|\mathcal{F}_{\sigma_{k}-}] \leq \kappa \Big(1 + r\Big(\frac{d(D)}{\|V_{\sigma_k-}\|} \Big) + r \Big(\frac{d(D)}{\|\vtilde{\sigma_k-}\|} \Big) \Big), $$
so that, using (\ref{EqControlVsigma}) and that $\mathcal{F}_{\sigma_{k-1}-} \subset \mathcal{F}_{\sigma_k-}$,
$$ \EE[r(\sigma_{k+1} - \sigma_k)|\mathcal{F}_{\sigma_{k-1}-}] \leq \kappa. $$
With this at hand, we show that (\ref{IneqCondLemmaSumGeom}) holds. First, by Step 1,
$$\EE[r(\tau_1)|\mathcal{G}_0] = \EE[r(\sigma_2)] \leq C \Big( \EE[r(\sigma_2 - \sigma_1)] + \kappa \Big) \leq L. $$
Moreover, for $k \geq 1$,
\begin{align*}
\mathbf{1}_{\{N \geq k\}} \EE[r(\tau_{k+1}-\tau_k)|\mathcal{G}_{k-1}] & \leq \EE [r(\sigma_{k+2} - \sigma_{k+1}) |\mathcal{F}_{\sigma_k-}] \leq \kappa,
\end{align*}
whence (\ref{IneqCondLemmaSumGeom}).
\end{enumerate}
We conclude by Lemma \ref{LemmaSumGeometric} that $\EE[r(\sigma_N)] \leq \kappa$. 

\vspace{.5cm}

\textbf{Step 3} Using Lemma \ref{Lemmataudefinitive}, since $(X_{\sigma_N}, V_{\sigma_N}) = (\xtilde{\sigma_N},\vtilde{\sigma_N})$ and $Z_{\sigma_N-} = \emptyset$, we conclude that $\tau \leq \sigma_N$, hence $\EE[r(\tau)] \leq \kappa$ by Step 2.

\vspace{.5cm} 

\textbf{Step 4. } Recall that for two probability measures $\mu, \nu$,
\begin{align*}
\| \mu - \nu \|_{TV} = \inf_{X \sim \mu, Y \sim \nu} \PP(X \ne Y).
\end{align*}
Hence for all $t \geq 0$,
\begin{align*}
\|f_t - \mu_{\infty}\|_{TV} \leq \PP\Big((X_t,V_t) \ne (\xtilde{t}, \vtilde{t}) \Big) = \PP(\tau > t),
\end{align*}
according to our definition of $\tau$. Finally, we use Step 3 and Markov's inequality to conclude that
\begin{align*}
\|f_t - \mu_{\infty}\|_{TV} \leq \frac{\EE[r(\tau)]}{r(t)} \leq \frac{\kappa}{r(t)},
\end{align*}
for all $t \geq 0$.
\end{proof}

\vspace{.5 cm}

\section{Extension to a general regular domain}
\label{Extension}

In this section, we extend the previous result on a convex bounded domain (open, connected) to the general case of a $C^2$ bounded domain. 

\subsection{Notations and preliminary results.}

In this subsection, we introduce the notion of communication between boundary points, derive an important corollary from this definition and prove a preliminary lemma that will be key to obtain a result similar to Proposition \ref{CouplingProp} in the general setting.

We introduce first a notion of communicating boundary points taken from Evans \cite{evans2001}.
\begin{defi}
\label{DefCommunication} We say that two points $x \in \pD$, $y \in \pD$ communicate, and write $x \leftrightarrow y$ if $tx + (1-t)y \in D$ for all $t \in (0,1)$, $n_x \cdot (y-x) > 0$ and $n_y \cdot (x-y) > 0$. Given a set $E \subset \pD$ we say that $x \in \pD$ communicates with $E$ and write $x \leftrightarrow E$ if $x \leftrightarrow y$ for all $y \in E$. Given two sets $E_1, E_2 \subset \pD$, we say that $E_1$ and $E_2$ communicate, and write $E_1 \leftrightarrow E_2$ if $x \leftrightarrow y$ for all $(x,y) \in E_1 \times E_2$.
\end{defi}

Since $D$ is regular, the condition $tx + (1-t)y \in D$ for all $t \in (0,1)$ implies that $n_x \cdot (y-x) \geq 0$. The previous notion forbids the case where $(y-x)$ is tangent to $\pD$ at $x$. 

Recall that we denote by $\mathcal{H}$ the $n-1$ dimensional Hausdorff measure. The goal of this subsection is to prove the following lemma.

\begin{lemma}
\label{LemmaDefiningCriticalRegion}
There exists $\kappa_0$, $d_0 > 0$, $F \subset \pD$, $\mathcal{R} \subset \pD$ with $F$, $\mathcal{R}$ compact and $F \leftrightarrow \mathcal{R}$ such that $\underset{(x,y) \in F \times \mathcal{R}}{\inf} \|x-y\| \geq d_0$ and $\mathcal{H}(F) \wedge \mathcal{H}(\mathcal{R}) \geq \kappa_0$. 

\end{lemma}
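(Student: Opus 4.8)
The plan is to exhibit two small spherical caps on $\partial D$ that communicate with each other, with a quantitative lower bound on their surface measure and on the distance between them. The key geometric input is a result of Evans \cite{evans2001} (already invoked in the paper) asserting that for a $C^1$ domain there exist points $x_\star, y_\star \in \partial D$ with $x_\star \leftrightarrow y_\star$; indeed, since $D$ is open and bounded one may, for instance, take $x_\star$ and $y_\star$ to be the two endpoints of a maximal-length chord, or more simply invoke the fact that any interior segment whose endpoints lie on $\partial D$ and which is not tangent to $\partial D$ at either endpoint realizes the communication relation, and such a segment exists because $D$ is $C^2$ with nonempty interior. Fix such a pair $(x_\star, y_\star)$; note $\|x_\star - y_\star\| =: 3d_0 > 0$ and that the two normals satisfy $n_{x_\star}\cdot(y_\star - x_\star) > 0$ and $n_{y_\star}\cdot(x_\star - y_\star) > 0$, strictly.

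First I would set up the caps. The communication relation $x \leftrightarrow y$ is defined by strict inequalities ($tx+(1-t)y \in D$ for $t\in(0,1)$, $n_x\cdot(y-x)>0$, $n_y\cdot(x-y)>0$), so it is an open condition in $(x,y)$: since $D$ is $C^2$, the maps $x \mapsto n_x$ and $(x,y,t)\mapsto tx+(1-t)y$ are continuous, and the set $\{(x,y)\in(\partial D)^2 : x\leftrightarrow y\}$ is relatively open in $(\partial D)^2$ — this needs a short argument that the whole open segment stays in $D$ under a small perturbation, which follows from compactness of $[0,1]$ and the fact that $\operatorname{dist}(tx_\star+(1-t)y_\star,\partial D)$ is bounded below by a positive constant on, say, $t\in[\varepsilon,1-\varepsilon]$, while near the endpoints one uses the strict transversality $n_{x_\star}\cdot(y_\star-x_\star)>0$ together with the $C^2$ (hence uniform interior ball) property to keep the segment inside. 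Consequently there is $\rho>0$ such that $F := \overline{B(x_\star,\rho)}\cap\partial D$ and $\mathcal{R} := \overline{B(y_\star,\rho)}\cap\partial D$ satisfy $F \leftrightarrow \mathcal{R}$; shrinking $\rho$ if necessary so that $\rho < d_0$, we get $\inf_{(x,y)\in F\times\mathcal{R}}\|x-y\| \geq 3d_0 - 2\rho \geq d_0 > 0$.

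Next I would bound the Hausdorff measure from below. Both $F$ and $\mathcal{R}$ are closed subsets of the compact set $\partial D$, hence compact. Since $\partial D$ is a $C^2$ hypersurface, near $x_\star$ it is the graph of a $C^2$ function over the tangent plane $n_{x_\star}^\perp$, so for $\rho$ small enough the orthogonal projection $\pi$ onto $n_{x_\star}^\perp$ maps $F$ onto a set containing a disk of radius $\rho/2$ in $n_{x_\star}^\perp$, and $\pi$ restricted to $F$ is bi-Lipschitz with constants depending only on the $C^2$ norm of the local chart. By the Lipschitz estimate \eqref{EqMattila} applied to $\pi^{-1}$ (exactly as used in Lemma \ref{LemmaPositiveMeasure}), $\mathcal{H}(F) \geq c\,\rho^{n-1}$ for a constant $c>0$, and symmetrically for $\mathcal{R}$. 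Setting $\kappa_0 := c\rho^{n-1}$ gives $\mathcal{H}(F)\wedge\mathcal{H}(\mathcal{R}) \geq \kappa_0$, completing the proof.

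**Main obstacle.** The delicate point is the openness of the communication relation at the endpoints of the segment: establishing that a small perturbation of $x_\star$ and $y_\star$ (and of all interior points of the chord simultaneously) keeps the entire open segment inside $D$. Away from the endpoints this is a routine compactness-plus-continuity argument, but near $\partial D$ one must combine the strict transversality condition in Definition \ref{DefCommunication} with the interior-ball / cone condition (Definition \ref{DefiUniformCone}) to rule out the segment exiting $D$ tangentially. Everything else is a standard $C^2$-chart-plus-Lipschitz-projection computation of the type already carried out in Lemma \ref{LemmaPositiveMeasure}.
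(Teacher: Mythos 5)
Your proposal is correct and follows essentially the same route as the paper: fix a communicating pair $x_\star \leftrightarrow y_\star$, use the openness of the communication relation to thicken to small caps around each point (the paper isolates precisely this openness statement as Lemma~\ref{LemmaBallCommunication}, whose proof handles the endpoint transversality issue you flag as the ``main obstacle'' via the uniform interior ball condition), and shrink the radii so that the caps stay at distance at least $d_0$. The only difference is cosmetic: you sketch the openness argument inline and spell out the lower bound on $\mathcal{H}(F)\wedge\mathcal{H}(\mathcal{R})$ via a Lipschitz chart estimate, whereas the paper cites Lemma~\ref{LemmaBallCommunication} and leaves the (immediate) positivity of the cap measures implicit.
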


Recall that $d(D)$ denotes the diameter (in the usual sense) of $D$ and that for $x \in \RR^n$ and $r > 0$, we write $B(x,r) = \{y \in \RR^n, \|x-y\| < r\}$ for the Euclidian ball centered at $x$, with radius $r$, in $\RR^n$. We denote $\bar{B}(x,r)$ the corresponding closed ball.
\begin{notat}
For $x \in \pD$, $r > 0$, we set $B_{\pD}(x,r) := B(x,r) \cap \pD$. 
\end{notat} 

\begin{lemma}
\label{LemmaBallCommunication}
Let $x, y \in \pD$ with $x \leftrightarrow y$. There exists $\epsilon_0 > 0$ such that $\BpD(x,\epsilon_0) \leftrightarrow \BpD(y, \epsilon_0)$.
\end{lemma}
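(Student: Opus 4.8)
The plan is to prove Lemma \ref{LemmaBallCommunication} by a compactness/continuity argument, showing that the three conditions defining $x \leftrightarrow y$ are all \emph{open} conditions that persist under small perturbations of both endpoints. First I would recall the three requirements from Definition \ref{DefCommunication}: (a) $tx + (1-t)y \in D$ for all $t \in (0,1)$; (b) $n_x \cdot (y-x) > 0$; and (c) $n_y \cdot (x-y) > 0$. Since $D$ is $C^2$, the map $z \mapsto n_z$ is continuous on $\pD$, so conditions (b) and (c) are clearly stable: the functions $(x',y') \mapsto n_{x'} \cdot (y'-x')$ and $(x',y') \mapsto n_{y'} \cdot (x'-y')$ are continuous on $\pD \times \pD$ and strictly positive at $(x,y)$, hence strictly positive on a neighborhood, which gives some $\epsilon_1 > 0$ with (b) and (c) holding for all $(x',y') \in \BpD(x,\epsilon_1) \times \BpD(y,\epsilon_1)$.

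The main work is condition (a). The key observation is that, because (b) and (c) hold \emph{strictly}, the open segment $(x,y)$ stays uniformly away from $\pD$ in a neighborhood of its endpoints $x$ and $y$ — near $x$ the segment enters $D$ transversally (angle bounded below), similarly near $y$ — and on the compact middle portion $\{tx+(1-t)y : t \in [\delta,1-\delta]\}$ the segment lies in the \emph{open} set $D$, hence at positive distance $\rho_\delta > 0$ from $\pD = \partial D$. I would make this quantitative: choose $\delta$ small enough (using the interior cone / ball condition of Definition \ref{DefiUniformCone}, or simply the $C^2$ regularity and the strict inequalities (b),(c)) so that the portions $t \in (0,\delta]$ and $t \in [1-\delta,1)$ of the segment lie in $D$; then the whole closed segment $\{tx+(1-t)y: t\in[0,1]\}$ minus its endpoints lies in $D$, and the portion with $t \in [\delta', 1-\delta']$ for a slightly smaller $\delta'$ has distance at least some $\rho>0$ to $\pD$. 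Now for $(x',y')$ with $\|x'-x\|, \|y'-y\|$ small, every point $tx'+(1-t)y'$ is within $2\max(\|x'-x\|,\|y'-y\|)$ of the corresponding point $tx+(1-t)y$; taking $\epsilon_2 < \rho/2$ handles the middle portion, and near the endpoints one uses the transversality (a graph representation of $\pD$ near $x$ and near $y$ from the $C^2$ assumption) to check the perturbed segment still enters $D$. Setting $\epsilon_0 = \min(\epsilon_1,\epsilon_2)$ concludes.

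I expect the endpoint analysis to be the main obstacle: verifying that $tx'+(1-t)y' \in D$ for $t$ near $0$ (and near $1$) when only $x'$ (resp.\ $y'$) is perturbed requires a local argument. The clean way is to write $\pD$ near $x$ as the graph of a $C^2$ function over the tangent hyperplane $n_x^\perp$, with $D$ locally $\{$ above the graph $\}$; the direction $y-x$ has $n_x$-component bounded below by (b), so for small $t>0$ and small perturbations the point $tx'+(1-t)y' - x'$ has positive $n_{x'}$-component dominating the $O(t^2)$ curvature term and the $O(\epsilon)$ perturbation, placing it strictly inside $D$. An alternative, perhaps shorter, route avoiding explicit graphs: argue by contradiction using compactness — if no such $\epsilon_0$ existed, there would be sequences $x_k \to x$, $y_k \to y$ and $t_k \in (0,1)$ with $t_k x_k + (1-t_k) y_k \notin D$; extracting a convergent subsequence $t_k \to t_*$, one gets $t_* x + (1-t_*) y \notin D$ (since $\RR^n \setminus D$ is closed), forcing $t_* \in \{0,1\}$ by (a) for $(x,y)$; but then the strict transversality (b) or (c) at the corresponding endpoint is contradicted, since points $t_k x_k + (1-t_k) y_k$ with $t_k \to 0$ and $n_{x_k}\cdot(y_k - x_k)$ bounded below must eventually lie in $D$ by the local graph picture. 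This contradiction-based version keeps the LaTeX short and defers the quantitative estimates to the (standard) interior-cone consequence of $C^2$ regularity already invoked in the paper.
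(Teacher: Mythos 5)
Your proposal is correct and follows essentially the same route as the paper: treat (b) and (c) as open conditions via continuity of $z\mapsto n_z$, handle the middle of the segment by the compactness/positive-distance argument, and handle the endpoint portions by transversality through the uniform interior ball condition (the paper carries out this last step explicitly in its Step~1 by a short quadratic estimate inside the tangent ball $B_z$ of radius $r_D$). The only small slips are that the interior ball condition is recalled inside the paper's proof rather than in Definition~\ref{DefiUniformCone} (which states the cone condition), and in your contradiction sketch the case $t_k\to 0$ should invoke $n_{y_k}\cdot(x_k-y_k)$, i.e.\ condition (c), not $n_{x_k}\cdot(y_k-x_k)$.
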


\begin{proof}
\textbf{Step 1.} Recall first that since $D$ is $C^2$, $D$ satisfies the uniform ball condition: there exists $r_D > 0$ such that for all $z \in \pD$, there exists ${B}_z$ a ball of radius $r_D$ with center $z + r_D n_z$ such that $B_z \subset D$ and $\bar{B}_z \cap \pD = \{z\}$.  As a consequence, for $\beta > 0$ to choose later, setting $t_0 = \frac{r_D \beta}{2d(D)} \wedge \frac14$, there holds that for all $x, z \in \pD$ with $n_z \cdot \frac{x-z}{\|x-z\|} \geq \frac{\beta}{2}$, $(1-t)z + tx \in B_z \subset D$. Indeed
\begin{align*}
\|(1-t)z + tx - z - r_D n_z\|^2 &= t^2 \|x-z\|^2 + r_D^2 - 2t r_D \big( n_z \cdot (x-z) \big) \\
&\leq r_D^2 + t\|x-z\| \Big(td(D) - r_D \beta \Big)
\end{align*} 
and since $t < t_0 < \frac{r_D\beta}{d(D)}$, the result follows.

\vspace{.5cm} 

\textbf{Step 2.}
Let $x, y \in \pD$ with $x \leftrightarrow y$. We have $n_x \cdot (y-x) > 0$, $n_y \cdot (x-y) > 0$ and $x \ne y$, hence $\beta := (n_y \cdot \frac{(x-y)}{\|x-y\|}) \wedge (n_x \cdot \frac{(y-x)}{\|y-x\|})  > 0$. Since $z \to n_z$ is continuous by regularity of $D$, there exists $\delta > 0$ such that for all $x' \in \BpD(x,\delta)$, $y' \in \BpD(y,\delta)$, $(n_{y'} \cdot \frac{(x'-y')}{\|x'-y'\|}) \wedge (n_{x'} \cdot \frac{(y'-x')}{\|y'-x'\|}) \geq \frac{\beta}{2}$. By Step 1, for all $t \in (0,t_0)$,
$$ (1-t)y' + tx' \in B_{y'} \subset D, $$
and, for all $t \in (1-t_0, 1)$, 
$$ (1-t)y' + tx' \in B_{x'} \subset D. $$
We conclude that for all $t \in (0,t_0) \cup (1-t_0, 1)$, $tx' + (1-t)y' \in D$.

\vspace{.5cm}

\textbf{Step 3.}
Since $x \leftrightarrow y$ by assumption, for all $t \in [t_0, 1-t_0]$, $tx + (1-t)y \in D$. By compactness and continuity of $a \to d(a,\pD) := \inf_{z \in \pD} \|a-z\|$, there exists $\eta > 0$ such that for all $t \in [t_0, 1-t_0]$, $B((1-t)y + tx, \eta) \subset D$. Hence, for $\delta$ given by Step 2, for all $x' \in \BpD(x,\delta \wedge \eta)$, $y' \in \BpD(y,\delta \wedge \eta)$, for all $t \in [t_0, 1-t_0]$, 
$$ \|(1-t)y' + tx' - (1-t)y - tx\| \leq \max(\|y'-y\|,\|x'-x\|) < \eta, $$
so that $(1-t)y' + tx' \in B((1-t)y+tx, \eta) \subset D$. Setting $\epsilon_0 = \delta \wedge \eta$, we conclude that, for all $x' \in \BpD(x,\epsilon_0)$, $y' \in \BpD(y,\epsilon_0)$,
$$n_{y'} \cdot (x' - y') >  0 \qquad \text{and} \qquad n_{x'} \cdot (y'-x') > 0$$ by Step 2 and for all $t \in (0,1)$, $tx' + (1-t)y' \in D$ by Steps 1 and 2. 
\end{proof}

\begin{proof}[Proof of Lemma \ref{LemmaDefiningCriticalRegion}.] Let $x,y \in \pD$ such that $x \leftrightarrow y$. 
Set $\bar{d} = \|x-y\|$. Using Lemma \ref{LemmaBallCommunication}, there exists $\epsilon_0> 0$ such that, setting $V_x := \BpD(x,\epsilon_0)$, $V_y:= \BpD(y,\epsilon_0)$, $V_x \leftrightarrow V_y$. Upon reducing the value of $\epsilon_0$, we can assume that for any $x' \in V_x, y' \in V_y$, $\|x' - y' \| \geq \frac{\bar{d}}{2}$. 
We conclude by setting $F = \bar{B}(y, \frac{\epsilon_0}{2}) \cap \pD$, $\mathcal{R} = \bar{B}(x,\frac{\epsilon_0}{2}) \cap \pD$ and $d_0 = \frac{\bar{d}}{2}$.
\end{proof}

\subsection{Uniform lower bound on the density of the $n_0$-th collision.}

We introduce the following notation.
\begin{notat}
Let $(x_0,v_0) \in \partial_+ G \cup (D \times \RR^n)$. For a free-transport process $(X_t, V_t)_{t \geq 0}$ with initial condition $X_0 = x_0$, $V_{0-} = v_0$, we set $T_0 = \zeta(X_0,V_0)$ and for $i \geq 0$, $T_{i+1} = \inf \{t > T_i, X_t \in \pD\}$. For all $k \geq 1$, we denote $P^k_{v_0}(x_0,dz)$ the law of $X_{T_k}$.  
\end{notat}

The goal of this section is to prove the following property.
\begin{prop}
\label{corollaryEvans}
There exist $n_0 \geq 1$, $\nu_0 > 0$ and $\delta_0 > 0$ such that, for all $(x_0,v_0) \in \partial_+ G \cup (D \times \RR^n)$,
\begin{align*}
P^{n_0}_{v_0}(x_0,dz) \geq \nu_0 dz,
\end{align*}
where we recall that $dz$ stands for the $n-1$ dimensional Hausdorff measure.
Moreover, for all $A \subset \pD$, setting $O_0 = \{\|V_{T_0}\| \ne \|V_{T_0-}\|, \dots, \|V_{T_{n_0-1}}\| \ne \|V_{T_{n_0-1}-}\|\}$,
\begin{align*}
\PP\Big(X_{T_{n_0}} \in A, \underset{i \in \{1,\dots,n_0\}}{\min} \|X_{T_{i}} - X_{T_{i-1}}\| \geq \delta_0 & \Big|(X_0, V_{0-}) = (x_0,v_0), O_0\Big)
\geq \nu_0 \mathcal{H}(A). 
\end{align*}
\end{prop}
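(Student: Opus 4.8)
The plan is to build the required number of steps $n_0$ by chaining together the "one-step" mechanism already used in the convex case (Lemma \ref{LemmaDensityCoupling} gives the density $\mu_x$ of $(\zeta(x,V),q(x,V))$ after a diffuse reflection with $V$ having density $c_0M(v)|v\cdot n_x|\mathbf 1_{\{v\cdot n_x>0\}}$), except that in a non-convex domain a single diffuse reflection from $x_0$ need not reach a given target region, so we must use several reflections and the communication structure supplied by Lemma \ref{LemmaDefiningCriticalRegion}. First I would fix the compact sets $F,\mathcal R\subset\pD$ with $F\leftrightarrow\mathcal R$, $\inf_{F\times\mathcal R}\|x-y\|\ge d_0$ and $\mathcal H(F)\wedge\mathcal H(\mathcal R)\ge\kappa_0$ from Lemma \ref{LemmaDefiningCriticalRegion}. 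The key one-step estimate is: there is a constant $c>0$ such that for every $x\in\pD$ there is a Borel set of directions leading, after one diffuse reflection, into $F$ (resp.\ into $\mathcal R$) with the length of the chord bounded below, and more precisely that the sub-probability kernel $P_{\cdot}^1(x,dz)$ restricted to $\{z:\ \text{chord}\ \ge\delta\}$ has a density bounded below by a positive constant on $F$ (and on $\mathcal R$). This follows by the same change of variables as in Lemma \ref{LemmaDensityCoupling}: $\mu_x(\tau,z)=c_0 M(\tfrac{z-x}{\tau})\tau^{-(n+2)}|(z-x)\cdot n_x||(z-x)\cdot n_z|$, integrated over an interval of $\tau$ for which $\|z-x\|/\tau$ lands in the region $\{0<\|v\|\le\delta_1\}$ where $M\ge\bar M\ge\kappa_1>0$ (exactly as in Step 1 of the proof of Proposition \ref{CouplingProp}), using that $\|z-x\|\le d(D)$ is bounded and $|(z-x)\cdot n_z|$ is bounded below by compactness whenever $\|z-x\|\ge d_0$.

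Next I would argue as follows. By the interior cone condition (Definition \ref{DefiUniformCone}) together with the uniform ball condition, from any $x_0\in\pD$ there is, after one diffuse reflection, a set of directions with $\Upsilon$-probability $\ge p_0>0$ (those with $\cos\theta_1>\beta$ and bounded norm, as in the proof of Proposition \ref{PropExplosion}) along which the next boundary point lies in a fixed compact subset $G_0$ of $\pD$ and the chord length is $\ge h/N$; more usefully, I would use the one-step estimate above to get: there is an integer $m_1$ and a constant $\nu_1>0$ such that for every $x_0$, $P^{m_1}_{v_0}(x_0,\cdot)$ restricted to chords $\ge d_0$ dominates $\nu_1\,\mathcal H|_{F}$. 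Actually the cleanest route: take $m_1=1$. Indeed I would show that $P^1_{v_0}(x_0,dz)\ge c\,\mathbf 1_{F}(z)\,\mathcal H(dz)$ for all $x_0\in\pD$ (uniform lower bound landing in $F$), because from any $x_0$ and any target $z\in F$ the straight segment $[x_0,z]$ may leave $D$, so this is false in general — hence we genuinely need $x_0$ to already communicate with $F$. So the real structure is a two-stage argument: (i) from an arbitrary $x_0\in\pD$, after some fixed number $k_0$ of diffuse reflections, with probability bounded below the process lands somewhere in a fixed open subset $U_0\subset\pD$ with $\overline{U_0}\leftrightarrow\mathcal R$, using the cone condition to "spread out"; then (ii) one more diffuse reflection from any point of $U_0$ produces a density bounded below on $\mathcal R$ (by the one-step estimate, since $\overline{U_0}\leftrightarrow\mathcal R$ guarantees chords stay inside $D$). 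Concatenating via Chapman–Kolmogorov, with $n_0=k_0+1$, gives $P^{n_0}_{v_0}(x_0,dz)\ge\nu_0\,dz$ on a fixed nonempty open set, and then one absorbs the restriction to all of $\pD$ by covering $\pD$ with finitely many such images and enlarging $n_0$; alternatively one just states the bound on $\mathcal R$ and that suffices for the coupling. For the second (quantitative-with-speeds) statement, I would track the event $O_0=\{\|V_{T_i}\|\ne\|V_{T_i-}\|,\ i=0,\dots,n_0-1\}$, i.e.\ that all of the first $n_0$ reflections are diffuse: conditioning on $O_0$ only changes each reflection law by the deterministic factor $\alpha(x)\ge\alpha_0$ in numerator and a normalisation, so the same lower bounds hold up to replacing $\nu_0$ by $\nu_0\alpha_0^{n_0}$ (times the normalisation), and the chord condition $\min_i\|X_{T_i}-X_{T_{i-1}}\|\ge\delta_0$ is enforced at each step exactly as in the one-step estimate by keeping $\tau$ in the window where $\|z-x\|/\tau\le\delta_1$, which forces $\tau\gtrsim\|z-x\|\gtrsim\delta_0$.

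Concretely the steps, in order: (1) import $F,\mathcal R,d_0,\kappa_0$ from Lemma \ref{LemmaDefiningCriticalRegion} and, via Lemma \ref{LemmaBallCommunication}, an open set $U_0$ with $\overline{U_0}\subset$ a communication ball so that $\overline{U_0}\leftrightarrow\mathcal R$; (2) prove the one-step lemma: $\inf_{x\in\overline{U_0}}\frac{d}{d\mathcal H(z)}\bigl(P^1_{\cdot}(x,dz)\mathbf 1_{\{\|z-x\|\ge d_0\}}\bigr)\ge c>0$ on $\mathcal R$, by the Lemma \ref{LemmaDensityCoupling} change of variables plus the $\bar M$-lower bound and compactness, exactly mirroring Step 1 of the proof of Proposition \ref{CouplingProp}; (3) prove a "reaching $U_0$" lemma: there is $k_0\ge1$ and $p_0>0$ with $P^{k_0}_{v_0}(x_0,U_0)\ge p_0$ for all $(x_0,v_0)$, using the cone condition to guarantee positivity of each diffuse step and a compactness/continuity argument to make the bound uniform (this is where Evans' geometry of $C^1$ domains enters, to ensure $U_0$ is actually accessible in a bounded number of steps from everywhere); (4) Chapman–Kolmogorov: $P^{n_0}_{v_0}(x_0,dz)\ge p_0\,c\,\mathbf 1_{\mathcal R}(z)\,\mathcal H(dz)=:\nu_0\,\mathcal H(dz)$ with $n_0=k_0+1$; (5) redo (3)–(4) on the event $O_0$, picking up the factor $\alpha_0^{n_0}$ and tracking the chord lower bound $\delta_0:=d_0\wedge(\text{the window bound from step 2})$, to get the final conditional estimate. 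I expect the main obstacle to be Step (3): proving that a fixed open target region of $\pD$ is reachable, with uniformly positive probability and in a uniformly bounded number of diffuse reflections, from \emph{every} starting configuration — this requires a compactness argument over $\pD$ combined with the interior cone condition and the communication structure, and is the place where the non-convexity genuinely bites; everything else is a routine adaptation of the convex-case computations.
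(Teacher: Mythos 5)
Your scaffolding is reasonable in spirit (one-step density via the change of variables of Lemma~\ref{LemmaDensityCoupling}/\ref{LemmaDensityExtension}, the lower bound on $M$ near the origin from Hypothesis~\ref{HypoM}, chord-length control by keeping $\tau$ in a window, and Chapman--Kolmogorov chaining), but the plan has two genuine gaps.

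The crux is your Step~(3), which you yourself flag as ``the main obstacle'': a \emph{uniform} reaching lemma, i.e.\ a fixed $k_0$ and $p_0>0$ with $P^{k_0}_{v_0}(x_0,U_0)\ge p_0$ for \emph{every} $(x_0,v_0)$. You invoke ``a compactness argument over $\pD$ combined with the interior cone condition and the communication structure'' and say vaguely that ``this is where Evans' geometry of $C^1$ domains enters,'' but you do not say how. This is exactly the content that needs to be supplied, and it is not routine. The paper never proves such a reaching lemma; instead it imports Evans' Proposition~2.7 (Proposition~\ref{PropEvans}), which asserts that there is a \emph{single} integer $N$ and a \emph{finite} set $\Delta\subset\pD$ such that any two boundary points are joined by a communicating chain of length exactly $N$ with all intermediate points in $\Delta$. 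The uniform fixed chain length is the whole point: it is what produces a fixed $n_0$. Corollary~\ref{CorollGivingDelta} then makes this chain quantitatively and stably communicating (uniform lower bound $\delta$ on the products $|(z_i-z_{i+1})\cdot n_{z_i}||(z_i-z_{i+1})\cdot n_{z_{i+1}}|$, stability under $\eta$-perturbation of the intermediate points). Your plan acknowledges that Evans matters but does not deploy this statement, and without it there is no obvious way to get a uniformly bounded number of steps.

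The second issue is the extension from a fixed target set $\mathcal{R}$ to all of $\pD$, which the Proposition requires ($P^{n_0}_{v_0}(x_0,dz)\ge\nu_0\,dz$ as measures on all of $\pD$). Your proposal --- ``cover $\pD$ with finitely many such images and enlarging $n_0$'' --- does not work as stated: the chain cannot be paused, so if reaching region $j$ needs $k_j$ steps and $n_0=\max_j k_j$, having density at step $k_j$ in region $j$ says nothing about the density at step $n_0$ there. (Your fallback remark that a bound on a fixed compact set would suffice downstream is accurate for the use in Lemma~\ref{LemmaIndepProcessGlobal}, which only takes $A=F$, but it does not prove the Proposition as stated.) The paper avoids this entirely: for each target $y\in\pD$ it builds a robustified chain $x\to z_1'\to\cdots\to z_{N+1}'\to y$ of fixed length $n_0=N+2$ from $x=q(x_0,v_0)$ directly to $y$, then integrates the one-step densities over $\eta$-balls around the $z_i(y)\in\Delta$. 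This produces the density lower bound at every $y\in\pD$ simultaneously with a fixed $n_0$, something a two-stage ``reach a fixed $U_0$, then hop once'' structure cannot deliver without an additional idea. (Note also that Lemma~\ref{LemmaDefiningCriticalRegion}, which your Step~(1) relies on, is not used in the paper's proof of this Proposition; it is reserved for Proposition~\ref{CouplingPropGeneral}.) The handling of $O_0$ (diffuse reflections) and of the chord condition in your Step~(5) is fine and matches the paper, but it sits on top of the unproved Step~(3).
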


We recall first a result from Evans from which we will derive a key feature of our model:
\begin{prop}[\cite{evans2001}, Proposition 2.7]
\label{PropEvans}
For any $C^1$ bounded domain $D$, there exist an integer $N$ and a finite set $\Delta \subset \partial D$ for which the following holds: for all $z', z'' \in \partial D$, there exist $z_0, \dots, z_N$ with $z' = z_0$, $z'' = z_N$, $\{z_1, \dots, z_{N-1}\} \subset \Delta$ and $z_k \leftrightarrow z_{k+1}$ for $0 \leq k \leq N-1$.
\end{prop}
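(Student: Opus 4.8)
The plan is to organize the argument around the \emph{communication relation} on $\pD$. For $x\in\pD$ write $C(x)=\{y\in\pD:x\leftrightarrow y\}$. Two structural facts drive everything. First, $\leftrightarrow$ is a ``uniformly open'' relation: this is precisely Lemma~\ref{LemmaBallCommunication}, which upgrades $x\leftrightarrow y$ to $\BpD(x,\epsilon_0)\leftrightarrow\BpD(y,\epsilon_0)$. Second, $C(x)\neq\emptyset$ for every $x\in\pD$: I would prove this by following the inward normal ray, letting $y:=q(x,n_x)$ be the first point of $\pD$ met by $\{x+sn_x:s>0\}$ (well defined since $D$ is bounded and $x+sn_x\in D$ for small $s>0$). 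Then $(1-t)x+ty=x+t\,\zeta(x,n_x)\,n_x\in D$ for $t\in(0,1)$, while $n_x\cdot(y-x)=\zeta(x,n_x)>0$ and $n_y\cdot(x-y)=-\zeta(x,n_x)\,(n_x\cdot n_y)\geq 0$, with strict inequality unless the ray is tangent to $\pD$ at $y$; the tangential exit directions are avoided by replacing $n_x$ by a nearby $u$ in the open half-cone $\{u\in\mathbb{S}^{n-1}:n_x\cdot u>0\}$, using $C^1$ (hence $C^2$) regularity. Thus each $C(x)$ is nonempty and relatively open in $\pD$, so in particular $\mathcal{H}(C(x))>0$.

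Next I would consider the relation ``$x$ and $y$ are joined by a finite communication chain''. It is an equivalence relation (reflexivity from $C(x)\neq\emptyset$, symmetry since $\leftrightarrow$ is symmetric, transitivity by concatenation), and each class $R(x)$ is clopen in $\pD$: it is open because the last edge of any chain can be extended to a neighbourhood in $\pD$ via Lemma~\ref{LemmaBallCommunication}, and closed because if $y_j\to y$ with $y_j\in R(x)$ then, choosing $w\in C(y)$ and $\epsilon$ with $\BpD(y,\epsilon)\leftrightarrow\BpD(w,\epsilon)$, for $j$ large $y_j\leftrightarrow w\leftrightarrow y$, so $y\in R(x)$. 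If $\pD$ were connected this would already give a single class; for possibly disconnected $\pD$ I would argue by contradiction from two distinct classes $E,E'$, each a clopen (hence compact) union of boundary components. Pick $a\in E$, $b\in E'$ realizing $\mathrm{dist}(E,E')$. By minimality $(a,b)\cap\pD=\emptyset$, so $(a,b)$ lies in the open set $D\cup(\RR^n\setminus\bar D)$; first-order optimality of $a$ and $b$ forces $a-b\parallel n_a$ and $b-a\parallel n_b$, and the topology of the complement of $D$ (moving outward from $a\in E$ never reaches the ``hole'' $E'$) pins the signs so that $n_a\cdot(b-a)>0$ and $n_b\cdot(a-b)>0$. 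In particular $(a,b)$ starts inside $D$ at $a$, hence $(a,b)\subset D$ and $a\leftrightarrow b$, contradicting $E\neq E'$. So there is a single class: any two points of $\pD$ are joined by a finite chain.

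Finally, to get a \emph{uniform} $N$ and a \emph{finite} $\Delta$, I would use compactness of $\pD$: since $\{C(z)\}_{z\in\pD}$ is an open cover (for $x\in\pD$ pick any $z\in C(x)$, then $x\in C(z)$), extract a finite subcover $C(z_1),\dots,C(z_m)$ and set $\Delta_1=\{z_1,\dots,z_m\}$, so every $x\in\pD$ communicates with some $z_i\in\Delta_1$. Fix a base point $x_*$; chain-connectedness yields, for each $i$, a chain from $z_i$ to $x_*$ with finitely many intermediate vertices. Let $\Delta$ be the finite union of $\Delta_1$, all these intermediate vertices, $\{x_*\}$, and one companion $w'\in C(w)$ for each $w$ already listed. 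For arbitrary $x,y\in\pD$, concatenate: $x\leftrightarrow z_i$, then the fixed chain $z_i\to x_*$, then the reversed chain $x_*\to z_j$ where $y\leftrightarrow z_j$, then $z_j\leftrightarrow y$; all intermediate vertices lie in $\Delta$. After equalizing lengths by inserting back-and-forth detours $w,w',w$ through $\Delta$ (plus one further insertion to adjust parity), every such chain has the common length $N$, where $N$ is bounded in terms of the finitely many base chains. This proves the statement.

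The genuine obstacle is the second paragraph. Ruling out the tangential cases in $C(x)\neq\emptyset$ needs a small transversality argument, but the real work is the disconnected-boundary case: showing that the segment between the two nearest points of distinct classes lies in $D$ and meets $\pD$ transversally at both ends is where $C^1$ (here $C^2$) regularity of the boundary, together with the global topology of $D$ and of $\RR^n\setminus\bar D$, has to be used carefully — this is exactly what Evans's argument in \cite{evans2001} carries out for general $C^1$ domains.
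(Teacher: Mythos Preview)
The paper does not prove this proposition at all: it is quoted verbatim from Evans \cite[Proposition~2.7]{evans2001} and used as a black box, so there is no in-paper argument to compare your attempt against.

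As for your reconstruction, the overall architecture --- nonemptiness and openness of $C(x)$, clopenness of chain-connected classes, a minimal-distance argument between distinct classes, then a compactness extraction of $\Delta$ and a uniform $N$ via back-and-forth padding --- is the natural one, and indeed the paper separately cites Evans's Lemma~2.3 for the fact that $U_z=\{z'\in\pD:z'\leftrightarrow z\}$ is open and nonempty, which is precisely your first ingredient. Two points are worth flagging. First, you lean on Lemma~\ref{LemmaBallCommunication}, whose proof in this paper uses the uniform interior ball condition and hence $C^2$ regularity, while the proposition is stated for $C^1$ domains; for $C^1$ you would need the weaker openness statement from \cite{evans2001} directly. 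Second, your disconnected-boundary step has a real gap that you rightly identify: the claim ``$(a,b)\cap\pD=\emptyset$ by minimality'' fails if a third class meets the segment (you should minimise over \emph{all} pairs of distinct classes, not a fixed pair), and the assertion that the open segment then lies in $D$ rather than in $\RR^n\setminus\bar D$, together with the correct signs of $n_a\cdot(b-a)$ and $n_b\cdot(a-b)$, requires a genuine topological argument about how the connected components of $\RR^n\setminus\bar D$ border $D$. Your final paragraph concedes this and defers to \cite{evans2001}; that is exactly what the paper itself does.
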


\begin{coroll}
\label{CorollGivingDelta}
There exist $\delta > 0$ and $\eta > 0$ such that for all $(x_0,y_0) \in (\pD)^2$, there exists $z_1, \dots, z_{N+1} \in \Delta$, with $N$ and $\Delta$ given by Proposition \ref{PropEvans}, such that, setting $z_0 = x_0$, $z_{N+2} = y_0$, $z_i \leftrightarrow z_{i+1}$ for all $i \in \{0, \dots, N+1\}$ and 
\begin{align}
\label{Eq1CorollDelta}
|(z_i - z_{i+1}) \cdot n_{z_i}| |(z_i - z_{i+1}) \cdot n_{z_{i+1}}| \geq 2\delta,
\end{align}
moreover, for all $z_1' \in \BpD(z_1,\eta), \dots, z_{N+1}' \in \BpD(z_{N+1},\eta)$, setting $z_0' = z_0$, $z_{N+2}'= z_{N+2}$, $z'_i \leftrightarrow z'_{i+1}$ for all $i \in \{0, \dots, N+1\}$ and 
\begin{align}
\label{Eq2CorollDelta}
|(z'_i - z'_{i+1}) \cdot n_{z'_i}| |(z'_i - z'_{i+1}) \cdot n_{z'_{i+1}}| \geq \delta.\end{align}
\end{coroll}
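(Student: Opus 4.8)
\textbf{Proof proposal for Corollary \ref{CorollGivingDelta}.}

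The plan is to combine the Evans chain from Proposition \ref{PropEvans} with the robustness-under-perturbation statement of Lemma \ref{LemmaBallCommunication}, using compactness to make all the constants uniform. First, fix any pair $z',z''\in\pD$ with $z'\leftrightarrow z''$; applying Proposition \ref{PropEvans} to the pair $(z',z'')$ produces a chain $z'=w_0,w_1,\dots,w_N=z''$ with $\{w_1,\dots,w_{N-1}\}\subset\Delta$ and $w_k\leftrightarrow w_{k+1}$. To get a chain of the exact length $N+2$ claimed in the statement regardless of $(x_0,y_0)$, I would apply Proposition \ref{PropEvans} to $(x_0,y_0)$, obtaining $x_0=z_0, z_1,\dots,z_{N-1}\in\Delta, z_N=y_0$, and then pad the chain: since $\Delta$ is finite and every $w\in\Delta$ communicates with at least one other point, one can insert two extra intermediate points from $\Delta$ (e.g. repeat/route through a fixed pair $a\leftrightarrow b$ in $\Delta$ near the relevant part of the boundary, or more simply re-index so the chain has $N+1$ interior vertices) so that the total chain reads $z_0=x_0, z_1,\dots,z_{N+1}\in\Delta, z_{N+2}=y_0$ with $z_i\leftrightarrow z_{i+1}$ for all $i$. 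Since the original Evans argument already allows repeated vertices, this padding is harmless; the cleanest route is to observe that Proposition \ref{PropEvans} gives some chain of length at most $N+2$ and then lengthen any shorter chain by inserting a vertex $w\in\Delta$ with $w\leftrightarrow w$-adjacent neighbours — but because the tangency-free condition forbids $w\leftrightarrow w$, one instead inserts a \emph{pair} $w,w'$ with $w\leftrightarrow w'$ both communicating with the neighbouring vertices, which is possible because communication is an open condition (Lemma \ref{LemmaBallCommunication}) and $\Delta$ can be enlarged to a finite set dense enough for this. I would phrase this enlargement of $\Delta$ once and for all at the start.

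For the quantitative bound \eqref{Eq1CorollDelta}: for each pair $(a,b)\in(\pD)^2$ with $a\leftrightarrow b$ the quantity $|(a-b)\cdot n_a||(a-b)\cdot n_b|$ is strictly positive (it vanishes only if $a=b$ or one of the directions is tangent, both excluded by $a\leftrightarrow b$). There are only finitely many relevant pairs of ``types'': the interior edges have both endpoints in the finite set $\Delta$, so there are finitely many of them and a uniform lower bound $2\delta_1>0$ exists by taking the minimum over this finite set of communicating pairs in $\Delta\times\Delta$. The two boundary edges $z_0\leftrightarrow z_1$ and $z_{N+1}\leftrightarrow z_{N+2}$ have one free endpoint ($x_0$ or $y_0$); here I would \emph{not} try to bound $|(x_0-z_1)\cdot n_{x_0}|$ from below uniformly over all $x_0$ (it can be small), but instead note that for each $x_0$ Proposition \ref{PropEvans} lets us \emph{choose} $z_1\in\Delta$ with $x_0\leftrightarrow z_1$, and then $|(x_0-z_1)\cdot n_{x_0}||(x_0-z_1)\cdot n_{z_1}|>0$ for that particular choice. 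To make this uniform, cover $\pD$ by the finitely many open sets $\{x:\ x\leftrightarrow w,\ |(x-w)\cdot n_x||(x-w)\cdot n_w|>2\delta_w\}$ indexed by $w\in\Delta$ (these cover $\pD$ by Evans's proposition and by positivity), extract a finite subcover with a Lebesgue-type number, and set $2\delta=\min_w(2\delta_w\wedge 2\delta_1)$. This gives \eqref{Eq1CorollDelta} with a uniform $\delta$.

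For the perturbed version \eqref{Eq2CorollDelta}: once the chain $z_0,\dots,z_{N+2}$ and the bound \eqref{Eq1CorollDelta} are fixed, each of the $N+1$ communication relations $z_i\leftrightarrow z_{i+1}$ is stable under small perturbations of the free vertices by Lemma \ref{LemmaBallCommunication}, which yields $\epsilon_i>0$ with $\BpD(z_i,\epsilon_i)\leftrightarrow\BpD(z_{i+1},\epsilon_i)$. Simultaneously, the map $(a,b)\mapsto|(a-b)\cdot n_a||(a-b)\cdot n_b|$ is continuous (using continuity of $z\mapsto n_z$ from the $C^2$ regularity of $D$), so it stays above $\delta$ on a small enough product neighbourhood. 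Taking $\eta$ smaller than all the $\epsilon_i$ and than the radius ensuring the continuity bound gives \eqref{Eq2CorollDelta}. The only subtlety — and the main obstacle — is the uniformity of $\eta$ over \emph{all} $(x_0,y_0)$: since the endpoints $z_0=x_0$ and $z_{N+2}=y_0$ are not perturbed in \eqref{Eq2CorollDelta} (only the interior vertices $z_1,\dots,z_{N+1}\in\Delta$ are), and the interior vertices live in the \emph{fixed finite set} $\Delta$, there are only finitely many possible interior chains $(z_1,\dots,z_{N+1})\in\Delta^{N+1}$; for each one Lemma \ref{LemmaBallCommunication} and continuity give an admissible $\eta$ bounded below away from the free endpoints, but one must check the estimate near the two boundary edges is uniform in $x_0$ (resp. $y_0$). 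This follows from the same finite-subcover argument as above: $x_0$ ranges over one of finitely many open sets on which a fixed $w=z_1\in\Delta$ works with a margin, and the bound degrades continuously, so shrinking $\eta$ by the margin of that cover handles it. Assembling these finitely many constraints yields the single pair $(\delta,\eta)$.
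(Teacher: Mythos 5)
You have the right ingredients, but your chain construction in the first paragraph is a wrong turn that you do not cleanly resolve, and it is in tension with your second paragraph. If you apply Proposition \ref{PropEvans} directly to $(x_0,y_0)$ you obtain an $N$-edge chain $x_0 = w_0, w_1, \dots, w_{N-1} \in \Delta, w_N = y_0$ in which the first interior vertex $w_1$ is already \emph{fixed} by Evans; you cannot then also ``choose $z_1 \in \Delta$ adapted to $x_0$'' as your second paragraph proposes. Padding by enlarging $\Delta$ does not rescue this: Proposition \ref{PropEvans} fixes $N$ and $\Delta$ together, and the inserted vertices $w,w'$ you describe would depend on $(x_0,y_0)$, which destroys the ``finite minimum over communicating pairs in $\Delta \times \Delta$'' estimate you use for the interior edges. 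The clean resolution (the paper's) renders padding unnecessary: first choose $z_1, z_{N+1} \in \Delta$ \emph{separately} so that $x_0 \leftrightarrow z_1$ and $y_0 \leftrightarrow z_{N+1}$ with a uniform margin; \emph{then} apply Proposition \ref{PropEvans} to the pair $(z_1,z_{N+1}) \in \Delta^2$, which yields an $N$-edge chain $z_1, z_2, \dots, z_{N+1}$ with $z_2,\dots,z_N \in \Delta$. Prepending $z_0 = x_0$ and appending $z_{N+2} = y_0$ gives exactly $N+2$ edges with all interior vertices in $\Delta$, as required.

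Once the chain is built this way, the rest of your outline matches the paper in spirit but is imprecise where the paper is exact. The uniform margin for the two boundary edges comes from the function $I(x) = \max_{z \in \Delta} |(z-x)\cdot n_z|\,|(z-x)\cdot n_x|\,\mathbf{1}_{\{x \leftrightarrow z\}}$: each summand is lower semicontinuous since $U_z = \{x \in \pD : x \leftrightarrow z\}$ is open, $I>0$ on $\pD$ by Proposition \ref{PropEvans}, so compactness of $\pD$ gives $I \geq 2\delta_0 > 0$ uniformly. Your Lebesgue-number version is circular as stated --- for a \emph{prescribed} family $\{\delta_w\}$ the sets $\{x : x\leftrightarrow w,\ |(x-w)\cdot n_x||(x-w)\cdot n_w| > 2\delta_w\}$ need not cover $\pD$ unless the $\delta_w$ are already chosen via the lower-semicontinuity argument you are trying to replace, and since $\Delta$ is finite there is no ``subcover'' to extract. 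For \eqref{Eq2CorollDelta}, your plan is essentially the paper's: uniform continuity (Heine) of $H(x,z)=((z-x)\cdot n_x)((x-z)\cdot n_z)$ on the compact set $\pD\times\pD$ gives a single $\eta_0$ controlling the degradation of \eqref{Eq1CorollDelta} by at most $\delta/2$ per perturbed endpoint, and Lemma \ref{LemmaBallCommunication} applied to the finitely many communicating pairs in $\Delta\times\Delta$ gives $\eta_1$ preserving the relations $z'_i \leftrightarrow z'_{i+1}$; one takes $\eta = \eta_0 \wedge \eta_1$. (Minor: the chain has $N+2$ edges, hence $N+2$ communication relations to check, not $N+1$.)
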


\begin{proof}

\textbf{Step 1.} By \cite[Lemma 2.3]{evans2001}, for $z \in \pD$, the set $U_z = \{z' \in \pD, z' \leftrightarrow z\}$ is open in $\pD$ and non-empty. Using this result and the fact that $D$ is $C^1$, we find that for all $z \in \Delta$,
$$ x \to |(z-x) \cdot n_z| |(z-x) \cdot n_x| \mathbf{1}_{U_z}(x), $$
is lower semi-continuous, and positive on $U_z$. Using Proposition \ref{PropEvans}, that $\Delta$ is finite, and that the maximum of two lower semi-continuous functions is lower semi-continuous, we deduce that the function $I: \pD \to \RR_+$ defined by
$$I(x) =  \max_{z \in \Delta} \Big( |(z-x) \cdot n_z| |(z-x) \cdot n_x| \mathbf{1}_{U_z}(x) \Big), $$
is lower semi-continuous. Moreover, since for all $x \in \pD$, there exists $z \in \Delta$ such that $x \leftrightarrow z$ by Proposition \ref{PropEvans}, $I > 0$ on $\pD$. We conclude by compactness that there exists $\delta_0 > 0$ such that $I(x) > 2\delta_0$ for all $x \in \pD$.

\vspace{.3cm}

\textbf{Step 2.} Set $$\delta' := \frac12 \min_{z, z' \in \Delta, z \leftrightarrow z'} |(z-z') \cdot n_z| |(z-z') \cdot n_{z'}| > 0, $$
since $\Delta$ is finite.  Let $x_0,y_0 \in \pD$. Choose $z_1$ such that $I(x_0) = |(z_1 - x_0) \cdot n_{z_1}||(z_1 - x_0) \cdot n_{x_0}|$, $z_{N+1}$ such that $I(y_0) = |(z_{N+1} - y_0) \cdot n_{y_0}| |(z_{N+1} - y_0) \cdot n_{z_{N+1}}|$.
 By Proposition \ref{PropEvans}, there exists $z_2, \dots, z_N$ such that $z_i \leftrightarrow z_{i+1}$ for all $i \in \{1, \dots, N\}$. Since $z_0 = x_0$, $z_{N+2} = y_0$, $z_i \leftrightarrow z_{i+1}$ for all $i \in \{0, \dots, N+1\}$ and for all $i \in \{1, \dots, N\}$,  we have,
$$ |(z_i-z_{i+1}) \cdot n_{z_i}| |(z_i-z_{i+1}) \cdot n_{z_{i+1}}| \geq 2\delta', $$
while, using Step 1,
$$ \Big(|(z_1 - z_0) \cdot n_{z_0}| |(z_1 - z_0) \cdot n_{z_1}| \Big) \wedge \Big(|(z_{N+1} - z_{N+2}) \cdot n_{z_{N+1}}| |(z_{N+1} - z_{N+2}) \wedge n_{z_{N+2}}| \Big) \geq 2\delta_0. $$
We set $\delta = \delta_0 \wedge \delta'$ to conclude the proof of (\ref{Eq1CorollDelta}).

\vspace{.3cm}

\textbf{Step 3.} Consider the function $H$ defined on $(\pD)^2$ by
$$ H(x,z) = \big((z-x) \cdot n_x \big)\big((x-z) \cdot n_z \big). $$
Since $D$ is $C^1$, $H$ is continuous on $(\pD)^2$ and also uniformly continuous by compactness and Heine's theorem. Hence there exists $\eta_0$ such that, $$\Big[(x,z) \in (\pD)^2, (x',z') \in (\pD)^2, \|(x,z) - (x',z')\| \leq \eta_0 \Big] \implies \Big[ \big|H(x,z) - H(x',z') \big| \leq \frac{\delta}{2} \Big].$$
 On the other hand, for all $(x,y) \in (\pD)^2$ with $x \leftrightarrow y$, there exists $\epsilon_{x,y} > 0$ such that we have $\BpD(x, \epsilon_{x,y}) \leftrightarrow \BpD(y,\epsilon_{x,y})$, see Lemma \ref{LemmaBallCommunication}. Setting $\eta_1 = \min_{z,z' \in \Delta, z \leftrightarrow z'} \epsilon_{z,z'} > 0$, we deduce that for all $z,z' \in \Delta$ with $z \leftrightarrow z'$, $\BpD(z,\eta_1) \leftrightarrow \BpD(z',\eta_1)$. We claim that setting $\eta = \eta_1 \wedge \eta_0$ concludes the proof of (\ref{Eq2CorollDelta}). Indeed, for $z_1' \in \BpD(z_1,\eta)$, recalling that $z'_0 = x_0$ and (\ref{Eq1CorollDelta}),
$$ H(z_1',z_0') = H(z_1,z_0') - (H(z_1,z_0') - H(z_1',z_0')) \geq 2\delta - \frac{\delta}{2} \geq \frac{3\delta}{2}, $$ 
and the same argument applies replacing $z_1'$ by $z_{N+1}' \in \BpD(z_{N+1},\eta)$ and $z_0'$ by $z'_{N+2} = y_0$. Finally, for $i \in \{1, \dots, N\}$, $z'_i \in B_{\pD}(z_i,\eta)$, $z'_{i+1} \in \BpD(z_{i+1},\eta)$, we have $z'_i \leftrightarrow z'_{i+1}$ and 
\begin{align*}
 H(z_i',z_{i+1}') 
 &= H(z_i, z_{i+1}) - (H(z_i, z_{i+1}) - H(z_i,z_{i+1}')) -  (H(z_i,z_{i+1}') - H(z_i',z_{i+1}')) \\
 &\geq 2\delta - \frac{\delta}{2} - \frac{\delta}{2} \geq \delta. 
 \end{align*}

\vspace{-.5cm}

\end{proof}

Recall the notations $\zeta$ and $q$ from (\ref{defsigma}) and (\ref{defq}).
\begin{lemma}
\label{LemmaDensityExtension}
Let $x \in \pD$. For $V \sim c_0 M(v) |v \cdot n_x| \mathbf{1}_{\{v \cdot n_x > 0\}}$, the joint law of $(\zeta(x,V), q(x,V))$ admits a density $\mu_x$ on $\RR_+ \times  \pD$ given by
$$ \mu_x(\tau,z) = c_0 M\Big(\frac{z-x}{\tau}\Big) \frac1{\tau^{n+2}} |(z-x) \cdot n_x|  |(z-x) \cdot n_z| \mathbf{1}_{\{z \leftrightarrow x\}}. $$
\end{lemma}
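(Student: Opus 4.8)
The plan is to follow the proof of Lemma~\ref{LemmaDensityCoupling} essentially verbatim, the only new feature being that the non-convexity of $D$ forces us to keep track of the communication constraint $z\leftrightarrow x$. Fix $x\in\pD$ and a Borel set $A\subset\RR_+\times\pD$. Exactly as in \eqref{EqProbaLemmaDensity},
\[
\PP\big((\zeta(x,V),q(x,V))\in A\big)=\int_{\{v\cdot n_x>0\}}\mathbf{1}_{\{(\zeta(x,v),q(x,v))\in A\}}\,c_0M(v)\,|v\cdot n_x|\,dv,
\]
and it suffices to show that this equals
\[
I:=\int_0^\infty\!\!\int_{\pD}\mathbf{1}_{\{(\tau,z)\in A\}}\,c_0M\!\Big(\tfrac{z-x}{\tau}\Big)\tfrac1{\tau^{n+2}}\,|(z-x)\cdot n_x|\,|(z-x)\cdot n_z|\,\mathbf{1}_{\{z\leftrightarrow x\}}\,dz\,d\tau.
\]

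Next I would set up the change of variables. Writing $\mathcal{N}_x=\{v:v\cdot n_x>0\}$, note first that for any $v\in\mathcal{N}_x$ the open segment $(x,q(x,v))$ lies in $D$: a continuous path starting in the open set $D$ cannot leave $\bar D$ before crossing $\pD$, so the infimum defining $\zeta(x,v)$ is attained only after the path has remained in $D$. Consequently $n_x\cdot(q(x,v)-x)=\zeta(x,v)(v\cdot n_x)>0$, so $x\leftrightarrow q(x,v)$ holds if and only if $v\cdot n_{q(x,v)}<0$, i.e.\ if and only if $v$ is a non-grazing direction. Let $\Gamma_x\subset\mathcal{N}_x$ be the set of non-grazing directions; the complement $\mathcal{N}_x\setminus\Gamma_x$ is the set of directions hitting $\pD$ tangentially, which is Lebesgue-negligible for a $C^2$ domain. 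On $\Gamma_x$ the maps $v\mapsto\zeta(x,v)$ and $v\mapsto q(x,v)$ are $C^1$, by the implicit function theorem applied to a $C^2$ defining function of $\pD$ at transverse intersection points. I then claim that $\phi\colon(0,\infty)\times\{z\in\pD:z\leftrightarrow x\}\to\Gamma_x$, $\phi(\tau,z)=\frac{z-x}{\tau}$, is a $C^1$-diffeomorphism: surjectivity holds because $v=\frac{q(x,v)-x}{\zeta(x,v)}$ for $v\in\Gamma_x$ and, conversely, $z\leftrightarrow x$ guarantees $\frac{z-x}{\tau}\in\Gamma_x$ with $(\zeta,q)$-image $(\tau,z)$; injectivity holds because two distinct boundary points on a common ray from $x$ cannot both communicate with $x$, the nearer one lying on the open segment joining $x$ to the farther one, hence in $D$.

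The Jacobian computation is then literally the one in Lemma~\ref{LemmaDensityCoupling}: since the tangent space of $\pD$ at $z$ is $n_z^\perp$, the Jacobian of $\phi$ at $(\tau,z)$ equals $\frac{|(z-x)\cdot n_z|}{\tau^{n+1}}$. Performing this change of variables in $I$ and using $v=\frac{z-x}{\tau}$, $(\tau,z)=(\zeta(x,v),q(x,v))$, $|v\cdot n_x|=\frac1\tau|(z-x)\cdot n_x|$, one rewrites $I$ as $\int_{\Gamma_x}\mathbf{1}_{\{(\zeta(x,v),q(x,v))\in A\}}c_0M(v)|v\cdot n_x|\,dv$, which coincides with the left-hand side since $\mathcal{N}_x\setminus\Gamma_x$ is null; the factor $\mathbf{1}_{\{z\leftrightarrow x\}}$ is precisely what restricts the $z$-integral to the image of $\phi$. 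The only genuine obstacle is the geometric bookkeeping around $\phi$ — verifying that grazing directions are negligible, that $\zeta(x,\cdot)$ and $q(x,\cdot)$ are $C^1$ off the grazing set, and that the image of $\phi$ is (up to a null set) all of $\mathcal{N}_x$ — after which the argument reduces word for word to the convex case.
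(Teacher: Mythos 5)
Your proposal is correct and follows exactly the approach the paper intends: the paper's own proof is the single sentence ``The computation is the same as the one of Lemma~\ref{LemmaDensityCoupling}.'' You have carefully filled in the geometric bookkeeping that this pointer glosses over — identifying $\{z:z\leftrightarrow x\}$ as the image of the non-grazing directions $\Gamma_x$, checking that $\phi(\tau,z)=\frac{z-x}{\tau}$ is a bijection onto $\Gamma_x$, and invoking the (standard, and used implicitly elsewhere in the paper, e.g.\ via \cite[Lemma 2.3]{Esposito2013}) fact that the grazing set $\mathcal{N}_x\setminus\Gamma_x$ is Lebesgue-null for a $C^2$ domain.
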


\begin{proof}
The computation is the same as the one of Lemma \ref{LemmaDensityCoupling}.
\end{proof}

\begin{proof}[Proof of Corollary \ref{corollaryEvans}]
We will show that there exist $n_0 \geq 1$, $\nu_0 > 0$ and $\delta_0 > 0$ such that for all $(x_0,v_0) \in \partial_+ G \cup (D \times \RR^n)$, for all $A \subset \pD$,
\begin{align*}
P_A&:= \PP \Big(X_{T_{n_0}} \in A, \underset{i \in \{1, \dots, n_0\}}{\min} \|X_{T_i} - X_{T_{i-1}}\| \geq \delta_0, O_0 \Big| (X_0,V_{0-}) = (x_0, v_0) \Big) 
\geq \nu_0 \mathcal{H}(A).
\end{align*} 
This will imply both statements.
We set, for all $x \in \pD$, the marginal law
$$ \nu_x(z) = \int_0^{\infty} \mu_x(\tau,z) d\tau = \mathbf{1}_{\{z \leftrightarrow x\}} c_0 |(z-x) \cdot n_x| |(x-z) \cdot n_z| \int_0^{\infty} M \Big( \frac{z-x}{\tau} \Big) \frac{1}{\tau^{n+2}} d\tau. $$
Let $x = q(x_0,v_0) \in \pD$, so that $x = x_0$ if $(x_0,v_0) \in \partial_+ G$. 
Let $\Delta, N$ given by Proposition \ref{PropEvans}.
For all $y \in \pD$, by Corollary \ref{CorollGivingDelta}, there exist $z_1(y), \dots, z_{N+1}(y) \in \Delta$ such that, setting $z'_0 = x$, $z'_{N+2} = y$ and taking, for all $i \in \{1, \dots, N+1\}$, $z'_i \in \BpD(z_i(y),\eta)$, we have, for all $j \in \{0, \dots, N+1\}$,  $z'_j \leftrightarrow z'_{j+1}$ and
\begin{align}
\label{IneqDeltaProofRegion}
 |(z'_{j+1} -z'_j) \cdot n_{z'_j}||(z'_j - z'_{j+1}) \cdot n_{z'_{j+1}}| \geq \delta,
 \end{align} 
where $\delta > 0$ and $\eta > 0$ are given by Corollary \ref{CorollGivingDelta}. This inequality implies $\|z'_{j+1} - z'_j\| \geq \sqrt{\delta}$, and in particular we have $d(D) \geq \sqrt{\delta}$. Let $A \subset \partial D$.  We introduce the event $$O_1 = \Big\{\|V_{T_i}\| \ne \|V_{T_i-}\| \text{ for all } i \in \{0, \dots, N + 1\}, \|X_{T_i} - X_{T_{i-1}}\| \geq \sqrt{\delta} \text{ for all } i \in \{1, \dots, N+2\} \Big\}, $$ and we have, with the choice $n_0 = N+2$, $\delta_0 = \sqrt{\delta}$,
$$ P_A = \PP \Big(\{X_{T_{N+2}} \in A\} \cap O_1|X_0 = x_0, V_{0-} = v_0\Big). $$
Since on the event $O_1$, all reflections are diffuse, and recalling the definition of $\alpha_0$, see Hypothesis \ref{HypoM}, and that $X_{T_0} = x$,
\begin{align*}
P_A &\geq \alpha_0^{N+2} \int_{y \in A} \int_{z'_1 \in \BpD(z_1(y),\eta)} \nu_x(z'_1) \int_{z'_2 \in \BpD(z_2(y),\eta)} \nu_{z'_1}(z'_2) \\
& \qquad\times \dots \times \int_{z'_{N+1} \in \BpD(z_{N+1}(y),\eta)} \nu_{z'_{N}}(z'_{N+1})  \nu_{z_{N+1}'}(y) dz'_{N+1} \dots  dz'_1 dy.
\end{align*}
For $\tau \in ( \frac{d(D)}{\delta_1}, \frac{d(D)}{\delta_1} + 1)$ with $\delta_1$ given by Hypothesis \ref{HypoM}, for all $y \in A$, $z'_{N+1} \in \BpD(z_{N+1}(y),\eta)$,
$$ \mu_{z'_{N+1}}(\tau,y) = c_0 M \Big(\frac{z'_{N+1} - y}{\tau} \Big) \frac{1}{\tau^{n+2}} |(z'_{N+1} - y) \cdot n_y| |(y - z'_{N+1}) \cdot n_{z'_{N+1}}| \geq \kappa_1, $$
with, recalling (\ref{IneqDeltaProofRegion}) and that $\sqrt{\delta} \leq \|z'_{N+1}-y\| \leq d(D)$, $$\kappa_1 = c_0 \Big( \inf_{\|v\| \in (\frac{\delta_1 \sqrt{\delta}}{d(D) + \delta_1}, \delta_1)} M(v) \Big)  \Big(\frac{\delta_1}{d(D)}\Big)^{n+2}  \delta > 0,$$
so that the infimum above is positive using Hypothesis \ref{HypoM}.
We thus have
$$ \nu_{z'_{N+1}}(y) \geq \int_{\frac{d(D)}{\delta_1}}^{\frac{d(D)}{\delta_1} + 1} \mu_{z'_{N+1}}(\tau,y) d\tau \geq \kappa_1.$$
Working similarly for the other terms, we conclude that

\begin{align*}
P_A &\geq \alpha_0^{N+2} \kappa_1^{N+2} \int_{y \in A} \int_{z'_1 \in \BpD(z_1(y),\eta)}  \int_{z'_2 \in \BpD(z_2(y),\eta)}  \\
& \qquad \times \dots \times \int_{z'_{N+1} \in \BpD(z_{N+1}(y),\eta)} dz'_{N+1} \dots dz'_2 dz'_1 dy \\
&\geq \alpha_0^{N+2} \kappa_1^{N+2} \epsilon^{N+1} \mathcal{H}(A),
\end{align*}
where $\epsilon  = \inf_ {x \in \pD} \mathcal{H}(\BpD(x,\eta)) > 0$. This completes the proof. 

\end{proof}

\subsection{Coupling of $(R, \Theta, \tilde{R}, \tilde{\Theta})$}

In this subsection, we exhibit a coupling in a certain appropriate regime, to derive a result similar to Proposition \ref{CouplingProp} in the general setting. We let $d_0, \kappa_0 > 0$ and $F, \mathcal{R} \subset \pD$ be the positive constants and compact regions of the boundary given by Lemma \ref{LemmaDefiningCriticalRegion}. Recall Notation \ref{Notationyxi} for the maps $\xi, \tilde{\xi}, y, \tilde{y}$. We also recall that $\mathcal{A} = (-\frac{\pi}{2}, \frac{\pi}{2}) \times [0,\pi)^{n-2}$ and the notation $\Upsilon$ introduced in Lemma \ref{NotatHM}. 

\begin{prop}
\label{CouplingPropGeneral}
There exists a constant $c > 0$ such that for all $x_0 \in F$, $\tilde{x}_0 \in D$, 
$\tilde{v}_0 \in \RR^n$ with $\|\tilde{v}_0\| \geq 1$ and $q(\tilde{x}_0,\tilde{v}_0) \in F$, there exists $\Lambda_{x_0, \tilde{x}_0, \tilde{v}_0} \in \mathcal{P}(((0,\infty) \times \mathcal{A} )^2)$ such that if $(R, \Theta, \tilde{R}, \tilde{\Theta})$ has law $\Lambda_{x_0, \tilde{x}_0, \tilde{v}_0}$, $(R, \Theta) \sim \Upsilon$, $(\tilde{R}, \tilde{\Theta}) \sim \Upsilon$ and for $E_{x_0, \tilde{x}_0, \tilde{v}_0}$ defined by
\begin{align*}
E_{x_0, \tilde{x}_0, \tilde{v}_0} := \Big\{(r, \theta, \tilde{r}, \tilde{\theta}) \in (\RR_+ \times \mathcal{A})^2:  y(x_0,\theta) = \tilde{y}(\tilde{x}_0, \tilde{v}_0, \tilde{\theta}), \xi(x_0,r,\theta) = \tilde{\xi}(\tilde{x}_0, \tilde{v}_0, \tilde{r}, \tilde{\theta})\Big\},
\end{align*} 
we have
\begin{align}
\label{IneqCouplingLemma2} 
\PP \big((R,\Theta, \tilde{R}, \tilde{\Theta}) \in E_{x_0, \tilde{x}_0, \tilde{v}_0} \big) \geq c.
\end{align}
\end{prop}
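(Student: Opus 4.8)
The plan is to mirror the proof of Proposition \ref{CouplingProp}, but replace the one-step reachability of the uniformly convex case (from any boundary point one reaches any other in one collision) by the multi-step communication structure provided by the region $F \leftrightarrow \mathcal{R}$ of Lemma \ref{LemmaDefiningCriticalRegion} and the density computation of Lemma \ref{LemmaDensityExtension}. Fix $x_0 \in F$, $\tilde{x}_0 \in D$, $\tilde{v}_0 \in \RR^n$ with $\|\tilde{v}_0\| \geq 1$ and $\tilde{x} := q(\tilde{x}_0,\tilde{v}_0) \in F$; also set $\tilde{t} := \zeta(\tilde{x}_0,\tilde{v}_0) \leq d(D)/\|\tilde{v}_0\| \leq d(D)$. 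As in Proposition \ref{CouplingProp}, the claim (\ref{IneqCouplingLemma2}) reduces, via the maximal coupling framework, to showing a uniform lower bound of the form
\begin{align}
\label{EqPlanReductionGeneral}
\inf_{(x,\tilde{x},\tilde{t}) } \int_{\pD} \int_{\tilde{t}}^{\infty} \big[ \mu_x(\tau,z) \wedge \mu_{\tilde{x}}(\tau - \tilde{t},z) \big] d\tau\, dz \geq c,
\end{align}
where now $\mu_x$ is the density from Lemma \ref{LemmaDensityExtension} (which carries the extra factor $\mathbf{1}_{\{z \leftrightarrow x\}}$), the infimum being over $x,\tilde{x} \in F$ and $\tilde{t} \in [0,d(D))$. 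Once (\ref{EqPlanReductionGeneral}) holds, one couples $(S,Y) \sim \mu_{x}$ and $(\tilde{S},\tilde{Y}) \sim \mu_{\tilde{x}}$ with $\PP(Y = \tilde{Y}, S = \tilde{S} + \tilde{t}) \geq c$ exactly as in Step 2 of the proof of Proposition \ref{CouplingProp}, using that $(\xi(x,R,\Theta), y(x,\Theta)) \sim \mu_x$ and $(\tilde{\xi}(\tilde{x}_0,\tilde{v}_0,\tilde{R},\tilde{\Theta}) - \tilde{t}, \tilde{y}(\tilde{x}_0,\tilde{v}_0,\tilde{\Theta})) \sim \mu_{\tilde{x}}$.

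To prove (\ref{EqPlanReductionGeneral}), I would restrict the $z$-integration to $z \in \mathcal{R}$, so that $z \leftrightarrow x$ and $z \leftrightarrow \tilde{x}$ hold automatically (both $x,\tilde{x} \in F$ and $F \leftrightarrow \mathcal{R}$), hence both indicator factors in $\mu_x,\mu_{\tilde{x}}$ equal $1$ on the domain of integration, and $\|z - x\| \wedge \|z - \tilde{x}\| \geq d_0$ for such $z$. Then one restricts $\tau$ to a window $(b_0, b_1)$ with $b_0 = d(D)\frac{\delta_1 + 1}{\delta_1}$, $b_1 = d(D)\frac{2\delta_1 + 1}{\delta_1}$ exactly as in Step 1 of the proof of Proposition \ref{CouplingProp}: for $\tau \in (b_0,b_1)$ and $z$ as above, both $\frac{z-x}{\tau}$ and $\frac{z-\tilde{x}}{\tau - \tilde{t}}$ have norm in $[\frac{d_0}{b_1}, \delta_1]$, so by Hypothesis \ref{HypoM} we get $M(\frac{z-x}{\tau}) \wedge M(\frac{z-\tilde{x}}{\tau-\tilde{t}}) \geq \kappa_1 := \min_{\frac{d_0}{b_1} \leq \|v\| \leq \delta_1} \bar M(v) > 0$, uniformly in $(x,\tilde{x},\tilde{t})$. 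Bounding the remaining factors $|(z-x)\cdot n_x|,|(z-x)\cdot n_z|$, etc. below by constants via compactness of $F$ and $\mathcal{R}$ and the communication property (as at the end of the proof of Lemma \ref{LemmaPositiveMeasure}, the quantities $|(a-b)\cdot n_a|$ are bounded below for $a,b$ ranging over the compact sets with $a \leftrightarrow b$), and using $\mathcal{H}(\mathcal{R}) \geq \kappa_0 > 0$ together with Tonelli, one obtains
\begin{align}
\label{EqPlanLowerBoundGeneral}
\int_{\pD} \int_{\tilde{t}}^{\infty} \big[\mu_x(\tau,z) \wedge \mu_{\tilde{x}}(\tau-\tilde{t},z)\big] d\tau\, dz \geq c_0 \kappa_1 \Big(\int_{b_0}^{b_1} \frac{d\tau}{\tau^{n+2}}\Big) \times (\text{positive constant}) \times \mathcal{H}(\mathcal{R}),
\end{align}
which is the desired $c > 0$. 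Note $b_0 \geq d(D) \geq \tilde{t}$, so the $\tau$-window lies in $(\tilde{t},\infty)$ as required.

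I expect the main (and essentially only) genuinely new point compared to the convex case to be bookkeeping the geometric constraints: one must ensure that the communication hypotheses $z \leftrightarrow x$ and $z \leftrightarrow \tilde{x}$ genuinely hold on the chosen integration region, which is exactly what Lemma \ref{LemmaDefiningCriticalRegion} was designed to supply ($x, \tilde{x} \in F$, $z \in \mathcal{R}$, $F \leftrightarrow \mathcal{R}$), and that the lower bounds on the scalar-product factors are uniform — this follows by compactness of $F$ and $\mathcal{R}$ and openness/positivity of the communication relation (Evans \cite[Lemma 2.3]{evans2001}), precisely as in the final compactness argument of Lemma \ref{LemmaPositiveMeasure}. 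Everything else — the reduction via maximal coupling, the choice of the $\tau$-window, the use of $\bar M$ — is identical to the proof of Proposition \ref{CouplingProp}, so I would present the argument by pointing to that proof and only detailing the two modifications (restriction of $z$ to $\mathcal{R}$, and the compactness lower bounds on $F \times \mathcal{R}$).
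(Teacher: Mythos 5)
Your proposal is correct and follows essentially the same route as the paper: restrict the $z$-integration to $\mathcal{R}$ (so that $z \leftrightarrow x$ and $z \leftrightarrow \tilde{x}$ hold and $\|z-x\| \wedge \|z-\tilde{x}\| \geq d_0$), bound the scalar-product factors below by compactness of $F \times \mathcal{R}$, use the $\bar M$ lower bound from Hypothesis~\ref{HypoM}, invoke $\mathcal{H}(\mathcal{R}) \geq \kappa_0$, and conclude by the maximal-coupling argument of Step~2 of Proposition~\ref{CouplingProp}. The one small (and arguably favourable) difference: you keep the bounded $\tau$-window $(b_0,b_1)$ from the convex case, which keeps $\|{(z-x)}/{\tau}\|$ and $\|{(z-\tilde{x})}/({\tau-\tilde t})\|$ bounded away from $0$ and yields $\kappa_1 = \min_{d_0/b_1 \leq \|v\| \leq \delta_1} \bar{M}(v) > 0$ directly from the hypothesis, whereas the paper integrates $\tau$ over $[d(D)(1+1/\delta_1), \infty)$ and uses $\kappa_1 = \inf_{\|v\| \leq \delta_1}\bar M(v)$, which tacitly needs $\bar M(0) > 0$ (not formally guaranteed by Hypothesis~\ref{HypoM}, which only requires positivity on $0 < \|v\| \leq \delta_1$); your variant sidesteps that subtlety.
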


\begin{proof}
\textbf{Step 1.} We prove that there exists $c > 0$ such that 
\begin{align}
\label{EqKeyPropCouplingGeneral}
\inf_{(x,\tilde{x}, \tilde{t}) \in F \times F \times [0,d(D))} \int_{\{z \in \pD, z \leftrightarrow x, z \leftrightarrow y\}} \int_{\tilde{t}}^{\infty} \Big[ \mu_x(\tau,z) \wedge \mu_{\tilde{x}}(\tau - \tilde{t}, z) \Big] d\tau dz \geq c.
\end{align}
Note that by compactness of $\mathcal{R} \times F$, using continuity properties and that $\mathcal{R} \leftrightarrow F$, we have 
\begin{align}
\label{EqC'ProofCouplingExtension}
 c' := \inf_{z \in \mathcal{R}, y \in F} |(z-y) \cdot n_y| \wedge |(z-y) \cdot n_z| > 0. 
 \end{align}
For $(x,\tilde{x}, \tilde{t}) \in F \times F \times [0,d(D))$, we set
$$ J:= \int_{z \in \mathcal{R}} \int_{\tilde{t}}^{\infty} \Big[ \mu_x(\tau,z) \wedge \mu_{\tilde{x}}(\tau - \tilde{t}, z)  \Big] d\tau dz, $$
and it suffices to verify that $J$ is lower bounded away from 0. 
Recall the definition of $\bar{M}$ and $\delta_1$ from Hypothesis \ref{HypoM}. Using Lemma \ref{LemmaDensityExtension} and (\ref{EqC'ProofCouplingExtension}), we easily find
$$ J \geq c' c_0 \int_{z \in \mathcal{R}} \int_{\tilde{t}}^{\infty} \Big(\frac{1}{\tau}\Big)^{n+2} \min \Big( M \Big( \frac{z-x}{\tau} \Big), M \Big( \frac{z-\tilde{x}}{\tau - \tilde{t}} \Big) \Big) d\tau dz. $$
Note that, for $\tau \geq d(D)(1 + \frac{1}{\delta_1})$, for all $z \in \mathcal{R}$, $\frac{\|z-x\|}{\tau} \vee \frac{\|z-\tilde{x}\|}{\tau - \tilde{t}} \leq \delta_1$ using that $\tilde{t} \leq d(D)$, whence
$$M \Big(\frac{z-x}{\tau} \Big) \wedge M \Big(\frac{z-\tilde{x}}{\tau - \tilde{t}} \Big) \geq \inf_{\|v\| \leq \delta_1} \bar{M}(v) =: \kappa_1 > 0. $$
We deduce that
$$ J \geq \kappa_1 c' c_0 \int_{z \in \mathcal{R}} \int_{d(D)(1 + \frac1{\delta_1})}^{+\infty}  \Big(\frac{1}{\tau}\Big)^{n+2} d\tau dz \geq \kappa \mathcal{H}(\mathcal{R}) > 0,$$
with $\kappa$ a positive constant not depending on $x, \tilde{x}, \tilde{t}$. 
This concludes the proof of (\ref{EqKeyPropCouplingGeneral}).

\vspace{.5cm}

\textbf{Step 2.} We conclude as in the proof of Proposition \ref{CouplingProp}, see Step 2.
\end{proof}

\subsection{Construction of the coupling}

In comparison with the convex case, we change the definition of the law $\Gamma$ on $([0,1] \times \RR_+ \times \mathcal{A})^2$ by setting, for $(x,v,\tilde{x},\tilde{v}) \in (\bar{D} \times \RR^n)^2$ with $x \in \pD$ or $\tilde{x} \in \pD$,
\begin{align}
\label{EqLawGlobalCouplingExtension}
\Gamma_{x,v,\tilde{x}, \tilde{v}}(du, &dr, d\theta, d\tilde{u}, d\tilde{r}, d\tilde{\theta}) = \mathbf{1}_{\{x = \tilde{x}\}} \Big( \mathcal{Q} (du, dr,d\theta) \delta_u (d\tilde{u}) \delta_{r}(d\tilde{r}) \delta_{\theta}(d\tilde{\theta})  \Big) \\ 
& \quad + \mathbf{1}_{\{x \in F, q(\tilde{x}, \tilde{v}) \in F, \tilde{x} \in D,  \|\tilde{v}\|  \geq 1, \|v\| \geq 1\}} (\mathcal{U} \otimes \Lambda_{x,\tilde{x}, \tilde{v}}) (du,  dr, d\theta, d\tilde{r},  d\tilde{\theta}) \delta_u(d\tilde{u}) , \nonumber \\
& \quad + \mathbf{1}_{\{x \ne \tilde{x}\}} \mathbf{1}_{\{x \in F, q(\tilde{x}, \tilde{v}) \in F, \tilde{x} \in D,  \|\tilde{v}\|  \geq 1, \|v\| \geq 1\}^c} (\mathcal{Q} \otimes \mathcal{Q}) (du, dr, d\theta, d\tilde{u},d\tilde{r}, d\tilde{\theta}), \nonumber
\end{align}
with $\Lambda_{x, \tilde{x}, \tilde{v}}$ given by Proposition \ref{CouplingPropGeneral}. We construct a coupling process $(X_s,V_s, \xtilde{s},\vtilde{s}, Z_s)_{s \geq 0}$ with the same definition as the one in the convex case, see Definition \ref{DefiCouplingProcess}, except that we consider $\Gamma$ defined by (\ref{EqLawGlobalCouplingExtension}) rather than by (\ref{EqLawGlobalCoupling}).
The statements of Lemmas \ref{LemmaMarginalCode} and \ref{Lemmataudefinitive} still hold. Indeed, the difference only relies on the law $\Gamma$. 

Lemma \ref{LemmaControlCalT} and \ref{LemmaIndepProcess} also hold with this new context, since those results do not rely on the convexity of the domain.

\subsection{Proof of Theorem \ref{MainTheorem} in the general setting}

We prove first a result on independent processes similar to Lemma \ref{LemmaIndepProcess}, and conclude the proof of Theorem \ref{MainTheorem} in the general framework of $C^2$ bounded domains.
Let $d_0, \kappa_0 > 0$ and $F, \mathcal{R} \subset \pD$ given by Lemma \ref{LemmaDefiningCriticalRegion}. In this subsection, we denote by $\kappa, L$ two positive constants depending only on $(D,r,C_0,n_0,\nu_0, \kappa_0,d_0)$ with $C_0$ given by (\ref{EqDefC0}) and $(n_0,\nu_0)$ given by Corollary \ref{corollaryEvans}. The values of $\kappa$ and $L$ are allowed to vary from line to line.

\begin{lemma}
\label{LemmaIndepProcessGlobal}
There exists $\kappa > 0$ such that if $(x,v), (\tilde{x},\tilde{v}) \in (D \times \RR^n) \cup \partial_+ G$ and $(X_t,V_t)_{t \geq 0}$, $(\xtilde{t},\vtilde{t})_{t \geq 0}$ are two independent free-transport processes with initial conditions $X_0 = x$, $V_{0-} = v$, $\xtilde{0} = \tilde{x}$, $\vtilde{0-} = \tilde{v}$, setting
$$ S = \inf \{t > 0, X_t \in F, \xtilde{t} \in D, q(\xtilde{t},\vtilde{t-}) \in F,  \|V_{t-}\| \wedge \|\vtilde{t-}\| \geq 1\}, $$
we have $$ \EE[ r(S) ] \leq \kappa \Big(1 + r\Big(\frac{d(D)}{\|v\|}\Big) + r\Big(\frac{d(D)}{\|\tilde{v}\|} \Big) \Big). $$
\end{lemma}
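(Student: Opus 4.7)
The strategy mirrors the proof of Lemma \ref{LemmaIndepProcess}, but we must additionally force the two processes to arrive with positions in the communicating region $F$ of Lemma \ref{LemmaDefiningCriticalRegion}. The new ingredient is Corollary \ref{corollaryEvans}, which provides, after $n_0$ diffuse reflections, a uniform lower bound on the density (with respect to the Hausdorff measure on $\pD$) of the next hitting position.

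\textbf{Step 1 (reduction to a baseline).} Since Lemmas \ref{LemmaControlCalT} and \ref{LemmaIndepProcess} hold in the general setting, applying the symmetric version of Lemma \ref{LemmaIndepProcess} (swapping the roles of $(X,V)$ and $(\xtilde{},\vtilde{})$) produces a stopping time $S_0$ with $X_{S_0}\in\pD$, $\xtilde{S_0}\in D$, $\|V_{S_0-}\|\wedge\|\vtilde{S_0-}\|\geq 1$ (both velocities originating from diffuse reflections), satisfying
\[
\EE[r(S_0)] \leq \kappa\Big(1+r\big(d(D)/\|v\|\big)+r\big(d(D)/\|\tilde v\|\big)\Big).
\]

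\textbf{Step 2 (successful attempt).} From the strong Markov property at $S_0$, consider the next $n_0$ collisions of $X$ and of $\tilde X$. Denote by $T^X_1<\dots<T^X_{n_0}$ the collision times of $X$ after $S_0$ and similarly $T^{\tilde X}_j$ for $\tilde X$. Define the event
\begin{align*}
A_1=\Big\{&\text{the first $n_0$ reflections of $X$ and of $\tilde X$ after $S_0$ are diffuse},\\
&\text{the resulting velocity norms are all}\geq 1,\ X_{T^X_{n_0}}\in F,\\
&\tilde X \text{'s next collision after } T^X_{n_0} \text{ lies in } F,\\
&\text{and } T^X_{n_0} \text{ falls strictly between two consecutive collisions of }\tilde X\Big\}.
\end{align*}
On $A_1$ we have $\xtilde{T^X_{n_0}}\in D$, $q(\xtilde{T^X_{n_0}},\vtilde{T^X_{n_0}-})\in F$ and the norm conditions, so $S\leq T^X_{n_0}$. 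By the independence of $(X,V)$ and $(\xtilde{},\vtilde{})$, Corollary \ref{corollaryEvans} applied separately to each (giving the $F$-landings with probability at least $\nu_0 \mathcal{H}(F)$ on the event of diffuse refreshed reflections), the lower bound $\alpha_0$ on the diffuse-reflection probability, the lower bound (\ref{EqHypoM1}) on the probability that a refreshed norm exceeds $1$, and Lemma \ref{LemmaDensityTi} (to ensure $T^X_{n_0}\notin\{T^{\tilde X}_j:j\geq 1\}$ almost surely), we obtain $\PP(A_1\mid\mathcal{F}_{S_0-})\geq c$ for some universal $c>0$.

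\textbf{Step 3 (iteration).} If $A_1$ fails, all velocities involved have been refreshed by diffuse reflection, so we redefine a new baseline stopping time $S_1$ by re-applying the symmetric form of Lemma \ref{LemmaIndepProcess}; the gap $S_1-T^X_{n_0}$ has $r$-moment uniformly bounded by a constant $L$, using Lemma \ref{LemmaControlCalT}, $\|V_{T^X_{n_0}-}\|\wedge\|\vtilde{T^X_{n_0}-}\|\geq 1$ (in the cases where refreshed), and the constant $C_0$ of (\ref{EqDefC0}). Iterate to get a sequence of attempts $\tau_k$, with $\PP(\text{attempt succeeds}\mid\text{past})\geq c$ and $\EE[r(\tau_{k+1}-\tau_k)\mid\mathcal{G}_{k-1}]\leq L$ on the event of survival. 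An application of Lemma \ref{LemmaSumGeometric} with first-attempt bound $\EE[r(\tau_1)\mid\mathcal{G}_0]\leq\kappa(1+r(d(D)/\|v\|)+r(d(D)/\|\tilde v\|))$ from Step 1 yields $\EE[r(S)]\leq\kappa L\leq\kappa(1+r(d(D)/\|v\|)+r(d(D)/\|\tilde v\|))$.

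\textbf{Main obstacle.} The subtle point is verifying $\PP(A_1\mid\mathcal{F}_{S_0-})\geq c$ uniformly. The $F$-landing of $X$ is immediate from Corollary \ref{corollaryEvans}. For $\tilde X$, one must argue that the collision of $\tilde X$ immediately following the (random) time $T^X_{n_0}$ lies in $F$; this requires invoking Corollary \ref{corollaryEvans} for $\tilde X$ at whichever collision index $\geq n_0$ turns out to be the one right after $T^X_{n_0}$, together with the independence of the two processes and the almost sure non-coincidence of their collision times (Lemma \ref{LemmaDensityTi}). A running-index maximization over the random interleaving, combined with the uniformity in the starting point in Corollary \ref{corollaryEvans}, is what yields a positive lower bound independent of the state at $S_0$.
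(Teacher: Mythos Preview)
Your overall architecture (baseline via Lemma \ref{LemmaIndepProcess}, then a success event based on Corollary \ref{corollaryEvans}, then iteration through Lemma \ref{LemmaSumGeometric}) matches the paper's proof. The gap is exactly where you flag it: your Step 2 does not establish $\PP(A_1\mid\mathcal{F}_{S_0-})\geq c$, and the suggested fix (``running-index maximization over the random interleaving'') does not work. Corollary \ref{corollaryEvans} only controls the law of the $n_0$-th hitting point; it gives nothing for a \emph{single} step. In a non-convex domain, from a given boundary point the next collision may have zero probability of landing in $F$. Thus you cannot argue that ``whichever $\tilde X$-collision happens to follow $T^X_{n_0}$ lands in $F$ with probability $\geq c$'', uniformly in the state.

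The paper resolves this by forcing the synchronization rather than summing over it. It restricts to the event where both processes have $n_0$ diffuse reflections with step lengths all at least $\delta_0$ (this $\delta_0$ is part of the statement of Corollary \ref{corollaryEvans} and is what you are not using), and where $X_{T_{n_0}}\in F$ and $\tilde X_{\tilde T_{n_0}}\in F$. It then shows directly that, conditionally on this, $\PP\big(T_{n_0}\in(\tilde T_{n_0-1},\tilde T_{n_0})\big)\geq c$: writing $T_{n_0}=\sum a_i/R_i$ and $\tilde T_{n_0}=\tilde a/\|\tilde v\|+\sum\tilde a_i/\tilde R_i$ with all $a_i,\tilde a_i\in[\delta_0,d(D)]$, one chooses explicit intervals $(\theta_1,\theta_2)$, $(\tilde\theta_1,\tilde\theta_2)$, $(0,\tilde\theta_3)$ inside $(0,\delta_1)$ (so that $h_R$ puts positive mass there) such that $R_i\in(\theta_1,\theta_2)$ and $\tilde R_i$ in the prescribed ranges force the desired inequality deterministically. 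This step is the heart of the proof and is where the lower bound $\delta_0$ on step lengths is indispensable. Once $T_{n_0}\in(\tilde T_{n_0-1},\tilde T_{n_0})$, the $\tilde X$-collision immediately after $T_{n_0}$ is precisely $\tilde T_{n_0}$, so $q(\tilde X_{T_{n_0}},\tilde V_{T_{n_0}-})=\tilde X_{\tilde T_{n_0}}\in F$ as required.
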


\begin{proof}
We introduce the sequence $(T_k)_{k \geq 0}$ defined by $T_0 = \zeta(x, v)$ and, for all $k \geq 0$, $T_{k+1} = \inf \{t > T_k$, $X_t \in \pD\}$, and the sequence  $(\tilde{T}_k)_{k \geq 0}$ defined by $\tilde{T}_0 = \zeta(\tilde{x},\tilde{v}),$ and for $k \geq 0$, $\tilde{T}_{k+1} = \inf \{t > \tilde{T}_k$, $\xtilde{t} \in \pD \}$. We also introduce the filtration $\mathcal{F}_t = \sigma((X_s,V_s,\xtilde{s},\vtilde{s})_{0 \leq s \leq t} )$.  We set $S_1 = \inf \{t \geq T_{n_0}, X_t \in \pD, \xtilde{t} \in D, \|V_{t-}\| \ne \|v\|, \|\vtilde{t-}\| \ne \|\tilde{v}\|, \|V_{t-}\| \wedge \|\vtilde{t-}\| \geq 1\}$.

\vspace{.5cm}

\textbf{Step 1.} We prove that $$\EE[r(S_1)] \leq \kappa \Big( 1 + r\Big(\frac{d(D)}{\|v\|} \Big) + r \Big( \frac{d(D)}{\|\tilde{v}\|} \Big) \Big).$$ 
By the strong Markov property, using Lemma \ref{LemmaIndepProcess}, which is, as already mentioned, still valid in the non-convex case,
\begin{align*}
 \EE[r(S_1 - T_{n_0})|\mathcal{F}_{T_{n_0}}] &\leq \kappa \Big(1 + r\Big(\frac{d(D)}{\|V_{T_{n_0}}\|}\Big) + r\Big(\frac{d(D)}{\|\vtilde{T_{n_0}}\|}\Big) \Big).
 \end{align*}
We then have, using Remark \ref{RmkPolynomialR},
\begin{align*}
\EE[r(S_1)] &\leq C \Big( \EE[r(T_{n_0})] + \EE \Big[ \EE[r(S_1 - T_{n_0})|\mathcal{F}_{T_{n_0}}] \Big] \Big) \\
&\leq \kappa \Big(\sum_{i=0}^{n_0-1} r(T_{i+1}-T_i) + r (T_0) + 1 + r \Big(\frac{d(D)}{\|V_{T_{n_0}}\|} \Big) + r \Big(\frac{d(D)}{\|\vtilde{T_{n_0}}\|} \Big) \Big) \\
&\leq \kappa \Big(1 + \sum_{i=0}^{n_0} \Big[r \Big(\frac{d(D)}{\|V_{T_i}\|} \Big) \Big] +  r \Big(\frac{d(D)}{\|v\|}\Big) + r \Big(\frac{d(D)}{\|\vtilde{T_{n_0}}\|} \Big) \Big) \\
&\leq \kappa \Big(1 + r \Big(\frac{d(D)}{\|v\|}\Big) + r \Big(\frac{d(D)}{\|\tilde{v}\|}\Big) \Big),
\end{align*}
since, as usual, for all $i \in \{0, \dots, n_0\}$, we have
$\|V_{T_i}\| = \|V_{T_i}\|\mathbf{1}_{\{\|V_{T_i}\| \ne \|v\|\}} + \|v\| \mathbf{1}_{\{\|V_{T_i}\| = \|v\|\}}$ with $\mathcal{L}(\|V_{T_i}\| | \|V_{T_i}\| \ne \|v\|) = h_R$.

\vspace{.5cm}

\textbf{Step 2.} In this step, we prove that there exists $c > 0$ such that, for all initial conditions $(x,v) \in \partial_+ G$, $(\tilde{x},\tilde{v}) \in D \times \RR^n$ with $\|v\| \wedge \|\tilde{v}\| \geq 1$,
\begin{align}
\label{EqTmpProbaS1}
 \PP(X_{S_1} \in F, q(\xtilde{S_1},\vtilde{S_1}) \in F) \geq c. 
 \end{align}
Set 
\begin{align*}
O_0 &:= \Big\{ \|V_{T_0}\| \ne \|V_{T_0-}\|, \dots, \|V_{T_{n_0-1}}\| \ne \|V_{T_{n_0-1}-}\| \Big\}, \\
\tilde{O}_0 &:= \Big\{ \|\vtilde{T_0}\| \ne \|\vtilde{T_0-}\|, \dots, \|\vtilde{T_{n_0-1}}\| \ne \|\vtilde{T_{n_0-1}-}\| \Big\},
\end{align*}
 and note that one has $\PP(O_0 \cap \tilde{O}_0) \geq \alpha_0^{2n_0}$. 
We also set 
\begin{align*}
O_1 &:= \Big\{X_{T_{n_0}} \in F, \|V_{T_{n_0}-}\| \geq 1,  \|X_{T_{i}} - X_{T_{i-1}}\|  \geq \delta_0  \text{ for all } i \in \{1, \dots, n_0\} \Big\}, \\
\tilde{O}_1 &:= \Big\{ \xtilde{\tilde{T}_{n_0}} \in F, \|\vtilde{\tilde{T}_{n_0}-}\| \geq 1, \|\xtilde{\tilde{T}_i} - \xtilde{\tilde{T}_{i-1}}\| \geq \delta_0 \text{ for all } i \in \{1, \dots, n_0\} \Big\}.
\end{align*}
We have, using that $\|V_{T_{n_0}-}\|$ is independent of the sequence $(X_{T_k})_{0 \leq k \leq n_0}$  and has law $h_R$ conditionally on $O_0$,
\begin{align*}
\PP(O_1 | O_0) &= \PP\Big(X_{T_{n_0}} \in F, \underset{i \in \{1, \dots, n_0\}}{\min}\|X_{T_i} - X_{T_{i-1}}\| \geq \delta_0 \Big| O_0 \Big) \PP(\|V_{T_{n_0}-}\| \geq 1 | O_0) \\
&\geq \nu_0 \kappa_0 \int_1^{\infty} h_R(r)dr, 
\end{align*} 
using Proposition \ref{corollaryEvans} and $\mathcal{H}(F) \geq \kappa_0$. Setting $c_0 = \nu_0 \kappa_0 \int_1^{\infty} h_R(r) dr > 0$, we obtain similarly that $\PP(\tilde{O}_1 | \tilde{O}_0) \geq c_0$. 

Moreover, we have 
$$ O_0 \cap O_1 \cap \tilde{O}_0 \cap \tilde{O}_1 \cap \Big\{ T_{n_0} \in (\tilde{T}_{n_0 - 1}, \tilde{T}_{n_0}) \Big\} \subset \Big\{S_1 = T_{n_0}, X_{S_1} \in F, q(\xtilde{S_1},\vtilde{S_1}) =  \xtilde{\tilde{T}_{n_0}}\in F \Big\}. $$

To prove (\ref{EqTmpProbaS1}), it thus suffices to show that there exists some $\kappa > 0$ such that
\begin{align}
\label{ProbaTn0}
\PP \Big(T_{n_0} \in (\tilde{T}_{n_0-1}, \tilde{T}_{n_0}) \Big| O_0 \cap \tilde{O}_0 \cap O_1 \cap \tilde{O}_1 \Big) \geq \kappa.
\end{align}
Since all the random variables $R_i = \|V_{T_i}\|$, $i \in \{0,\dots,n_0-1\}$, and $\tilde{R}_i = \|\vtilde{\tilde{T}_i}\|$, $i \in \{0, \dots, n_0-1\}$ are i.i.d. and $h_R$ distributed on $O_0 \cap O_1 \cap \tilde{O}_0 \cap \tilde{O}_1$, and since
\begin{align*}
\tilde{T}_{n_0 - 1} &= \frac{\|\xtilde{\tilde{T}_0} - \tilde{x}\|}{\|\tilde{v}\|} + \sum_{i = 0}^{n_0-2} \frac{\|\xtilde{\tilde{T}_{i+1}} - \xtilde{\tilde{T}_i}\|}{\tilde{R}_i}, \qquad T_{n_0} = \sum_{i = 0}^{n_0 - 1} \frac{\|X_{T_{i+1}} - X_{T_i}\|}{R_i}, \\
\tilde{T}_{n_0} &= \frac{\|\xtilde{\tilde{T}_0} - \tilde{x}\|}{\|\tilde{v}\|} + \sum_{i = 0}^{n_0-1} \frac{\|\xtilde{\tilde{T}_{i+1}} - \xtilde{\tilde{T}_i}\|}{\tilde{R}_i},
\end{align*}
we only need to prove that, for some $c_1' > 0$,
\begin{align}
\label{IneqInfProba}
 \underset{\begin{array}{l} \qquad \qquad \tilde{a} \in (0,d(D)) \\  a_0,\tilde{a}_0, \dots, a_{n_0-1},\tilde{a}_{n_0-1} \in (\delta_0,d(D)) \end{array} } {\inf}  \PP \Big( \frac{\tilde{a}}{\|\tilde{v}\|} + \sum_{i = 0}^{n_0-2} \frac{\tilde{a}_i}{\tilde{R}_i} \leq \sum_{i=0}^{n_0-1} \frac{a_i}{R_i} \leq \frac{\tilde{a}}{\|\tilde{v}\|} +  \sum_{i=0}^{n_0-1} \frac{\tilde{a}_i}{\tilde{R}_i}\Big) \geq c_1',
 \end{align}
with $(R_i)_{i = 0, \dots, n_0-1}$, $(\tilde{R}_i)_{i =0,\dots,n_0-1}$ independent and i.i.d. of law $h_R$.
By Hypothesis \ref{HypoM}, for all $0 \leq \epsilon_0 < \epsilon_1 \leq \delta_1$, $\int_{\epsilon_0}^{\epsilon_1}h_R(r)dr > 0$.

We claim that there exists $0 < \tilde{\theta}_1 < \tilde{\theta}_2 < \delta_1$, $0 < \theta_1 < \theta_2 < \delta_1$, $0 < \tilde{\theta}_3 < \delta_1$, such that
$$ d(D) \big( 1 + \frac{n_0 - 1}{\tilde{\theta}_1} \big) \leq \frac{n_0 \delta_0}{\theta_2} \leq \frac{n_0d(D)}{\theta_1} \leq \frac{\delta_0(n_0-1)}{\tilde{\theta}_2} + \frac{\delta_0}{\tilde{\theta_3}}. $$
Indeed, taking $\tilde{\theta}_1 = \frac{\delta_1}{2} \wedge \frac12$, $\theta_2 = \tilde{\theta}_1 \frac{\delta_0}{d(D)}$, we have $\tilde{\theta}_1, \theta_2 \in (0,\delta_1)$ (because $\delta_0 < d(D)$) and
$$ d(D) \big(1 + \frac{n_0 - 1}{\tilde{\theta}_1}\big) \leq d(D) \frac{n_0}{\tilde{\theta}_1} = \frac{n_0 \delta_0}{\theta_2}. $$
We set $\theta_1 = \frac{\theta_2}{2} \in (0,\theta_2),$ $\tilde{\theta}_2 = \frac{\tilde{\theta}_1 + \delta_1}{2} \in (\tilde{\theta}_1,\delta_1)$, and, choosing $\tilde{\theta}_3$ sufficiently small, we have $\tilde{\theta}_3 < \delta_1$ and 
$$ \frac{\delta_0(n_0-1)}{\tilde{\theta}_2} + \frac{\delta_0}{\tilde{\theta}_3} \geq \frac{n_0d(D)}{\theta_1}. $$ 

We have, for all $\tilde{a} \in (0,d(D))$, for all $a_i, \tilde{a}_i \in (\delta_0,d(D))$ with $i \in \{0,\dots,n_0-1\}$, recalling that $\|\tilde{v}\| \geq 1$,
\begin{align*}
& \PP \Big( \frac{\tilde{a}}{\|\tilde{v}\|} + \sum_{i = 0}^{n_0-2} \frac{\tilde{a}_i}{\tilde{R}_i} \leq \sum_{i=0}^{n_0-1} \frac{a_i}{R_i} \leq \frac{\tilde{a}}{\|\tilde{v}\|} + \sum_{i=0}^{n_0-1} \frac{\tilde{a}_i}{\tilde{R}_i}\Big)\\
&\qquad \geq \PP\Big(\text{ for all } i \in \{0, \dots, n_0-2\}, R_i \in (\theta_1, \theta_2), \tilde{R}_i \in (\tilde{\theta}_1, \tilde{\theta}_2), R_{n_0-1} \in (\theta_1,\theta_2), \tilde{R}_{n_0-1} \in (0,\tilde{\theta}_3)\Big) \\
&\qquad \geq \Big(\int_{\theta_1}^{\theta_2} h_R(r)dr \Big)^{n_0} \Big( \int_{\tilde{\theta}_1}^{\tilde{\theta}_2} h_R(r) dr \Big)^{n_0-1} \Big(\int_0^{\tilde{\theta}_3} h_R(r)dr \Big) > 0.
\end{align*}
This completes the proof of (\ref{IneqInfProba}) and thus the proof of (\ref{EqTmpProbaS1}). 

\vspace{.5cm}

\textbf{Step 3.} We set, for any stopping time $\tau$, $T^{\tau}_0 = \inf\{ t \geq \tau, X_t \in \pD\}$ and for all $k \geq 0$, $T^{\tau}_{k+1} = \inf \{ t > T^{\tau}_k, X_t \in \pD\}$. Note that $T_k = T^0_k$ for all $k \geq 0$. We introduce the sequence $(S_i)_{i \geq 0}$ defined by $S_0 = 0$, $S_1$ defined as in Step 1 and for all $k \geq 1$, $$S_{k+1} = \inf\{t \geq T^{S_k}_{n_0}, X_t \in \pD, \xtilde{t} \in D, \|V_{t-}\| \ne \|V_{S_k-}\|, \|\vtilde{t-}\| \ne \|\vtilde{S_k-}\|, \|V_{t-}\| \wedge \|\vtilde{t-}\| \geq 1\}.$$

We set, for all $k \geq 1$, 
\begin{align*}
B_k = \{X_{S_k} \in F, q(\xtilde{S_k},\vtilde{S_k-}) \in F\}.
\end{align*}

We plan to apply Lemma \ref{LemmaSumGeometric}. 
\begin{enumerate}[i)]
\item We set, for all $k \geq 0$, $\CG_k = \mathcal{F}_{S_{k+1}-}$, and for all $k \geq 1$, $\tau_k = S_{k+1} - S_1$ which is $\CG_k$-measurable and $E_k = B_{k+1} \in \CG_k$. We set $G = \inf\{k \geq 1, E_k \text{ is realized}\}$. 
\item We have, for all $k \geq 1$,
$$ \PP(E_k|\CG_{k-1}) = \PP(B_{k+1} | \mathcal{F}_{S_{k}-}) \geq c $$
by Step 2, using the strong Markov property and that $\|V_{S_k-}\| \wedge \|\vtilde{S_k-}\| \geq 1$, $X_{S_k} \in \pD$, $\xtilde{S_k} \in D$. Hence (\ref{HypoProbaEk}) holds. 
\item Using the strong Markov property and Step 1, we have, for all $k \geq 0$,
\begin{align*}
\EE[r(S_{k+1}-S_{k})|\mathcal{F}_{S_{k}-}] \leq \kappa \Big(1 + r \Big(\frac{d(D)}{\|V_{S_k-}\|} \Big) + r \Big(\frac{d(D)}{\|\vtilde{S_{k}-}\|} \Big) \Big) =: K_k.
\end{align*} 
For $k\geq 1$,
\begin{align*}
\EE[r(\tau_{k+1}-\tau_k)| \mathcal{G}_{k-1}] = \EE \Big[ K_{k+1} \Big| \mathcal{F}_{S_{k}-}\Big]
&\leq \kappa \EE \Big[1 + r\Big( \frac{d(D)}{\|V_{S_{k+1}-}\|} \Big) + r\Big( \frac{d(D)}{\|\vtilde{S_{k+1}-}\|} \Big) \Big| \mathcal{F}_{S_{k}-} \Big] \\
&\leq \kappa (1 + 2C_0).
\end{align*} 
We used that for all $k \geq 1$, $\mathcal{L} (\|V_{S_{k+1}-}\| | \mathcal{F}_{S_{k}-} ) = \mathcal{L} (\|\vtilde{S_{k+1}-}\| | \mathcal{F}_{S_{k}-} )  = h_R$ by definition of $(S_k)_{k \geq 0}$.
Note that $\tau_1 = S_2 - S_1$. We have, since $\|V_{S_1-}\| \wedge \|\vtilde{S_1-}\| \geq 1$,
$$ \EE[r(\tau_1)|\CG_0] = \EE[r(S_2 - S_1)|\mathcal{F}_{S_1-}] \leq \kappa \Big( 1 + r \Big(\frac{d(D)}{\|V_{S_1-}\|} \Big) + r \Big(\frac{d(D)}{\|\vtilde{S_1-}\|} \Big) \Big) \leq \kappa, $$
whence (\ref{IneqHypoLemmaSumGeom}).

\end{enumerate}

Setting $J= G+1$, we conclude by Lemma \ref{LemmaSumGeometric} that $\EE[r(S_J - S_1)|\mathcal{F}_{S_1-}] = \EE[r(\tau_G)|\mathcal{G}_0] \leq \kappa$, whence, by Step 1,
 $$\EE[r(S_{J})] \leq \kappa \Big(1 + r \Big(\frac{d(D)}{\|v\|} \Big) + r \Big(\frac{d(D)}{\|\tilde{v}\|} \Big) \Big).$$
Observe that, by definition of $J$, almost surely, $X_{S_J} \in F$, $\xtilde{S_J} \in D$, $q(\xtilde{S_J},\vtilde{S_J-}) \in F$ and $\|V_{S_J-}\| \wedge \|\vtilde{S_J-}\| \geq 1$, whence $S \leq S_J$.  
\end{proof}

We introduce some notations corresponding to Notation \ref{NotatConvexProof} in the general case.
\begin{notat}
\label{NotatGeneralProof}
 Let $(X_s,V_s, \xtilde{s},\vtilde{s}, Z_s)_{s \geq 0}$ a coupling process, see Definition \ref{DefiCouplingProcess} with $\Gamma$ given by (\ref{EqLawGlobalCouplingExtension}). We use the same sequences $(S_i,Q_i,\tilde{Q}_i)_{i \geq 1} $ as in the definition, as well as $(\tilde{Q}'_i)_{i \geq 1}$, and we recall that, for all $i \geq 1$,
$$ V_{S_i} = w(X_{S_i},V_{S_i-},Q_i) \mathbf{1}_{\{X_{S_i} \in \pD\}} +V_{S_i-}  \mathbf{1}_{\{X_{S_i} \not \in \pD\}}, $$
$$ \vtilde{S_i} = w(\xtilde{S_i},\vtilde{S_i-},\tilde{Q}'_i) \mathbf{1}_{\{\xtilde{S_i} \in \pD\}} + \vtilde{S_i-} \mathbf{1}_{\{\xtilde{S_i} \not \in \pD\}} .$$

\noindent a) We set $T_0 = 0$, $\tilde{T}_0 = 0$ and for $k \geq 0$,
$$T_{k+1} = \inf\{t > \tilde{T}_{k}, X_t \in \pD\}, \qquad \tilde{T}_{k+1} = \inf\{t > T_{k+1}, \xtilde{t} \in \pD\}. $$
For all $k \geq 1$, we have $Z_{T_k-} = \emptyset$ and $X_{T_k} \in \pD$ so $Z_{T_k} \ne \emptyset$ if $\xtilde{T_k} \not \in \pD$. We always have $Z_{\tilde{T}_k} = \emptyset$. 
For all $k \geq 1$, we write $(\underline{Q}_k, \underline{\tilde{Q}}_k) = (\underline{U}_k,\underline{R}_k,\underline{\Theta}_k, \underline{\tilde{U}}_k, \underline{\tilde{R}}_k, \underline{\tilde{\Theta}}_k)$ for the random vector such that
$$V_{T_k} = w(X_{T_k}, V_{T_k-} ,\underline{Q}_k), \quad \text{ and } \quad \vtilde{\tilde{T}_k} = w(\xtilde{\tilde{T}_k}, \vtilde{\tilde{T}_k-}, \underline{\tilde{Q}}_k).$$
Note that $(\underline{Q}_k, \underline{\tilde{Q}}_k)_{k \geq 1}$ is a subsequence of $(Q_i, \tilde{Q}_i')_{i \geq 1}$.


\noindent b) For all $t \geq 0$, we set $$\mathcal{F}_t = \sigma \Big( (X_s,V_s,\xtilde{s},\vtilde{s},Z_s)_{0 \leq s \leq t}, (Q_i \mathbf{1}_{\{S_i \leq t\}})_{i \geq 1}, (\tilde{Q}_i \mathbf{1}_{\{S_i \leq t\}})_{i \geq 1} \Big).$$

\noindent c) We set 
$ \sigma_1 = \inf\{ t > 0, X_t = \xtilde{t} \in \pD, Z_{t-} = \emptyset, \|V_{t-} \| \ne \|V_0\|, \|\vtilde{t-}\| \ne \|\vtilde{0}\|\}. $

\noindent d) We set $\nu_0 = 0$ and for all $k \geq 0$,
$$ \nu_{k+1} = \inf \{n \geq \nu_k + 1, X_{T_n} \in F, \xtilde{T_n} \in D, q(\xtilde{T_n},\vtilde{T_n-}) \in F, \|V_{T_n-}\| \geq 1, \|\vtilde{T_n-}\| \geq 1 \}.$$
\end{notat} 
The only difference with Notation \ref{NotatConvexProof} is that Definition \ref{DefiCouplingProcess} uses (\ref{EqLawGlobalCouplingExtension}) rather than (\ref{EqLawGlobalCoupling}), and that the sequence $(\nu_k)_{k \geq 1}$ has been slightly changed. 
We next update Lemma \ref{LemmaTnu}.
\begin{lemma}
\label{LemmaTnuExtension}
There exist three constants $\kappa, L, c > 0$ such that the following holds.
\begin{enumerate}[i)]
\item For all $m \geq 1$, $$\mathbf{1}_{\{T_{\nu_m} < \sigma_1\}} \EE[r(T_{\nu_{m + 1}} \wedge \sigma_1 - T_{\nu_m})| \mathcal{F}_{T_{\nu_m}-}] \leq L. $$
\item  $\EE[r(T_{\nu_1} \wedge \sigma_1)] \leq  \kappa \Big(1 + \EE \Big[r\Big(\frac{d(D)}{\|V_{0}\|}\Big) + r\Big(\frac{d(D)}{\|\vtilde{0}\|}\Big) \Big] \Big). $ 

\item For all $m \geq 1$, setting 
$$A_m = \{\underline{U}_{\nu_m} \leq \alpha_0,  X_{T_{\nu_{m} +1}}= \xtilde{T_{\nu_{m} + 1}}, T_{\nu_{m}+1} = T_{\nu_m} + \zeta(X_{T_{\nu_m}}, V_{T_{\nu_m}})\}, $$
we have $$\PP \Big(A_m \Big| \mathcal{F}_{T_{\nu_m}-} \Big) \geq c, $$
and  $A_m \subset \{\sigma_1 \leq T_{\nu_m + 1}\}$ outside a $\PP$-null set.
\end{enumerate}
\end{lemma}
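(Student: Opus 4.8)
The statement of Lemma~\ref{LemmaTnuExtension} is the exact analogue of Lemma~\ref{LemmaTnu} in the general setting, so the plan is to reproduce the proof of Lemma~\ref{LemmaTnu} verbatim, substituting the general-setting ingredients for the convex-case ingredients at each place where convexity was used. Concretely: Proposition~\ref{CouplingPropGeneral} replaces Proposition~\ref{CouplingProp}, the law $\Gamma$ from \eqref{EqLawGlobalCouplingExtension} replaces the one from \eqref{EqLawGlobalCoupling}, Lemma~\ref{LemmaIndepProcessGlobal} replaces Lemma~\ref{LemmaIndepProcess}, and the stopping time $\nu_{k+1}$ from Notation~\ref{NotatGeneralProof}(d) (with the constraint $X_{T_n}\in F$, $\xtilde{T_n}\in D$, $q(\xtilde{T_n},\vtilde{T_n-})\in F$) replaces the simpler $\nu_{k+1}$ of Notation~\ref{NotatConvexProof}(d). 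Note that Lemmas~\ref{LemmaControlCalT} and~\ref{LemmaIndepProcess} are explicitly stated (in the paragraph after \eqref{EqLawGlobalCouplingExtension}) to remain valid in the non-convex case, so they may be used freely.

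\textbf{Key steps.} First I would prove (i). As in Lemma~\ref{LemmaTnu}(i), by the strong Markov property at $T_{\nu_m}$ it suffices to establish \eqref{PreliminaryEquationTnu1}, i.e.\ $\EE[r(T_{\nu_1}\wedge\sigma_1)]\le L$, starting from $(X_0,\xtilde 0,V_{0-},\vtilde{0-},Z_{0-})=(x,\tilde x,v,\tilde v,\emptyset)$ with $(x,v)\in\partial_+G$, $\tilde x\notin\pD$, and $\|v\|\wedge\|\tilde v\|\ge 1$. Introduce $\CT=\inf\{t>0:\|V_t\|\ne\|v\|,\ \|\vtilde t\|\ne\|\tilde v\|\}$; by Lemma~\ref{LemmaControlCalT} and $\|v\|\wedge\|\tilde v\|\ge1$ one gets $\EE[r(\CT)]\le L$. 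Then, conditionally on $\mathcal F_{\CT-}$ and on $\{T_{\nu_1}\wedge\sigma_1>\CT\}$, the shifted processes are two \emph{independent} killed free-transport processes: the point is that, by the definitions of $\sigma_1$ and of the new $\nu_1$, the only line of Table~\ref{Tab:TableUpdate} producing a genuinely correlated innovation (the one invoking $\Lambda$) requires $X\in F$, $q(\xtilde{},\vtilde{-})\in F$ and $\|V_-\|\wedge\|\vtilde{-}\|\ge1$ simultaneously, which is excluded on $[\CT,T_{\nu_1}\wedge\sigma_1)$, so on this interval the innovations are independent or one is useless. Since $T_{\nu_1}\wedge\sigma_1-\CT\le S$ with $S$ the stopping time of Lemma~\ref{LemmaIndepProcessGlobal}, that lemma gives $\mathbf 1_{\{T_{\nu_1}\wedge\sigma_1>\CT\}}\EE[r(T_{\nu_1}\wedge\sigma_1-\CT)\mid\mathcal F_{\CT-}]\le\kappa(1+r(d(D)/\|V_{\CT-}\|)+r(d(D)/\|\vtilde{\CT-}\|))$; taking expectations, using that $\mathcal L(\|V_{\CT-}\|\mid\|V_{\CT-}\|\ne\|v\|)=h_R$ (similarly for $\vtilde{}$), the definition \eqref{EqDefC0} of $C_0$, and $\|v\|\wedge\|\tilde v\|\ge1$, we obtain the bound $L$, hence \eqref{PreliminaryEquationTnu1} and (i). Statement (ii) follows by the same argument with $(v,\tilde v)$ replaced by $(V_0,\vtilde 0)$ throughout, as in Lemma~\ref{LemmaTnu}(ii). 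For (iii), I would copy the proof of Lemma~\ref{LemmaTnu}(iii) line by line: $W_{\nu_k}$ is independent of $\mathcal F_{T_{\nu_k}-}$ and, because $X_{T_{\nu_k}}\in F$, $\xtilde{T_{\nu_k}}\in D$, $q(\xtilde{T_{\nu_k}},\vtilde{T_{\nu_k}-})\in F$ and $\|V_{T_{\nu_k}-}\|\wedge\|\vtilde{T_{\nu_k}-}\|\ge1$, its conditional law is given by the second line of \eqref{EqLawGlobalCouplingExtension}, namely $(\underline R_{\nu_k},\underline\Theta_{\nu_k},\underline{\tilde R}_{\nu_k},\underline{\tilde\Theta}_{\nu_k})\sim\Lambda_{X_{T_{\nu_k}},\xtilde{T_{\nu_k}},\vtilde{T_{\nu_k}-}}$ with $\underline U_{\nu_k}=\underline{\tilde U}_{\nu_k}\sim\mathcal U$ independent. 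Defining $C_{x,\tilde x,\tilde v}=\{(u,\tilde u,r,\tilde r,\theta,\tilde\theta):u\le\alpha_0,\ \tilde u\le\alpha_0,\ (r,\theta,\tilde r,\tilde\theta)\in E_{x,\tilde x,\tilde v}\}$ with $E$ as in Proposition~\ref{CouplingPropGeneral}, one checks $\{W_{\nu_k}\in C_{X_{T_{\nu_k}},\xtilde{T_{\nu_k}},\vtilde{T_{\nu_k}-}}\}\subset A_k$ using Notation~\ref{Notationyxi} exactly as before, and $\PP(A_k\mid\mathcal F_{T_{\nu_k}-})\ge\alpha_0 c$ with $c$ from Proposition~\ref{CouplingPropGeneral}. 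The inclusion $A_k\subset\{\sigma_1\le T_{\nu_k+1}\}$ up to a null set follows verbatim, since on $A_k$ we have $X_{T_{\nu_k+1}}=\xtilde{T_{\nu_k+1}}$, $Z_{T_{\nu_k+1}-}=\emptyset$, $\|V_{T_{\nu_k+1}-}\|=\underline R_{\nu_k}\ne\|V_0\|$ and $\PP(\|\vtilde{T_{\nu_k+1}-}\|=\|\vtilde 0\|)=0$ by the same diffuse-reflection density argument.

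\textbf{Main obstacle.} The computations being essentially identical to those of Lemma~\ref{LemmaTnu}, the only real subtlety — and the step that deserves the most care — is the claim, used in the proof of (i), that on the interval $[\CT,T_{\nu_1}\wedge\sigma_1)$ the two coupling processes evolve as genuinely independent killed free-transport processes. In the convex case this rested on the observation that the ``$\Lambda$'' lines of Table~\ref{Tab:TableUpdate} demand $X\in\pD$, $\xtilde{}\notin\pD$ and $\|V_-\|\wedge\|\vtilde{-}\|\ge1$, none of which is met before $\nu_1$; here one must instead verify that the strictly stronger trigger in \eqref{EqLawGlobalCouplingExtension} — which also requires $X\in F$, $q(\xtilde{},\vtilde{-})\in F$, $\tilde x\in D$ — is likewise never activated on $[\CT,T_{\nu_1}\wedge\sigma_1)$, and this is precisely guaranteed by the definition of the new $\nu_1$ in Notation~\ref{NotatGeneralProof}(d): $T_{\nu_1}$ is, by construction, the first collision index at which all these conditions hold simultaneously. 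Once this matching between the support of the correlated part of $\Gamma$ and the definition of $\nu_1$ is spelled out, the rest of the argument transfers mechanically.
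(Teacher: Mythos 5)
Your proposal is correct and follows essentially the same route as the paper, whose proof of Lemma \ref{LemmaTnuExtension} is simply a one-line reference to the proof of Lemma \ref{LemmaTnu} with the substitutions (Lemma \ref{LemmaIndepProcessGlobal} for Lemma \ref{LemmaIndepProcess}, Proposition \ref{CouplingPropGeneral} for Proposition \ref{CouplingProp}, Notation \ref{NotatGeneralProof} for Notation \ref{NotatConvexProof}, \eqref{EqLawGlobalCouplingExtension} for \eqref{EqLawGlobalCoupling}); you have spelled out those substitutions accurately, including the crucial check that the support of the $\Lambda$-part of \eqref{EqLawGlobalCouplingExtension} matches the definition of the new $\nu_k$. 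One small caveat on wording only: when you state that the $\Lambda$-line is ``the only line of Table \ref{Tab:TableUpdate} producing a genuinely correlated innovation,'' this is not quite right — line 3 (the case $X=\xtilde{}\in\pD$, $Z_-=\emptyset$) also correlates the two innovations and must likewise be excluded on $[\CT,T_{\nu_1}\wedge\sigma_1)$, which is done via the definition of $\sigma_1$ as in the convex-case proof; since you already invoke both $\sigma_1$ and $\nu_1$ earlier in the same sentence, this is a slip of phrasing rather than a gap.
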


\begin{proof}
The proof is the same as the one of Lemma \ref{LemmaTnu}, using Lemma \ref{LemmaIndepProcessGlobal}, Proposition \ref{CouplingPropGeneral}, Notation \ref{NotatGeneralProof}, Equation (\ref{EqLawGlobalCouplingExtension}) instead of Lemma \ref{LemmaIndepProcess}, Propositon \ref{CouplingProp}, Notation \ref{NotatConvexProof}, Equation (\ref{EqLawGlobalCoupling}), and that Lemma \ref{LemmaControlCalT} still holds when using Definition \ref{DefiCouplingProcess} with (\ref{EqLawGlobalCouplingExtension}) instead of (\ref{EqLawGlobalCoupling}).
\end{proof}

\begin{proof}[Proof of Theorem 2 in the general setting.]
The proof is the same as the one in the convex case, using Lemma \ref{LemmaTnuExtension} instead of Lemma \ref{LemmaTnu}, Notation \ref{NotatGeneralProof} instead of Notation \ref{NotatConvexProof} and that Lemmas \ref{LemmaMarginalCode} and \ref{Lemmataudefinitive} hold when using Definition \ref{DefiCouplingProcess} with (\ref{EqLawGlobalCouplingExtension}) instead of (\ref{EqLawGlobalCoupling}).
\end{proof}

\bibliographystyle{alpha}
\bibliography{biblio}
\end{document}